\tikzset{node1/.style={circle, draw, fill=white, inner sep=0pt, text width=0pt, text height=0pt, text depth=0pt, minimum size = 5pt,}, edge1/.style={draw, thick},
}
\tikzset{node2/.style={shape=circle, draw, fill=white, inner sep=0pt, text width=0pt, text height=0pt, text depth=0pt, minimum size = 3pt}, edge2/.style={},
}
\newcommand\vertexsize{2} 
\newcommand\edgewidth{0.4}  
\tikzset{vertex_default/.style={circle,draw,fill=white,thick, inner sep=0pt,minimum size=\vertexsize mm}}
\tikzset{vertex_bullet/.style={circle,draw,fill=black,thick, inner sep=0pt,minimum size=\vertexsize mm}}
\tikzset{vertex_square/.style={shape=rectangle,draw,fill=black,thick, inner sep=0pt,minimum size=\vertexsize mm}}
\tikzset{vertex_red/.style={circle,draw,fill=red!70,thick, inner sep=0pt,minimum size=\vertexsize mm}}
\tikzset{edge_default/.style={line width=\edgewidth mm}}
\tikzset{edge_in_c/.style={dashed,color=red,line width=\edgewidth mm}}
\tikzset{edge_dotted/.style={dotted,line width=\edgewidth mm}}
\definecolor{c-color}{rgb}{1,0,0}
\theoremstyle{plain}
\newtheorem{thm}{Theorem}
\newtheorem{lem}{Lemma}[section]
\newtheorem{cor}[thm]{Corollary}
\g@addto@macro{\endabstract}{\@setabstract}
\newcommand{\authorfootnotes}{\renewcommand\thefootnote{\@fnsymbol\c@footnote}}%
\begin{document}

\begin{center}

\LARGE 
Distance-preserving graph contractions
\renewcommand{\thefootnote}{\fnsymbol{footnote}}
\footnote{An extended abstract of this work has appeared in the Proceedings of the 9th Innovations in Theoretical Computer Science Conference (ITCS) 2018 \cite{DBLP:conf/innovations/BernsteinDDKMS18}.}
\par \bigskip

\normalsize
\authorfootnotes
Aaron Bernstein\textsuperscript{1}\footnote{E-Mail: \texttt{bernstei@gmail.com}},
Karl D\"aubel\textsuperscript{1}\footnote{\label{mail}E-Mail: \texttt{\{daeubel,muetze,smolny\}@math.tu-berlin.de}},
Yann Disser\textsuperscript{2}\footnote{Supported by the `Excellence Initiative' of the German Federal and State Governments and the Graduate School~CE at TU~Darmstadt. E-Mail: \texttt{disser@mathematik.tu-darmstadt.de}}, \\
Max Klimm\textsuperscript{3}\footnote{E-Mail: \texttt{max.klimm@hu-berlin.de}},
Torsten M\"utze\textsuperscript{1}\footref{mail} and
Frieder Smolny\textsuperscript{1}\footref{mail} \par \bigskip

\textsuperscript{1}Institut f\"{u}r Mathematik, TU Berlin \par
\textsuperscript{2}Department of Mathematics, Graduate School CE, TU Darmstadt \par
\textsuperscript{3}Wirtschaftswissenschaftliche Fakult\"{a}t, HU Berlin \par
\par \bigskip

\end{center}

\begin{abstract}
Compression and sparsification algorithms are frequently applied in a preprocessing step before analyzing or optimizing large networks/graphs.
In this paper we propose and study a new framework contracting edges of a graph (merging vertices into super-vertices) with the goal of preserving pairwise distances as accurately as possible.
Formally, given an edge-weighted graph, the contraction should guarantee that for any two vertices at distance $d$, the corresponding super-vertices remain at distance at least $\varphi(d)$ in the contracted graph, where $\varphi$ is a tolerance function bounding the permitted distance distortion.
We present a comprehensive picture of the algorithmic complexity of the contraction problem for affine tolerance functions $\varphi(x)=x/\alpha-\beta$, where $\alpha\geq 1$ and $\beta\geq 0$ are arbitrary real-valued parameters.
Specifically, we present polynomial-time algorithms for trees as well as hardness and inapproximability results for different graph classes, precisely separating easy and hard cases.
Further we analyze the asymptotic behavior of contractions, and find efficient algorithms to compute (non-optimal) contractions despite our hardness results.
\end{abstract}

\section{Introduction}
\label{sec:intro}

When dealing with large networks, it is often beneficial to compress or sparsify the data to manageable size before analyzing or optimizing the network directly.
To be useful, a meaningful compression should represent salient features of the original network with good approximation, while being much smaller in size.
In this paper, we focus on a compression of undirected edge-weighted graphs that approximately maintains all distances between vertices in the graph.

In this context, an extensively studied concept are \emph{spanners} (e.g.\ \cite{MR982872,MR1184695,MR2298319,MR3536579}).
Given an undirected graph $G=(V,E)$ and real numbers $\alpha \geq 1$ and $\beta \geq 0$, a subgraph $H=(V,E')$, $E'\subseteq E$, is an \emph{$(\alpha,\beta)$-spanner of~$G$} if $\dist_{H}(u,v) \leq \alpha\cdot \dist_G(u,v) + \beta$ holds for all $u,v \in V$.
While the number of edges in a spanner may be much smaller than that of the original graph, the number of vertices is the same for both, leaving further potential for compression untapped.
For illustration, consider the road network of Europe with about 50~million vertices \cite{MR3077319}, any spanner of which must again have about 50 million vertices and edges.
However, to approximately represent distances in Europe's road network one may also merge nearby vertices into super-vertices, thus achieving a much better compression of the network.
This is akin to the visual process of zooming out of a graphical representation of the map, where neighbored vertices fade into each other and edges between merged vertices vanish.
At a large enough zoom level, the entire network merges into a single vertex.

In this paper we propose and study a new framework for contracting networks that formalizes this intuitive idea and makes it applicable to general graphs.
Specifically, we study a contraction problem on graphs where a subset of edges $C \subseteq E$ is contracted.
We denote by $G/C$ the resulting simple graph obtained from $G$ by contracting the edges in $C$ and by deleting resulting loops and multiple edges, keeping only the minimum length edge between any two vertices.
For any two vertices in $G$, we compare their distance in $G$ with the distance of the corresponding super-vertices in $G/C$.

It is interesting to contrast this concept with graph spanners.
When constructing a spanner, the length of the removed edges is implicitly set to~$\infty$, resulting in an overall increase of distances.
On the other hand, a contraction implicitly sets the length of the contracted edges to zero, leading to an overall decrease of distances.
For both problems, the ultimate goal is to reduce the complexity of the network while maintaining an approximation guarantee on the distances.

The following example shows that contractions may be better suited than spanners to achieve this goal.
In a subgraph with small radius, a spanner can at best result in a spanning tree of the same order, while a contraction can reduce the whole subgraph to a single vertex, while entailing a multiplicative distance distortion of similar magnitude.
In addition, the contraction may also merge many edges entering the contracted subgraph.
Clearly, the objective here is to maximize the total number of contracted and deleted edges, as this minimizes the memory required to represent the resulting network in a computer (using e.g.\ adjacency lists).

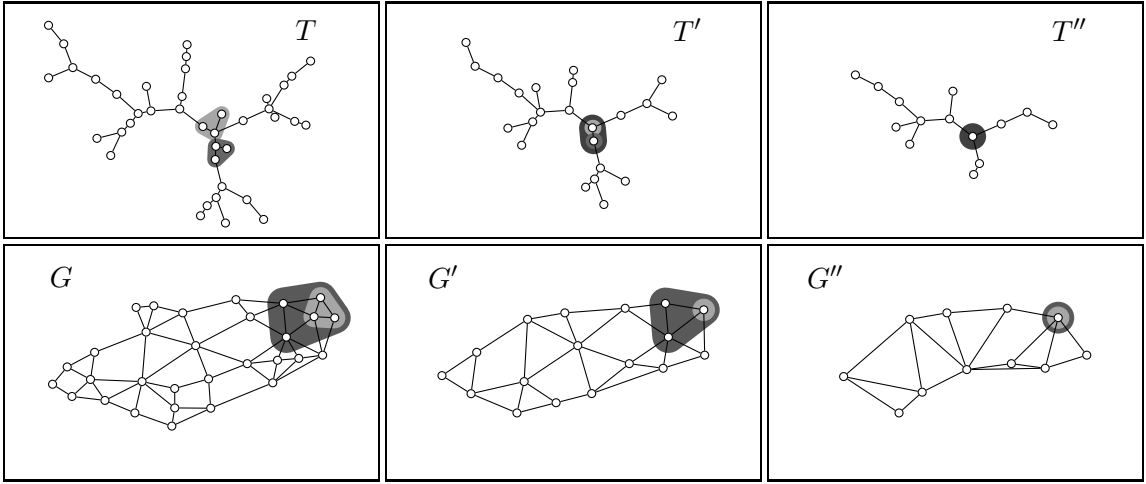
\begin{figure}
\noindent%
\centering%
\fbox{%
\begin{minipage}[c][2.9cm][c]{\dimexpr.33\linewidth-2\fboxsep-2\fboxrule}%
\centering%
\begin{tikzpicture}
\node[] (label) at (1.5,1.2) {$T$};
\path (-2,1.5) rectangle (2,-1.5);

\begin{pgfonlayer}{background}
\path[draw=black!35!white, fill=black!65!white , line width=6pt, line cap=round, line join=round] (0.38959,0.083034) -- (0.14247,-0.082004) -- (0.29681,-0.16921) -- cycle;
\path[draw=black!60!white, fill=black!65!white , line width=6pt, line cap=round, line join=round] (0.30535,-0.51941) -- (0.45764,-0.37445) -- (0.31316,-0.34433) -- cycle;
\end{pgfonlayer}

\node[node2, name path=39, label=center:{}] (39) at (-0.13084,0.31605) {};
\node[node2, name path=38, label=center:{}] (38) at (-0.086614,0.68401) {};

\node[node2, name path=37, label=center:{}] (37) at (0.38959,0.083034) {};
\node[node2, name path=36, label=center:{}] (36) at (-1.8879,1.2603) {};
\node[node2, name path=35, label=center:{}] (35) at (-0.70822,0.090325) {};
\node[node2, name path=34, label=center:{}] (34) at (-0.60014,0.45068) {};
\node[node2, name path=33, label=center:{}] (33) at (-0.54061,0.12515) {};
\node[node2, name path=32, label=center:{}] (32) at (-1.8864,0.56368) {};

\node[node2, name path=31, label=center:{}] (31) at (0.45764,-0.37445) {};

\node[node2, name path=30, label=center:{}] (30) at (0.31316,-0.34433) {};

\node[node2, name path=29, label=center:{}] (29) at (0.30535,-0.51941) {};
\node[node2, name path=28, label=center:{}] (28) at (-0.81116,-0.038312) {};
\node[node2, name path=27, label=center:{}] (27) at (1.5592,0.78609) {};
\node[node2, name path=26, label=center:{}] (26) at (1.3121,0.58556) {};
\node[node2, name path=25, label=center:{}] (25) at (-1.2507,-0.23896) {};
\node[node2, name path=24, label=center:{}] (24) at (-0.075087,0.84544) {};
\node[node2, name path=23, label=center:{}] (23) at (-0.065542,1.0037) {};
\node[node2, name path=22, label=center:{}] (22) at (1.2097,0.46729) {};
\node[node2, name path=21, label=center:{}] (21) at (0.67192,-0.0042096) {};
\node[node2, name path=20, label=center:{}] (20) at (1.4976,-0.063371) {};
\node[node2, name path=19, label=center:{}] (19) at (1.35,-0.013614) {};
\node[node2, name path=18, label=center:{}] (18) at (-1.6879,1.0146) {};
\node[node2, name path=17, label=center:{}] (17) at (-0.92481,-0.15578) {};
\node[node2, name path=16, label=center:{}] (16) at (-1.0695,-0.46611) {};
\node[node2, name path=15, label=center:{}] (15) at (0.98166,0.28823) {};
\node[node2, name path=14, label=center:{}] (14) at (-0.16221,0.15043) {};
\node[node2, name path=13, label=center:{}] (13) at (-0.99005,0.34453) {};
\node[node2, name path=12, label=center:{}] (12) at (0.11133,-1.2652) {};
\node[node2, name path=11, label=center:{}] (11) at (0.19902,-1.1318) {};
\node[node2, name path=10, label=center:{}] (10) at (1.0895,0.041077) {};
\node[node2, name path=9, label=center:{}] (9) at (1.0127,0.15224) {};
\node[node2, name path=8, label=center:{}] (8) at (0.14247,-0.082004) {};

\node[node2, name path=7, label=center:{}] (7) at (0.29681,-0.16921) {};
\node[node2, name path=6, label=center:{}] (6) at (-1.5679,0.69911) {};
\node[node2, name path=5, label=center:{}] (5) at (-1.2674,0.54518) {};
\node[node2, name path=4, label=center:{}] (4) at (0.4392,-1.3616) {};
\node[node2, name path=3, label=center:{}] (3) at (0.31723,-1.024) {};
\node[node2, name path=2, label=center:{}] (2) at (0.39411,-0.87865) {};
\node[node2, name path=1, label=center:{}] (1) at (0.72149,-1.0522) {};
\node[node2, name path=0, label=center:{}] (0) at (0.94124,-1.3137) {};

\path[edge2, draw] (39) to[] node[] {} (38);
\path[edge2, draw] (39) to[] node[] {} (14);
\path[edge2, draw] (38) to[] node[] {} (24);
\path[edge2, draw] (37) to[] node[] {} (7);
\path[edge2, draw] (36) to[] node[] {} (18);
\path[edge2, draw] (35) to[] node[] {} (33);
\path[edge2, draw] (35) to[] node[] {} (28);
\path[edge2, draw] (35) to[] node[] {} (13);
\path[edge2, draw] (33) to[] node[] {} (34);
\path[edge2, draw] (33) to[] node[] {} (14);
\path[edge2, draw] (32) to[] node[] {} (6);
\path[edge2, draw] (30) to[] node[] {} (31);
\path[edge2, draw] (30) to[] node[] {} (29);
\path[edge2, draw] (30) to[] node[] {} (7);
\path[edge2, draw] (29) to[] node[] {} (2);
\path[edge2, draw] (28) to[] node[] {} (17);
\path[edge2, draw] (26) to[] node[] {} (27);
\path[edge2, draw] (26) to[] node[] {} (22);
\path[edge2, draw] (25) to[] node[] {} (17);
\path[edge2, draw] (23) to[] node[] {} (24);
\path[edge2, draw] (22) to[] node[] {} (9);
\path[edge2, draw] (21) to[] node[] {} (9);
\path[edge2, draw] (21) to[] node[] {} (7);
\path[edge2, draw] (19) to[] node[] {} (20);
\path[edge2, draw] (19) to[] node[] {} (9);
\path[edge2, draw] (18) to[] node[] {} (6);
\path[edge2, draw] (16) to[] node[] {} (17);
\path[edge2, draw] (15) to[] node[] {} (9);
\path[edge2, draw] (14) to[] node[] {} (8);
\path[edge2, draw] (13) to[] node[] {} (5);
\path[edge2, draw] (11) to[] node[] {} (12);
\path[edge2, draw] (11) to[] node[] {} (3);
\path[edge2, draw] (9) to[] node[] {} (10);
\path[edge2, draw] (7) to[] node[] {} (8);
\path[edge2, draw] (5) to[] node[] {} (6);
\path[edge2, draw] (4) to[] node[] {} (3);
\path[edge2, draw] (2) to[] node[] {} (3);
\path[edge2, draw] (2) to[] node[] {} (1);
\path[edge2, draw] (0) to[] node[] {} (1);
\end{tikzpicture}%
\end{minipage}}%
\hfill%
\fbox{%
\begin{minipage}[c][2.9cm][c]{\dimexpr.33\linewidth-2\fboxsep-2\fboxrule}%
\centering%
\begin{tikzpicture}
\node[] (label) at (1.5,1.2) {$T'$};

\path (-2,1.5) rectangle (2,-1.5);

\begin{scope}[shift={(-.3,.4)}]

\begin{pgfonlayer}{background}
\path[draw=black!75!white, fill=black!65!white , line width=10pt, line cap=round, line join=round] (0.5493,-0.498462) -- (0.56565,-0.673582) -- cycle;
\path[draw=black!35!white, fill=black!65!white , line width=6pt, line cap=round, line join=round] (0.5493,-0.498462) -- cycle;
\path[draw=black!60!white, fill=black!65!white , line width=6pt, line cap=round, line join=round] (0.56565,-0.673582) -- cycle;
\end{pgfonlayer}
\node[node2, name path=38, label=center:{}] (38) at (0.288846,0.101932) {};
\node[node2, name path=34, label=center:{}] (34) at (-0.19331,0.034222) {};
\node[node2, name path=33, label=center:{}] (33) at (-0.13378,-0.291308) {};
\node[node2, name path=29, label=center:{}] (29) at (0.56565,-0.673582) {};
\node[node2, name path=25, label=center:{}] (25) at (-0.56261,-0.503125) {};
\node[node2, name path=24, label=center:{}] (24) at (0.300373,0.263362) {};
\node[node2, name path=22, label=center:{}] (22) at (1.46219,0.138038) {};
\node[node2, name path=21, label=center:{}] (21) at (0.92441,-0.333462) {};
\node[node2, name path=20, label=center:{}] (20) at (1.60249,-0.342866) {};
\node[node2, name path=18, label=center:{}] (18) at (-1.11346,0.632967) {};
\node[node2, name path=17, label=center:{}] (17) at (-0.23672,-0.419945) {};
\node[node2, name path=16, label=center:{}] (16) at (-0.38141,-0.730275) {};
\node[node2, name path=14, label=center:{}] (14) at (0.24462,-0.266028) {};
\node[node2, name path=13, label=center:{}] (13) at (-0.41561,-0.037103) {};
\node[node2, name path=12, label=center:{}] (12) at (0.45932,-1.28597) {};
\node[node2, name path=10, label=center:{}] (10) at (1.26519,-0.177012) {};
\node[node2, name path=8, label=center:{}] (8) at (0.5493,-0.498462) {};
\node[node2, name path=6, label=center:{}] (6) at (-0.99346,0.317477) {};
\node[node2, name path=5, label=center:{}] (5) at (-0.69296,0.163547) {};
\node[node2, name path=4, label=center:{}] (4) at (0.6995,-1.51577) {};
\node[node2, name path=3, label=center:{}] (3) at (0.57753,-1.17817) {};
\node[node2, name path=2, label=center:{}] (2) at (0.65441,-1.03282) {};
\node[node2, name path=1, label=center:{}] (1) at (0.98179,-1.20637) {};

\path[edge2, draw] (38) to[] node[] {} (24);
\path[edge2, draw] (33) to[] node[] {} (13);
\path[edge2, draw] (33) to[] node[] {} (17);
\path[edge2, draw] (33) to[] node[] {} (34);
\path[edge2, draw] (33) to[] node[] {} (14);
\path[edge2, draw] (29) to[] node[] {} (8);
\path[edge2, draw] (29) to[] node[] {} (2);
\path[edge2, draw] (25) to[] node[] {} (17);
\path[edge2, draw] (22) to[] node[] {} (10);
\path[edge2, draw] (21) to[] node[] {} (10);
\path[edge2, draw] (21) to[] node[] {} (8);
\path[edge2, draw] (20) to[] node[] {} (10);
\path[edge2, draw] (18) to[] node[] {} (6);
\path[edge2, draw] (16) to[] node[] {} (17);
\path[edge2, draw] (14) to[] node[] {} (38);
\path[edge2, draw] (14) to[] node[] {} (8);
\path[edge2, draw] (13) to[] node[] {} (5);
\path[edge2, draw] (12) to[] node[] {} (3);
\path[edge2, draw] (5) to[] node[] {} (6);
\path[edge2, draw] (4) to[] node[] {} (3);
\path[edge2, draw] (2) to[] node[] {} (3);
\path[edge2, draw] (2) to[] node[] {} (1);
\end{scope}
\end{tikzpicture}%
\end{minipage}}%
\hfill%
\fbox{%
\begin{minipage}[c][2.9cm][c]{\dimexpr.33\linewidth-2\fboxsep-2\fboxrule}%
\centering%
\begin{tikzpicture}
\node[] (label) at (1.5,1.2) {$T''$};

\path (-2,1.5) rectangle (2,-1.5);

\begin{scope}[shift={(-.5,.1)}]
\node[fill=black!75!white, inner sep=0pt, text width=0pt, text height=0pt, text depth=0pt, minimum size = 10pt, circle, label=center:{}] (8a) at (0.725737,-0.313689) {};
\node[node2, name path=34, label=center:{}] (34) at (0.042657,-0.106535) {};
\node[node2, name path=25, label=center:{}] (25) at (-0.283233,-0.189715) {};
\node[node2, name path=24, label=center:{}] (24) at (0.465283,0.286705) {};
\node[node2, name path=21, label=center:{}] (21) at (1.10085,-0.148689) {};
\node[node2, name path=20, label=center:{}] (20) at (1.77893,-0.158093) {};
\node[node2, name path=16, label=center:{}] (16) at (-0.102033,-0.416865) {};
\node[node2, name path=14, label=center:{}] (14) at (0.421057,-0.081255) {};
\node[node2, name path=13, label=center:{}] (13) at (-0.239173,0.14767) {};
\node[node2, name path=10, label=center:{}] (10) at (1.44163,0.007761) {};
\node[node2, name path=8, label=center:{}] (8) at (0.725737,-0.313689) {};
\node[node2, name path=6, label=center:{}] (6) at (-0.817023,0.50225) {};
\node[node2, name path=5, label=center:{}] (5) at (-0.516523,0.34832) {};
\node[node2, name path=3, label=center:{}] (3) at (0.737617,-0.818279) {};
\node[node2, name path=1, label=center:{}] (1) at (0.814497,-0.672929) {};

\path[edge2, draw] (34) to[] node[] {} (13);
\path[edge2, draw] (34) to[] node[] {} (14);
\path[edge2, draw] (25) to[] node[] {} (34);
\path[edge2, draw] (21) to[] node[] {} (10);
\path[edge2, draw] (21) to[] node[] {} (8);
\path[edge2, draw] (20) to[] node[] {} (10);
\path[edge2, draw] (16) to[] node[] {} (34);
\path[edge2, draw] (14) to[] node[] {} (24);
\path[edge2, draw] (14) to[] node[] {} (8);
\path[edge2, draw] (13) to[] node[] {} (5);
\path[edge2, draw] (8) to[] node[] {} (1);
\path[edge2, draw] (5) to[] node[] {} (6);
\path[edge2, draw] (1) to[] node[] {} (3);
\end{scope}
\end{tikzpicture}%
\end{minipage}}\\%
\vspace{\dimexpr.0025\linewidth}%
\fbox{%
\begin{minipage}[c][2.9cm][c]{\dimexpr.33\linewidth-2\fboxsep-2\fboxrule}%
\centering%
\begin{tikzpicture}[scale=.095]

\node[] (label) at (-18,12) {$G$};
\path (-23,15) rectangle (23,-15);

\begin{pgfonlayer}{background}
\path[draw=black!65!white, fill=black!65!white , line width=12pt, line cap=round, line join=round] (12.7254,8.26723) -- (13.1385,3.56592) -- (19.8412,6.25391) -- (17.8811,9.07593) -- cycle;
\path[draw=black!35!white, fill=black!35!white , line width=8pt, line cap=round, line join=round] (19.8412,6.25391) -- (17.8811,9.07593) -- (16.98,6.41921) -- cycle;
\end{pgfonlayer}
\node[node2, name path=29, label=center:{}] (29) at (-7.65365,7.70803) {};
\node[node2, name path=28, label=center:{}] (28) at (11.9184,0.338397) {};
\node[node2, name path=27, label=center:{}] (27) at (-13.972,-2.34916) {};
\node[node2, name path=26, label=center:{}] (26) at (-16.5319,-4.71761) {};
\node[node2, name path=25, label=center:{}] (25) at (-5.20989,7.93541) {};
\node[node2, name path=24, label=center:{}] (24) at (18.1616,1.04605) {};
\node[node2, name path=23, label=center:{}] (23) at (6.16376,8.797) {};
\node[node2, name path=22, label=center:{}] (22) at (17.8811,9.07593) {};
\node[node2, name path=21, label=center:{}] (21) at (11.2636,-2.81485) {};
\node[node2, name path=20, label=center:{}] (20) at (-17.1935,-0.596147) {};
\node[node2, name path=19, label=center:{}] (19) at (-2.71112,-8.85681) {};
\node[node2, name path=18, label=center:{}] (18) at (14.8657,0.594226) {};
\node[node2, name path=17, label=center:{}] (17) at (0.591991,2.37893) {};
\node[node2, name path=16, label=center:{}] (16) at (19.8412,6.25391) {};
\node[node2, name path=15, label=center:{}] (15) at (2.66842,-6.36021) {};
\node[node2, name path=14, label=center:{}] (14) at (-2.31258,-3.60638) {};
\node[node2, name path=13, label=center:{}] (13) at (-2.30594,-6.31937) {};
\node[node2, name path=12, label=center:{}] (12) at (-1.19764,6.94926) {};
\node[node2, name path=11, label=center:{}] (11) at (12.7254,8.26723) {};
\node[node2, name path=10, label=center:{}] (10) at (-11.9707,-5.27209) {};
\node[node2, name path=9, label=center:{}] (9) at (-6.83178,-2.64196) {};
\node[node2, name path=8, label=center:{}] (8) at (-7.82619,-7.01996) {};
\node[node2, name path=7, label=center:{}] (7) at (8.1442,6.3882) {};
\node[node2, name path=6, label=center:{}] (6) at (-13.3975,1.44172) {};
\node[node2, name path=5, label=center:{}] (5) at (7.72834,-0.141162) {};
\node[node2, name path=4, label=center:{}] (4) at (16.98,6.41921) {};
\node[node2, name path=3, label=center:{}] (3) at (13.1385,3.56592) {};
\node[node2, name path=2, label=center:{}] (2) at (2.3747,-2.27007) {};
\node[node2, name path=1, label=center:{}] (1) at (-6.27559,4.26573) {};
\node[node2, name path=0, label=center:{}] (0) at (-19.2212,-3.01526) {};

\path[edge2, draw] (29) to[] node[] {} (25);
\path[edge2, draw] (28) to[] node[] {} (21);
\path[edge2, draw] (28) to[] node[] {} (18);
\path[edge2, draw] (28) to[] node[] {} (3);
\path[edge2, draw] (27) to[] node[] {} (6);
\path[edge2, draw] (26) to[] node[] {} (10);
\path[edge2, draw] (26) to[] node[] {} (27);
\path[edge2, draw] (26) to[] node[] {} (0);
\path[edge2, draw] (25) to[] node[] {} (12);
\path[edge2, draw] (24) to[] node[] {} (16);
\path[edge2, draw] (23) to[] node[] {} (12);
\path[edge2, draw] (22) to[] node[] {} (4);
\path[edge2, draw] (21) to[] node[] {} (24);
\path[edge2, draw] (21) to[] node[] {} (5);
\path[edge2, draw] (21) to[] node[] {} (18);
\path[edge2, draw] (20) to[] node[] {} (6);
\path[edge2, draw] (20) to[] node[] {} (27);
\path[edge2, draw] (19) to[] node[] {} (13);
\path[edge2, draw] (19) to[] node[] {} (8);
\path[edge2, draw] (19) to[] node[] {} (15);
\path[edge2, draw] (18) to[] node[] {} (24);
\path[edge2, draw] (18) to[] node[] {} (3);
\path[edge2, draw] (17) to[] node[] {} (2);
\path[edge2, draw] (17) to[] node[] {} (12);
\path[edge2, draw] (16) to[] node[] {} (22);
\path[edge2, draw] (15) to[] node[] {} (21);
\path[edge2, draw] (15) to[] node[] {} (2);
\path[edge2, draw] (14) to[] node[] {} (2);
\path[edge2, draw] (14) to[] node[] {} (9);
\path[edge2, draw] (13) to[] node[] {} (14);
\path[edge2, draw] (13) to[] node[] {} (9);
\path[edge2, draw] (13) to[] node[] {} (15);
\path[edge2, draw] (12) to[] node[] {} (1);
\path[edge2, draw] (11) to[] node[] {} (22);
\path[edge2, draw] (11) to[] node[] {} (3);
\path[edge2, draw] (11) to[] node[] {} (4);
\path[edge2, draw] (11) to[] node[] {} (23);
\path[edge2, draw] (10) to[] node[] {} (27);
\path[edge2, draw] (10) to[] node[] {} (9);
\path[edge2, draw] (9) to[] node[] {} (1);
\path[edge2, draw] (9) to[] node[] {} (17);
\path[edge2, draw] (9) to[] node[] {} (27);
\path[edge2, draw] (8) to[] node[] {} (10);
\path[edge2, draw] (8) to[] node[] {} (9);
\path[edge2, draw] (7) to[] node[] {} (11);
\path[edge2, draw] (7) to[] node[] {} (23);
\path[edge2, draw] (7) to[] node[] {} (3);
\path[edge2, draw] (7) to[] node[] {} (17);
\path[edge2, draw] (6) to[] node[] {} (1);
\path[edge2, draw] (5) to[] node[] {} (3);
\path[edge2, draw] (5) to[] node[] {} (28);
\path[edge2, draw] (5) to[] node[] {} (17);
\path[edge2, draw] (4) to[] node[] {} (24);
\path[edge2, draw] (4) to[] node[] {} (16);
\path[edge2, draw] (3) to[] node[] {} (4);
\path[edge2, draw] (3) to[] node[] {} (24);
\path[edge2, draw] (2) to[] node[] {} (5);
\path[edge2, draw] (1) to[] node[] {} (25);
\path[edge2, draw] (1) to[] node[] {} (29);
\path[edge2, draw] (1) to[] node[] {} (17);
\path[edge2, draw] (0) to[] node[] {} (20);
\end{tikzpicture}%
\end{minipage}}%
\hfill%
\fbox{%
\begin{minipage}[c][2.9cm][c]{\dimexpr.33\linewidth-2\fboxsep-2\fboxrule}%
\centering%
\begin{tikzpicture}[scale=.095]

\node[] (label) at (-18,12) {$G'$};
\path (-23,15) rectangle (23,-15);

\node[draw=black!35!white, fill=black!35!white, inner sep=0pt, text width=0pt, text height=0pt, text depth=0pt, minimum size = 8pt, circle, label=center:{}] (8a) at (18.0332,7.37415) {};
\begin{pgfonlayer}{background}
\path[draw=black!65!white, fill=black!65!white , line width=12pt, line cap=round, line join=round] (12.7254,8.26723) -- (18.0332,7.37415) -- (13.1385,3.56592) -- cycle;
\end{pgfonlayer}
\node[node2, name path=27, label=center:{}] (27) at (-14.1815,-4.33343) {};
\node[node2, name path=25, label=center:{}] (25) at (-6.35368,6.04373) {};
\node[node2, name path=24, label=center:{}] (24) at (18.1616,1.04605) {};
\node[node2, name path=23, label=center:{}] (23) at (7.15398,7.5926) {};
\node[node2, name path=20, label=center:{}] (20) at (-18.2073,-1.8057) {};
\node[node2, name path=18, label=center:{}] (18) at (12.4097,-0.780114) {};
\node[node2, name path=17, label=center:{}] (17) at (0.591991,2.37893) {};
\node[node2, name path=14, label=center:{}] (14) at (-2.41055,-5.59724) {};
\node[node2, name path=12, label=center:{}] (12) at (-1.19764,6.94926) {};
\node[node2, name path=11, label=center:{}] (11) at (12.7254,8.26723) {};
\node[node2, name path=9, label=center:{}] (9) at (-6.83178,-2.64196) {};
\node[node2, name path=8, label=center:{}] (8) at (-7.82619,-7.01996) {};
\node[node2, name path=6, label=center:{}] (6) at (-13.3975,1.44172) {};
\node[node2, name path=5, label=center:{}] (5) at (7.72834,-0.141162) {};
\node[node2, name path=4, label=center:{}] (4) at (18.0332,7.37415) {};
\node[node2, name path=3, label=center:{}] (3) at (13.1385,3.56592) {};
\node[node2, name path=2, label=center:{}] (2) at (2.52156,-4.31514) {};

\path[edge2, draw] (27) to[] node[] {} (20);
\path[edge2, draw] (27) to[] node[] {} (6);
\path[edge2, draw] (25) to[] node[] {} (17);
\path[edge2, draw] (24) to[] node[] {} (4);
\path[edge2, draw] (23) to[] node[] {} (17);
\path[edge2, draw] (23) to[] node[] {} (3);
\path[edge2, draw] (23) to[] node[] {} (12);
\path[edge2, draw] (20) to[] node[] {} (6);
\path[edge2, draw] (18) to[] node[] {} (24);
\path[edge2, draw] (18) to[] node[] {} (3);
\path[edge2, draw] (17) to[] node[] {} (2);
\path[edge2, draw] (17) to[] node[] {} (12);
\path[edge2, draw] (14) to[] node[] {} (8);
\path[edge2, draw] (14) to[] node[] {} (2);
\path[edge2, draw] (14) to[] node[] {} (9);
\path[edge2, draw] (12) to[] node[] {} (25);
\path[edge2, draw] (11) to[] node[] {} (3);
\path[edge2, draw] (11) to[] node[] {} (4);
\path[edge2, draw] (11) to[] node[] {} (23);
\path[edge2, draw] (9) to[] node[] {} (25);
\path[edge2, draw] (9) to[] node[] {} (17);
\path[edge2, draw] (9) to[] node[] {} (27);
\path[edge2, draw] (8) to[] node[] {} (27);
\path[edge2, draw] (8) to[] node[] {} (9);
\path[edge2, draw] (6) to[] node[] {} (25);
\path[edge2, draw] (5) to[] node[] {} (3);
\path[edge2, draw] (5) to[] node[] {} (18);
\path[edge2, draw] (5) to[] node[] {} (17);
\path[edge2, draw] (3) to[] node[] {} (4);
\path[edge2, draw] (3) to[] node[] {} (24);
\path[edge2, draw] (2) to[] node[] {} (18);
\path[edge2, draw] (2) to[] node[] {} (5);
\end{tikzpicture}%
\end{minipage}}%
\hfill%
\fbox{%
\begin{minipage}[c][2.9cm][c]{\dimexpr.33\linewidth-2\fboxsep-2\fboxrule}%
\centering%
\begin{tikzpicture}[scale=.095]
\node[] (label) at (-18,12) {$G''$};
\path (-23,15) rectangle (23,-15);
\node[fill=black!65!white, inner sep=0pt, text width=0pt, text height=0pt, text depth=0pt, minimum size = 12pt, circle, label=center:{}] (8a) at (14.2072,6.28097) {};
\node[draw=black!35!white, fill=black!35!white, inner sep=0pt, text width=0pt, text height=0pt, text depth=0pt, minimum size = 8pt, circle, label=center:{}] (8a) at (14.2072,6.28097) {};
\node[node2, name path=25, label=center:{}] (25) at (-6.35368,6.04373) {};
\node[node2, name path=24, label=center:{}] (24) at (18.1616,1.04605) {};
\node[node2, name path=23, label=center:{}] (23) at (7.15398,7.5926) {};
\node[node2, name path=18, label=center:{}] (18) at (12.4097,-0.780114) {};
\node[node2, name path=12, label=center:{}] (12) at (-1.19764,6.94926) {};
\node[node2, name path=9, label=center:{}] (9) at (-4.62116,-4.1196) {};
\node[node2, name path=8, label=center:{}] (8) at (-7.82619,-7.01996) {};
\node[node2, name path=6, label=center:{}] (6) at (-15.4952,-1.94174) {};
\node[node2, name path=5, label=center:{}] (5) at (7.72834,-0.141162) {};
\node[node2, name path=4, label=center:{}] (4) at (14.2072,6.28097) {};
\node[node2, name path=2, label=center:{}] (2) at (1.55678,-0.968104) {};

\path[edge2, draw] (25) to[] node[] {} (2);
\path[edge2, draw] (23) to[] node[] {} (2);
\path[edge2, draw] (23) to[] node[] {} (4);
\path[edge2, draw] (23) to[] node[] {} (12);
\path[edge2, draw] (18) to[] node[] {} (24);
\path[edge2, draw] (18) to[] node[] {} (4);
\path[edge2, draw] (12) to[] node[] {} (25);
\path[edge2, draw] (9) to[] node[] {} (2);
\path[edge2, draw] (9) to[] node[] {} (8);
\path[edge2, draw] (9) to[] node[] {} (25);
\path[edge2, draw] (9) to[] node[] {} (6);
\path[edge2, draw] (8) to[] node[] {} (6);
\path[edge2, draw] (6) to[] node[] {} (25);
\path[edge2, draw] (5) to[] node[] {} (4);
\path[edge2, draw] (5) to[] node[] {} (18);
\path[edge2, draw] (5) to[] node[] {} (2);
\path[edge2, draw] (4) to[] node[] {} (24);
\path[edge2, draw] (2) to[] node[] {} (12);
\path[edge2, draw] (2) to[] node[] {} (18);
\end{tikzpicture}%
\end{minipage}}
\captionsetup{width=.9\textwidth}
\caption{Top: two iterations of \Contraction{} with $\varphi(x)=4x/5 -3$ on a tree; bottom: two iterations of \Contraction{} with $\varphi(x) = 3x/4 - 3$ on a planar graph.
Distances are geometric and some contracted sets of vertices are highlighted.}
\label{fig:contraction}
\end{figure}

Given the results presented in this paper and the known results for spanners (discussed in detail below), we further believe that the combination of spanners and contractions is very powerful, promising and flexible.
As the former only increases and the latter only decreases the distances, the respective distortion guarantees provably also hold for the overall distortion.
In fact, both effects may even compensate each other.
This is true \emph{regardless} of the order in which both compression operations are applied, even when they are applied repeatedly.

In order to measure the distance distortion of the contraction, we assume a non-decreasing tolerance function $\varphi\colon \RR \to \RR$, similar to the corresponding function for spanners, see e.g.~\cite{MR2298319}.
We are interested in computing contractions that preserve distances in the following sense:
For any two vertices~$u$ and~$v$ at distance~$d$ in~$G$, the distance of the corresponding vertices in the contracted graph~$G/C$ must be at least~$\varphi(d)$.
If this condition is satisfied, we call~$C$ a \emph{$\varphi$-distance preserving contraction}, or \emph{$\varphi$-contraction} for short.
Formally, the algorithmic problem \Contraction{} considered in this paper is to compute for a given graph $G = (V,E)$ with edge lengths $\ell\colon E\to \RR_{>0}$ and a given tolerance function~$\varphi$, a $\varphi$-contraction $C \subseteq E$ such that the number of contracted and deleted edges is maximized.
We are specifically interested in the case where the tolerance function~$\varphi$ is an affine function $\varphi(x)=x/\alpha-\beta$ for real-valued parameters $\alpha \geq 1$ and $\beta \geq 0$.
We then simply write \emph{$(\alpha,\beta)$-contraction} instead of $\varphi$-contraction.
See Figure~\ref{fig:contraction} for some example instances of the problem \Contraction{}.

When considering the case of a purely multiplicative error ($\beta=0$), a slight subtlety has to be taken into account.
Specifically, for a graph with positive edge lengths it is not feasible to contract a single edge.
Therefore, we propose a slight modification of our original model:
We say that a set $C\subseteq E$ of edges of $G$ is a \emph{weak $\varphi$-distance preserving contraction}, or \emph{weak $\varphi$-contraction} for short, if it does not contract the entire graph and, for any two vertices $u$ and $v$ at distance $d$ in $G$, the distance of the corresponding vertices in $G/C$ is either zero or at least $\varphi(d)$.
We will refer to the corresponding algorithmic problem as \WContraction{}.
Put differently, in a weak contraction, the distances between different super-vertices satisfy the given distortion guarantee, but for vertices belonging to the same super-vertex, no guarantee is given.

\subsection{Our results}
\label{sec:results}

In this paper, we present a comprehensive picture of the algorithmic complexity of the described contraction problems.
Recall that we are given an input graph with edge lengths and tolerance function $\varphi$, and our goal is to compute a (weak) contraction that maximizes the total number of contracted and deleted edges.
Our main results concern affine tolerance functions $\varphi(x)=x/\alpha-\beta$ with parameters $\alpha\geq 1$ and $\beta\geq 0$.
For the reader's convenience, our results are summarized in Tables~\ref{tab:results} and~\ref{tab:results-asymp}.
Within the tables and throughout this paper, $n$ and $m$ denote the number of vertices and edges, respectively, of the input graph under consideration.

\newcommand{\myline}[1]{\raisebox{1.5mm}{\underline{\hspace{#1}}}}
\newcolumntype{P}[1]{>{\centering\arraybackslash}p{#1}}

\begin{table}[tb]
\caption{Overview of algorithmic and hardness results presented in this paper.
\label{tab:results}
}
\footnotesize
\begin{minipage}{\textwidth}
\begin{tabular*}{\linewidth}{
@{}l
@{\extracolsep{\fill}} P{2.0cm}
@{\extracolsep{0.6ex}}   P{2.0cm}
@{\extracolsep{0.6ex}}   P{2.35cm}
@{\extracolsep{0.6ex}}   P{2.35cm}
}
\toprule
Problem &
\multicolumn{4}{c}{\myline{3.9cm}\,Graph classes\,\myline{3.9cm}} \\
& 
\multicolumn{1}{c}{Path} &
\multicolumn{1}{c}{Tree} & 
\multicolumn{1}{c}{Cycle} & 
\multicolumn{1}{c}{General} \\
\Contraction{} &\\[1mm]
addit.\ {\scriptsize($\alpha\!\!=\!\!1$)}, unit lg. \rule{0pt}{1.1em}&
\multicolumn{1}{c}{\cellcolor{green!50}} &
\multicolumn{1}{c}{\cellcolor{green!50}$\cO(n)$~\scriptsize{[Th.~\ref{thm:unit-tree}]}} &
\multicolumn{1}{c}{\cellcolor{green!50}} & \multicolumn{1}{c}{\cellcolor{red!40}$m^{\frac{1}{2}\!-\!\varepsilon}$-inapx.\footnote{even for bipartite graphs and $\beta=1$} \scriptsize{[Th.~\ref{thm:inapx-bip}]}} \\

&
\multicolumn{1}{c}{\cellcolor{green!50}}&
&
\multicolumn{1}{c}{\cellcolor{green!50}}&
\multicolumn{1}{c}{\cellcolor{red!40}}\\[-2ex]

affine {\scriptsize($\alpha,\beta$)}, unit lg. \rule{0pt}{1.1em}&
\multicolumn{1}{c}{\cellcolor{green!50}$\cO(n)$ \scriptsize{[Th.~\ref{thm:path}]}} &
\multicolumn{1}{c}{\cellcolor{green!40}}&
\multicolumn{1}{c}{\cellcolor{green!50}$\cO(n)$~\scriptsize{[Th.~\ref{thm:cycle}]}} &
\multicolumn{1}{c}{\cellcolor{red!40}} \\

&
&
\multicolumn{1}{c}{\cellcolor{green!40}}&
&
\\[-2ex]

addit.\ {\scriptsize($\alpha\!\!=\!\!1$)} \rule{0pt}{1.1em}&
\multicolumn{2}{c}{\cellcolor{green!40}} &
\multicolumn{1}{c}{\cellcolor{red!30}NP-hard~\scriptsize{[Th.~\ref{thm:cycle-hard-fix}]}} &
\multicolumn{1}{c}{\cellcolor{red!50}$n^{1-\varepsilon}$-inapx.~\scriptsize{[Th.~\ref{thm:inapx-contr}]}} \\

&
\multicolumn{2}{c}{\cellcolor{green!40}} &
\multicolumn{1}{c}{\cellcolor{red!30}} &
\multicolumn{1}{c}{\cellcolor{red!50}}\\[-2ex]

affine {\scriptsize($\alpha,\beta$)} \rule{0pt}{1.1em}&
\multicolumn{2}{c}{\cellcolor{green!40}$\cO(n^3)$~\scriptsize{[Th.~\ref{thm:dp}]}} &
\multicolumn{1}{c}{\cellcolor{red!30}} &
\multicolumn{1}{c}{\cellcolor{red!50}} \\

\midrule
\WContraction{} &\\[1mm]

additive {\scriptsize($\alpha\!\!=\!\!1$)} \rule{0pt}{1.1em}&
\multicolumn{2}{c}{\cellcolor{green!30}} & 
\multicolumn{2}{l}{\cellcolor{red!30}NP-hard\footnote{also NP-hard for planar graphs with arb.\ large girth, $(\alpha,\beta)=(2,0)$, and unit lg.\ ($\ell=1$) [Th.~\ref{thm:weak-planar}].} \scriptsize{[Th.~\ref{thm:cycle-hard-fix}]}} \\
&\multicolumn{2}{c}{\cellcolor{green!30}}& \multicolumn{1}{c}{\cellcolor{red!30}}&\\[-2ex]

affine {\scriptsize($\alpha,\beta$)} \rule{0pt}{1.1em}&
\multicolumn{2}{c}{\cellcolor{green!30}$\cO(n^5)$~\scriptsize{[Th.~\ref{thm:weak-dp}]}} &
\multicolumn{1}{c}{\cellcolor{red!30}} &
\multicolumn{1}{c}{\cellcolor{red!50}$n^{1-\varepsilon}$-inapx.\footnote{even if $(\alpha,\beta)=(3/2,0)$.} \scriptsize{[Th.~\ref{thm:inapx-weak}]}} \\
\bottomrule
\end{tabular*}
\end{minipage}
\end{table}

\subsubsection*{Algorithmic results}
We develop linear time greedy algorithms for \Contraction{} with unit lengths on paths and cycles for general $\alpha$ and $\beta$, as well as on trees with $\alpha = 1$ (Theorems~\ref{thm:path},~\ref{thm:cycle} and~\ref{thm:unit-tree}).
The first two algorithms are inspired by LP rounding techniques, the latter algorithm relies on a structural characterization of optimal solutions.

We present dynamic programming algorithms solving \Contraction{} and \WContraction{} on trees in time $\cO(n^3)$ or $\cO(n^5)$, respectively (Theorems~\ref{thm:dp} and~\ref{thm:weak-dp}).
These dynamic programs compute optimal solutions on subtrees, in the latter case combining several Pareto optimal solutions in a two-dimensional parameter space (hence the larger running time).

Note that instead of maximizing the number of contracted and deleted edges, we could optimize for $\alpha$ or $\beta$ while fixing the other parameters.
The resulting problems are polynomially equivalent to our setting, via binary search over one of the parameters.

\subsubsection*{Hardness results}
We complement these algorithms by several hardness results.
First we consider the purely additive case where $\alpha=1$.
We show that here both \Contraction{} and \WContraction{} are NP-hard on cycles for any fixed $\beta>0$, by a reduction of a variant of \Partition{} (Theorem~\ref{thm:cycle-hard-fix}).
As mentioned before, both problems can be solved efficiently on graphs without cycles, and there is a linear time algorithm for \Contraction{} on cycles with unit lengths.
By reductions from \Clique{} we show that both the general as well as the unit lengths case of \Contraction{} with $\alpha=1$ are hard to approximate within factors of $n^{1-\varepsilon}$ or $m^{1/2-\varepsilon}$, respectively (Theorem~\ref{thm:inapx-contr} and Theorem~\ref{thm:inapx-bip}).

Further we consider the purely multiplicative case where $\beta=0$ (here \Contraction{} is trivial).
We show that in this case \WContraction{} is NP-hard on planar graphs with arbitrarily large girth and unit length edges by a reduction from a special case of \textsc{Planar 3SAT} (Theorem~\ref{thm:weak-planar}).
Since these graphs are locally tree-like, this result constitutes another rather sharp separation from the polynomially solvable tree case.
Furthermore, we show that the problem is hard to approximate within a factor of $n^{1-\varepsilon}$ by a reduction from \IndSet{} (Theorem~\ref{thm:inapx-weak}).

\begin{table}[tb]
\caption{Overview of asympotic bounds presented in this paper.
\label{tab:results-asymp}
}
\footnotesize
\begin{minipage}{\textwidth}
\centering
\begin{tabular}{
@{}l
@{\extracolsep{1ex}}   P{3.5cm}
@{\extracolsep{1ex}}   P{2.5cm}
@{\extracolsep{1ex}}   P{2.5cm}
}
\toprule
\Contraction{} with unit lg.\ {\scriptsize($\ell\!\!=\!\!1$)} &
\# of edges in $G/C$ &
Time & 
Reference \\[1mm]
\rowcolor{green!50}
\multicolumn{1}{@{}l}{\cellcolor{white}$(\alpha, \beta)=(2k -1, 1)$}&
\rule{0pt}{1.1em}
$n^{1+1/k}$ &
$\cO(m)$ &
[Th.~\ref{thm:asymp-mult}] \\
&
&
&\\[-2ex]
\rowcolor{green!50}
\multicolumn{1}{@{}l}{\cellcolor{white}$(\alpha, \beta)=(2\log_2 n -1, 1)$}&
\rule{0pt}{1.1em}
$2n$ &
$\cO(m)$ &
[Cor.~\ref{cor:asymp-mult}] \\
&
&
&\\[-2ex]
\rowcolor{red!50}
\multicolumn{1}{@{}l}{\cellcolor{white}$(\alpha, \beta)=(k -1, 1)$}&
\rule{0pt}{1.1em}
$\Omega(n^{1+1/k})$ &
--- &
[Th.~\ref{thm:asymp-girth}] \\
&
&
&\\[-2ex]
\rowcolor{green!50}
\multicolumn{1}{@{}l}{\cellcolor{white}$(\alpha, \beta)=(1, k)$}&
\rule{0pt}{1.1em}
$m - km/(2n)$ &
$\cO(m)$ &
[Th.~\hyperref[thm:asymp-add-first]{\ref*{thm:asymp-add}~\ref*{thm:asymp-add-first}}]\\
&
&
&\\[-2ex]
\rowcolor{green!50}
\multicolumn{1}{@{}l}{\cellcolor{white}$(\alpha, \beta)=(1, k)$}&
\rule{0pt}{1.1em}
$\cO(n^2/k)$ &
$\cO(m)$ &
[Th.~\hyperref[thm:asymp-add-high]{\ref*{thm:asymp-add}~\ref*{thm:asymp-add-high}}] \\
&
&
&\\[-2ex]
\rowcolor{red!50}
\multicolumn{1}{@{}l}{\cellcolor{white}$(\alpha, \beta)=(1, \cO(1))$}&
\rule{0pt}{1.1em}
$\Omega(n^{4/3 - o(1)})$ &
--- &
\cite{MR3536579} \\
\midrule
\Contraction{} with unit lg.\ {\scriptsize($\ell\!\!=\!\!1$)}\\ \ and min.~degree $D$&
\# of vertices in $G/C$ &
Time & 
Reference \\
\rowcolor{green!50}
\multicolumn{1}{@{}l}{\cellcolor{white}$(\alpha, \beta)=(5, 1)$}&
\rule{0pt}{1.1em}
$n/D$ &
$\cO(m)$ &
[Th.~\ref{thm:min-degree}] \\
&
&
&\\[-2ex]
\rowcolor{red!50}
\multicolumn{1}{@{}l}{\cellcolor{white}$(\alpha, \beta)=(k, 1)$}&
\rule{0pt}{1.1em}
$n/((k+1) D)$ &
--- &
[Th.~\ref{thm:asymp-degree}] \\
\bottomrule
\end{tabular}
\end{minipage}
\end{table}

\subsubsection*{Asymptotic bounds}
We now discuss our asymptotic bounds for contractions.
In this setting, we are interested in (non-optimal) contractions for graphs with unit lengths that can be computed efficiently despite the above-mentioned hardness results.
We prove that for any $k\geq 1$ any graph $G$ has a $(2k-1,1)$-contraction $C$ such that $G/C$ has at most $n^{1+1/k}$ edges, and such a contraction can be computed in time $\cO(m)$ (Theorem~\ref{thm:asymp-mult}) by successively growing clusters around center vertices.
Assuming Erd\H{o}s' girth conjecture, we show a corresponding (not tight) lower bound (Theorem~\ref{thm:asymp-girth}).

For a purely additive error, we observe two simple $(1,k)$-contractions that can be computed in $\cO(m)$ time (Theorem~\ref{thm:asymp-add}).
We show that for any even integer $0 \leq k \leq n$, the edges incident to the $k/2$ vertices of highest degrees form a $(1,k)$-contraction with objective value at least $km/(2n)$, which is asymptotically best possible for paths.
Another $(1, k)$-contraction $C$ is implicitly used by Bernstein and Chechik in their faster deterministic algorithm for dynamic shortest paths in dense graphs \cite{MR3536582}.
For any number $0 < k \leq n$, it consists of the edges incident to two vertices of degree at least $n/k$, and $G/C$ has $\cO(n^2/k)$ edges.
Both of these contractions can be computed in $\cO(m)$ time.
Further we note that the main result in~\cite{MR3536579} implies that for all $\varepsilon > 0$, any contraction $C$ such that $G/C$ has $\cO(n^{4/3 - \varepsilon})$ edges does not admit a constant additive error.

One possible advantage of contraction compared to spanners is the potentially significant reduction of \emph{vertices} as well as edges, e.g. reducing the complexity of performing algorithmic tasks in the smaller graph.
To ground this intuition, we exhibit a contraction that significantly reduces the number of vertices in any graph with minimum degree $D$ to $\cO(n/D)$ (Theorem~\ref{thm:min-degree}). We also present a lower bound (Theorem~\ref{thm:asymp-degree}) showing that we cannot guarantee $o(n/D)$ vertices, even if we allow larger approximation error.

\subsection{Comparison with previous results}

There are several models aiming to compress graphs while preserving distances.
They differ by their choice of compression operation, such as replacing the graph by a subgraph or minor, and by whether the aim is to preserve all or only certain distances.

As discussed before, graph spanners are a concept closely related to contractions, where the length of removed edges is set to $\infty$ rather than to $0$.
Our results highlight further intrinsic similarities of the two models.
Like contractions, spanners are NP-hard to compute optimally (see \cite{MR982872,MR1232326}).
While the spanner literature considers the problem of minimizing the number of remaining edges, we analyze the objective of maximizing the number of contracted edges, prohibiting a direct comparison of the respective inapproximability results.
We note however that approximation algorithms for spanner problems have been studied extensively, even though strong lower bounds are known.
For instance, computing $(2,0)$-spanners in unweighted graphs is $\Theta(\log n)$-hard to approximate (\cite{MR1291540,MR1822929}); for further references see e.g.~\cite{MR3627763}.

Despite these negative results, it is still possible to obtain powerful asymptotic guarantees in both models.
In particular, our $(2k-1, 1)$-contraction with $\cO(n^{1+1/k})$ edges for unweighted graphs has a clear analogy to the classic $(2k-1,0)$-spanner with the same number of edges~\cite{MR1184695} (note that the additive error of 1 in our result is strictly necessary, as discussed above).
There is, however, a major difference between the two results: whereas the $(2k-1,0)$-spanner can trivially be shown to be optimal assuming Erd\H{o}s' girth conjecture, applying this conjecture to the contraction model only yields a lower bound of $n^{1+1/(2k)}$ edges for a $(2k-1,1)$-contraction. Closing this gap thus remains as an interesting open problem in the contraction model, whose solution would likely yield further insight into the relationship to spanners.

Halperin and Zwick showed how an optimal $(2k-1,0)$-spanner can be constructed in linear time (see~\cite{MR2080715}). We achieve the same running time for our $(2k-1,1)$-contraction.
It is interesting to note that the clustering yielding our $(2k-1,1)$-contraction was previously used in~\cite{MR982872} to obtain a $(4k+1,0)$-spanner of the same asymptotic density.

There are also spanner results that significantly sparsify unweighted graphs at the cost of a purely additive error, as a (1,2)-spanner with $\cO(n^{3/2})$ edges~\cite{MR1681058}, or a (1,6)-spanner with $\cO(n^{4/3})$ edges~\cite{MR2298319}.
We do not know if analogous results are possible in the contraction model.
The incompressibility result in~\cite{MR3536579} mentioned above implies the same lower bound for spanners as for contractions and every other distance oracle with additive error: For every $\varepsilon > 0$ any spanner of size $\cO(n^{4/3 - \varepsilon})$ does not admit a constant additive error.
Finally, for spanners there are results that combine multiplicative and additive error, such as the $(k,k-1)$-spanner of~\cite{MR2298319}.

Gupta~\cite{MR1958411} considered the problem of approximating a tree metric on a subset of the vertices by another tree, and gave a linear time algorithm computing an $8$-approximation.
As Chan et al.~\cite{MR2304999} observed later, on complete binary trees a solution of minimum distortion is always achieved by a minor (with possibly different edge lengths) of the input tree, so this seems to be the first investigation of contractions that approximate graph distances.
Krauthgamer et al.~\cite{MR3158782} considered an extension to general graphs, studying the size of minors preserving all distances between a given terminal set of fixed size.
Cheung et al.~\cite{DBLP:conf/icalp/CheungGH16} introduced a multiplicative distortion to this model.
As here no two terminals may be merged, these approaches cannot compress a graph at all if every vertex is a terminal.

The \emph{pairwise preservers} due to Coppersmith et al.~\cite{MR2257273} combine spanners with the aim of preserving only terminal distances.
Given a graph $G$ and a set of $k$ terminal pairs, a pairwise preserver is a spanning subgraph inducing exactly the same terminal distances as~$G$. Coppersmith et al.~\cite{MR2257273} proved that for every undirected weighted graph there exists a pairwise preserver of size $\cO(n + n^{1/2}k)$. Furthermore, they showed that every directed weighted graph has a pairwise preserver of size $\cO(nk^{1/2})$.
For the special case of undirected unweighted graphs, Bodwin et al.~\cite{MR3478437} showed the existence of a pairwise preserver with $\cO(n^{2/3}k^{2/3} + nk^{1/3})$ edges.
Recently, Bodwin~\cite{MR3627768} proved that any directed weighted graph has a pairwise preserver of size $\cO(n + n^{2/3}k)$.

\subsection{Further related work}

The preservation of graph properties other than distances has been studied as well.
Biedl et al.~\cite{MR1844744} considered contractions in capacitated networks with the goal of maintaining the maximum flow in the network.
Here an edge $e$ is called \emph{useless}, if for every capacity function there is a maximum flow not using $e$.
Biedl et al.\ showed that finding all useless edges is NP-complete, but solvable in $\cO(n^2)$ time on certain planar graphs.
For undirected networks, Misio{\l}ek et al.~\cite{MR2190897} gave an algorithm finding all useless edges in $\cO(n + m)$ time.
Toivonen et al.~\cite{DBLP:conf/icdm/ZhouMT10} considered a more general model aiming to maintain the quality of paths with respect to any given function, e.g., distance or capacity.
They investigated strategies of removing edges, without decreasing the quality of the best path between any pair of vertices.

Graph simplification problems have also been studied in several other contexts, and we conclude this section by mentioning two such examples:
H{\"u}bler et al.~\cite{DBLP:conf/icdm/HublerKBG08} studied a problem related to graph mining, examining how to choose an induced subgraph with a given number of vertices and with similar topological properties as the input graph.
Numerous papers investigate, directly or as a tool, sparsifiers that preserve the effective resistance between certain or all pairs of vertices, see e.g.~\cite{MR3017573,MR3441994,MR3631020,MR3678224,DBLP:conf/focs/ChuGPSSW18}.

\subsection{Outline of this paper}

In Section~\ref{sec:prelim} we introduce important definitions and notations that will be used throughout this paper.
In Section~\ref{sec:greedy} we present our three greedy algorithms for solving \Contraction{} with unit lengths on paths, cycles and trees (the latter result requires $\alpha=1$).
In Section~\ref{sec:trees} we discuss efficient dynamic programming algorithms for \Contraction{} and \WContraction{} on trees.
Sections~\ref{sec:hard-add} and \ref{sec:hard-mult} are devoted to our hardness results, focussing on the cases of purely additive and multiplicative error, respectively.
In Section~\ref{sec:asymp} we present our asymptotic results on contractions.

\section{Preliminaries}
\label{sec:prelim}

Throughout this paper we consider simple undirected graphs~$G$ (without parallel edges or loops).
We let~$V(G)$ and~$E(G)$ denote the vertex and edge set of~$G$, respectively, and we define $n(G):=|V(G)|$ and $m(G):=|E(G)|$.
If the context is clear, we simply write $V$, $E$, $n$ and~$m$.
We also use the notation $[n]:=\{1,2,\ldots,n\}$.
We assume that~$G$ is connected, otherwise the contraction problem can be solved independently for each connected component.
Edge lengths are given by a function $\ell\colon E\to\RR_{>0}$.
The \emph{distance} $\dist_\ell(u,v)$ between two vertices~$u$ and~$v$ is the length of a shortest path between~$u$ and~$v$ in~$G$ with respect to~$\ell$.

Given a subset of edges $C\subseteq E$, we denote the resulting simple graph obtained from $G$ by contracting the edges in $C$, deleting resulting loops and keeping only the minimum length edge between any two vertices by $G/C$.
We denote the number of deleted loops and multi-edges by $\Delta(C)$ (thus $m(G/C)=m(G)-|C|-\Delta(C)$).
Instead of contracting a set $C\subseteq E$ of edges in $G$, setting their edge lengths to zero has the same effect on the distances in the resulting graph.
This is somewhat cleaner conceptually, so we will often adopt this viewpoint.
Specifically, we let $\ell_C$ be the new length function that assigns 0 to every edge in $C$, and that is equal to the original edge lengths $\ell$ on the edges $E\setminus C$.

A \emph{tolerance function} is a non-decreasing function $\varphi\colon\RR\to \RR$.
Roughly speaking, this function describes by how much the distance between two vertices may drop when contracting edges (i.e., setting edge lengths to zero).
Formally, given a graph $G$ with edge lengths $\ell$ and a tolerance function $\varphi$, we say that a subset of edges $C\subseteq E$ is a \emph{$\varphi$-distance preserving contraction} or \emph{$\varphi$-contraction} for short, if
\begin{equation}
\label{eq:contr-cond}
\dist_{\ell_C}(u,v)\geq \varphi(\dist_\ell(u,v))
\end{equation}
holds for any two vertices~$u$ and~$v$ in~$G$.
Similarly, we say that $C$ is a \emph{weak $\varphi$-distance preserving contraction} or \emph{weak $\varphi$-contraction} for short, if any two vertices $u$ and $v$ satisfy relation \eqref{eq:contr-cond} or the relation $\dist_{\ell_C}(u,v)=0$, and if the graph $(V,C)$ is disconnected (equivalently, if $G/C$ is not a single vertex).
The last condition prevents solutions $C\subseteq E$ for which the graph is contracted to a single vertex.
If $\varphi(x)=x/\alpha-\beta$, then we simply write (weak) $(\alpha,\beta)$-contraction instead of (weak) $\varphi$-contraction.

An \emph{instance} of the problem \Contraction{} or \WContraction{} is a triple $(G,\ell,\varphi)$, where~$G$ is the underlying graph, $\ell$ the length function and~$\varphi$ the tolerance function, and the objective is to find a (weak) $\varphi$-distance preserving contraction $C\subseteq E$, such that
\begin{equation}
\label{eq:objective}
\Phi(C) := |C|+\Delta(C)=m(G)-m(G/C)
\end{equation}
is maximized.
This quantity equals the number of edges we save when going from $G$ to $G/C$.
Note that on trees we have $\Phi(C)=|C|$ for any (weak) contraction $C$, whereas on general graphs we have $\Phi(C)\geq |C|$.

\begin{problem}{\wContraction{}}
Input: & A graph $G = (V,E)$ with edge lengths $\ell\colon E \to \RR_{>0}$ and a non-decreasing function $\varphi\colon \RR \to \RR$. \\
Output: & A (weak) $\varphi$-distance preserving contraction $C \subseteq E$ maximizing $\Phi(C)$. \\
\end{problem}

In this context we sometimes refer to a set of edges that forms a (weak) contraction as a \emph{feasible} solution, and to a (weak) contraction of maximum value $\Phi(C)$ as an \emph{optimal} solution.

We begin by proving that our contraction model behaves nicely when contracting edges in phases, i.e., the total error is simply the error accumulated over the contraction phases (but not more).
To state this result we denote the composition of tolerance functions~$\varphi$ and~$\psi$ as $(\psi \circ \varphi)(x):=\psi(\varphi(x))$.

\begin{thm}
\label{thm:contract-comp}
Let~$C$ be a (weak) $\varphi$-contraction for~$G$, and let~$C'$ be a (weak) $\psi$-contraction for~$G/C$.
Then $C\cup C'$ is a (weak) $(\psi \circ \varphi)$-contraction for~$G$.
\end{thm}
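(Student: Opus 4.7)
The plan is to fix two arbitrary vertices $u,v\in V(G)$ and chase their distance through the two successive contractions, linking the distance in $G$ with lengths $\ell_{C\cup C'}$ to the distance in $G/C$ with the appropriately modified lengths. Write $[u]_C$ and $[v]_C$ for the super-vertices of $u$ and $v$ in $G/C$. The starting observation will be the identity
\[
\dist_{G/C,\,\ell_C}([u]_C,[v]_C) \;=\; \dist_{G,\,\ell_C}(u,v),
\]
which follows from the standard path correspondence: every $u$-$v$ walk in $G$ projects, by collapsing the $C$-edges, to a walk from $[u]_C$ to $[v]_C$ in $G/C$ of the same length (since $\ell_C$ vanishes on $C$), and conversely every walk in $G/C$ lifts to a walk of equal length in $G$ with lengths $\ell_C$. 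The same argument applied to $(G/C, \ell_C)$ and its contraction by $C'$ gives
\[
\dist_{G/C,\,(\ell_C)_{C'}}([u]_C,[v]_C) \;=\; \dist_{G,\,\ell_{C\cup C'}}(u,v),
\]
since $(\ell_C)_{C'}$ and $\ell_{C\cup C'}$ assign $0$ exactly to the edges of $C\cup C'$.

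With these identities in hand, I would chain the two hypotheses. Since $C'$ is a $\psi$-contraction for $G/C$, the left-hand side above is at least $\psi\!\left(\dist_{G/C,\,\ell_C}([u]_C,[v]_C)\right) = \psi\!\left(\dist_{G,\,\ell_C}(u,v)\right)$. Since $C$ is a $\varphi$-contraction for $G$, we have $\dist_{G,\,\ell_C}(u,v)\geq \varphi(\dist_{G,\,\ell}(u,v))$, and because $\psi$ is non-decreasing this inequality is preserved under $\psi$. Combining gives exactly
\[
\dist_{G,\,\ell_{C\cup C'}}(u,v)\;\geq\;(\psi\circ\varphi)(\dist_{G,\,\ell}(u,v)),
\]
which is the $(\psi\circ\varphi)$-contraction property.

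For the weak version, the same computation works whenever the left-hand side is nonzero; if it is zero, the weak condition is satisfied automatically. The only subtlety is to check that whenever $\dist_{G,\,\ell_{C\cup C'}}(u,v)>0$ one can legitimately apply the weak $\psi$-contraction property to $[u]_C$ and $[v]_C$: a positive final distance forces $[u]_C\neq[v]_C$ in $G/C$ (otherwise $C\subseteq C\cup C'$ would already give distance $0$) and also forces these two super-vertices to end up distinct after contracting $C'$, so the weak $\psi$-condition yields the nonzero lower bound we need. Finally, the non-triviality clause transfers cleanly: $(G/C)/C'\cong G/(C\cup C')$ has at least two vertices because $C'$ is a weak contraction of $G/C$, so $C\cup C'$ does not contract $G$ to a point. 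The main bookkeeping hazard throughout is verifying the two distance identities in the presence of the simplification step (keeping only minimum-length parallel edges) in the definition of $G/C$; this is harmless for distances because suppressing non-minimum parallel edges never changes shortest-path lengths.
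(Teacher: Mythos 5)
Your proof follows essentially the same route as the paper's: fix $u,v$, apply the $\psi$-contraction guarantee for $C'$ and the $\varphi$-contraction guarantee for $C$ in sequence, and close the argument using that $\psi$ is non-decreasing. You additionally spell out the distance-correspondence identities between $(G,\ell_C)$ and $G/C$ (and their two-step analogue), and you work out the weak case explicitly (the zero-distance escape clause and the non-triviality of $G/(C\cup C')$), both of which the paper's proof leaves implicit with ``works analogously.''
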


\begin{proof}
We only prove the statement for contractions~$\varphi$ and~$\psi$.
The proof for weak contractions works analogously.
Let~$\ell$ denote the edge lengths of~$G$ and consider a pair of vertices $u,v\in V(G)$.
Then we have $\dist_{\ell_{C\cup C'}}(u,v) \geq \psi(\dist_{\ell_C}(u,v))$ by the definition of~$C'$ and $\dist_{\ell_C}(u,v)\geq \varphi(\dist_\ell(u,v))$ by the definition of~$C$.
Combining these inequalities and using that~$\psi$ is non-decreasing we obtain $\dist_{\ell_{C\cup C'}}(u,v)\geq \psi(\varphi(\dist_\ell(u,v)))$, as desired.
\end{proof}

Note that Theorem~\ref{thm:contract-comp} only concerns the \emph{feasibility} of repeated contractions, but not about their \emph{optimality} when searching for contractions of maximum cardinality.
With respect to solution quality, contracting in phases may be arbitrarily bad:
Consider a star with~$k$ unit length edges and additive tolerance functions $\varphi(x)=\psi(x)=x-1$.
An optimum $(\psi \circ \varphi)$-contraction contains all~$k$ edges, whereas finding an optimal $\varphi$-contraction~$C$ and then an optimal $\psi$-contraction of~$G/C$ allows contracting only one edge in each phase, leading to a $(\psi \circ \varphi)$-contraction of value 2.

\section{Greedy algorithms}
\label{sec:greedy}

In this section we consider three special cases of the problem \Contraction{} with affine tolerance function $\varphi(x)=x/\alpha-\beta$.
We obtain simple greedy algorithms computing maximum size $\varphi$-contractions in $\cO(n)$ time on paths and cycles with unit lengths, and on trees with unit lengths and $\alpha=1$.

\subsection{Paths with unit length edges}
\label{sec:paths}

In this section we consider the special case of contracting a path~$P_n$ with~$n-1$ unit length edges~$\ell=1$ and the tolerance function $\varphi(x)=x/\alpha-\beta$.
In this case optimal solutions have a very special structure, which leads to a straightforward greedy algorithm running in linear time.
Recall that as a path is a tree, our objective functions satisfies $\Phi(C)=|C|$ for any contraction $C$.

Observe that a solution $C\subseteq E(P_n)$ for the instance $(P_n,\ell,\varphi)$ of the problem \Contraction{} is feasible, if and only if every subpath $P'\subseteq P_n$ satisfies the condition
\begin{equation}
\label{eq:path-cond}
|E(P')\cap C| \leq (1-1/\alpha)|E(P')|+\beta .
\end{equation}
This observation leads to the following natural greedy algorithm $\Greedy(P_n,\alpha,\beta)$:
The algorithm considers the edges $e_1,e_2,\ldots,e_{n-1}$ of~$P_n$ as they are encountered when starting from one of the two end vertices of~$P_n$.
It iteratively constructs a solution~$C$ for the subpath on the first~$i$ edges $e_1,e_2,\ldots,e_i$ for $i=1,2,\ldots,n-1$, by initializing $C:=\emptyset$, and by adding the edge~$e_i$ to~$C$ if and only if the condition $|C|+1\leq (1-1/\alpha)i+\beta$ is satisfied (so after adding~$e_i$ to~$C$, \eqref{eq:path-cond} is still satisfied).

\begin{thm}
\label{thm:path}
Let~$P_n$ be a path with unit length edges~$\ell=1$ and consider the tolerance function $\varphi(x)=x/\alpha-\beta$, $\alpha,\beta\geq 1$.
The set of edges computed by the algorithm $\Greedy(P_n,\alpha,\beta)$ is an optimal solution for the instance $(P_n,\ell,\varphi)$ of the problem \Contraction{}, and it is computed in time $\cO(n)$.
\end{thm}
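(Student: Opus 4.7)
The plan is to establish two claims about the set $C$ produced by $\Greedy(P_n,\alpha,\beta)$: (i)~$C$ satisfies condition~\eqref{eq:path-cond} for \emph{every} subpath of $P_n$, not just prefixes, so it is a $\varphi$-contraction, and (ii)~$|C|$ is maximum among all $\varphi$-contractions. Throughout, I write $\alpha':=1-1/\alpha\in[0,1)$ and let $c_i:=|C\cap\{e_1,\ldots,e_i\}|$ after greedy has processed $e_i$; the algorithm's rule reads $c_i = c_{i-1}+1$ if and only if $c_{i-1}+1\leq\alpha' i+\beta$, and $c_i = c_{i-1}$ otherwise.

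For feasibility, I would argue by induction on $i$ that every subpath of $\{e_1,\ldots,e_i\}$ satisfies~\eqref{eq:path-cond}. The only non-trivial inductive step is when $e_i$ is added, and then one only needs to check subpaths $e_{j+1},\ldots,e_i$ that end at $e_i$. The case $j=0$ is precisely the condition checked by the algorithm. For $j\geq 1$, let $t\leq j$ be the largest index at which greedy did \emph{not} add an edge (and set $t:=0$ if greedy added all of $e_1,\ldots,e_j$). By the maximality of $t$, greedy added all of $e_{t+1},\ldots,e_j$, so $c_j = c_{t-1} + (j-t)$ when $t\geq 1$, and $c_j = j$ when $t = 0$. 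In the case $t\geq 1$, the failure of greedy at step $t$ yields the strict inequality $c_{t-1} > \alpha' t + \beta - 1$, which, combined with $c_i = c_{i-1}+1 \leq \alpha' i + \beta$ (since $e_i$ is added) and $c_j = c_{t-1}+(j-t)$, gives
\[
c_i - c_j \;<\; \alpha'(i-j) + 1 - (j-t)/\alpha \;\leq\; \alpha'(i-j) + 1 \;\leq\; \alpha'(i-j)+\beta,
\]
where the final inequality invokes $\beta\geq 1$. The subcase $t=0$ reduces to $\alpha' j\leq j$, which holds because $\alpha'\leq 1$.

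For optimality, let $\mathrm{OPT}_i$ denote the maximum cardinality of a $\varphi$-contraction on the subpath $e_1,\ldots,e_i$, and prove $c_i=\mathrm{OPT}_i$ by induction on $i$. In general $\mathrm{OPT}_{i-1}\leq\mathrm{OPT}_i\leq\mathrm{OPT}_{i-1}+1$. If greedy adds $e_i$, the feasibility claim~(i) gives $\mathrm{OPT}_i\geq c_i = c_{i-1}+1 = \mathrm{OPT}_{i-1}+1$, forcing equality. If greedy does not add $e_i$, then $c_{i-1}+1>\alpha' i+\beta$, so any feasible $C'$ on the first $i$ edges satisfies $|C'|\leq\alpha' i+\beta < c_{i-1}+1$, hence $|C'|\leq c_{i-1}$ by integrality; this forces $\mathrm{OPT}_i\leq c_{i-1}=c_i$, and equality again follows.

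The main obstacle is the feasibility argument: a naive induction on subpath constraints accumulates an extra $1/\alpha$ per added edge and therefore fails, which is why one must look back to the most recent skipped edge $e_t$ in order to extract a sufficient lower bound on $c_j$. The hypothesis $\beta\geq 1$ is used precisely to absorb the residual $+1$ term appearing in the displayed inequality. Once feasibility is established, optimality is an easy induction and the $\cO(n)$ running time is immediate, since maintaining the counter $c_i$ and comparing it with $\alpha' i + \beta$ takes constant time per edge.
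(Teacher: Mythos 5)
Your proposal is correct, and it reaches the result by a genuinely different route than the paper. For feasibility, you argue by induction on the length of the processed prefix and, for a general subpath $e_{j+1},\ldots,e_i$ ending at an accepted edge $e_i$, look back to the index $t$ of the most recently skipped edge among $e_1,\ldots,e_j$ to extract the lower bound $c_j > (1-1/\alpha)t + \beta - 1 + (j-t)$. The paper instead asserts the closed form $|E(P_{1,i})\cap C| = \lfloor(1-1/\alpha)i + \beta\rfloor$ for the greedy prefix counts and bounds $|E(P_{i,j})\cap C|$ as a difference of floors using $\lfloor x\rfloor - \lfloor y\rfloor \leq x - y + 1$. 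Both routes invoke $\beta\geq 1$ at the same juncture, to absorb the residual $+1$. A subtle point is worth flagging: the paper's closed-form identity holds only once the greedy has skipped at least one edge; in the initial phase, every edge is accepted and $|E(P_{1,i})\cap C| = i$, which is strictly less than $\lfloor(1-1/\alpha)i + \beta\rfloor$ whenever $\beta>1$, so the paper's stated formula for $|C|$ and its floor-difference step really require a (routine but unstated) case distinction between the two regimes. Your look-back-to-$t$ argument handles both regimes uniformly, which is a point in its favor. You also isolate optimality as a separate induction ($\mathrm{OPT}_i = c_i$), whereas the paper reads optimality off the closed form; both are fine, but your structuring makes the integrality argument in the skip case more explicit.
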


\begin{proof}
Let $C\subseteq E(P_n)$ be the set of edges computed by the algorithm $\Greedy(P_n,\alpha,\beta)$.
Clearly, we have $|C|=\lfloor(1-1/\alpha)|E(P_n)|+\beta\rfloor$, and this is optimal according to \eqref{eq:path-cond}.
However, it remains to show that~$C$ is feasible.
For $1\leq i\leq j\leq n-1=|E(P_n)|$ we let~$P_{i,j}$ denote the subpath of~$P_n$ formed by the edges $e_i,e_{i+1},\ldots,e_j$.
By the definition of our algorithm we know that $|E(P_{1,i})\cap C|=\lfloor (1-1/\alpha)i+\beta\rfloor$, from which we obtain that
\begin{align*}
|E(P_{i,j})\cap C| &= |E(P_{1,j})\cap C| - |E(P_{1,i-1})\cap C| \\
&= \lfloor (1-1/\alpha)j+\beta \rfloor - \lfloor (1-1/\alpha)(i-1)+\beta \rfloor \\
&\leq (1-1/\alpha)j+\beta - ((1-1/\alpha)(i-1)+\beta - 1) \\
&= (1-1/\alpha)(j-i+1)+1 \\
&\leq (1-1/\alpha)|E(P_{i,j})|+\beta ,
\end{align*}
where we used the assumption $\beta\geq 1$ in the last step.
Using \eqref{eq:path-cond} it thus follows that~$C$ is feasible.
\end{proof}

\subsection{Cycles with unit length edges}
\label{sec:cycles}

In this section we consider the special case of contracting a cycle~$C_n$ with~$n$ vertices and unit length edges~$\ell=1$ and the tolerance function $\varphi(x)=x/\alpha-\beta$, $\alpha\geq 1$, $\beta\geq 0$.
For this case we present a greedy algorithm running in linear time.
The main purpose of this result is to clearly separate the polynomially solvable cases of \Contraction{} from the NP-hard cases, and the case of a cycle with unit length edges precisely forms this boundary on the polynomially solvable side.
Recall in this context that we can solve \Contraction{} in polynomial time on any tree (this will be proved in Section~\ref{sec:dp} below), and that \Contraction{} is NP-hard already on a cycle for $\alpha=1$ (with arbitrary edge lengths; we will show this in Section~\ref{sec:cycles-hard} below).

We first argue that on a cycle it is equivalent to maximize the number of contracted edges $|C|$ or to maximize our objective function $\Phi(C)$ defined in \eqref{eq:objective}.
This is because the set of pairs $(|C|,\Phi(C))$ for all feasible contractions $C$ in a cycle $G=C_n$ is given by $\{(1,1),(2,2),\ldots,(n-3,n-3),(n-2,n-1),(n-1,n),(n,n)\}$, so it forms a monotone function, implying that maximizing either one of the two quantities is equivalent.
Based on this argument, for the rest of this section we consider maximizing the number $|C|$ of contracted edges.

Observe that a solution $C\subseteq E(C_n)$ ($C_n$ is the cycle we want to contract, and~$C$ is the set of edges to be contracted) for the instance $(C_n,\ell,\varphi)$ of the problem \Contraction{} is feasible, if and only if every subpath $P\subseteq C_n$ of length $d:=|E(P)|\in\{1,2,\ldots,n-1\}$ satisfies the condition
\begin{equation}
\label{eq:cycle-cond}
|E(P)\cap C| \leq \lfloor d - \min\{d,n-d\}/\alpha + \beta \rfloor .
\end{equation}
Rounding down on the right-hand side of \eqref{eq:cycle-cond} is justified because $|E(P)\cap C|$ is always an integer.

Defining
\begin{subequations}
\label{eq:uniform-solution}
\begin{align}
\lambda' &:= \min_{d\in\{1,2,\ldots,n-1\}} \frac{\lfloor d - \min\{d,n-d\}/\alpha + \beta \rfloor}{d} , \label{eq:uniform-sol1} \\
\lambda  &:= \min\{1,\lambda'\} , \label{eq:uniform-sol2} 
\end{align}
\end{subequations}
we obtain from \eqref{eq:cycle-cond} that $\lambda\in[0,1]$ is the maximal amount by which we can contract each edge in a uniform \emph{fractional} solution.
Inspired by the rounding technique from \cite{MR589671}, we turn this fractional solution into an integer optimal solution, yielding the following greedy algorithm $\Greedy(C_n,\alpha,\beta)$:
The algorithm considers the edges $e_1,e_2,\ldots,e_n$ of~$C_n$ as they are encountered when walking around the cycle.
It iteratively constructs a solution~$C$ by initializing $C:=\emptyset$ and by adding the edge~$e_i$ to~$C$ if and only if $\lfloor \lambda i\rfloor - \lfloor \lambda (i-1)\rfloor = 1$ for all $i=1,2,\ldots,n$ (since $\lambda\in[0,1]$, this difference is always either 0 or 1).
Note that we contract all edges of~$C_n$ if and only if~$\lambda=1$.

\begin{thm}
\label{thm:cycle}
Let~$C_n$ be a cycle with unit length edges~$\ell=1$ and consider the tolerance function $\varphi(x) = x/\alpha - \beta$, $\alpha \geq 1$, $\beta \geq 0$.
The set of edges computed by the algorithm $\Greedy(C_n,\alpha,\beta)$ is an optimal solution for the instance $(C_n,\ell,\varphi)$ of the problem \Contraction{}, and it is computed in time $\cO(n)$.
\end{thm}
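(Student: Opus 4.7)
The plan is to verify that the greedy output satisfies condition~\eqref{eq:cycle-cond} and simultaneously matches an LP-style upper bound obtained by averaging over the cyclic shifts, in the spirit of the rounding technique in~\cite{MR589671}. By a telescoping argument, the algorithm produces a contraction of size
\[
|C| \;=\; \sum_{i=1}^{n}\bigl(\lfloor\lambda i\rfloor-\lfloor\lambda(i-1)\rfloor\bigr) \;=\; \lfloor\lambda n\rfloor,
\]
so the whole proof reduces to showing that $\lfloor\lambda n\rfloor$ is both feasible and an upper bound on $|C^\ast|$ for any feasible $C^\ast$.

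For the upper bound I would use a double-counting argument. Fix $d\in\{1,\dots,n-1\}$ and sum condition~\eqref{eq:cycle-cond} over the $n$ cyclic subpaths of length $d$; since each edge lies in exactly $d$ of them, this gives $d\,|C^\ast|\le n\,\lfloor d-\min\{d,n-d\}/\alpha+\beta\rfloor$ for every feasible $C^\ast$. Minimizing over $d$ and using the definition~\eqref{eq:uniform-sol1} of $\lambda'$ yields $|C^\ast|\le\lfloor \lambda' n\rfloor$. If $\lambda'\le 1$, then $\lambda=\lambda'$ and the algorithm attains this bound exactly. If instead $\lambda'>1$, then for every $d$ we have $\lfloor d-\min\{d,n-d\}/\alpha+\beta\rfloor\ge d$, so condition~\eqref{eq:cycle-cond} is vacuous, $C=E$ is feasible, and contracting all $n$ edges (which the algorithm does, as $\lambda=1$ forces every indicator to equal one) is trivially optimal.

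It then remains to check feasibility in the case $\lambda\le 1$. For a non-wrapping subpath $P$ consisting of edges $e_{i+1},\dots,e_{i+d}$, telescoping gives
\[
|E(P)\cap C| \;=\; \lfloor\lambda(i+d)\rfloor-\lfloor\lambda i\rfloor \;\le\; \lceil\lambda d\rceil,
\]
which is the classical balanced-rounding bound. For a subpath that wraps around the cycle I would pass to the non-wrapping complementary subpath $P^c$ of length $n-d$, apply the dual lower bound $|E(P^c)\cap C|\ge\lfloor\lambda(n-d)\rfloor$, and invoke the floor--ceiling inequality $\lfloor\lambda n\rfloor\le\lfloor\lambda(n-d)\rfloor+\lceil\lambda d\rceil$ to recover $|E(P)\cap C|=|C|-|E(P^c)\cap C|\le\lceil\lambda d\rceil$ in this case as well. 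Since $\lambda\le\lambda'$ and $\lambda' d\le\lfloor d-\min\{d,n-d\}/\alpha+\beta\rfloor$ is an integer, we conclude $\lceil\lambda d\rceil\le\lfloor d-\min\{d,n-d\}/\alpha+\beta\rfloor$, verifying~\eqref{eq:cycle-cond}.

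The linear running time is immediate: computing $\lambda$ only requires scanning $d=1,\dots,n-1$ in~\eqref{eq:uniform-solution}, and the main loop performs $\cO(1)$ work per edge. The main obstacle in the proof is the feasibility check for wrapping subpaths, where the clean telescoping available in the non-wrapping case breaks down and must be replaced by the complement argument combined with the floor--ceiling identity above; once that step is in place, both feasibility and optimality follow uniformly from the definition of $\lambda$.
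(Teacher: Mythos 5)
Your proof is correct and follows essentially the same route as the paper: telescoping gives $|C|=\lfloor\lambda n\rfloor$, averaging the constraint over all $n$ cyclic length-$d$ windows gives the matching upper bound $|C^*|\le\lfloor\lambda n\rfloor$, and feasibility is verified via the balanced-rounding inequality $\lfloor\lambda j\rfloor-\lfloor\lambda i\rfloor\le\lceil\lambda(j-i)\rceil$. The only difference is cosmetic: for wrap-around subpaths you subtract the complementary non-wrapping path from $|C|$ and invoke $\lfloor\lambda n\rfloor\le\lfloor\lambda(n-d)\rfloor+\lceil\lambda d\rceil$, whereas the paper splits the wrapped sum directly and applies the superadditivity of the floor; both are equivalent elementary floor manipulations.
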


The next lemma shows that the contraction computed by our algorithm has the maximum size.

\begin{lem}
\label{lem:lambda-prop}
For any feasible solution $C \subseteq E(C_n)$ we have $|C| \leq \lfloor \lambda n \rfloor$ with~$\lambda$ defined in \eqref{eq:uniform-solution}.
\end{lem}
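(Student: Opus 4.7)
The plan is a clean double-counting argument over all subpaths of one carefully chosen length, followed by a short case analysis to handle the clamp at~$1$ in the definition of~$\lambda$.

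First I would restate the feasibility condition conveniently: set
\[
f(d) := \bigl\lfloor d - \min\{d, n-d\}/\alpha + \beta \bigr\rfloor,
\]
so that \eqref{eq:cycle-cond} says $|E(P) \cap C| \leq f(d)$ for every subpath $P$ of $C_n$ of length $d \in \{1, \ldots, n-1\}$, and $\lambda' = \min_{d \in \{1,\ldots,n-1\}} f(d)/d$ by \eqref{eq:uniform-sol1}.

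Next, let $d^{\ast}$ be a value attaining the minimum in the definition of $\lambda'$, so that $\lambda' = f(d^{\ast})/d^{\ast}$. Consider the $n$ distinct subpaths of $C_n$ with exactly $d^{\ast}$ edges (one starting at each edge of the cycle). Each edge of $C_n$ lies in exactly $d^{\ast}$ of these subpaths, so summing the feasibility bound $|E(P) \cap C| \leq f(d^{\ast})$ over all such $P$ and counting contributions from each edge of $C$ gives
\[
d^{\ast} \, |C| \;=\; \sum_{P:\; |E(P)| = d^{\ast}} |E(P) \cap C| \;\leq\; n \, f(d^{\ast}),
\]
hence $|C| \leq n \, f(d^{\ast})/d^{\ast} = n \lambda'$.

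Finally, I split into two cases matching the definition \eqref{eq:uniform-sol2}. If $\lambda' \geq 1$, then $\lambda = 1$ and $|C| \leq |E(C_n)| = n = \lfloor \lambda n \rfloor$ is trivial. If $\lambda' < 1$, then $\lambda = \lambda'$, and combining $|C| \leq n \lambda$ with the fact that $|C|$ is an integer yields $|C| \leq \lfloor \lambda n \rfloor$, as claimed. The only delicate point is to make sure the chosen $d^{\ast}$ indeed lies in $\{1,\ldots,n-1\}$ (so the $n$ subpaths are well defined and the double counting is exact), which is guaranteed by the range of the minimum in \eqref{eq:uniform-sol1}; after that the argument is essentially a one-line averaging.
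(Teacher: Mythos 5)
Your proof is correct and is essentially the same argument as the paper's: both cover the cycle with $n$ subpaths of length $d^{\ast}$ (the optimizer of $\lambda'$), observe each edge lies in exactly $d^{\ast}$ of them, and double-count $\sum_P |E(P)\cap C| = d^{\ast}|C| \le n f(d^{\ast})$, then use integrality of $|C|$; the trivial case $\lambda = 1$ is handled identically. The only cosmetic difference is that you enumerate the $n$ distinct subpaths by starting edge, while the paper concatenates $n$ consecutive length-$d$ paths wrapping $d$ times around the cycle, but the double-counting identity is the same.
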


\begin{proof}
If $\lambda=1$ this inequality is trivial.
So let us assume that $\lambda=\lambda'<1$ and that the minimum in \eqref{eq:uniform-sol1} is attained for some $d\in\{1,2,\ldots,n-1\}$.
Starting at some vertex~$u$ of the cycle, we walk along the cycle and cover it with~$n$ consecutive paths $P_1,P_2,\ldots,P_n$ of length~$d$ each ($P_{i+1}$ starts where~$P_i$ ends).
The sum of the lengths of the paths is~$n d$, so this process ends at the starting vertex~$u$, and each edge of the cycle and each edge of~$C$ is covered exactly~$d$ times.
We therefore obtain
\begin{equation*}
|C| = \frac{1}{d} \sum_{i=1}^n |E(P_i)\cap C|
\leBy{eq:cycle-cond} \sum_{i=1}^n \frac{\lfloor d - \min\{d,n-d\}/\alpha + \beta\rfloor}{d} \eqBy{eq:uniform-solution} \lambda n .
\end{equation*}
As~$|C|$ must be integral this inequality yields the desired bound $|C|\leq \lfloor\lambda n\rfloor$.
\end{proof}

With Lemma~\ref{lem:lambda-prop} in hand, we are now ready to prove Theorem~\ref{thm:cycle}.

\begin{proof}[Proof of Theorem~\ref{thm:cycle}]
In this proof we will use that for any two real numbers~$x$ and~$y$ we have
\begin{subequations}
\begin{align}
\lfloor x\rfloor + \lfloor y\rfloor &\leq \lfloor x+y\rfloor , \label{eq:floor-sum} \\
\lfloor x \rfloor - \lfloor y \rfloor &\leq \lceil x-y \rceil . \label{eq:floor-diff}
\end{align}
\end{subequations}

Let $C\subseteq E(C_n)$ be the set of edges computed by the algorithm $\Greedy(C_n,\alpha,\beta)$.
Clearly, we have $|C|=\sum_{i=1}^n (\lfloor \lambda i\rfloor - \lfloor \lambda (i-1)\rfloor) = \lfloor \lambda n \rfloor$, which is optimal by Lemma~\ref{lem:lambda-prop}.
However, it remains to show that~$C$ is feasible.
We consider a path~$P$ of length $d:=|E(P)|\in\{1,2,\ldots,n-1\}$ on the edges $e_k,e_{k+1},\ldots,e_{k+d-1}$ (indices are considered cyclically modulo~$n$, so $e_{n+i}=e_i$).
We distinguish two cases:
If $k+d-1\leq n$, we have
\begin{equation}
\label{eq:PcapC1}
|E(P)\cap C| = \sum_{i=k}^{k+d-1} (\lfloor \lambda i \rfloor - \lfloor \lambda(i-1)\rfloor)
= \lfloor \lambda (k+d-1) \rfloor - \lfloor \lambda(k-1) \rfloor \leBy{eq:floor-diff} \lceil \lambda d\rceil .
\end{equation}
If $k+d-1>n$, we obtain
\begin{align}
|E(P)\cap C| &= \sum_{i=k}^n (\lfloor \lambda i\rfloor - \lfloor \lambda(i-1) \rfloor)
+ \sum_{i=1}^{d-n+k-1} (\lfloor \lambda i\rfloor - \lfloor \lambda(i-1) \rfloor) \notag \\
&= \lfloor \lambda n \rfloor - \lfloor \lambda(k-1)\rfloor + \lfloor \lambda (d-n+k-1) \rfloor \notag \\
&\leBy{eq:floor-sum} \lfloor \lambda (d+k-1) \rfloor - \lfloor \lambda(k-1)\rfloor
\leBy{eq:floor-diff} \lceil \lambda d\rceil . \label{eq:PcapC2}
\end{align}
Applying \eqref{eq:uniform-solution} and using that $\lceil \lfloor x\rfloor\rceil=\lfloor x\rfloor$ shows that the right-hand sides of \eqref{eq:PcapC1} and \eqref{eq:PcapC2} can both be bounded from above by $\lfloor d - \min\{d,n-d\}/\alpha + \beta \rfloor$, proving that~$C$ is indeed feasible by \eqref{eq:cycle-cond}.
\end{proof}

\subsection{Trees with unit length edges and additive error}
\label{sec:trees-unit-length}

In this section we consider the special case of contracting a tree~$T$ with unit length edges $\ell=1$ and the tolerance function $\varphi(x)=x-\beta$ (purely additive error; we can assume w.l.o.g.\ that~$\beta$ is an integer).
Note that in this setting the objective function defined in \eqref{eq:objective} satisfies $\Phi(C)=|C|$ for any contraction $C$.
It turns out that in this case, optimal solutions have a very special structure that can be exploited to compute them in linear time.
Specifically, an optimal solution is obtained by taking all edges of~$T$ which have the property that only short paths start from one of its end vertices.
Formally, for the tree~$T$ and $d\in\NN_{\geq 0}$, we let $L(T,d)$ denote the set of all edges~$e$ of~$T$ which have one end vertex~$v$ such that all paths that start at~$v$ and do not contain~$e$ have length at most~$d-1$ (together with~$e$ these paths have length at most~$d$).
E.g., we have $L(T,0)=\emptyset$, and the set $L(T,1)$ are all the edges incident to a leaf (see Figure~\ref{fig:L}).

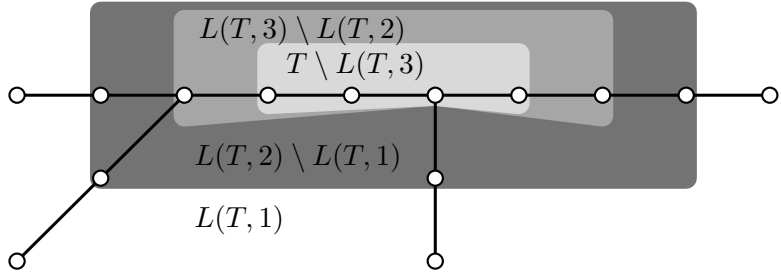
\begin{figure}[ht]
\centering
\begin{tikzpicture}[scale=0.55]
\tikzstyle{VertexStyle}=[vertex_default]
\tikzstyle{EdgeStyle}=[edge_default]
\SetVertexNoLabel

\Vertex[x=0,y=4]{v0}
\Vertex[x=2,y=4]{v1}
\Vertex[x=4,y=4]{v2}
\Vertex[x=2,y=2]{v3}
\Vertex[x=0,y=0]{v4}

\Vertex[x=6,y=4]{v5}
\Vertex[x=8,y=4]{v6}
\Vertex[x=10,y=4]{v7}
\Vertex[x=12,y=4]{v8}
\Vertex[x=14,y=4]{v9}
\Vertex[x=16,y=4]{v10}
\Vertex[x=18,y=4]{v13}

\Vertex[x=10,y=2]{v11}
\Vertex[x=10,y=0]{v12}

\Edge(v0)(v1)
\Edge(v1)(v2)
\Edge(v2)(v3)
\Edge(v3)(v4)

\Edge(v2)(v5)
\Edge(v5)(v6)
\Edge(v6)(v7)
\Edge(v7)(v8)

\Edge(v8)(v9)
\Edge(v9)(v10)
\Edge(v10)(v13)

\Edge(v7)(v11)
\Edge(v11)(v12)

\begin{pgfonlayer}{background}
\path[draw=black!55!white, fill=black!55!white, line width=8pt, line cap=round, line join=round] (2,2) -- (16,2) -- (16,6) -- (2,6) -- cycle;
\path[draw=black!35!white, fill=black!35!white, line width=8pt, line cap=round, line join=round] (4,3.5) -- (10,4) -- (14,3.5) -- (14,5.8) -- (4,5.8) -- cycle;
\path[draw=black!15!white, fill=black!15!white, line width=8pt, line cap=round, line join=round] (6,3.8) -- (10,4) -- (12,3.8) -- (12,5) -- (6,5) -- cycle;
\end{pgfonlayer}

\node [anchor=west] at (4,1){$L(T,1)$};
\node [anchor=west] at (4,2.5){$L(T,2)\setminus L(T,1)$};
\node [anchor=west] at (4.1,5.55){$L(T,3)\setminus L(T,2)$};
\node [anchor=west] at (6.2,4.7){$T\setminus L(T,3)$};

\end{tikzpicture}
\caption{Illustration of the sets of edges $L(T,d)$, $d=1,2,3$.
The set $L(T,3)$ is an optimal solution of the problem \Contraction{} with $\beta=6$.}
\label{fig:L}
\end{figure}

Clearly, the set $L(T,d)$ can be computed in linear time by repeatedly removing all leaves of~$T$ in~$d$ rounds.
This is a variant of the well-known linear time algorithm to compute the so-called center of a tree (see \cite[Section~15.11]{DBLP:books/daglib/0022194}).

\begin{thm}
\label{thm:unit-tree}
Let~$T$ be a tree with unit length edges~$\ell=1$ and consider the tolerance function $\varphi(x)=x-\beta$, $\beta\in\NN_{\geq 0}$.
If~$\beta$ is even, the set of edges $L(T,d)$ with $d:=\lfloor\beta/2\rfloor$ is an optimal solution for the instance $(T,\ell,\varphi)$ of the problem \Contraction{}.
If~$\beta$ is odd, $L(T,d)\cup \{e\}$, $e\in E\setminus L(T,d)$, is an optimal solution.
These solutions can be computed in time $\cO(n)$.
\end{thm}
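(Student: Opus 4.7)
I would split the theorem into a feasibility claim and an optimality claim.

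For feasibility, the key observation is that on any simple path $P = v_0 v_1 \cdots v_\ell$ in $T$, an edge $v_i v_{i+1}$ lies in $L(T, d)$ only if $i \le d - 1$ or $\ell - i - 1 \le d - 1$. Indeed, membership requires one of its two endpoints to have eccentricity at most $d - 1$ in its side of the cut, and the sub-path from that endpoint along $P$ already has length $i$ or $\ell - i - 1$. Hence $|E(P) \cap L(T, d)| \le 2d$, so $L(T, d)$ is a feasible $\varphi$-contraction for $\beta = 2d$; for $\beta = 2d + 1$, adding any single edge yields $|E(P) \cap C| \le 2d + 1 = \beta$ as well.

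For optimality I plan to induct on $d$ using the identity
\[
|L(T, d)| = |L(T, 1)| + |L(T', d - 1)|,
\]
where $T'$ denotes $T$ with its leaves removed; this is immediate from the leaf-pruning description, since $d$ rounds of pruning on $T$ decompose as one round followed by $d - 1$ rounds on $T'$. The base $d = 0$ is handled directly: for $\beta = 0$ no edge may be contracted, and for $\beta = 1$ any two distinct contracted edges lie on a shared length-$\ge 2$ path, violating $|E(P) \cap C| \le 1$. For the inductive step, given a feasible $C$ with $\beta \in \{2d, 2d + 1\}$, split it as $C = C_1 \cup C_2$ with $C_1 := C \cap L(T, 1)$ and $C_2 := C \cap E(T')$. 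Then $|C_1| \le |L(T, 1)|$ is immediate, and it remains to bound $|C_2|$ by $|L(T', d - 1)| + [\beta \text{ odd}]$, which by the induction hypothesis applied to $(T', \beta - 2)$ follows as soon as $C_2$ is shown feasible on $T'$ for the smaller parameter $\beta - 2$.

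The main obstacle is precisely this last verification: lifting a path $P'$ in $T'$ to $T$ leaves its length unchanged (since $P'$ connects non-leaves and never uses leaf-edges), so a priori only the weaker bound $|E(P') \cap C_2| \le \beta$ is available. To recover the missing slack of two, I would establish via an exchange argument that one may assume, without decreasing $|C|$, that $C_1$ contains a contracted leaf-edge of $T$ at every leaf of $T'$. Under this normalization, every leaf-to-leaf path $P'$ in $T'$ extends in $T$ to a path $P$ of length $|E(P')| + 2$ with two extra edges in $C$, giving $|E(P') \cap C_2| = |E(P) \cap C| - 2 \le \beta - 2$ as required. The exchange step swaps a contracted inner edge lying on the tight paths through a boundary vertex of $T'$ for an uncontracted leaf-edge at that vertex, preserving feasibility because the removed inner edge is the bottleneck on those tight paths. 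The asserted $\cO(n)$ running time follows by emulating $d$ rounds of leaf-pruning with a queue of current leaves, analogous to the classical tree-center computation.
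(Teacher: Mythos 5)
Your proposal follows the same overall route as the paper's proof: verify feasibility by bounding the number of $L(T,d)$-edges on any path, then prove optimality by induction over the number of leaf-pruning rounds, splitting a feasible solution into its leaf-edge part and its $T'$-part and invoking the induction hypothesis on $(T',\beta-2)$. Indeed the paper's decomposition $D = (D\cap E^*)\cup D^*$ with $E^*:=L(T,1)$ and $T^*:=T\setminus V^*$ is exactly your $C=C_1\cup C_2$ with $T'$, and the paper's dichotomy ``$E^*\subseteq D$ or $D\subseteq E^*$, else exchange'' plays the same role as your normalization ``a contracted leaf-edge at every leaf of $T'$'' (which is a slightly weaker intermediate state, but still suffices for the leaf-to-leaf lifting argument you describe). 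The feasibility half, the identity $|L(T,d)| = |L(T,1)| + |L(T',d-1)|$, and the $\cO(n)$ running-time claim are all fine.

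The place where your sketch has a genuine gap is the single sentence asserting that the exchange ``preserv[es] feasibility because the removed inner edge is the bottleneck on those tight paths.'' This presumes that all tight paths through the boundary vertex $v$ share a common contracted edge, and that removing it creates no new violations; neither is self-evident. It is in fact true that any two tight paths emanating from the new leaf must share their first contracted edge --- otherwise their symmetric difference would be a path with roughly twice $\beta$ contracted edges, contradicting feasibility --- and once this is established your bottleneck argument does go through. But that observation is precisely the technical core of the theorem, and it is what the paper spends the second half of its proof establishing. The paper handles it somewhat differently: rather than reasoning about a shared bottleneck among tight paths, it chooses $e$ to be the \emph{first} edge of $D$ on the path from the leaf $v$ towards an arbitrary $f\in D\setminus E^*$, and then for any candidate violating path $P'$ containing $e'$ but not $e$ it takes an auxiliary path $Q$ through $e$ and bounds $|E(P')\cap D'|$ via the path $P=(P'\setminus Q)\cup(Q\setminus P')$. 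Your proof would need an argument of this kind spelled out; as written, the feasibility of the post-exchange solution is asserted rather than proven, and that assertion is exactly where the difficulty of the theorem lives.
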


\begin{proof}
We define $C:=L(T,d)$ if~$\beta$ is even and $C:=L(T,d)\cup\{e\}$, for some $e\in E\setminus L(T,d)$, if~$\beta$ is odd.
We first argue that~$C$ is a feasible solution.
To see this note that for the given tolerance function we only need to verify that the path~$P$ between any two \emph{leaves} $u,v$ of~$T$ contains at most~$\beta$ edges.
Consider all the edges of~$P$ for which both end vertices have distance at least~$d$ from both~$u$ and~$v$.
None of those edges is in $L(T,d)$ by its definition.
It follows that $|P\cap L(T,d)|\leq 2d=2\lfloor\beta/2\rfloor$ and therefore $|P\cap C|\leq \beta$.

To prove that~$C$ is a solution of maximum size we argue by induction over~$\beta$.
The claim is trivially true for $\beta=0$ and $\beta=1$ (in these cases $|C|=0$ and $|C|=1$, respectively).
So let~$D$ be an arbitrary feasible solution of the instance $(T,\ell,\beta)$ of the problem \Contraction{} for some $\beta\geq 2$.
We need to show that $|C|\geq |D|$.
To this end we let $V^*\subseteq V(T)$ denote the set of leaves of~$T$ and we define $E^*:=L(T,1)$.
Moreover, we define $T^*:=T\setminus V^*$ and $C^*:=C\setminus E^*$.
By induction, $C^*=L(T^*,d-1)$ is an optimal solution for the instance $(T^*,\ell,\beta-2)$.

We first consider the case that $E^*\setminus D=\emptyset$ or $D\setminus E^*=\emptyset$ (this is equivalent to $E^*\subseteq D$ or $D\subseteq E^*$).
In this case we define $D^*:=D\setminus E^*$, and observe that~$D^*$ is a feasible solution for the instance $(T^*,\ell,\beta-2)$.
It follows that $|C^*|\geq |D^*|$, implying that $|C|=|C^*|+|E^*|\geq |D^*|+|E^*|\geq |D|$, as claimed.

\begin{figure}[ht]
\centering
\begin{tikzpicture}[scale=0.7]
\tikzstyle{VertexStyle}=[vertex_default]
\SetVertexNoLabel

\Vertex[style={label={left:$v$}},x=-2,y=4]{v0}
\Vertex[x=0,y=4]{v1}
\Vertex[x=2,y=4]{v2}
\Vertex[x=4,y=4]{v3}
\Vertex[x=6,y=4]{v4}
\Vertex[x=8,y=4]{v5}
\Vertex[x=10,y=4]{v6}
\Vertex[x=12,y=4]{v7}
\Vertex[x=14,y=4]{v8}

\Vertex[x=4,y=2]{v9}
\Vertex[x=4,y=0]{v10}

\Vertex[x=10,y=2]{v11}
\Vertex[x=8,y=2]{v12}

\tikzstyle{EdgeStyle}=[edge_default]
\Edge[label={$e' \in E^* \setminus D$},labelstyle={above,xshift=10,yshift=3}](v0)(v1)
\Edge(v1)(v2)
\Edge(v2)(v3)
\Edge(v3)(v4)
\Edge(v5)(v6)
\Edge(v6)(v7)
\Edge(v7)(v8)
\Edge(v3)(v9)
\Edge(v9)(v10)
\Edge(v11)(v12)

\tikzstyle{EdgeStyle}=[edge_in_c]
\Edge[label={$f \in D \setminus E^*$},labelstyle={right,xshift=3}](v6)(v11)
\Edge[label={$e \in D \setminus E^*$},labelstyle={below,xshift=20,yshift=-3}](v4)(v5)

\path [draw,decoration={brace,amplitude=5pt,raise=45pt},decorate] (v0) to node[above=50pt] {$Q$} (v8);
\path [draw,decoration={brace,amplitude=5pt,raise=25pt},decorate] (v0) to node[above=28pt] {$P'\cap Q$} (v3);
\path [draw,decoration={brace,amplitude=5pt,raise=25pt},decorate] (v3) to node[above=28pt] {$P\cap Q=Q\setminus P'$} (v8);

\path [draw,decoration={brace,amplitude=5pt,mirror,raise=8pt},decorate] (v3) to node[left=15pt] {$P'\setminus Q=P\setminus Q$} (v10);

\path [draw,decoration={brace,amplitude=5pt,mirror,raise=20pt},decorate] (v0) to node[below left=25pt] {$P'$} (v10);
\path [draw,decoration={brace,amplitude=5pt,mirror,raise=15pt},decorate] (v10) to node[below right=20pt] {$P$} (v8);

\draw[dashed,color=blue,<->] (-1,3.9) to[out=-25,in=-155] (7,3.9);

\end{tikzpicture}
\caption{Illustration of the proof of Theorem~\ref{thm:unit-tree}.}
\label{fig:exchange}
\end{figure}
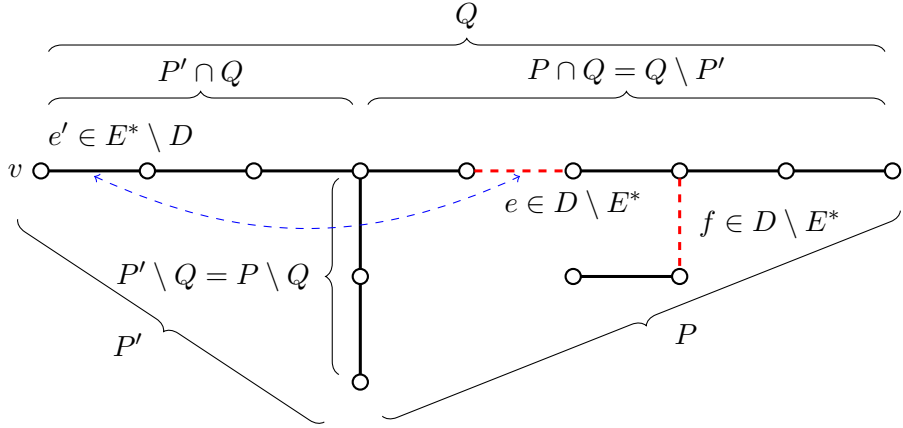

It remains to consider the case that both sets $E^*\setminus D$ and $D\setminus E^*$ are nonempty, so there is an edge $e'\in E^*\setminus D$ and an edge $f\in D\setminus E^*$.
We denote the leaf incident to~$e'$ by~$v$.
We will now remove an edge $e\in D\setminus E^*$ from~$D$ and add~$e'$ instead to obtain another feasible solution~$D'$ satisfying $|D|=|D'|$.
Repeating this exchange argument and applying the reasoning from the first case then proves the lemma.
The edge $e\in D\setminus E^*$ to be removed from~$D$ is obtained by considering the path that connects~$v$ and~$f$ in~$T$ and that contains~$f$, and by choosing the first edge from~$D$ (or equivalently, from $D\setminus E^*$) that is encountered when following this path from~$v$ to~$f$.
It may happen that $e=f$ is the first such edge we encounter.
To complete the proof of the lemma it remains to show that $D'=D\setminus\{e\}\cup\{e'\}$ is feasible.
To prove this we only need to check paths which start in~$v$ and contain~$e'$ but not~$e$.
Let~$P'$ be such a path, let~$Q$ be any path that also starts in~$v$ but does contain~$e$, and consider the path $P:=(P'\setminus Q)\cup (Q\setminus P')$ (see Figure~\ref{fig:exchange}).
Here and in the following we slightly abuse notation and interpret these set unions/differences/intersections in terms of the edge sets of the graphs.
As~$D$ is feasible and as $P\cap Q$ contains~$e$, the number of edges in~$D$ or~$D'$ on $P'\setminus Q=P\setminus Q$ is at most $\beta-1$.
By the choice of~$e$, the number of edges of~$D'$ on $P'\cap Q$ is 1 (the only edge of~$D'$ on this path is~$e'$).
As $P'=(P'\setminus Q)\cup (P'\cap Q)$, we obtain that the number of edges from~$D'$ on~$P'$ is at most $\beta-1+1=\beta$, as desired.
This completes the proof.
\end{proof}

\section{Dynamic programs for general trees}
\label{sec:trees}

In this section we describe dynamic programming algorithms for the problems \Contraction{} and \WContraction{} on trees with general edge lengths and affine tolerance functions.
Recall that on trees our objective function satisfies $\Phi(C)=|C|$ for any contraction~$C$.

\subsection{\Contraction{} on trees}
\label{sec:dp}

In this section we describe a dynamic programming algorithm for the problem of computing an optimal contraction of a tree~$T$ with arbitrary edge lengths $\ell\colon E\to \RR_{>0}$ and an affine tolerance function $\varphi(x)=x/\alpha-\beta$, $\alpha\geq 1$, $\beta\geq 0$, generalizing the solution for the special case presented at the beginning of the previous section.
The goal is to prove the following result.

\begin{thm}
\label{thm:dp}
Let~$T$ be a tree with edge lengths $\ell\colon E\to\RR_{>0}$ and consider the tolerance function $\varphi(x)=x/\alpha-\beta$, $\alpha\geq 1$, $\beta\geq 0$.
An optimal solution for the instance $(T,\ell,\varphi)$ of the problem \Contraction{} can be computed by dynamic programming in time $\cO(n^3)$.
\end{thm}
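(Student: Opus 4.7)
The plan is to design a bottom-up tree dynamic program on $T$ rooted at an arbitrary vertex~$r$, based on a reformulation of the feasibility condition. For a candidate contraction~$C$, I would define the signed edge weight $w_C(e) := \ell(e)/\alpha$ if $e \in C$ and $w_C(e) := -(1-1/\alpha)\ell(e)$ otherwise; a direct rearrangement of \eqref{eq:contr-cond} shows that $C$ is a $\varphi$-contraction precisely when every path~$P$ in~$T$ satisfies $\sum_{e \in P} w_C(e) \leq \beta$. Since contracting any edge strictly increases $w_C(e)$, larger contractions yield heavier paths, a monotonicity I will exploit below. Because every path in a rooted tree has a unique apex vertex, this global condition decomposes locally: writing $D_C(v)$ for the maximum weight of a downward path from~$v$ in~$T_v$ (allowing the trivial zero-weight ``path'' from~$v$ to itself), feasibility is equivalent to demanding, at every vertex~$v$, that the two largest values of $w_C(vc) + D_C(c)$ over children~$c$ of~$v$ sum to at most~$\beta$.

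For each vertex~$v$, I would store a Pareto frontier $\mathcal{P}_v$ of pairs $(k, m)$, where $k = |C \cap E(T_v)|$ and $m = D_C(v)$, taken over all feasible contractions~$C$ of~$T_v$. By the monotonicity above, $m$ is non-decreasing in~$k$ along $\mathcal{P}_v$; since $k$ is an integer in $\{0, \ldots, |T_v|-1\}$, we have $|\mathcal{P}_v| \leq |T_v|$. Leaves have $\mathcal{P}_v = \{(0,0)\}$. At an internal vertex~$v$, I would process the children $c_1, c_2, \ldots$ one at a time, maintaining a running partial frontier. Incorporating a child~$c$ first entails shifting $\mathcal{P}_c$ to reflect the two options for the edge~$vc$: ``contract'', which replaces $(k_c, m_c)$ by $(k_c + 1, \max\{0, m_c + \ell(vc)/\alpha\})$, or ``do not contract'', which replaces it by $(k_c, \max\{0, m_c - (1-1/\alpha)\ell(vc)\})$. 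I then merge this shifted list into the running partial frontier by pairing every pair of entries $(k', m')$ and $(k'', m'')$ that satisfy the apex-$v$ condition $m' + m'' \leq \beta$ into the candidate $(k' + k'', \max\{m', m''\})$, followed by Pareto pruning.

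The main obstacle is that the coordinate~$m$ is a continuous quantity; my approach is to rely on Pareto-optimality to keep at most $\cO(|T_v|)$ entries at each vertex. A standard tree-DP amortization bounding each merge step by the product of the two frontier sizes being combined then yields total running time $\cO(n^3)$ in the worst case. Once $\mathcal{P}_r$ is computed at the root, the optimal value of the objective $\Phi(C) = |C|$ is simply $\max\{k : (k, m) \in \mathcal{P}_r\}$, and an optimal contraction~$C$ can be recovered through back-pointers stored during the merges.
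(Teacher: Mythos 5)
Your approach matches the paper's: your signed path-weight $w_C$ and downward-path maximum $D_C(v)$ are exactly the paper's ``load'' $\load_{C,\alpha}(u,v)$ and $\load_{C,\alpha}(T_v,v)$, and your per-vertex Pareto frontier over $(|C|,D_C(v))$ is precisely the information recorded in the paper's tables $L(v,i,s)$ and $L^+(v,i,s)$, which store the minimum load at $v$ over feasible contractions of each size~$s$, with the same child-by-child merge rule and the same $\cO(n^3)$ bound. One small slip in your stated feasibility characterization: requiring only that the two largest child-branch values sum to at most $\beta$ is not equivalent to feasibility when one branch value is large and another is very negative (the single-branch path apexing at $v$ could still violate the bound) -- you should pad the multiset with a $0$ for the trivial apex branch at $v$ -- but your actual DP already does this implicitly, since the shift clips at $0$ and the merge is always run against a running frontier containing the entry $(0,0)$.
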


Observe that a solution $C\subseteq E$ is feasible if and only if for any two vertices~$u$ and~$v$ of~$T$ we have $\load_{C,\alpha}(u,v)\leq \beta$, where the \emph{load between~$u$ and~$v$} is defined as
\begin{subequations}
\label{eq:load}
\begin{equation}
\label{eq:load-path}
\load_{C,\alpha}(u,v):=\dist_\ell(u,v)/\alpha-\dist_{\ell_C}(u,v)
\end{equation}
(recall \eqref{eq:contr-cond}).
For any vertex~$v$ of~$T$ we also define the \emph{load of~$T$ at~$v$} as
\begin{equation}
\label{eq:load-tree}
\load_{C,\alpha}(T,v):=\max\{\load_{C,\alpha}(u,v): u \in V(T) \} .
\end{equation}
\end{subequations}
Note that $\load_{C,\alpha}(T,v)\geq 0$, as we have $\load_{C,\alpha}(v,v)=0$.
The next lemma states a criterion when feasible solutions of subtrees can be combined to a feasible solution of the entire tree.
The definitions \eqref{eq:load-path}, \eqref{eq:load-tree} and the lemma are illustrated in Figure~\ref{fig:load}.

\begin{lem}
\label{lem:tree-partition}
Consider a partition of~$T$ into two subtrees~$T_1$ and~$T_2$ that only have a vertex~$v\in V$ in common.
Then $C\subseteq E$ is a feasible solution for the instance $(T,\ell,\varphi)$ of the problem \Contraction{} if and only if the following two conditions hold: $C\cap E(T_1)$ and $C\cap E(T_2)$ are feasible solutions for the instances $(T_1,\ell,\varphi)$ and $(T_2,\ell,\varphi)$ respectively; and we have $\load_{C,\alpha}(T_1,v)+\load_{C,\alpha}(T_2,v)\leq \beta$.
\end{lem}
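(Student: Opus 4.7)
The whole lemma hinges on one structural observation: since $T$ is a tree, for vertices $u\in V(T_1)$ and $w\in V(T_2)$ the unique $u$--$w$ path in $T$ necessarily passes through the shared vertex $v$. Consequently
\begin{equation*}
\dist_\ell(u,w)=\dist_\ell(u,v)+\dist_\ell(v,w),\qquad \dist_{\ell_C}(u,w)=\dist_{\ell_C}(u,v)+\dist_{\ell_C}(v,w),
\end{equation*}
and, subtracting one identity (divided by $\alpha$) from the other, the load is additive through $v$:
\begin{equation*}
\load_{C,\alpha}(u,w)=\load_{C,\alpha}(u,v)+\load_{C,\alpha}(v,w).
\end{equation*}
Moreover, for $u,w$ in the same subtree $T_i$, the unique $u$--$w$ path of $T$ lies entirely in $T_i$, so the distances (and hence loads) computed in $T_i$ coincide with those computed in $T$.

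For the ``only if'' direction I would first note that the coincidence of loads inside a single subtree immediately gives that $C\cap E(T_i)$ is feasible for $(T_i,\ell,\varphi)$ whenever $C$ is feasible for $(T,\ell,\varphi)$. Then, to establish $\load_{C,\alpha}(T_1,v)+\load_{C,\alpha}(T_2,v)\le\beta$, I would pick $u_1\in V(T_1)$ and $u_2\in V(T_2)$ attaining the respective maxima in \eqref{eq:load-tree} and apply the additivity identity:
\begin{equation*}
\load_{C,\alpha}(T_1,v)+\load_{C,\alpha}(T_2,v)=\load_{C,\alpha}(u_1,v)+\load_{C,\alpha}(v,u_2)=\load_{C,\alpha}(u_1,u_2)\le\beta,
\end{equation*}
where the final inequality uses that $C$ is feasible for $(T,\ell,\varphi)$.

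For the ``if'' direction I would fix an arbitrary pair $u,w\in V(T)$ and distinguish cases. If both lie in the same $T_i$, the load in $T$ equals the load in $T_i$, which is at most $\beta$ by the assumed feasibility of $C\cap E(T_i)$. If $u\in V(T_1)$ and $w\in V(T_2)$, apply additivity through $v$ and bound each summand by the corresponding tree-load at $v$:
\begin{equation*}
\load_{C,\alpha}(u,w)=\load_{C,\alpha}(u,v)+\load_{C,\alpha}(v,w)\le\load_{C,\alpha}(T_1,v)+\load_{C,\alpha}(T_2,v)\le\beta.
\end{equation*}
I do not foresee any real obstacle; the only subtle point is to justify the two distance identities cleanly from the uniqueness of paths in the tree $T$ (and to observe that contracting edges does not introduce new shorter routes, because in $T$ there is only one route to begin with).
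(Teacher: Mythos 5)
Your proposal is correct and follows essentially the same approach as the paper's proof: observe that paths between $T_1$ and $T_2$ must pass through $v$, so the load is additive across $v$, and then use the definition of $\load_{C,\alpha}(T_i,v)$ as a maximum to translate the pointwise condition into the single aggregate inequality. The paper states this more tersely as an equivalence in one step, while you spell out both directions, but the argument is the same.
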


\begin{proof}
Observe that the path between two vertices $u\in T_1$ and $w\in T_2$ contains the vertex~$v$, so we obtain $\load_{C,\alpha}(u,w)=\load_{C,\alpha}(u,v)+\load_{C,\alpha}(v,w)$ from \eqref{eq:load-path}.
Using \eqref{eq:load-tree} it follows that the condition $\load_{C,\alpha}(u,w)\leq \beta$ holding for all such pairs of vertices $u,w$ is equivalent to $\load_{C,\alpha}(T_1,v)+\load_{C,\alpha}(T_2,v)\leq \beta$.
\end{proof}

We will use this lemma to formulate our dynamic programming algorithm.
The idea is to compute optimal solution for subtrees and combining them to an optimal solution for the entire tree.

To describe the algorithm we introduce a few definitions.
An \emph{ordered rooted tree} is a rooted tree with a specified left-to-right ordering for the children of each vertex.
Given the tree~$T$, we can pick an arbitrary vertex as the root, and for each descendant of the root an arbitrary left-to-right ordering of its children, yielding an ordered rooted tree (different roots and orderings yield different ordered rooted trees, but any one of them is good for our purposes).
We slightly abuse notation in the following and use~$T$ to denote this ordered rooted tree.
All trees considered in the rest of this section are ordered and rooted.
For any vertex~$v$ of~$T$, we let~$T_v$ denote the subtree of~$T$ rooted at~$v$, and we use~$c(v)$ to denote the number of children of~$v$.
If $u_1,u_2,\ldots,u_{c(v)}$ are the children of~$v$ (in the specified ordering), we write $T_{v,i}$, $i\in\{1,\ldots, c(v)\}$, for the subtree of~$T$ that contains~$v$,~$u_i$ and all the descendants of~$u_i$.
We also define $T_{v,0}:=\{v\}$.
Furthermore, we define $T_{v,i}^+:=\bigcup_{0\leq j\leq i} T_{v,i}$, so we have $T_v=T_{v,c(v)}^+$.
These definitions are illustrated in Figure~\ref{fig:load}.

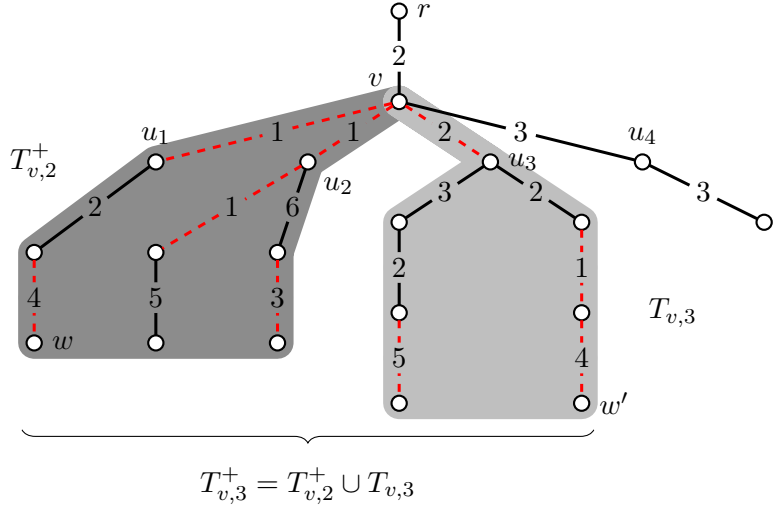
\begin{figure}[ht]
\centering
\begin{tikzpicture}[scale=0.8]
\tikzstyle{VertexStyle}=[vertex_default]
\SetVertexNoLabel

\Vertex[style={label={right:$r$}},x=8,y=12.5]{v0}
\Vertex[style={label={above left:$v$}},x=8,y=11]{v1}

\Vertex[style={label={above:$u_1$}},x=4,y=10]{v2}
\Vertex[style={label={below right:$u_2$}},x=6.5,y=10]{v3}
\Vertex[style={label={right:$u_3$}},x=9.5,y=10]{v4}
\Vertex[style={label={above:$u_4$}},x=12,y=10]{v5}

\Vertex[x=2,y=8.5]{v6}
\Vertex[style={label={right:$w$}},x=2,y=7]{v7}

\Vertex[x=4,y=8.5]{v8}
\Vertex[x=4,y=7]{v9}
\Vertex[x=6,y=8.5]{v10}
\Vertex[x=6,y=7]{v11}
\Vertex[x=14,y=9]{v12}
\Vertex[x=8,y=9]{v13}
\Vertex[x=8,y=7.5]{v14}
\Vertex[x=8,y=6]{v15}
\Vertex[x=11,y=9]{v16}
\Vertex[x=11,y=7.5]{v17}
\Vertex[style={label={right:$w'$}},x=11,y=6]{v18}

\tikzstyle{LabelStyle}=[inner sep=1pt, circle]
\tikzstyle{EdgeStyle}=[edge_default]
\Edge[label=2](v0)(v1)
\Edge[label=\hbox{3}](v1)(v5)
\Edge[label=\hbox{3}](v5)(v12)

\tikzstyle{LabelStyle}=[fill=black!45!white, inner sep=1pt, circle]
\Edge[label=\hbox{2}](v2)(v6)
\Edge[label=\hbox{6}](v3)(v10)
\Edge[label=\hbox{5}](v8)(v9)

\tikzstyle{LabelStyle}=[fill=black!25!white, inner sep=1pt, circle]
\Edge[label=\hbox{3}](v4)(v13)
\Edge[label=\hbox{2}](v4)(v16)
\Edge[label=\hbox{2}](v13)(v14)

\tikzstyle{EdgeStyle}=[edge_in_c]
\tikzstyle{LabelStyle}=[fill=black!45!white, inner sep=1pt, circle]
\Edge[label=\hbox{1}](v1)(v2)
\Edge[label=\hbox{1}](v1)(v3)
\Edge[label=\hbox{1}](v3)(v8)
\Edge[label=\hbox{4}](v6)(v7)
\Edge[label=\hbox{3}](v10)(v11)

\tikzstyle{LabelStyle}=[fill=black!25!white, inner sep=1pt, circle]
\Edge[label=\hbox{2}](v1)(v4)
\Edge[label=\hbox{5}](v14)(v15)
\Edge[label=\hbox{1}](v16)(v17)
\Edge[label=\hbox{4}](v17)(v18)

\begin{pgfonlayer}{background}
\path[draw=black!45!white, fill=black!45!white, line width=12pt, line cap=round, line join=round] (v1.center) -- (v2.center) -- (v6.center) -- (v7.center) -- (v11.center) -- (v10.center) -- (v3.center) -- cycle;
\path[draw=black!25!white, fill=black!25!white, line width=12pt, line cap=round, line join=round] (v1.center) -- (v4.center) -- (v13.center) -- (v15.center) -- (v18.center) -- (v16.center) --(v4.center) -- cycle;
\end{pgfonlayer}

\node at (2,10){$T^+_{v,2}$};
\node at (12.5,7.5){$T_{v,3}$};
\path [draw,decoration={brace,mirror,amplitude=5pt,raise=10pt},decorate] (1.8,6) to node[below=20pt] {$T^+_{v,3} = T^+_{v,2} \cup T_{v,3}$} (11.2,6);

\end{tikzpicture}
\caption{Illustration of the definitions of Section~\ref{sec:dp} for $(\alpha,\beta)=(2,4)$.
For the set~$C$ of dashed edges we obtain $\load_{C,\alpha}(T_{v,2}^+,v)=\load_{C,\alpha}(v,w) = 7/2-2 = 3/2$ and $\load_{C,\alpha}(T_{v,3},v) = \load_{C,2}(v,w') = 9/2-2 = 5/2$.
We therefore have $\load_{C,\alpha}(T_{v,2}^+,v)+\load_{C,\alpha}(T_{v,3},v)=4\leq \beta$, so by Lemma~\ref{lem:tree-partition} the set~$C$ is a feasible $(\alpha,\beta)$-contraction in $T_{v,3}^+$.}
\label{fig:load}
\end{figure}

Using these definitions it follows straightforwardly from \eqref{eq:load-path} and \eqref{eq:load-tree} that for any set of edges $C\subseteq E(T_{u_i})$ we have
\begin{subequations}
\label{eq:load-add-edge}
\begin{align}
\load_{C\cup\{\{v,u_i\}\},\alpha}(T_{v,i},v) &= \load_{C,\alpha}(T_{u_i},u_i)+\ell(v,u_i)/\alpha , \label{eq:load-add-edge-yes} \\
\load_{C,\alpha}(T_{v,i},v) &= \max\{ \load_{C,\alpha}(T_{u_i},u_i)-(1-1/\alpha)\ell(v,u_i), 0 \} . \label{eq:load-add-edge-no}
\end{align}
\end{subequations}
Note that the load increases if the edge $\{v,u_i\}$ is added to $C$ (see \eqref{eq:load-add-edge-yes}), and it decreases otherwise (see \eqref{eq:load-add-edge-no}).
Moreover, for any set of edges $C\subseteq T_{v,i}^+$ and any $i=1,2,\ldots,c(v)$ we obtain from those definitions that
\begin{equation}
\label{eq:load-join-trees}
\load_{C,\alpha}(T_{v,i}^+,v) = \max \{\load_{C,\alpha}(T_{v,i-1}^+,v), \load_{C,\alpha}(T_{v,i},v) \} .
\end{equation}
These rules allow us to compute the load of all subtrees of~$T$ in a bottom-up fashion.
Our dynamic program maintains the minimum load of all subtrees of~$T$ in three-dimensional matrices~$L$ and~$L^+$.
We begin defining these matrices in an abstract way, and then establish several recursive relations which directly translate into a dynamic program.
Specifically, for $v\in V$, $i\in\{0,1,\ldots,c(v)\}$ and $s\in\{0,1,\ldots,m\}$ (recall that $m=|E|$) we define
\begin{equation}
\label{eq:def-L}
L(v,i,s) := \min \{\load_{C,\alpha}(T_{v,i},v): \text{$C$ is a feasible solution of $(T_{v,i},\ell,\varphi)$ of size $|C|=s$}\} .
\end{equation}
If there is no feasible solution of the required size, we have $L(v,i,s)=\infty$.
The entries of $L^+(v,i,s)$ are defined analogously to \eqref{eq:def-L} by considering the load of~$T_{v,i}^+$ instead of~$T_{v,i}$.
In words, the entries~$L(v,i,s)$ and~$L^+(v,i,s)$ describe feasible solutions~$C$ of size~$s$ of the instances $(T_{v,i},\ell,\varphi)$ or $(T_{v,i}^+,\ell,\varphi)$, respectively, of the problem \Contraction{} for which the load at the vertex~$v$ is as small as possible (the matrices contain the minimum achievable load, not the corresponding set of edges).

\begin{lem}
\label{lem:L-Lp-rec}
Let~$v$ be a vertex of~$T$ and let $u_1,u_2,\ldots,u_{c(v)}$ be the children of~$v$.
Then the matrices~$L$ and~$L^+$ defined in and directly after \eqref{eq:def-L} satisfy the relations
\begin{subequations}
\begin{align}
L(v,i,0) &= L^+(v,i,0) = 0 \quad \text{for all } i\in\{0,1,\ldots,c(v)\} , \label{eq:dp-init-L1} \\
L(v,0,s) &= L^+(v,0,s) = \infty \quad \text{for all } s\in\{1,2,\ldots,m\} , \label{eq:dp-init-L2} \\
L(v,i,s) &= \begin{cases}
     \mu    & \text{if } \mu \leq \beta , \\
     \infty & \text{otherwise}, 
  \end{cases} \label{eq:dp-update-L}
\end{align}
where $\mu:=\min \big\{ L^+(u_i,c(u_i),s-1)+\ell(v,u_i)/\alpha, \, \max\{L^+(u_i,c(u_i),s)-(1-1/\alpha)\ell(v,u_i), 0\}\big\}$ for all $i\in\{1,2,\ldots,c(v)\}$ and $s\in\{1,2,\ldots,m\}$.

Moreover, we have
\begin{align}
L^+(v,i,s) &= \min \big\{\max \{L^+(v,i-1,t),L(v,i,s-t)\} : t\in\{0,1,\ldots,s\} \text{ and } \notag \\
 & \hspace{3cm} L^+(v,i-1,t)+L(v,i,s-t)\leq \beta \big\} , \label{eq:dp-update-Lp1}
\end{align}
\end{subequations}
for all $i\in\{1,2,\ldots,c(v)\}$ and $s\in\{1,2,\ldots,m\}$.
\end{lem}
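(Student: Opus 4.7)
The plan is to verify the five claimed identities one by one, using only the definitions \eqref{eq:load-path}--\eqref{eq:load-tree} of the load, the two elementary update rules \eqref{eq:load-add-edge} and \eqref{eq:load-join-trees}, and the partition criterion of Lemma~\ref{lem:tree-partition}. The base cases \eqref{eq:dp-init-L1} and \eqref{eq:dp-init-L2} are immediate: the empty set $C = \emptyset$ is always feasible with load $0$ at any vertex, and the tree $T_{v,0} = \{v\}$ contains no edges so no contraction of positive size can exist in it.

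For \eqref{eq:dp-update-L}, I first record the structural observation that a set $C \subseteq E(T_{v,i})$ is feasible in $T_{v,i}$ if and only if $C \cap E(T_{u_i})$ is feasible in $T_{u_i}$ (distances among pairs within $T_{u_i}$ are unaffected by the presence of the extra edge $\{v, u_i\}$ and the vertex $v$) and $\load_{C,\alpha}(T_{v,i}, v) \leq \beta$ (which handles the remaining pairs, all of which involve $v$). I then condition on whether $\{v, u_i\} \in C$: by \eqref{eq:load-add-edge-yes} the load at $v$ equals $\load_{C \setminus \{\{v,u_i\}\},\alpha}(T_{u_i}, u_i) + \ell(v,u_i)/\alpha$, and minimizing this over all feasible $C \setminus \{\{v,u_i\}\} \subseteq E(T_{u_i})$ of size $s-1$ yields $L^+(u_i, c(u_i), s-1) + \ell(v,u_i)/\alpha$; symmetrically, if $\{v, u_i\} \notin C$, then \eqref{eq:load-add-edge-no} together with monotonicity of $\max\{\cdot, 0\}$ gives a minimum of $\max\{L^+(u_i, c(u_i), s) - (1 - 1/\alpha)\ell(v, u_i),\, 0\}$. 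Thus $\mu$ is the minimum of $\load_{C,\alpha}(T_{v,i},v)$ over all $C$ of size $s$ whose restriction to $T_{u_i}$ is feasible, and a size-$s$ contraction of $T_{v,i}$ itself is feasible exactly when $\mu \leq \beta$, in which case the optimum load is $\mu$. This gives \eqref{eq:dp-update-L}.

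For \eqref{eq:dp-update-Lp1}, I apply Lemma~\ref{lem:tree-partition} to the decomposition $T_{v,i}^+ = T_{v,i-1}^+ \cup T_{v,i}$, whose shared vertex is exactly $v$. For a candidate contraction $C$ of size $s$ in $T_{v,i}^+$, set $C_1 := C \cap E(T_{v,i-1}^+)$, $C_2 := C \cap E(T_{v,i})$, and $t := |C_1|$, so $|C_2| = s-t$. By Lemma~\ref{lem:tree-partition}, $C$ is feasible iff $C_1, C_2$ are feasible in their respective subtrees and $\load_{C_1,\alpha}(T_{v,i-1}^+, v) + \load_{C_2,\alpha}(T_{v,i}, v) \leq \beta$, while \eqref{eq:load-join-trees} gives $\load_{C,\alpha}(T_{v,i}^+,v) = \max\{\load_{C_1,\alpha}(T_{v,i-1}^+,v),\, \load_{C_2,\alpha}(T_{v,i},v)\}$. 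Fixing the split $t$, both summands of the load constraint are bounded below by $L^+(v,i-1,t)$ and $L(v,i,s-t)$ respectively; hence a feasible decomposition with sizes $(t, s-t)$ exists iff $L^+(v,i-1,t) + L(v,i,s-t) \leq \beta$, and in that case the minimum of the max-load equals $\max\{L^+(v,i-1,t), L(v,i,s-t)\}$, attained by individually optimal sub-contractions. Taking the outer minimum over $t$ produces \eqref{eq:dp-update-Lp1}.

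The main (though mild) obstacle is the compatibility argument in the $L^+$ step: a priori it is not obvious that individually load-minimizing sub-contractions can always be combined, because one might worry that a jointly feasible pair could exist in which neither component is load-optimal but whose $\max$ is strictly smaller. The observation that any feasible pair $(C_1, C_2)$ has loads bounded \emph{below} by $L^+(v,i-1,t)$ and $L(v,i,s-t)$, and that both the max-objective and the sum-feasibility constraint are monotone in each individual load, closes this gap in one line. All remaining arithmetic is routine.
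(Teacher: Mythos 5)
Your proof is correct and follows essentially the same approach as the paper: the paper's own proof is a terse one-paragraph pointer to \eqref{eq:load-add-edge}, \eqref{eq:load-join-trees}, Lemma~\ref{lem:tree-partition}, and the definition \eqref{eq:def-L}, and your write-up simply spells out those exact steps (conditioning on $\{v,u_i\}\in C$ for the $L$-update, and using the $T_{v,i-1}^+/T_{v,i}$ split with the partition criterion for the $L^+$-update). The monotonicity/compatibility remark at the end — that individually load-minimal sub-contractions can be combined because both the max-objective and the sum-constraint in Lemma~\ref{lem:tree-partition} are monotone in each component load — is precisely the detail the paper omits, and you have filled it in correctly.
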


The most interesting of these recursive relations are of course \eqref{eq:dp-update-L} and \eqref{eq:dp-update-Lp1}.
The relation \eqref{eq:dp-update-L} captures the two possibilities of either adding the edge $\{v,u_i\}$ or not adding it to a partial solution in the tree~$T_{u_i,c(u_i)}^+=T_{u_i}$ to obtain a solution for the tree~$T_{v,i}$ (recall \eqref{eq:load-add-edge}).
The relation \eqref{eq:dp-update-Lp1}, on the other hand, describes how to distribute~$s$ contraction edges in~$T_{v,i}^+$ among the two subtrees~$T_{v,i-1}^+$ and~$T_{v,i}$ ($t$ is the number of edges contracted in the first tree, and~$s-t$ the number of edges in the second tree, respectively).

\begin{proof}
The relations \eqref{eq:dp-init-L1} and \eqref{eq:dp-init-L2} follow immediately from the definitions of the trees $T_{v,i}$ and~$T_{v,i}^+$ and from \eqref{eq:def-L}.
The relation \eqref{eq:dp-update-L} follows from \eqref{eq:load-add-edge} and \eqref{eq:def-L}.
The relation \eqref{eq:dp-update-Lp1} follows from \eqref{eq:load-join-trees} and \eqref{eq:def-L} with the help of Lemma~\ref{lem:tree-partition}.
\end{proof}

We are now ready to prove Theorem~\ref{thm:dp}.

\begin{proof}[Proof of Theorem~\ref{thm:dp}]
Given the instance $(T,\ell,\varphi)$, we fix an arbitrary root~$r$ of~$T$ and an arbitrary ordering of the children of each vertex, making~$T$ an ordered rooted tree.
We then compute the entries of the matrices~$L$ and~$L^+$ using Lemma~\ref{lem:L-Lp-rec}.
We first initialize various entries using \eqref{eq:dp-init-L1} and \eqref{eq:dp-init-L2}, and compute the remaining entries in a bottom-up fashion moving upwards from the leaves to the root.
Specifically, at a vertex~$v$ with children $u_1,u_2,\ldots,u_{c(v)}$ for which all the entries of~$L$ and~$L^+$ have already been computed, we first compute $L(v,i,s)$ for all $i\in\{1,2,\ldots,c(v)\}$ and $s\in\{1,2,\ldots,m\}$ using \eqref{eq:dp-update-L}, and then $L^+(v,i,s)$ for all $i\in\{1,2,\ldots,c(v)\}$ and $s\in\{1,2,\ldots,m\}$ using \eqref{eq:dp-update-Lp1}.

Let~$s^*$ be the largest~$s$ such that $L^+(r,c(r),s)\leq \beta$.
From \eqref{eq:def-L} we obtain that~$s^*$ is the size of an optimal solution of the instance $(T,\ell,\varphi)$.
The corresponding set of edges $C\subseteq E$ can be obtained by keeping track of the arguments for which the minima and maxima in \eqref{eq:dp-update-L} and \eqref{eq:dp-update-Lp1} are attained in each step.

Clearly,~$L$ and~$L^+$ both have $\cO(n^2)$ entries, and computing each entry takes time $\cO(n)$, so the running time of our dynamic program is $\cO(n^3)$.
\end{proof}

\subsection{\WContraction{} on trees}
\label{sec:weak-dp}

In this section we consider the problem of computing weak contractions for a tree~$T$ with affine tolerance function $\varphi(x)=x/\alpha-\beta$.
Here, our main result is a dynamic programming algorithm that builds on the algorithmic ideas presented in Section~\ref{sec:dp}.

\begin{thm}
\label{thm:weak-dp}
Let~$T$ be a tree with edge lengths $\ell\colon E\to\RR_{>0}$ and consider the tolerance function $\varphi(x)=x/\alpha-\beta$, $\alpha\geq 1$, $\beta\geq 0$.
An optimal solution for the instance $(T,\ell,\varphi)$ of the problem \WContraction{} can be computed by dynamic programming in time $\cO(n^5)$.
\end{thm}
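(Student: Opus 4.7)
The plan is to extend the dynamic program of Section~\ref{sec:dp} by enlarging the DP state so as to accommodate the weak-contraction exemption for pairs sharing a super-vertex. For a rooted subtree $T' \subseteq T$ containing $v$ and a candidate $C \subseteq E(T')$, let $S_v(C)$ denote the connected component of $(V(T'),C)$ containing $v$, and introduce the two summary quantities
\[
r(C,v) := \max\{\dist_\ell(u,v) : u \in S_v(C)\}, \qquad \ell_1(C,v) := \max\{\load_{C,\alpha}(u,v) : u \in V(T') \setminus S_v(C)\}
\]
(with $\ell_1(C,v):=0$ when the second set is empty). Intuitively $r(C,v)$ is the radius of $v$'s super-vertex, while $\ell_1(C,v)$ is the analogue of $\load_{C,\alpha}(T',v)$ restricted to those vertices that still require a distortion bound.

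The first step will be to prove a weak analogue of Lemma~\ref{lem:tree-partition}: if $T$ is split at $v$ into subtrees $T_1, T_2$ sharing only $v$, then $C = C_1 \cup C_2$ (with $C_j \subseteq E(T_j)$) satisfies the weak-distance condition on every pair in $T$ iff each $C_j$ satisfies it on $T_j$ and the three cross-inequalities
\[
\ell_1(C_1,v)+\ell_1(C_2,v) \le \beta, \quad r(C_1,v)/\alpha+\ell_1(C_2,v) \le \beta, \quad \ell_1(C_1,v)+r(C_2,v)/\alpha \le \beta
\]
all hold. I would verify this by a four-case analysis on a cross-$v$ pair $(u,u')\in V(T_1)\times V(T_2)$, using that on a tree the super-vertex of $v$ in $T/C$ equals $S_v(C_1)\cup S_v(C_2)$ and the $u$-$u'$ path decomposes additively at $v$. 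Pairs with both endpoints inside their respective $S_v$ land in the combined super-vertex and are automatically exempt, leaving exactly the three inequalities above.

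With this in hand I mirror the recurrences \eqref{eq:dp-update-L} and \eqref{eq:dp-update-Lp1}, but replace each scalar minimum by a two-dimensional Pareto set. For each triple $(v,i,s)$ I store
\[
\Pi^+(v,i,s) := \bigl\{(r,\ell) : \exists\, C\subseteq E(T^+_{v,i}),\ |C|=s,\ C\text{ satisfies the weak-distance condition in }T^+_{v,i},\ r(C,v)=r,\ \ell_1(C,v)=\ell\bigr\},
\]
reduced to its Pareto frontier under the product order (smaller is better in both coordinates), and analogously $\Pi(v,i,s)$ for $T_{v,i}$. Since $r(C,v)$ takes at most $n$ distinct values (the distances from $v$ to the vertices of $T_v$), each Pareto set has size $\cO(n)$. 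The edge transition for $\{v,u_i\}$ turns every pair $(r,\ell) \in \Pi^+(u_i,c(u_i),s)$ into two candidates: contracting the edge yields $\bigl(r+\ell(v,u_i),\,\ell+\ell(v,u_i)/\alpha\bigr)$ with size $s+1$, and not contracting yields $\bigl(0,\,\max\{0,\max(r/\alpha,\ell)-(1-1/\alpha)\ell(v,u_i)\}\bigr)$ with size $s$; these formulas follow from \eqref{eq:load-add-edge} applied separately to vertices inside and outside $S_{u_i}$. The merge of $T^+_{v,i-1}$ and $T_{v,i}$ iterates $t \in \{0,\dots,s\}$ and all pairs from the two Pareto sets satisfying the three cross-inequalities, inserting the coordinate-wise maximum $(\max(r_1,r_2),\max(\ell_1,\ell_2))$ into $\Pi^+(v,i,s)$ before a final Pareto reduction. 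The optimum objective is the largest $s \le n-2$ for which $\Pi^+$ at the root is non-empty; the bound $s \le n-2$ enforces that $(V,C)$ stay disconnected, i.e.\ that $G/C$ have more than one vertex.

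For the running time, the identity $\sum_v (c(v)+1) \le 2n$ bounds the number of DP entries by $\cO(n^2)$. A single merge at $(v,i,s)$ enumerates $\cO(n)$ values of $t$ and combines two Pareto sets of size $\cO(n)$, giving $\cO(n^2)$ candidate pairs per split, hence $\cO(n^3)$ candidates in total for this entry, which can be sorted and Pareto-reduced in $\cO(n^3 \log n)$ (and in $\cO(n^3)$ with a small monotonicity argument). This yields $\cO(n^5)$ overall. The main obstacle I expect is the weak partition lemma and, above all, the observation that a deep vertex inside $S_v$ paired with a non-super-vertex vertex on the other side contributes its \emph{original} distance to $v$ (producing the $r/\alpha$ summand) rather than a load: the super-vertex absorbs the contracted distance but not the distortion budget, and this is precisely why the one-dimensional state of Theorem~\ref{thm:dp} no longer suffices and the Pareto-set extension -- costing an extra factor of roughly $\cO(n^2)$ in the running time -- becomes necessary.
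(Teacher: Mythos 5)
Your proposal follows essentially the same route as the paper: a weak analogue of Lemma~\ref{lem:tree-partition} for subtree splits at $v$, a Pareto front of two load-like parameters indexed by the size $s$ of the partial contraction, and a bottom-up DP over $(v,i,s)$. Your pair $(r,\ell_1)$ is an equivalent reparametrization of the paper's $(\load_{C,\alpha},\wload_{C,\alpha})$ via $\load_{C,\alpha}(T',v)=\max\{r(C,v)/\alpha,\ell_1(C,v)\}$; expanding the paper's two inequalities $\load_1+\wload_2\le\beta$ and $\wload_1+\load_2\le\beta$ over this maximum reproduces your three cross-inequalities verbatim, and your $r$-value argument for the $\cO(n)$ Pareto bound is the same observation as Lemma~\ref{lem:pareto}.

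There are, however, two concrete slips in your transition rules that would make the DP return suboptimal values. First, setting $\ell_1(C,v):=0$ on the empty set (i.e.\ when $C$ contracts all of $T'$) is wrong: it must be $-\infty$, exactly as the paper's $\wload=-\infty$. With your convention, the third cross-inequality silently becomes $r_2/\alpha\le\beta$ whenever $T_1$ is fully contracted, which is not implied by feasibility; for instance with $\alpha=2$, $\beta=1$, $T_1$ a single edge $\{v,a\}$ of tiny length and $T_2$ the path $v\mathrm{-}b\mathrm{-}c$ with $\ell(v,b)=3$, $\ell(b,c)=2$, the set $\{\{v,a\},\{v,b\}\}$ is a valid weak $(2,1)$-contraction of $T$ but your merge rejects it because $r_2/\alpha=3/2>\beta$. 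Second, the clamp $\max\{0,\cdot\}$ in your no-contraction transition is spurious; that clamp belongs only to the $\load$ recurrence (cf.\ \eqref{eq:load-add-edge-no}, where $\load\ge 0$ always holds), not to $\ell_1=\wload$, which is explicitly allowed to be negative. Since a more negative $\ell_1$ loosens the cross-inequalities against a large partner radius $r'/\alpha$, rounding it up to $0$ again rejects feasible merges. Both are local edits: replace $0$ by $-\infty$ for the empty sentinel, and drop the outer $\max\{0,\cdot\}$; with those fixed, the partition lemma, the Pareto-front bound, and the $\cO(n^2)\times\cO(n^3)=\cO(n^5)$ accounting all go through and match the paper's argument.
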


In this setting we need to specifically keep track of pairs of vertices whose distance remains positive when contracting a set of edges $C\subseteq E$ (i.e., not all edges in between these vertices are contracted).
To this end we extend the definitions \eqref{eq:load} as follows:
For any vertex~$v$ of~$T$ we define the \emph{weak load of~$T$ at~$v$} as
\begin{equation}
\label{eq:wload}
\wload_{C,\alpha}(T,v) := \max \{\load_{C,\alpha}(u,v): u\in V(T) \text{ and } \dist_{\ell_C}(u,v)>0\} .
\end{equation}
Note that in the maximization we have to consider all vertices~$u$ such that at least one edge on the path from~$u$ to~$v$ is not in~$C$.
This definition together with \eqref{eq:load-tree} yields $\wload_{C,\alpha}(T,v)\leq \load_{C,\alpha}(T,v)$.
In contrast to the load, the weak load may be negative.
In particular, $\wload_{C,\alpha}(T,v)=-\infty$ if and only if $C=E$.

The following lemma is the counterpart to Lemma~\ref{lem:tree-partition} for weak contractions.
It describes how to combine feasible solutions on subtrees to a feasible solution of the entire tree.
There is one important subtlety here:
While the notion of a weak contraction forbids contracting all edges of~$T$, we clearly have to allow this for partial solutions on subtrees of~$T$ (as long as some other edge not in the subtree is is not contracted, this might still yield a feasible solution).

\begin{lem}
\label{lem:weak-tree-partition}
Consider a partition of~$T$ into two subtrees~$T_1$ and~$T_2$ that only have a vertex~$v\in V$ in common.
Then $C\subsetneq E$ is a feasible solution for the instance $(T,\ell,\varphi)$ of the problem \WContraction{} if and only if the following two conditions hold: For~$i=1,2$, either~$C$ contains every edge of~$T_i$ or $C\cap E(T_i)$ is a feasible solution for the instance $(T_i,\ell,\varphi)$ of \WContraction{}; and we have
\begin{equation}
\label{eq:weak-tree-cond}
\load_{C,\alpha}(T_1,v)+\wload_{C,\alpha}(T_2,v)\leq \beta \quad \text{and} \quad \wload_{C,\alpha}(T_1,v)+\load_{C,\alpha}(T_2,v)\leq \beta .
\end{equation}
\end{lem}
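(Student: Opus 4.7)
The plan is to follow the same decomposition strategy used in Lemma~\ref{lem:tree-partition}, splitting the set of vertex pairs $(u,w) \in V(T) \times V(T)$ into three types according to whether they lie in $T_1$, in $T_2$, or straddle the cut vertex~$v$, and verifying necessity and sufficiency for each type. The starting observation is that by definition, $C \subsetneq E$ is a feasible weak $\varphi$-contraction of $T$ exactly when every pair $(u,w)$ satisfies either $\dist_{\ell_C}(u,w) = 0$ or $\load_{C,\alpha}(u,w) \leq \beta$. I would check the two directions of the biconditional in parallel.

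For pairs inside a single subtree $T_i$ the condition directly gives part~(a): if $C \cap E(T_i) \neq E(T_i)$ then there exist two vertices of $T_i$ at positive contracted distance, and the pairwise condition there coincides exactly with the weak contraction condition on $(T_i,\ell,\varphi)$; otherwise all vertices of $T_i$ have zero mutual contracted distance, matching the first alternative of~(a). For cross pairs $u \in V(T_1)$ and $w \in V(T_2)$, the unique $u$--$w$ path in $T$ passes through~$v$, so both the contracted distance and the load are additive at $v$:
\begin{equation*}
\dist_{\ell_C}(u,w) = \dist_{\ell_C}(u,v) + \dist_{\ell_C}(v,w), \qquad \load_{C,\alpha}(u,w) = \load_{C,\alpha}(u,v) + \load_{C,\alpha}(v,w).
\end{equation*}
In particular $\dist_{\ell_C}(u,w) = 0$ iff $\dist_{\ell_C}(u,v) = 0$ and $\dist_{\ell_C}(v,w) = 0$ both hold. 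Consequently, the weak contraction condition for such a cross pair is equivalent to: whenever at least one of $\dist_{\ell_C}(u,v),\dist_{\ell_C}(v,w)$ is positive, one has $\load_{C,\alpha}(u,v) + \load_{C,\alpha}(v,w) \leq \beta$.

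Maximising over $u$ and $w$ independently in each of the two symmetric sub-cases is what yields the two inequalities in \eqref{eq:weak-tree-cond}. When $\dist_{\ell_C}(u,v) > 0$ the inequality must hold for \emph{every} $w \in V(T_2)$, and taking the worst case gives $\wload_{C,\alpha}(T_1,v) + \load_{C,\alpha}(T_2,v) \leq \beta$; the symmetric sub-case with $\dist_{\ell_C}(v,w) > 0$ gives $\load_{C,\alpha}(T_1,v) + \wload_{C,\alpha}(T_2,v) \leq \beta$. Conversely, assuming both inequalities together with part~(a), the same case analysis shows that every pair $(u,w)$ satisfies the weak contraction condition, while the hypothesis $C \subsetneq E$ supplies the non-degeneracy demanded by the definition of a weak contraction on all of~$T$.

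The main obstacle I expect is the corner case where $C$ contains all edges of one subtree, say $E(T_1) \subseteq C$, so that $\wload_{C,\alpha}(T_1,v) = -\infty$ by the convention following~\eqref{eq:wload}. The first inequality in \eqref{eq:weak-tree-cond} then becomes vacuous, and I need to verify that this is consistent with the cross-pair analysis: every cross pair $(u,w)$ with $u \in V(T_1)$ automatically has $\dist_{\ell_C}(u,v)=0$, so it only contributes a constraint through the second, non-vacuous inequality $\load_{C,\alpha}(T_1,v) + \wload_{C,\alpha}(T_2,v) \leq \beta$, which correctly captures it. The hypothesis $C \subsetneq E$ ensures that at least one of the two subtrees retains an edge outside $C$, so at least one of the two $\wload$ terms is finite and the pair of inequalities jointly encodes precisely the required cross-pair constraints.
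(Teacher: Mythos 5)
Your proposal is correct and follows essentially the same decomposition as the paper's proof: split vertex pairs by whether they lie within $T_1$, within $T_2$, or straddle $v$; exploit additivity of $\load_{C,\alpha}$ and $\dist_{\ell_C}$ at $v$ for cross pairs; and maximize over the appropriate vertex subsets (restricting to positive contracted distance on the side that supplies the $\wload$ term) to recover the two inequalities of \eqref{eq:weak-tree-cond}. One cosmetic slip in the corner case discussion: when $E(T_1)\subseteq C$ it is the \emph{second} inequality of \eqref{eq:weak-tree-cond} (the one with $\wload_{C,\alpha}(T_1,v)$) that becomes vacuous, not the first, though you correctly identify which constraint survives.
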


\begin{proof}
Let $C\subsetneq E$.
For the rest of the proof we omit the subscripts~$C$ and~$\alpha$ and simply write $\load_{C,\alpha}=\load$ and $\wload_{C,\alpha}=\wload$.

We first assume that~$C$ is a feasible solution for the instance $(T,\ell,\varphi)$ of the problem \WContraction{}.
I.e., any two vertices $u,w$ of~$T$ with $\dist_{\ell_C}(u,w)>0$ satisfy the condition $\load(u,w)\leq \beta$.
This is true in particular for all pairs of vertices $u,w\in T_i$, $i=1,2$, implying that either $C\supseteq T_i$ or $C\cap T_i\subsetneq T_i$ is a feasible solution for the instance $(T_i,\ell,\varphi)$.
If $\wload(T_2,v)=-\infty$, the claimed inequality $\load(T_1,v)+\wload(T_2,v)\leq \beta$ is trivially satisfied.
So suppose that $\wload(T_2,v)$ is a finite number, and let $u\in T_1$ and $w\in T_2$ be such that $\load(u,v)=\load(T_1,v)$, and $\dist_{\ell_C}(v,w)>0$ as well as $\load(v,w)=\wload(T_2,v)$.
Then we also have $\dist_{\ell_C}(u,w)>0$, so we know that $\load(u,w)\leq \beta$ by the assumption that~$C$ is feasible for $(T,\ell,\varphi)$.
Combining this last inequality with the relation $\load(u,w)=\load(u,v)+\load(v,w)=\load(T_1,v)+\wload(T_2,v)$ proves that the right hand side of the equation is at most~$\beta$, as claimed.
The proof of the second inequality $\wload(T_1,v)+\load(T_2,v)\leq \beta$ works symmetrically.
This proves one direction of the equivalence.

To prove the reverse direction, we now assume that either $C\supseteq T_i$ or $C\cap T_i\subsetneq T_i$ is a feasible solution for the instance $(T_i,\ell,\varphi)$ for $i=1,2$, and that $\load(T_1,v)+\wload(T_2,v)\leq \beta$ and $\wload(T_1,v)+\load(T_2,v)\leq \beta$.
To show that~$C$ is a feasible solution for the instance $(T,\ell,\varphi)$, let $u\in T_1$ and $w\in T_2$ be such that $\dist_{\ell_C}(u,w)>0$.
It follows that $\dist_{\ell_C}(u,v)>0$ or $\dist_{\ell_C}(v,w)>0$.
We first consider the case that $\dist_{\ell_C}(u,v)>0$.
By the definitions \eqref{eq:load} and \eqref{eq:wload} we have $\load(u,v)\leq \wload(T_1,v)$, and also $\load(v,w)\leq \load(T_2,v)$, yielding $\load(u,w)=\load(u,v)+\load(v,w)\leq \wload(T_1,v)+\load(T_2,v)\leq \beta$ (the last inequality holds by assumption).
This proves that $\load(u,w)\leq \beta$, as desired.
The proof of the other case $\dist_{\ell_C}(v,w)>0$ works symmetrically.
This completes the proof of the lemma.
\end{proof}

As in Section~\ref{sec:dp}, we view~$T$ as an ordered rooted tree, and consider its subtrees $T_v$, $T_{v,i}$ and~$T_{v,i}^+$ for all $v\in V$ and $i\in\{0,1,\ldots,c(v)\}$ (recall the definitions given after Lemma~\ref{lem:tree-partition}).
Let us briefly highlight the differences between Lemmas~\ref{lem:tree-partition} and \ref{lem:weak-tree-partition}.
The dynamic programming algorithm presented in Section~\ref{sec:dp} exploits the fact that the optimal way to contract exactly $|C|=s$ edges in a subtree~$T_v$ of~$T$ rooted at a particular vertex~$v$ is to contract a set of edges that minimizes $\load_{C,\alpha}(T_v,v)$.
This is possible as the optimality condition in Lemma~\ref{lem:tree-partition} only depends on this parameter.
Here the situation is more complicated, as Lemma~\ref{lem:weak-tree-partition} also considers $\wload_{C,\alpha}(T_v,v)$.
Figure~\ref{fig:wload} illustrates that it is not sufficient to minimize only one of these parameters.

\begin{figure}[ht]
\subfloat[]{
\begin{tikzpicture}[scale=0.6]
\tikzstyle{VertexStyle}=[vertex_default]
\SetVertexNoLabel

\Vertex[x=0,y=0]{v0}
\Vertex[x=2,y=0]{v1}
\Vertex[x=1,y=2]{v2}
\Vertex[style={label={right:$v$}},x=1,y=4]{v3}

\tikzstyle{EdgeStyle}=[edge_default]
\Edge[label=\hbox{3}](v2)(v3)

\tikzstyle{EdgeStyle}=[edge_in_c]
\Edge[label=\hbox{4}](v0)(v2)
\Edge[label=\hbox{4}](v1)(v2)

\node at (0,4){$T_v$};
\node at (-1,1){$C$};

\end{tikzpicture}}
\hspace{2cm}
\subfloat[]{
\begin{tikzpicture}[scale=0.6]
\tikzstyle{VertexStyle}=[vertex_default]
\SetVertexNoLabel

\Vertex[x=0,y=0]{v0}
\Vertex[x=2,y=0]{v1}
\Vertex[x=1,y=2]{v2}
\Vertex[style={label={right:$v$}},x=1,y=4]{v3}

\tikzstyle{EdgeStyle}=[edge_default]
\Edge[label=\hbox{4}](v1)(v2)

\tikzstyle{EdgeStyle}=[edge_in_c]
\Edge[label=\hbox{4}](v0)(v2)
\Edge[label=\hbox{3}](v2)(v3)

\node at (0,4){$T_v$};
\node at (-1,1){$C'$};

\end{tikzpicture}}
\hspace{2cm}
\subfloat[]{
\begin{tikzpicture}[scale=0.6]
\tikzstyle{VertexStyle}=[vertex_default]
\SetVertexNoLabel

\Vertex[x=0,y=0]{v0}
\Vertex[x=2,y=0]{v1}
\Vertex[x=1,y=2]{v2}
\Vertex[style={label={right:$v$}},x=1,y=4]{v3}
\Vertex[x=1,y=6]{v4}

\tikzstyle{EdgeStyle}=[edge_default]
\Edge[label=\hbox{3}](v2)(v3)
\Edge[label=\hbox{3}](v3)(v4)

\tikzstyle{EdgeStyle}=[edge_in_c]
\Edge[label=\hbox{4}](v0)(v2)
\Edge[label=\hbox{4}](v1)(v2)

\node at (0,6){$T$};

\end{tikzpicture}}
\hspace{2cm}
\subfloat[]{
\begin{tikzpicture}[scale=0.6]
\tikzstyle{VertexStyle}=[vertex_default]
\SetVertexNoLabel

\Vertex[x=0,y=0]{v0}
\Vertex[x=2,y=0]{v1}
\Vertex[x=1,y=2]{v2}
\Vertex[style={label={right:$v$}},x=1,y=4]{v3}
\Vertex[x=1,y=6]{v4}

\tikzstyle{EdgeStyle}=[edge_default]
\Edge[label=\hbox{4}](v1)(v2)

\tikzstyle{EdgeStyle}=[edge_in_c]
\Edge[label=\hbox{4}](v0)(v2)
\Edge[label=\hbox{3}](v2)(v3)
\Edge[label=\hbox{1}](v3)(v4)

\node at (0,6){$T$};

\end{tikzpicture}}
\caption{Example of the behavior of the parameters~$\load()$ and~$\wload()$ for $(\alpha,\beta)=(2,1/2)$.
Consider the tree~$T$ in (c) and (d) with length functions that differ only in the value they assign to the topmost edge.
Parts (a) and (b) of the figure show the subtree~$T_v$ of~$T$ and two different subsets of edges~$C$ and~$C'$ of~$T_v$, respectively, drawn by dashed edges.
We have $\load_{C,\alpha}(T_v,v)=\wload_{C,\alpha}(T_v,v) = 7/2-3 = 1/2$, while $\load_{C',\alpha}(T_v,v) = 7/2$ and $\wload_{C,\alpha}(T_v,v) = 7/2-4 = -1/2$, so no solution dominates the other.
For the length function in (c),~$C$ is feasible and optimal, but~$C'$ is not feasible.
For the length function in (d), on the other hand,~$C'$ is feasible and can be extended to an optimal solution of size 3, but~$C$ can not be extended.}
\label{fig:wload}
\end{figure}
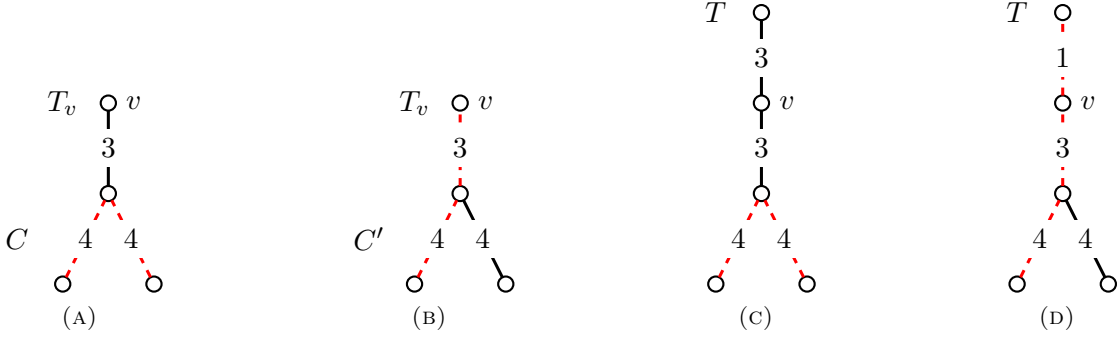

Consequently, we keep track of an entire Pareto front of non-dominated partial solutions (see Figure~\ref{fig:pareto}).
Formally, we define the \emph{set $F(T_v,s)$ of feasible partial solutions of size~$s$} as the family of all sets $C\subseteq E(T_v)$ with $|C|=s$ such that either $C=E(T_v)$ or~$C$ is a feasible solution for the instance $(T_v,\ell,\varphi)$ of \WContraction{}.
For two sets $C,C'\in F(T_v,s)$ we say that \emph{$C$ dominates~$C'$ at~$v$} if $\load_{C,\alpha}(T_v,v) \leq \load_{C',\alpha}(T_v,v)$ and $\wload_{C,\alpha}(T_v,v) \leq \wload_{C',\alpha}(T_v,v)$, and we define the \emph{Pareto front} $P(T_v,v,s)$ as a minimal family of sets $C\in F(T_v,s)$ such that no set $C'\in F(T_v,s)$ dominates~$C$ at~$v$.
Note that the domination relation is reflexive, so there may be several different such minimal families, all with the same pairs of load and weak load values, and any choice among them is equally good for us.
This definition is illustrated in Figure~\ref{fig:pareto}.

The following crucial lemma asserts that the number of points on the Pareto front, i.e., the size of the family $P(T_v,v,s)$ is at most~$n+1$.
This property is essential for our dynamic programming approach, and it does not follow immediately from the definition of $P(T_v,v,s)$, as the set of feasible solutions $F(T_v,s)$ is typically of exponential size.

\begin{lem}
\label{lem:pareto}
For any $C\subseteq E(T_v)$, we have $\load_{C,\alpha}(T_v,v) \in \Lambda(T_v,v):=\{\dist_\ell(u,v)/\alpha : u\in V(T_v)\}$ or $\load_{C,\alpha}(T_v,v)=\wload_{C,\alpha}(T_v,v)$.
Consequently, the Pareto front $P(T_v,v,s)$ has size at most~$n+1$.
\end{lem}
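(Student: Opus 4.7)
The plan is to prove the dichotomy first and then derive the Pareto front bound by a counting argument split along that dichotomy.

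For the dichotomy, I would fix $C \subseteq E(T_v)$ and let $u^* \in V(T_v)$ attain the maximum in the definition~\eqref{eq:load-tree} of $\load_{C,\alpha}(T_v,v)$. If $\dist_{\ell_C}(u^*,v) = 0$, then by~\eqref{eq:load-path} we have $\load_{C,\alpha}(T_v,v) = \load_{C,\alpha}(u^*,v) = \dist_\ell(u^*,v)/\alpha$, which belongs to $\Lambda(T_v,v)$ by definition. Otherwise $u^*$ is admissible in the maximization~\eqref{eq:wload} defining $\wload_{C,\alpha}(T_v,v)$, giving $\wload_{C,\alpha}(T_v,v) \geq \load_{C,\alpha}(u^*,v) = \load_{C,\alpha}(T_v,v)$; combined with the trivially true inequality $\wload_{C,\alpha}(T_v,v) \leq \load_{C,\alpha}(T_v,v)$ this forces equality, completing the dichotomy.

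For the size bound, I would partition the Pareto front $P := P(T_v,v,s)$ into $P_1 := \{C \in P : \load_{C,\alpha}(T_v,v) \in \Lambda(T_v,v)\}$ and $P_2 := P \setminus P_1$; by the dichotomy, every $C \in P_2$ satisfies $\load_{C,\alpha}(T_v,v) = \wload_{C,\alpha}(T_v,v)$. Minimality of the Pareto family guarantees that any two distinct representatives realize distinct $(\load, \wload)$ pairs, so dominance reduces to coordinate-wise comparison of these pairs. Any two elements of $P_1$ sharing the same $\load$ value would then be comparable in $\wload$ and one would dominate the other; hence at most one representative per $\load$-value survives in $P_1$, and $|P_1| \leq |\Lambda(T_v,v)| \leq |V(T_v)| \leq n$. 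In $P_2$, every pair has the form $(x,x)$, and for $x < x'$ the pair $(x,x)$ dominates $(x',x')$ coordinate-wise, so $|P_2| \leq 1$. Summing yields $|P(T_v,v,s)| \leq n+1$.

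The argument is a clean case analysis followed by a counting step, and no part presents a real obstacle once the correct dichotomy is in hand. The only subtlety worth emphasizing is that minimality of the Pareto family collapses questions about set-representatives to questions about their numerical $(\load,\wload)$ pairs, which is what makes the counting in $P_1$ and $P_2$ straightforward.
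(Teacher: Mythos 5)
Your proof is correct and follows essentially the same path as the paper: establish the dichotomy ($\load \in \Lambda$ or $\load = \wload$) via the inequality $\wload \leq \load$ together with examining the maximizer $u^*$, then bound the Pareto front by a counting argument. The only cosmetic difference is that you split the count by whether $\load \in \Lambda$, whereas the paper splits by whether $\wload < \load$; both give the same bound $|\Lambda| + 1 \leq n+1$.
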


\begin{proof}
By the definitions \eqref{eq:load} and \eqref{eq:wload} we have $\wload_{C,\alpha}(T_v,v)\leq \load_{C,\alpha}(T_v,v)$ for all $C\subseteq E(T_v)$.
Now let $C\subseteq E(T_v)$ be such that $\wload_{C,\alpha}(T_v,v)<\load_{C,\alpha}(T_v,v)$.
Again by the previously mentioned definitions this implies that $\load_{C,\alpha}(T_v,v)=\load_{C,\alpha}(u,v)=\dist_\ell(u,v)/\alpha$ for some $u\in V(T_v)$, which is indeed an element of the set $\Lambda(T_v,v)$.
Consequently, the Pareto front $P(T_v,v,s)$ consists of at most one set $C\in F(T_v,s)$ with $\wload_{C,\alpha}(T_v,v)=\load_{C,\alpha}(T_v,v)$ and at most one set $C\in F(T_v,s)$ with $\wload_{C,\alpha}(T_v,v)<\load_{C,\alpha}(T_v,v)$ for each number in $\Lambda(T_v,v)$.
Using that $|\Lambda(T_v,v)|\leq |V(T_v)|\leq n$ it follows that $|P(T_v,v,s)|\leq n+1$.
\end{proof}

\begin{figure}[ht]
\centering
\begin{tikzpicture}[scale=0.9]
\tikzstyle{VertexStyle}=[vertex_default]
\tikzstyle{EdgeStyle}=[edge_default]
\SetVertexNoLabel

\Vertex[style={label={right:$v$}},x=3,y=4]{v0}
\Vertex[x=1,y=2]{v1}
\Vertex[x=5,y=2]{v2}
\Vertex[x=0,y=0]{v3}
\Vertex[x=2,y=0]{v4}
\Vertex[x=4,y=0]{v5}
\Vertex[x=6,y=0]{v6}

\Edge[label=\hbox{3}](v0)(v1)
\Edge[label=\hbox{1}](v0)(v2)
\Edge[label=\hbox{5}](v1)(v3)
\Edge[label=\hbox{1}](v1)(v4)
\Edge[label=\hbox{5}](v2)(v5)
\Edge[label=\hbox{6}](v2)(v6)

\node at (2,4){$T_v$};

\end{tikzpicture}
\begin{tikzpicture}
\begin{scope}[xscale=1.2, yscale=1.2]
\draw[->,very thick] (0,0) -- (5,0) node[below] {$\load$};
\draw[->,very thick] (0,-2) -- (0,3.5) node[left] {$\wload$};
\draw[dashed,thick] (0,0) -- (3.5,3.5) node[right] {$\wload=\load$};
\foreach \x in {0,2,3,4}
\draw[very thick] (\x,-0.2) -- (\x,0) node {};
\draw[very thick] (1,-0.2) -- (1,0) node[below=5pt] {1};
\foreach \y in {0,2,3}
\draw[very thick] (-0.2,\y) -- (0,\y) node {};
\draw[very thick] (-0.2,1) -- (0,1) node[left=5pt] {1};
\draw[very thick] (-0.2,-1) -- (0,-1) node[left=5pt] {-1};

\draw[dashed, red, very thick] (0,-2) -- (0,3);
\draw[dashed, red, very thick] (0.5,-2) -- (0.5,3);
\draw[dashed, red, very thick] (1.5,-2) -- (1.5,3);
\draw[dashed, red, very thick] (2,-2) -- (2,3);
\draw[dashed, red, very thick] (3,-2) -- (3,3);
\draw[dashed, red, very thick] (3.5,-2) -- (3.5,3);
\draw[dashed, red, very thick] (4,-2) -- (4,3);

\draw plot [only marks,mark=*,mark options={scale=1.0,color=blue}] coordinates{(4,-0.5)(4,-1)(4,2)(4,2.5)(2,-1)(2,2)(2.5,2.5)(3,1)(3.5, 1)(1,1)(3,-1)(3.5,-1)(3.5,-1.5)};
\draw plot [only marks,mark=o,mark options={scale=1.8,color=blue,thick}] coordinates{(1, 1)(2,-1)(3.5,-1.5)};

\begin{pgfonlayer}{background}
\path[draw=black!35!white, line width=8pt, line cap=round, line join=round] (1,2) -- (1,1) -- (2,1) -- (2,-1) -- (3.5,-1) -- (3.5,-1.5) -- (4.5,-1.5);
\end{pgfonlayer}

\end{scope}
\end{tikzpicture}
\caption{The right hand side of the figure shows the set of all pairs of $(\load,\wload)$ values for all feasible solutions of size $s=3$ for the rooted tree $(T_v,v)$ shown on the left for the parameter values $(\alpha,\beta)=(2,13/2)$.
The points in $P(T_v,v,s)$ are highlighted with small circles.
The vertical dashed lines represent all values in the set $\Lambda(T_v,v)=\{0,1/2,3/2,2,3,7/2,4\}$.
Moreover, we have $\lambda^*(T_v,v,s)=1$.}
\label{fig:pareto}
\end{figure}
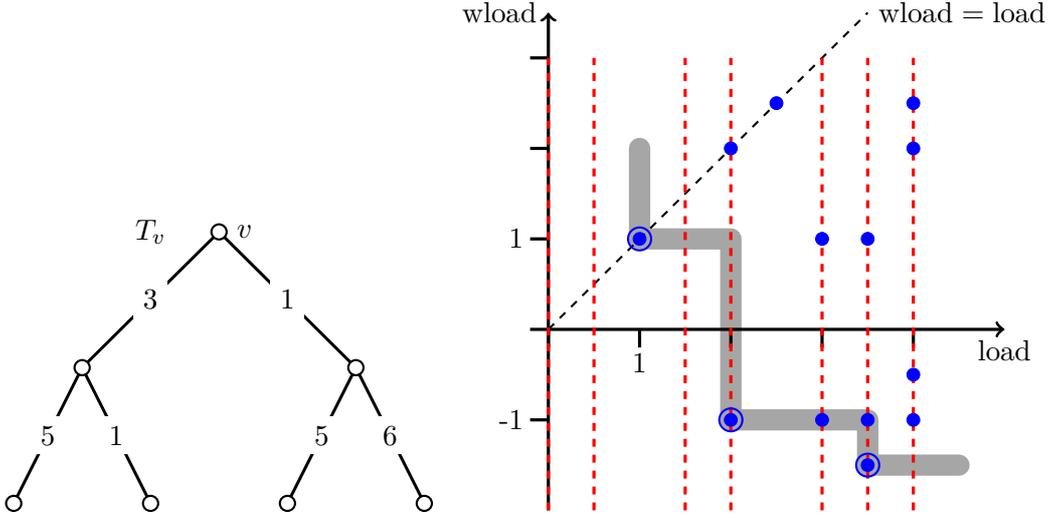

By Lemma~\ref{lem:pareto} the load values of all points on the Pareto front with $\wload()<\load()$ are in the set $\Lambda(T_v,v)$.
There might also be one point with $\wload()=\load()$ on the Pareto front (as in the example shown in Figure~\ref{fig:pareto}), and this load value might not be an element of $\Lambda(T_v,v)$.
We extend the set $\Lambda(T_v,v)$ accordingly by defining for $s\in\{0,1,\ldots,m\}$ (recall that $m=|E|$)
\begin{subequations}
\label{eq:lambda-Lambda*}
\begin{align}
\lambda^*(T_v,v,s) &:= \min \{\load_{C,\alpha}(T_v,v) : C \in F(T_v,s)\} , \label{eq:lambda*} \\
\Lambda^*(T_v,v,s) &:= \Lambda(T_v,v) \cup \{\lambda^*(T_v,v,s)\} . \label{eq:Lambda*}
\end{align}
\end{subequations}
If the set $F(T_v,s)$ is empty, we have $\lambda^*(T_v,v,s)=\infty$.

We now describe recursive relations for the weak load that are analogous to \eqref{eq:load-add-edge} and \eqref{eq:load-join-trees} for the load.
It follows straightforwardly from \eqref{eq:load} and \eqref{eq:wload} that for any vertex~$v$ of~$T$ and its children $u_i$, $i=1,2,\ldots,c(v)$, and for any set of edges $C\subseteq E(T_{u_i})$ we have
\begin{subequations}
\label{eq:wload-add-edge}
\begin{align}
\wload_{C\cup\{\{v,u_i\}\},\alpha}(T_{v,i},v) &= \wload_{C,\alpha}(T_{u_i},u_i)+\ell(v,u_i)/\alpha , \label{eq:wload-add-edge-yes} \\
\wload_{C,\alpha}(T_{v,i},v) &= \load_{C,\alpha}(T_{u_i},u_i) - (1-1/\alpha)\ell(v,u_i) . \label{eq:wload-add-edge-no}
\end{align}
\end{subequations}
Note that the weak load increases if the edge $\{v,u_i\}$ is added (see \eqref{eq:wload-add-edge-yes}).
On the other hand, if the edge $\{v,u_i\}$ is not added, it may decrease or increase (the right hand side of \eqref{eq:wload-add-edge-no} refers to the load, \emph{not} to the weak load).
Moreover, for any set of edges $C\subseteq T_{v,i}^+$ and any $i=1,2,\ldots,c(v)$ the definition \eqref{eq:wload} readily implies
\begin{equation}
\label{eq:wload-join-trees}
\wload_{C,\alpha}(T_{v,i}^+,v) = \max \{\wload_{C,\alpha}(T_{v,i-1}^+,v), \wload_{C,\alpha}(T_{v,i},v) \} .
\end{equation}
These rules together with the corresponding relations \eqref{eq:load-add-edge} and \eqref{eq:load-join-trees} allow us to compute the weak load and the load of all Pareto optimal partial solutions in a bottom-up fashion, similar to the approach taken in Section~\ref{sec:dp}.
Before it was sufficient to compute one optimal partial solution for every subtree~$T_{v,i}$ and~$T_{v,i}^+$, $i\in\{1,2,\ldots,c(v)\}$, and every possible size~$s$ of the contracted set of edges, but now our dynamic program keeps track of the entire Pareto fronts $P(T_{v,i},v,s)$ and $P(T_{v,i}^+,v,s)$.
We store the corresponding pairs of load and weak load values on the Pareto front in separate four-dimensional matrices $W$, $W^+$, $L$ and~$L^+$ (the entries of~$W$ and~$W^+$ are certain weak load values, and the entries of~$L$ and~$L^+$ are the corresponding load values).
We begin defining these matrices in an abstract way, and then establish several recursive relations which directly translate into a dynamic programming algorithm.
Specifically, for $v\in V$, $i\in\{0,1,\ldots,c(v)\}$, $s\in\{0,1,\ldots,m\}$ and $\lambda\in \Lambda(T_{v,i},v)$ with $\Lambda(T_{v,i},v)$ as in Lemma~\ref{lem:pareto} we define
\begin{subequations}
\label{eq:weak-def-LW}
\begin{align}
W(v,i,s,\lambda) &:= \min \{\wload_{C,\alpha}(T_{v,i},v): C \in F(T_{v,i},s) \text{ with } \load_{C,\alpha}(T_{v,i},v) \leq \lambda\} , \label{eq:weak-def-W} \\
L(v,i,s,\lambda) &:= \min \{\load_{C,\alpha}(T_{v,i},v) : C \in F(T_{v,i},s), \wload_{C,\alpha}(T_{v,i},v) = W(v,i,s,\lambda)\} . \label{eq:weak-def-L}
\end{align}
\end{subequations}
If there is no set~$C$ satisfying these requirements, we have $W(v,i,s,\lambda)=L(v,i,s,\lambda)=\infty$.
The entries of $W^+(v,i,s,\lambda)$ and $L^+(v,i,s,\lambda)$ are defined analogously to \eqref{eq:weak-def-LW} by considering the tree~$T_{v,i}^+$ instead of~$T_{v,i}$ (in particular, in this case we have $\lambda\in\Lambda(T_{v,i}^+,v)$).

The definitions of $W(v,i,s,\lambda)$ and $L(v,i,s,\lambda)$ given in \eqref{eq:weak-def-LW} extend straightforwardly to the value $\lambda=\lambda^*(T_{v,i},v,s)$ defined in \eqref{eq:lambda*}.
Similarly, the definitions of $W^+(v,i,s,\lambda)$ and $L^+(v,i,s,\lambda)$ from before extend to the value $\lambda=\lambda^*(T_{v,i}^+,v,s)$.
It is easy to see that we have in fact
\begin{equation}
\label{eq:L-lambda*}
L(v,i,s,\lambda^*(T_{v,i},v,s)) = \lambda^*(T_{v,i},v,s)
\end{equation}
(an analogous relation holds for the entries of~$L^+$).

The recursive relations satisfied by the matrices $W$, $L$, $W^+$ and~$L^+$ defined before are captured by the following two lemmas.
The initialization steps and the recursive computation of~$W$ and~$L$ are treated in Lemma~\ref{lem:W-L-rec1}.
The recursive computation of~$W^+$ and~$L^+$ is somewhat more technical, and is treated separately in Lemma~\ref{lem:W-L-rec2}.

\begin{lem}
\label{lem:W-L-rec1}
Let~$v$ be a vertex of~$T$ and let $u_1,u_2,\ldots,u_{c(v)}$ be the children of~$v$.
Then the matrices $W$, $W^+$, $L$ and~$L^+$ defined in and directly after \eqref{eq:weak-def-LW} satisfy the relations
\begin{subequations}
\begin{align}
W(v,0,0,0) &= W^+(v,0,0,0) = -\infty , \label{eq:weak-dp-init-W1} \\
W(v,i,0,\lambda) &= -(1-1/\alpha)\ell(v,u_i) \quad \text{for all } i\in\{1,2,\ldots,c(v)\} \text{ and } \lambda\in\Lambda^*(T_{v,i},v,0) , \label{eq:weak-dp-init-W2} \\
L(v,i,0,\lambda) &= 0 \quad \text{for all } i\in\{0,1,\ldots,c(v)\} \text{ and } \lambda\in\Lambda^*(T_{v,i},v,0) , \label{eq:weak-dp-init-L1} \\
L^+(v,i,0,\lambda) &= 0 \quad \text{for all } i\in\{0,1,\ldots,c(v)\} \text{ and } \lambda\in\Lambda^*(T_{v,i}^+,v,0) , \label{eq:weak-dp-init-L2} \\
W(v,0,s,\lambda) &= L(v,0,s,\lambda) = \infty \quad \text{for all } s\in\{1,2,\ldots,m\} \text{ and } \lambda\in\Lambda^*(T_{v,0},v,s) , \label{eq:weak-dp-init-WL1} \\
W^+(v,0,s,\lambda) &= L^+(v,0,s,\lambda) = \infty \quad \text{for all } s\in\{1,2,\ldots,m\} \text{ and } \lambda\in\Lambda^*(T_{v,0}^+,v,s) . \label{eq:weak-dp-init-WL2}
\end{align}
Furthermore, we have
\begin{align}
W(v,i,s,\lambda) &= \begin{cases}
          \mu & \text{if } \lambda = 0 , \\
          \min\{\mu,\nu\} & \text{if } \lambda > 0 \text{ and } \mu\leq \min\{\beta,\lambda\}  , \\
          \nu & \text{if } \lambda > 0 \text{ and } \mu > \min\{\beta,\lambda\} \text{ and } \nu \leq \beta , \\
          \infty & \text{otherwise} ,
          \end{cases} \label{eq:weak-dp-update-W1} \\
L(v,i,s,\lambda) &= \begin{cases}
          \max\{\mu,0\} & \text{if } W(v,i,s,\lambda) = \mu , \\
          L^+(u_i,c(u_i),s-1,\lambda-\ell(v,u_i)/\alpha)+\ell(v,u_i)/\alpha & \text{otherwise}  ,
          \end{cases} \label{eq:weak-dp-update-L1}
\end{align}
with $\mu:=\lambda^*(T_{u_i},u_i,s) - (1-1/\alpha)\ell(v,u_i)$ and $\nu:=W^+(u_i,c(u_i),s-1,\lambda-\ell(v,u_i)/\alpha)+\ell(v,u_i)/\alpha$ for all $i\in\{1,2,\ldots,c(v)\}$, $s\in\{1,2,\ldots,m\}$ and $\lambda\in \Lambda(T_{v,i},v)$.

Finally, we have
\begin{align}
L(v,i,s,\lambda^*) &= \lambda^* = \begin{cases}
            \min\{\lambda+\ell(v,u_i)/\alpha,\max \{0, \mu\}\} & \text{if } \mu \leq \beta , \\
            \lambda+\ell(v,u_i)/\alpha & \text{otherwise} ,
            \end{cases} \label{eq:weak-dp-update-L2} \\
W(v,i,s,\lambda^*) &= \begin{cases}
            \rho & \text{if } L(v,i,s,\lambda^*) = \lambda+\ell(v,u_i)/\alpha , \\
            \mu & \text{otherwise} ,
            \end{cases} \label{eq:weak-dp-update-W2}
\end{align}
\end{subequations}
where $\lambda \in \Lambda^*(T_{u_i},u_i,s-1)$ is minimal such that $\rho := W^+(u_i,c(u_i),s-1,\lambda)+\ell(v,u_i)/\alpha \leq \beta$, if such a value~$\lambda$ exists, and $\lambda := \rho := \infty$ otherwise, for all $i\in\{1,2,\ldots,c(v)\}$, $s\in\{1,2,\ldots,m\}$ and $\lambda^*=\lambda^*(T_{v,i},v,s)$.
\end{lem}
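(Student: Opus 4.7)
The plan is to verify each recursive relation by direct case analysis on whether the edge $\{v,u_i\}$ belongs to the contraction $C$ and by reduction to the definitions in equations (\ref{eq:load}), (\ref{eq:wload}), and (\ref{eq:weak-def-LW}). The initialization formulas (\ref{eq:weak-dp-init-W1})--(\ref{eq:weak-dp-init-WL2}) follow immediately: for $i = 0$, the tree $T_{v,0} = \{v\}$ contains no edges, so only the empty contraction is feasible, giving $\wload = -\infty$ and $\load = 0$; for $i \geq 1$ and $s = 0$, the single edge $\{v,u_i\}$ is left uncontracted, and the vertex in $T_{v,i}$ maximizing the (non-positive) weak load is $u_i$, yielding $\wload(T_{v,i},v) = \ell(v,u_i)/\alpha - \ell(v,u_i) = -(1-1/\alpha)\ell(v,u_i)$; for $s \geq 1$ with $T_{v,0}$ edge-free, no size-$s$ contraction exists, hence the $\infty$ values.

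For the recursive formulas (\ref{eq:weak-dp-update-W1})--(\ref{eq:weak-dp-update-L1}), I decompose any $C \in F(T_{v,i},s)$ into $C' := C \cap E(T_{u_i})$ and an indicator for $\{v,u_i\} \in C$. In \emph{Option~A} (edge not contracted), $C' \in F(T_{u_i}, s)$ and, by (\ref{eq:load-add-edge-no}) and (\ref{eq:wload-add-edge-no}), both $\load_{C,\alpha}(T_{v,i},v)$ and $\wload_{C,\alpha}(T_{v,i},v)$ are monotone functions of $\load_{C',\alpha}(T_{u_i},u_i)$, so minimizing the weak load amounts to choosing $\load_{C',\alpha}(T_{u_i},u_i) = \lambda^*(T_{u_i},u_i,s)$, giving the value $\mu$. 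In \emph{Option~B} (edge contracted), $C'$ has size $s-1$, and by (\ref{eq:load-add-edge-yes}) and (\ref{eq:wload-add-edge-yes}) the load and weak load at $v$ are obtained from those at $u_i$ by adding $\ell(v,u_i)/\alpha$, so minimizing $\wload_{C,\alpha}(T_{v,i},v)$ subject to $\load_{C,\alpha}(T_{v,i},v) \leq \lambda$ reduces to evaluating $W^+(u_i, c(u_i), s-1, \lambda - \ell(v,u_i)/\alpha)$ and shifting by $\ell(v,u_i)/\alpha$, which is precisely $\nu$. The case split in (\ref{eq:weak-dp-update-W1}) then encodes feasibility of each option (Option~A requires $\mu \leq \min\{\beta,\lambda\}$ to meet both the load bound $\lambda$ and the feasibility bound $\beta$; Option~B requires $\nu \leq \beta$ by Lemma~\ref{lem:weak-tree-partition}) and selects the smaller achievable weak load, while (\ref{eq:weak-dp-update-L1}) records the load of whichever option is chosen, again read off from (\ref{eq:load-add-edge}).

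The main obstacle will be the formulas (\ref{eq:weak-dp-update-L2})--(\ref{eq:weak-dp-update-W2}) at the distinguished Pareto point $\lambda^* = \lambda^*(T_{v,i},v,s)$, since this quantity is itself the minimum load over all size-$s$ feasible solutions on $T_{v,i}$ and is therefore defined implicitly by the optimization rather than known a priori from the tree's geometry. Splitting on $\{v,u_i\}$ once more, Option~A contributes candidate load $\max\{0,\mu\}$ with associated weak load $\mu$, and is feasible exactly when $\mu \leq \beta$; Option~B contributes candidate load $L' + \ell(v,u_i)/\alpha$, where $L'$ is the minimum load on $T_{u_i}$ among size-$(s-1)$ solutions whose weak load $W'$ satisfies the feasibility constraint $W' + \ell(v,u_i)/\alpha \leq \beta$ from Lemma~\ref{lem:weak-tree-partition}. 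I plan to show that this $L'$ equals the smallest $\lambda \in \Lambda^*(T_{u_i},u_i,s-1)$ for which $W^+(u_i, c(u_i), s-1, \lambda) + \ell(v,u_i)/\alpha \leq \beta$, combining Lemma~\ref{lem:pareto} (which guarantees that the load coordinates on the Pareto front of $T_{u_i}$ lie in $\Lambda^*$) with the fact that $W^+$ is non-increasing in its load argument. Taking the minimum over the two options then yields (\ref{eq:weak-dp-update-L2}), the case split reflecting whether Option~A is feasible, and (\ref{eq:weak-dp-update-W2}) follows by reading off the associated weak load ($\mu$ for Option~A, $\rho$ for Option~B) of whichever option attains the minimum load. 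The delicate step throughout is consistent bookkeeping of the sentinel value $\infty$ for the quantities $\mu$, $\nu$, $\rho$ and the underlying $L^+$, $W^+$ entries, which encodes infeasibility and must be propagated correctly through the minima and maxima.
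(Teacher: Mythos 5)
Your proposal takes essentially the same route as the paper's own (quite terse) proof: the initialization relations are read off directly from the definitions, and the recursive relations are obtained by splitting on whether the edge $\{v,u_i\}$ is contracted and applying the add-edge formulas \eqref{eq:load-add-edge} and \eqref{eq:wload-add-edge} together with the feasibility criterion from Lemma~\ref{lem:weak-tree-partition}. The paper phrases the argument more compactly (in particular it handles the $\mu=\nu$ tie directly rather than your ``read off the associated load'' language, but these come to the same thing), so you have not found a genuinely different proof, only a more fully spelled-out version of the same one.
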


Note that the relations \eqref{eq:weak-dp-init-W1}--\eqref{eq:weak-dp-init-WL2} are the initialization steps, and the relations \eqref{eq:weak-dp-update-W1}--\eqref{eq:weak-dp-update-W2} capture the two possibilities of either adding or not adding the edge $\{v,u_i\}$ to a partial solution in the tree $T_{u_i,c(u_i)}^+=T_{u_i}$ to obtain a solution for the tree~$T_{v,i}$ (recall \eqref{eq:load-add-edge} and \eqref{eq:wload-add-edge}).

We only refer to well-defined entries of~$W^+$ and~$L^+$ in~\eqref{eq:weak-dp-update-L1} and in the definition of~$\nu$, as $\lambda-\ell(v,u_i)/\alpha \in \Lambda(T_{u_i}, u_i)$ holds for every $\lambda \in \Lambda(T_{v,i},v) \setminus \{0\}$.
Note that we either have $\nu \leq \lambda$ or $\nu = \infty$, while~$\mu$ may also take a value in the open interval $(\lambda,\infty)$.

\begin{proof}
The relations~\eqref{eq:weak-dp-init-W1}--\eqref{eq:weak-dp-init-WL2} follow immediately from the definitions of the trees~$T_{v,i}$ and~$T_{v,i}^+$ and the definitions of the respective matrices given in~\eqref{eq:weak-def-LW} and afterwards.
The relations~\eqref{eq:weak-dp-update-W1} and~\eqref{eq:weak-dp-update-L2} follow from~\eqref{eq:wload-add-edge} and the definitions of~$W$ and~$L$, respectively:
Consider a partial solution $C\in F(T_{v_i},s)$.
If $\load_{C,\alpha}(T_{v,i},v) = 0$, then~$C$ does not contain the edge $\{v,u_i\}$, so we have $W(v,i,s,0)=\mu$.
The other cases of~\eqref{eq:weak-dp-update-W1} as well as \eqref{eq:weak-dp-update-L2} are implied by the following observation:
If $\wload_{C,\alpha}(T_{v,i},v) \leq \lambda$ and $\{v, u_i\} \notin C$, then by~\eqref{eq:wload-add-edge} we have $\mu \leq \beta$ and $\mu \leq \lambda$.

The relation~\eqref{eq:weak-dp-update-L1} is closely related to~\eqref{eq:weak-dp-update-W1}.
If $\mu \neq \nu$, then~\eqref{eq:weak-dp-update-L1} follows immediately from~\eqref{eq:weak-dp-update-W1} and the definitions of~$W$ and~$L$.
If $\mu = \nu \leq \min\{\beta,\lambda\}$, then both a partial solution containing the edge $\{v,u_i\}$ as well as one missing this edge minimize the weak load.
As the weak load is bounded from above by the load, we get $L(v,i,s,\lambda) = W(v,i,s,\lambda) = \mu$ in this case.
This implies~\eqref{eq:weak-dp-update-L1}.
An analogous argument yields~\eqref{eq:weak-dp-update-W2}.
\end{proof}

The following lemma describes the recursive relations satisfied by the entries of~$W^+$ and~$L^+$.
Specifically, the lemma describes how to distribute~$s$ contraction edges in~$T_{v,i}^+$ among the two subtrees~$T_{v,i-1}^+$ and~$T_{v,i}$ ($t$ is the number of edges contracted in the first tree, and~$s-t$ the number of edges in the second tree, respectively).
To compute a single point on the Pareto front $P(T_{v,i}^+,v,s)$, we need to consider all points on the Pareto fronts $P(T_{v,i-1}^+,v,t)$ and $P(T_{v,i},v,s-t)$.

\begin{lem}
\label{lem:W-L-rec2}
Let~$v$ be a vertex of~$T$, and let $s \in\{0,1,\dots,m\}$ and $i\in\{1,2,\dots,c(v)\}$ be fixed throughout this lemma.
For $t\in \{0,1,\ldots,s\}$ we let $\Pi(t)$ denote the set of all pairs $(\lambda_1,\lambda_2)$ with $\lambda_1\in \Lambda^*(T_{v,i-1}^+,v,t)$ and $\lambda_2\in \Lambda^*(T_{v,i},v,s-t)$ such that $W^+(v,i-1,t,\lambda_1)+L(v,i,s-t,\lambda_2)\leq \beta$ and $L^+(v,i-1,t,\lambda_1)+W(v,i,s-t,\lambda_2)\leq \beta$.
For $t\in \{0,1,\ldots,s\}$ and $\lambda\in \Lambda(T_{v,i}^+,v)$ we let $\Pi(t,\lambda)\subseteq\Pi(t)$ denote the set of all pairs $(\lambda_1,\lambda_2)\in\Pi(t)$ satisfying $\max\{L^+(v,i-1,t,\lambda_1),L(v,i,s-t,\lambda_2)\}\leq \lambda$. \\
For all $\lambda\in\Lambda(T_{v,i}^+,v)$, defining
\begin{subequations}
\begin{align}
W(t) &:= \min\big\{\max\{W^+(v,i-1,t,\lambda_1),W(v,i,s-t,\lambda_2)\} : (\lambda_1,\lambda_2)\in \Pi(t,\lambda) \big\} , \label{eq:Wt} \\
L(t) &:= \min\big\{\max\{L^+(v,i-1,t,\lambda_1),L(v,i,s-t,\lambda_2)\} : (\lambda_1,\lambda_2) \text{ minimizes } \eqref{eq:Wt} \big\} , \label{eq:Lt}
\end{align}
we have
\begin{align}
W^+(v,i,s,\lambda) &= \min\{W(t):t\in\{0,1,\ldots,s\}\} , \label{eq:weak-dp-update-W3} \\
L^+(v,i,s,\lambda) &= \min\{L(t):t \text{ minimizes } \eqref{eq:weak-dp-update-W3} \} . \label{eq:weak-dp-update-L3}
\end{align}
For $\lambda^*=\lambda^*(T_{v,i}^+,v,s)$, defining
\begin{align}
L^*(t) &:= \min\big\{\max\{L^+(v,i-1,t,\lambda_1),L(v,i,s-t,\lambda_2)\} : (\lambda_1,\lambda_2)\in \Pi(t) \big\} , \label{eq:Lt*} \\
W^*(t) &:= \min\big\{\max\{W^+(v,i-1,t,\lambda_1),W(v,i,s-t,\lambda_2)\} : (\lambda_1,\lambda_2) \text{ minimizes } \eqref{eq:Lt*} \big\} , \label{eq:Wt*}
\end{align}
we have
\begin{align}
L^+(v,i,s,\lambda^*) &= \lambda^* = \min\{L^*(t):t\in\{0,1,\ldots,s\}\} , \label{eq:weak-dp-update-L4} \\
W^+(v,i,s,\lambda^*) &= \min\{W^*(t):t \text{ minimizes } \eqref{eq:weak-dp-update-L4} \} . \label{eq:weak-dp-update-W4}
\end{align}
\end{subequations}
\end{lem}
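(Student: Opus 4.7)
The plan is to reduce the recursive relations for $W^+$ and $L^+$ to a case analysis over all ways of distributing the $s$ contracted edges between the two subtrees $T_{v,i-1}^+$ and $T_{v,i}$, which intersect only in the vertex~$v$. Since any $C \subseteq E(T_{v,i}^+)$ decomposes uniquely as $C = C_1 \cup C_2$ with $C_j \subseteq E(T_{v,i-1}^+)$ and $C_2 \subseteq E(T_{v,i})$, writing $t := |C_1|$ gives $|C_2| = s - t$. The first step is to apply Lemma~\ref{lem:weak-tree-partition} to the partition $(T_{v,i-1}^+, T_{v,i})$: this shows that $C \in F(T_{v,i}^+, s)$ holds iff $C_j \in F(T_{v,i-1}^+, t) \cup F(T_{v,i}, s-t)$ (allowing either side to be fully contracted) and the two crossed inequalities in~\eqref{eq:weak-tree-cond} are satisfied. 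These are precisely the two conditions defining membership in $\Pi(t)$.

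Next I will use the join rules for the load and the weak load at the common vertex~$v$, namely~\eqref{eq:load-join-trees} and~\eqref{eq:wload-join-trees}, which state that the load (resp.\ weak load) of $T_{v,i}^+$ at~$v$ equals the maximum of the two corresponding values on the subtrees. Hence for fixed~$t$ and subtree solutions $C_1, C_2$, the pair $(\load_{C,\alpha}(T_{v,i}^+,v), \wload_{C,\alpha}(T_{v,i}^+,v))$ is exactly the coordinate-wise maximum of the corresponding pairs for $C_1$ and $C_2$. This is the engine that lets us compute the Pareto front of $P(T_{v,i}^+, v, s)$ from those of $P(T_{v,i-1}^+, v, t)$ and $P(T_{v,i}, v, s-t)$.

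From here the recurrences~\eqref{eq:weak-dp-update-W3}--\eqref{eq:weak-dp-update-W4} follow by unwinding the definitions in~\eqref{eq:weak-def-LW} and~\eqref{eq:lambda-Lambda*}. For $\lambda \in \Lambda(T_{v,i}^+, v)$, the entry $W^+(v,i,s,\lambda)$ is the minimum weak load over all feasible $C$ of size~$s$ whose load is at most~$\lambda$; the join rules show this minimum is achieved by some split $(t, s-t)$ and some pair of subtree values $(\lambda_1,\lambda_2) \in \Pi(t,\lambda)$, giving~\eqref{eq:weak-dp-update-W3}. The companion $L^+(v,i,s,\lambda)$ is then obtained by minimizing the load among solutions that attain this optimal weak load, which is exactly~\eqref{eq:weak-dp-update-L3}. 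The case $\lambda^* = \lambda^*(T_{v,i}^+, v, s)$ reverses the roles: we first minimize the load (giving~\eqref{eq:weak-dp-update-L4}, with the identity $L^+(v,i,s,\lambda^*) = \lambda^*$ coming from~\eqref{eq:L-lambda*}), and then minimize the weak load subject to that~\eqref{eq:weak-dp-update-W4}.

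The key subtlety, which is also the main obstacle, is that we are allowed to restrict attention to values $(\lambda_1, \lambda_2)$ lying in $\Lambda^*(T_{v,i-1}^+, v, t) \times \Lambda^*(T_{v,i}, v, s-t)$ rather than ranging over all $C_1 \in F(T_{v,i-1}^+, t)$ and $C_2 \in F(T_{v,i}, s-t)$. This is justified by Lemma~\ref{lem:pareto}: every realized load value on a subtree either lies in~$\Lambda$ or coincides with the weak load, and the latter case is exactly the extra element $\lambda^*$ added in~\eqref{eq:Lambda*}. Hence any feasible $(C_1, C_2)$ is dominated at~$v$ by a pair whose load coordinates lie in $\Lambda^* \times \Lambda^*$, so restricting the optimization to $\Pi(t)$ and $\Pi(t,\lambda)$ preserves both the optimal weak load and (via the secondary minimization) the corresponding optimal load. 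Together with the initialization and single-child recurrences of Lemma~\ref{lem:W-L-rec1}, this gives a complete recursive description of the required Pareto fronts.
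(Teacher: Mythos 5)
Your proposal is correct and follows essentially the same route as the paper's (very terse) proof: split~$C$ at~$v$ via Lemma~\ref{lem:weak-tree-partition}, combine with the join rules \eqref{eq:load-join-trees} and \eqref{eq:wload-join-trees}, and unwind the definitions in \eqref{eq:weak-def-LW} and \eqref{eq:lambda-Lambda*}. You additionally make explicit the role of Lemma~\ref{lem:pareto} in justifying the restriction of the subtree search to $\Lambda^*\times\Lambda^*$ (every feasible subtree pair is coordinate-wise dominated by a Pareto pair indexed by $\Lambda^*$, and domination preserves the two crossed feasibility inequalities and can only improve the max-composed load/weak-load), which the paper leaves implicit; this is a useful clarification rather than a different argument.
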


\begin{proof}
The relation \eqref{eq:weak-dp-update-W3} follows by combining the definitions \eqref{eq:weak-def-W} and \eqref{eq:Wt} with the relations \eqref{eq:load-join-trees}, \eqref{eq:wload-join-trees} and the condition \eqref{eq:weak-tree-cond} from Lemma~\ref{lem:weak-tree-partition}.
The argument for \eqref{eq:weak-dp-update-L3} is analogous, using the definitions \eqref{eq:weak-def-L} and \eqref{eq:Lt} instead of \eqref{eq:weak-def-W} and \eqref{eq:Wt}.

The relation \eqref{eq:weak-dp-update-L4} follows by combining the definitions \eqref{eq:lambda*} and \eqref{eq:Lt*} (recall also \eqref{eq:L-lambda*}) with the relations \eqref{eq:load-join-trees}, \eqref{eq:wload-join-trees} and the condition \eqref{eq:weak-tree-cond} from Lemma~\ref{lem:weak-tree-partition}.
The argument for \eqref{eq:weak-dp-update-W4} is analogous, using the definitions \eqref{eq:weak-def-W} and \eqref{eq:Wt*} instead of \eqref{eq:lambda*} and \eqref{eq:Lt*}.
\end{proof}

We can trivially compute the quantities $W(t)$, $L(t)$, $W^*(t)$ and~$L^*(t)$ as defined in Lemma~\ref{lem:W-L-rec2} in time $\cO(n^2)$ (using that $|\Pi(t)|=\cO(n^2)$ and $|\Pi(t,\lambda)|=\cO(n^2)$ by Lemma~\ref{lem:pareto}).
The following lemma shows how to do the same computation in time $\cO(n)$, so that the entries $W^+(v,i,s,\lambda)$ and $L^+(v,i,s,\lambda)$ can be computed via \eqref{eq:weak-dp-update-W3}, \eqref{eq:weak-dp-update-L3}, \eqref{eq:weak-dp-update-L4} and \eqref{eq:weak-dp-update-W4} in time $\cO(n^2)$ (instead of the trivial bound $\cO(n^3)$).

\begin{lem}
\label{lem:wt-lt-comp}
If the numbers in the sets $\Lambda^*(T_{v,i-1}^+,v,t)$ and $\Lambda^*(T_{v,i},v,s-t)$ are sorted increasingly, the quantities $W(t)$, $L(t)$, $W^*(t)$ and~$L^*(t)$ defined in Lemma~\ref{lem:W-L-rec2} can be computed in time $\cO(n)$.
Consequently, $W^+(v,i,s,\lambda)$ and $L^+(v,i,s,\lambda)$ can be computed for all $s\in\{0,1,\ldots,m\}$ and all $\lambda\in\Lambda^*(T_{v,i}^+,v,s)$ in time $\cO(n^2)$.
\end{lem}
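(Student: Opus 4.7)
The plan is to exploit the monotonicity of the Pareto-front data along the sorted sets $\Lambda^*$. First I would observe that, as $\lambda_1$ ranges over the sorted elements of $\Lambda^*(T_{v,i-1}^+, v, t)$, the function $W^+(v,i-1,t,\lambda_1)$ is non-increasing and $L^+(v,i-1,t,\lambda_1)$ is non-decreasing, since both quantities are determined by the unique Pareto-front point (in the sense of Lemma~\ref{lem:pareto}) whose load is the largest element of $\Lambda^*(T_{v,i-1}^+, v, t)$ not exceeding $\lambda_1$; by Lemma~\ref{lem:pareto} there are at most $n+1$ breakpoints. The analogous monotonicity holds for the pair $(L(v,i,s-t,\lambda_2), W(v,i,s-t,\lambda_2))$ over sorted $\Lambda^*(T_{v,i},v,s-t)$.

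With these monotonicities in hand, all four constraints defining $\Pi(t,\lambda)$ become monotone in the pointer positions $p_1,p_2$ into the two sorted lists. Increasing $p_1$ makes $W^+(v,i-1,t,\lambda_1) + L(v,i,s-t,\lambda_2) \leq \beta$ easier (smaller $W^+$) but tightens both $L^+(v,i-1,t,\lambda_1) + W(v,i,s-t,\lambda_2) \leq \beta$ and $L^+(v,i-1,t,\lambda_1) \leq \lambda$ (larger $L^+$); symmetric statements hold for $p_2$. Hence for each $p_1$ the set of feasible $p_2$ is an interval whose endpoints are monotone in $p_1$, so a standard two-pointer sweep processes all $\cO(n)$ values of $p_1$ in $\cO(n)$ amortized time. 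For $W(t)$, at each $p_1$ the minimum of $\max\{W^+,W\}$ is achieved at the largest feasible $p_2$ (since $W$ is non-increasing in $p_2$), and tracking the running minimum yields the answer. For $L(t)$, I would carry the loads as a secondary objective during the same sweep, breaking ties in $\max\{W^+,W\}$ by the corresponding $\max\{L^+,L\}$. The quantities $W^*(t), L^*(t)$ are computed analogously, dropping the caps $L^+ \leq \lambda$ and $L \leq \lambda$ so that $\Pi(t,\lambda)$ becomes $\Pi(t)$, and swapping the roles of primary and secondary objectives.

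The consequence then follows immediately from the recurrences~\eqref{eq:weak-dp-update-W3}--\eqref{eq:weak-dp-update-W4}, which express $W^+(v,i,s,\lambda)$ and $L^+(v,i,s,\lambda)$ as minima over $t \in \{0,1,\ldots,s\}$: invoking the $\cO(n)$-time subroutine from the previous step at each of the $\cO(n)$ values of $t$ gives the claimed $\cO(n^2)$ bound. The sortedness of the $\Lambda^*$ sets can be maintained throughout the bottom-up dynamic program at no extra cost, by inserting the single new element $\lambda^*$ into $\Lambda$ at the correct position. The main obstacle is handling the secondary objective carefully during the sweep: several pairs $(p_1, p_2)$ may tie on $\max\{W^+,W\}$, and one must verify that maintaining a second running minimum alongside the sweep suffices to pick out the pair minimizing $\max\{L^+,L\}$ among all such optimizers, without inflating the running time beyond $\cO(n)$.
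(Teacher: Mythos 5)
Your core strategy matches the paper's: exploit the Pareto-front monotonicity established in Lemma~\ref{lem:pareto} (strictly increasing loads, strictly decreasing weak loads after deduplication), run a two-pointer sweep over the two sorted lists to obtain $W(t),L(t),W^*(t),L^*(t)$ in $\cO(n)$ amortized time, and then pay an extra $\cO(n)$ factor for iterating over $t$ to get the $\cO(n^2)$ per-entry bound for $W^+$ and $L^+$. The sweep direction differs superficially (you increase $p_1$ and maintain an interval of feasible $p_2$, whereas the paper starts from the ends of both lists and decrements), but these are interchangeable and correct for the \emph{primary} objective.

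The gap is exactly where you flag it: the secondary objective. Your sweep only ever inspects, for each $p_1$, the single pair $(p_1,p_2^{\max}(p_1))$ with $p_2^{\max}$ the upper end of the feasible interval, and you propose to pick among these by a running tie-break on $\max\{L^+,L\}$. But the ties in $\max\{W^+,W\}$ at the optimal $p_1^*$ form a whole subinterval of $p_2$ values (all $p_2$ with $W(p_2)\le W^+(p_1^*)$ that are still feasible), and the one minimizing $L(p_2)$ — and hence $\max\{L^+_{p_1^*},L_{p_2}\}$ — is typically the \emph{smallest} $p_2$ in that subinterval, which is precisely the pair your sweep never visits. Tracking a running minimum with tie-break over only the extremal pairs therefore does not compute $L(t)$ correctly. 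The paper resolves this with an explicit second phase: after locating the pair $(j,k)$ that attains $W(t)$, it continues to decrement the index on the non-binding side ($k$ if $W_j^+\ge W_k$, else $j$) as long as the max-value and feasibility are preserved, then reads off $L(t)=\max\{L_j^+,L_k\}$. Your proposal needs such a dedicated second pass (or an equivalent argument) before it is complete; the symmetric treatment of $W^*(t),L^*(t)$ and the final $\cO(n^2)$ consequence are otherwise fine.
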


\begin{proof}
We define the sequence~$P_1$ of all pairs of finite numbers $(L^+(v,i-1,t,\lambda),W^+(v,i-1,t,\lambda))$ for all $\lambda\in\Lambda^*(T_{v,i-1}^+,v,t)$ in increasing order of $\lambda$-values.
Similarly, we define the sequence~$P_2$ of all pairs of finite numbers $(L(v,i,s-t,\lambda),W(v,i,s-t,\lambda))$ for all $\lambda\in\Lambda^*(T_{v,i},v,s-t)$ in increasing order of $\lambda$-values.
By Lemma~\ref{lem:pareto} each of these lists has size $\cO(n)$.
Note that these sequences correspond to the Pareto fronts $P(T_{v,i-1}^+,v,t)$ and $P(T_{v,i},v,s-t)$, respectively.
Some pairs of points may appear multiple times consecutively in~$P_1$ and~$P_2$, and in a preprocessing step we eliminate these duplicates in time $\cO(n)$.
We know that after this preprocessing step, the first entries in the simplified lists~$P_1$ and~$P_2$ are strictly increasing, and the second entries are strictly decreasing (recall Figure~\ref{fig:pareto}).

We first argue how to compute~$W(t)$ and~$L(t)$.
We begin discarding all pairs from each list whose first entry ($L^+$ or~$L$, respectively) is strictly greater than~$\lambda$ in time $\cO(n)$.
We then process the remaining lists~$P_1$ and~$P_2$ beginning at the last entries $(L_j^+,W_j^+)$ and $(L_k,W_k)$ (with smallest~$W^+$ or~$W$-values, respectively) in two phases.

In the first phase we compute~$W(t)$ as follows:
If $L_j^+ +W_k>\beta$, we discard the last element of~$P_1$ by decreasing~$j$ by 1 (by our sorting of the lists we know that $L_j^+ +W_{k'}>\beta$ for all $k'\leq k$).
If $W_j^+ +L_k>\beta$, we discard the last element of~$P_2$ by decreasing~$k$ by 1 (by our sorting of the lists we know that $W_{j'}^++L_k>\beta$ for all $j'\leq j$).
Once $L_j^+ +W_k\leq \beta$ and $W_j^+ +L_k\leq \beta$ for the first time, we have found $W(t)=\max\{W_j^+,W_k\}$. If this never happens we know that $W(t)=\infty$.
This computation is correct by the definition of $\Pi(t,\lambda)$ in Lemma~\ref{lem:W-L-rec2} and by \eqref{eq:Wt}, and it takes time $\cO(n)$.

In the second phase we compute~$L(t)$ as follows:
If $W(t)=\infty$, we know that $L(t)=\infty$, too.
Otherwise we distinguish two cases:
If $W_j^+\geq W_k$, we decrease~$k$ further as long as both inequalities $W_j^+\geq W_k$ and $L_j^+ +W_k\leq \beta$ are still satisfied (so that they still hold for the final~$k$).
If $W_j^+\leq W_k$, we decrease~$j$ further as long as both inequalities $W_j^+\leq W_k$ and $W_j^+ +L_k\leq \beta$ are still satisfied (so that they still hold for the final~$j$).
In the end we set $L(t)=\max\{L_j^+,L_k\}$.
Note that in the first case, the third constraint $W_j^+ +L_k\leq \beta$ remains valid by the monotonicity $L_{k'}\leq L_k$ for all $k'\leq k$, and in the second case, the third constraint $L_j^+ +W_k\leq \beta$ remains valid by the monotonicity $L_{j'}^+\leq L_j^+$ for all $j'\leq j$.
Therefore, the correctness of the computation of~$L(t)$ follows from \eqref{eq:Lt}.

The procedure to compute~$W^*(t)$ and~$L^*(t)$ processes~$P_1$ and~$P_2$ (as obtained from the preprocessing step explained in the beginning) starting at the first entries $(L_j^+,W_j^+)$,~$j=1$, and $(L_k,W_k)$,~$k=1$, in two phases very similarly to before.
We omit the details here.
\end{proof}

We are now ready to prove Theorem~\ref{thm:weak-dp}.

\begin{proof}[Proof of Theorem~\ref{thm:weak-dp}]
Given the instance $(T,\ell,\varphi)$, we fix an arbitrary root~$r$ of~$T$ and an arbitrary ordering of the children of each vertex, making~$T$ an ordered rooted tree.

We begin precomputing and sorting all of the sets $\Lambda(T_{v,i},v)$ and $\Lambda(T_{v,i}^+,v)$, $v\in V$, $i\in\{0,1,\ldots,c(v)\}$, and we maintain them as sorted lists throughout the algorithm.
This takes time $\cO(n^2\log n)$ in total (recall Lemma~\ref{lem:pareto}).

We then compute the entries of the matrices $W$, $L$, $W^+$ and~$L^+$ using Lemmas~\ref{lem:W-L-rec1} and \ref{lem:wt-lt-comp}.
We first initialize various entries using \eqref{eq:weak-dp-init-W1}--\eqref{eq:weak-dp-init-WL2}, and compute the remaining entries in a bottom-up fashion moving upwards from the leaves to the root.
Specifically, at a vertex~$v$ with children $u_1,u_2,\ldots,u_{c(v)}$ for which all the entries of $W$, $L$, $W^+$ and~$L^+$ have already been computed, we first compute $W(v,i,s,\lambda)$ and then $L(v,i,s,\lambda)$ for all $i\in\{1,2,\ldots,c(v)\}$, $s\in\{1,2,\ldots,m\}$ and $\lambda \in \Lambda(T_{v,i},v)$ using \eqref{eq:weak-dp-update-W1} and \eqref{eq:weak-dp-update-L1}, then we compute $L(v,i,s,\lambda^*(T_{v,i},v,s))=\lambda^*(T_{v,i},v,s)$ and $W(v,i,s,\lambda^*(T_{v,i},v,s))$ for all $i\in\{1,2,\ldots,c(v)\}$ and $s\in\{1,2,\ldots,m\}$ using \eqref{eq:weak-dp-update-L2} and \eqref{eq:weak-dp-update-W2}.
We obtain sorted lists containing the numbers in $\Lambda^*(T_{v,i},v,s)$ by inserting $\lambda^*(T_{v,i},v,s)$ at the correct position into the precomputed list $\Lambda(T_{v,i},v)$.
Next, we compute $W^+(v,i,s,\lambda)$ and then $L^+(v,i,s,\lambda)$ for all $i\in\{1,2,\ldots,c(v)\}$, $s\in\{1,2,\ldots,m\}$ and $\lambda\in \Lambda(T_{v,i}^+,v)$ using \eqref{eq:weak-dp-update-W3} and \eqref{eq:weak-dp-update-L3}, and then we compute $L^+(v,i,s,\lambda^*(T_{v,i}^+,v,s))=\lambda^*(T_{v,i}^+,v,s)$ and $W^+(v,i,s,\lambda^*(T_{v,i}^+,v,s))$ for all $i\in\{1,2,\ldots,c(v)\}$ and $s\in\{1,2,\ldots,m\}$ using \eqref{eq:weak-dp-update-L4} and \eqref{eq:weak-dp-update-W4}.
We obtain sorted lists containing the numbers in $\Lambda^*(T_{v,i}^+,v,s)$ by inserting $\lambda^*(T_{v,i}^+,v,s)$ at the correct position into the precomputed list $\Lambda(T_{v,i}^+,v)$.

Let~$s^*$ be the largest~$s$ such that $W^+(r,c(r),s,\lambda^*(T,r,s))$ is finite.
From \eqref{eq:weak-def-LW} we obtain that~$s^*$ is the size of an optimal solution of the instance $(T,\ell,\varphi)$.
The corresponding set of edges $C\subseteq E$ can be obtained by keeping track of the arguments for which the minima and maxima in \eqref{eq:weak-dp-update-W1}--\eqref{eq:weak-dp-update-W2} and \eqref{eq:Wt}--\eqref{eq:weak-dp-update-W4} are attained in each step.

Each of the matrices $W$, $L$, $W^+$ and~$L^+$ has $\cO(n^3)$ entries (recall Lemma~\ref{lem:pareto}).
Computing an entry of~$W$ or~$L$ takes $\cO(n)$ time by Lemma~\ref{lem:W-L-rec1}, while computing an entry of~$W^+$ or~$L^+$ can be achieved in time $\cO(n^2)$ by Lemma~\ref{lem:wt-lt-comp}, so the runnning time of our dynamic program is $\cO(n^5)$.
\end{proof}

\section{Hardness for additive tolerance functions}
\label{sec:hard-add}

In this section we prove that the problems \Contraction{} and \WContraction{} for the tolerance function $\varphi(x)=x-\beta$ (purely additive error) are hard already on cycles (Section~\ref{sec:cycles-hard} below).
We then prove that \Contraction{} with the same tolerance function is hard to approximate for general graphs and for bipartite graphs (Section~\ref{sec:inapx-contraction}).

\subsection{Hardness of \Contraction{} and \WContraction{}}
\label{sec:cycles-hard}

Recall that we can compute optimal (weak) $(\alpha,\beta)$-contractions in polynomial time on trees (this was shown in Section~\ref{sec:dp}), and have a linear time algorithm for \Contraction{} on cycles with unit length edges (this was shown in Section~\ref{sec:cycles}).
We now show that the problem with $\alpha=1$ is NP-hard on cycles with arbitrary edge lengths.

\begin{thm}
\label{thm:cycle-hard-fix}
For any fixed $\beta>0$, the problems \Contraction{} and \WContraction{} with tolerance function $\varphi(x)=x-\beta$, $\beta\geq 0$, are NP-hard on cycles.
\end{thm}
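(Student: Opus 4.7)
The plan is to give polynomial-time reductions from a suitable NP-hard variant of \Partition{} to \Contraction{} and, with minor modifications, to \WContraction{} on cycles with purely additive tolerance $\varphi(x)=x-\beta$. First, I would establish a clean feasibility criterion on cycles. For a cycle of total length~$L$, the distance between any two vertices is the length of the shorter arc between them, and after contracting~$C$ it becomes the minimum of the two arc-lengths measured under~$\ell_C$. Writing out $\dist_{\ell_C}(u,v) \geq \dist_\ell(u,v)-\beta$ for every pair of vertices yields the following: $C$ is feasible iff every arc of length $d\leq L/2$ has contracted edge-length sum at most~$\beta$, and every arc of length $d>L/2$ has contracted edge-length sum at most $2d-L+\beta$. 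Both families of constraints are essential in general.

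Given a \Partition{} instance $a_1,\ldots,a_n$ with $\sum a_i=2S$, I would scale the $a_i$ so that $S$ corresponds to the fixed tolerance~$\beta$, and build a cycle whose edge set consists of element edges of the scaled lengths together with a small number of barrier edges of length strictly larger than~$\beta$. By the singleton-arc instance of the feasibility criterion, no barrier can be contracted. Choosing the placement and length of the barriers carefully, one can arrange that the arcs of length at most $L/2$ correspond to natural halves of the element sequence. The short-arc constraints then translate into the inequality ``sum of contracted element lengths within each half is at most~$\beta$.''

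I would then select a threshold~$k$ for the objective $|C|$ so that achieving it forces both half-budgets to be saturated simultaneously, yielding two complementary subsets of~$[n]$ each with element sum exactly~$\beta$, i.e., a solution to the original partition instance. The converse is immediate: a partition witness produces a contraction of size~$k$ directly. This gives the desired Karp reduction, and the adaptation to \WContraction{} is straightforward since the barriers also prevent the entire cycle from being contracted, so the weak and strong variants coincide on our family of instances.

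The main obstacle is the one-sided nature of the natural feasibility constraint (``contracted sum at most $\beta$'' per half), which on its own would be solved greedily in polynomial time. To obtain NP-hardness I would couple the two halves through cross-arc constraints, arising from pairs of vertices on opposite sides of the cycle whose shorter arc has length exactly $L/2$. These cross constraints force the local saturation to occur simultaneously on both halves, which is precisely the equality requirement of \Partition{}. Engineering the cycle lengths so that this coupling is strict, while preserving the overall structure of the reduction and respecting the fact that $\beta$ is a constant (so the scaling must be entirely absorbed into the edge lengths), is the central technical step of the proof.
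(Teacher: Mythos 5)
Your high-level plan — a Karp reduction from a \Partition{}-variant, the arc feasibility criterion (for every arc of length $d\le L/2$ the contracted sum is at most $\beta$, for every arc of length $d>L/2$ it is at most $2d-L+\beta$), barrier edges of length exceeding $\beta$, and absorbing the rescaling into edge lengths because $\beta$ is fixed — does match the paper's strategy. But the argument has a genuine gap exactly at the point you flag as ``the central technical step,'' and the mechanism you sketch for it does not work. A single sequence of element edges of lengths $a_1,\ldots,a_n$ confined between barriers only yields prefix/suffix knapsack constraints of the form ``contracted sum $\leq \beta$,'' which as you yourself note are greedily solvable; and adding a second copy of the $a_i$'s on the other half does not fix this, because an exact-$L/2$ cross-arc constraint merely caps each half independently at $\beta$ and imposes no relation between which indices are chosen on the two halves. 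The selections can be unrelated subsets $I$ and $J$ with the same size, and there is no partition.

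The paper avoids this by a genuinely different device. The two halves carry \emph{complementary} lengths: edges of length $a_i$ on $P_u$ and edges of length $2-a_i$ on $P_w$, together with a rescaling (the ``close to $1$'' normalization of \Partition{}) that squeezes all pairwise arc-lengths into a window of width $O(\varepsilon)$. Crucially, the coupling is \emph{not} produced by a single diametral constraint but by the constraints at \emph{every} intermediate split $(u_i,w_i)$, $i=0,\ldots,n$: Claim~3 of the paper bounds the number of contracted element edges in every such arc by $n/2$, and a counting argument over all $i$ then forces the index set chosen on $P_u$ to equal the one chosen on $P_w$. Only then do the two endpoint constraints force $\sum_{i\in I}a_i=\sum_{i\notin I}a_i$. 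None of this is recoverable from your ``exact-$L/2$'' constraints alone, so the engineering you defer is not a matter of tuning constants but of introducing a structurally different gadget.

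A second, smaller gap: your claim that the \WContraction{} case is ``straightforward since the barriers prevent the entire cycle from being contracted'' is insufficient. A weak contraction could collapse a long contracted sub-arc between barriers, setting some $\dist_{\ell_C}(u,v)=0$ with $\dist_\ell(u,v)>\beta$, which the weak condition permits but the strong condition forbids. You need to rule out such long contracted arcs explicitly; the paper's Claim~5 does this with a separate length-accounting argument (valid for $n\geq 5$), and something analogous would be required in any construction you produce.
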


Theorem~\ref{thm:cycle-hard-fix} (where~$\beta$ is not part of the input) follows immediately from Theorem~\ref{thm:cycle-hard} below (where~$\beta$ is part of the input).
The reason is that an instance with $\alpha=1$ does not change when multiplying all edge lengths and $\beta$ by some constant.

\begin{thm}
\label{thm:cycle-hard}
The problems \Contraction{} and \WContraction{} with tolerance function $\varphi(x)=x-\beta$, $\beta\geq 0$, are NP-hard on cycles.
\end{thm}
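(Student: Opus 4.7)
I would reduce from \Partition{}: given positive integers $a_1,\dots,a_n$ summing to $2S$, decide whether some $I\subseteq[n]$ satisfies $\sum_{i\in I}a_i=S$. The first step is to establish a clean characterization of valid $(1,\beta)$-contractions $C$ on a cycle of total length $L$: $C$ is feasible iff for every subpath $P$ with $|P|\leq L/2$, the contracted length $\sum_{e\in P\cap C}\ell(e)$ is at most $\beta$. This is a direct consequence of the pair-distance condition~\eqref{eq:contr-cond} applied to the two paths between every pair of vertices, and the same characterization governs weak contractions on the constructions below, since the anchor edges introduced will ensure that $G/C$ always contains at least two supervertices.

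Given this, I would construct a cycle $G$ built from the Partition items together with a small number of ``anchor'' edges of large length $K$. The anchors are chosen long enough that contracting any single one of them is already infeasible (its endpoints would drop from distance $K$ to $0$, so~\eqref{eq:contr-cond} would demand $K\leq\beta$, a contradiction for $K>\beta$), so every feasible $C$ consists only of item edges. By placing two suitably arranged copies of the items on the two arcs between the anchors and setting $\beta=S$, the subpath characterization yields two families of binding constraints: arc-local upper bounds of the form $\sum_{i\in I_\text{arc}} a_i\leq\beta$, and ``balanced'' constraints coming from diametrically opposite pairs of vertices whose two $u$-$v$ paths both have length $L/2$. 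Applying the upper bound $c_P\leq\beta$ to both halves of such a balanced pair yields $c_P,\,c_{P'}\leq\beta$, hence $L^*=c_P+c_{P'}\leq 2\beta$ together with the matching lower bounds $c_P\geq L^*-\beta$, so that the two sides are jointly pinned.

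The main technical step is to show that a valid contraction of size at least a carefully chosen target $k$ exists iff the Partition instance has a solution. For the ``if'' direction, a Partition witness $I^*$ yields an explicit contraction satisfying every subpath constraint; for the ``only if'' direction, one extracts a balanced partition from the tightness of the balanced-pair constraints combined with the arc-local ones. The hardest part will be arranging the two copies of the items so that the balanced-pair constraints actively couple them rather than decoupling into two independent knapsacks (which would be polynomially solvable by a greedy argument). I would address this by placing the two copies in opposite orientations or with a carefully chosen cyclic shift, so that each balanced pair encodes the bilateral sum condition $\sum_{i\in I}a_i=\sum_{i\notin I}a_i=S$ of Partition rather than just a one-sided bound.

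Finally, both \Contraction{} and \WContraction{} are hardened by the same reduction, since the anchor edges prevent $G/C$ from collapsing to a single vertex in any feasible solution; and the scaling observation immediately preceding the theorem statement deduces Theorem~\ref{thm:cycle-hard-fix} (for any fixed $\beta>0$) from Theorem~\ref{thm:cycle-hard}.
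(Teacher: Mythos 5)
You have the right template --- reduce from \Partition{}, use anchor edges too long to contract which fence off two arcs, and arrange items on the arcs so that a balance constraint couples them --- and this is exactly the paper's strategy (which reduces from a normalized variant \CPartition{}). But there are two real gaps.

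The characterization lemma you start from is false. On a weighted cycle of total length $L$, requiring $\sum_{e\in P\cap C}\ell(e)\leq\beta$ only for subpaths $P$ with $\ell(P)\leq L/2$ is necessary but not sufficient: applying \eqref{eq:contr-cond} to the \emph{longer} path between a pair $u,v$ gives the additional constraint $\sum_{e\in P'\cap C}\ell(e)\leq \ell(P')-\dist_\ell(u,v)+\beta$, and these long-path constraints are not implied by the short-path ones. For a concrete counterexample, take a triangle-cycle on vertices $u,w,v$ with $\ell(u,w)=\ell(w,v)=3$, $\ell(v,u)=4$ (so $L=10$), and set $C=\{\{u,w\},\{w,v\}\}$, $\beta=3$. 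Every subpath of length at most $5$ satisfies your bound, yet $\dist_{\ell_C}(u,v)=0<\dist_\ell(u,v)-\beta=1$, so $C$ is infeasible. These long-path constraints are precisely the ones that encode the bilateral Partition condition, so they cannot be dismissed.

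Relatedly, you flag ``the hardest part'' (coupling the two arcs so that they do not decouple into two knapsacks) but defer its solution to an unspecified ``opposite orientation or cyclic shift.'' The paper's proof resolves exactly this with three concrete devices that your sketch does not supply: it normalizes to \CPartition{} (all $a_i$ within total deviation $\varepsilon<1/5$ of $1$); it writes $a_i$ on one arc in sorted order and the complementary lengths $2-a_i$ on the other, so that each antipodal pair $(u_i,w_i)$ has the two arcs between them differing by at most $2\varepsilon$ in length (a unimodality argument is used here); and it adds two small length-$\varepsilon$ edges so the error budgets line up. Without something of this kind, the long-path constraints only bound a \emph{window} around $S$ and need not force an exact partition. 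Finally, the \WContraction{} case is not automatic from having anchor edges: a weak contraction may collapse an arbitrarily long contiguous segment of the cycle and is exempted from \eqref{eq:contr-cond} for vertex pairs inside that segment. One needs a separate argument that this is never profitable on the constructed instance (the paper's Claim~5); merely keeping $(V,C)$ disconnected does not rule it out.
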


The rest of this section is devoted to proving Theorem~\ref{thm:cycle-hard}.

For our proof we will use the following variant of the well-known problem \Partition{}, referred to as \CPartition{}.
To state the problem we say that a set of positive rational numbers $\{a_1,a_2,\ldots,a_n\}$ is \emph{close to 1}, if $\sum_{i=1}^n a_i = n$ and $\varepsilon:=\sum_{i=1}^n |a_i-1|<1/5$.

\begin{problem}{\CPartition{}}
Input: & A set of positive rational numbers $\{a_1,a_2,\ldots,a_n\}$ that is close to 1. \\
Output: & `Yes' if there is a subset $I \subseteq [n]$ such that $\sum_{i \in I} a_i = \sum_{i\in [n]\setminus I} a_i$, `No' otherwise. \\
\end{problem}

Note that for a `Yes'-instance of this problem, the solution $I\subseteq [n]$ must have size~$n/2$, so $|I|=|[n]\setminus I|=\sum_{i \in I} a_i=\sum_{i\in [n]\setminus I} a_i=n/2$.
In particular, this implies that~$n$ is even.

In the classical problem \Partition{}, the input set is not constrained to be close to 1.
\Partition{} was shown to be NP-complete already in Karp's seminal paper~\cite{MR0378476}.
The fact that \CPartition{} is also NP-complete follows from a straightforward rescaling argument.

\begin{lem}
\label{lem:cpartition}
\CPartition{} is NP-complete.
\end{lem}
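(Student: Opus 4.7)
My plan to prove Lemma~\ref{lem:cpartition} is to reduce \Partition{}, which is NP-complete by~\cite{MR0378476}, to \CPartition{} via a rescaling that simultaneously enforces an implicit cardinality constraint. Membership in NP is immediate, as any candidate subset $I\subseteq[n]$ can be verified in polynomial time.

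I carry out the reduction in two stages. First, I reduce general \Partition{} to its balanced variant, where the partitioning subset must have size exactly $n/2$. The standard trick of replacing each $b_i$ by $b_i + M$ for a sufficiently large rational $M > B := \sum_i b_i$ forces the two sides of any equal-sum partition to have equal cardinality, since an imbalanced split would create a sum gap of at least $M - B > 0$. Minor padding handles the case of odd~$n$.

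In the main step, given a balanced \Partition{} instance $\{b_1,\ldots,b_n\}$ with $B=\sum_i b_i$, I construct the \CPartition{} instance
\[
a_i \;:=\; 1 + \varepsilon\bigl(b_i - B/n\bigr), \qquad i=1,\ldots,n,
\]
where $\varepsilon > 0$ is a positive rational chosen small enough that $a_i > 0$ for every $i$ and $\varepsilon \sum_i |b_i - B/n| < 1/5$. Both bounds on $\varepsilon$ are controlled by polynomially-sized quantities of the input, so a suitable $\varepsilon$ of polynomial bit complexity exists. A direct computation gives $\sum_i a_i = n$ and $\sum_i |a_i - 1| < 1/5$, so the $a_i$ form a valid close-to-1 input.

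It remains to verify that $I$ is a balanced partition of $\{b_i\}$ if and only if $\sum_{i \in I} a_i = n/2$. One direction is a direct substitution. For the crucial converse, suppose $\sum_{i \in I} a_i = n/2$ and set $k := |I| - n/2 \in \mathbb{Z}$; then
\[
|k| \;=\; \Bigl|\sum_{i \in I}(a_i - 1)\Bigr| \;\leq\; \sum_i |a_i - 1| \;<\; \tfrac{1}{5},
\]
forcing $k = 0$, i.e.\ $|I| = n/2$. Substituting back into $\sum_{i \in I} a_i = n/2$ then yields $\sum_{i \in I} b_i = B/2$ after trivial simplification. The main point to get right is selecting $\varepsilon$ so as to simultaneously preserve positivity of the $a_i$ and the close-to-1 condition while remaining polynomial in the input size; the equivalence proof itself is a short algebraic check, with the close-to-1 hypothesis doing the essential work of pinning down the cardinality of any certificate.
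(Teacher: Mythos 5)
Your second stage is sound and, under the hood, is the same move the paper makes: you write $a_i = 1 + \varepsilon(b_i - B/n)$, the paper writes $a_i' = (a_i + C)/D$; both are affine rescalings into the close-to-1 regime, and your observation that $\lvert\,|I| - n/2\,\rvert \le \sum_i |a_i - 1| < 1/5$ forces $|I| = n/2$ is exactly what makes the construction tick.

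The problem is your first stage. Replacing $b_i$ by $b_i + M$ with $M > B$ does guarantee that \emph{any} equal-sum split of $\{b_i + M\}$ is balanced, but it does not guarantee that an equal-sum split \emph{exists} whenever $\{b_i\}$ is a `Yes'-instance. A `Yes'-instance whose equal-sum splits are all unbalanced becomes a `No'-instance after the shift: take $\{1,1,1,3\}$ (here $n=4$ is even, $B=6$). The only equal-sum split is $\{3\}$ vs.\ $\{1,1,1\}$, which is unbalanced, and there is no cardinality-$2$ subset summing to $3$; after adding, say, $M=7$, the multiset $\{8,8,8,10\}$ has no equal-sum split at all. So the map is not a reduction. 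The paper avoids this with a different trick: it pads the instance with $n$ zeros \emph{for every} $n$, not just odd $n$. Given any equal-sum split $I$ of $\{b_i\}$, one redistributes the zeros so that both sides of the split contain exactly $n$ elements, which produces a balanced equal-sum split of the padded instance; conversely, discarding the zeros from a balanced split of the padded instance yields an equal-sum split of $\{b_i\}$. Replace your $M$-shift with this zero-padding (and then apply your rescaling, or equivalently the paper's $(a_i + C)/D$), and the argument goes through.
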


\begin{proof}
Given an instance $\{a_1,a_2,\ldots,a_n\}$ of \Partition{}, we first add~$n$ additional zeroes $a_{n+1}=a_{n+2}=\cdots=a_{2n}=0$ to the instance (by this we ensure that a partition with equal sums is transformed into one where both partition classes have the same number~$n$ of summands).
We then linearly transform all the~$a_i$ according to $a_i':=(a_i+C)/D$, where~$C$ and~$D$ are sufficiently large constants so that the transformed values~$a_i'$ are close to 1.
The transformed set of numbers has even cardinality~$2n$, is close to 1, and it admits a partition into two sets of size~$n$ with equal sum if and only if the original instance allows a partition into two sets with equal sum.
\end{proof}

\begin{proof}[Proof of Theorem~\ref{thm:cycle-hard}]
We first focus on the problem \Contraction{}.
We reduce \CPartition{}, which is NP-complete by Lemma~\ref{lem:cpartition}, to the problem \Contraction{} on a cycle with tolerance function $\varphi(x)=x-\beta$, $\beta\geq 0$.

Let $\cI=\{a_1,a_2,\ldots,a_n\}$ be an instance of \CPartition{} such that $a_1\geq a_2 \geq \cdots \geq a_n$.
This ensures that all~$a_i$ that are bigger than 1 appear before all~$a_i$ that are smaller than 1, which is the only property of the ordering that we exploit in the proof later on.
The instance of \Contraction{} we construct is on the cycle $C_{2n+4}$ with~$2n+4$ edges.
We label the vertices of the cycle by walking around the cycle as follows:
The first~$n+1$ vertices are labelled $u_0,u_1,\ldots,u_n$, then there are two special vertices $v_1$, $v_2$, and the remaining~$n+1$ vertices are labelled $w_0,w_1,\ldots,w_n$, see Figure~\ref{fig:cycle-hard}.
We denote the subpath $(u_0,\ldots,u_n)$ as~$P_u$, and the subpath $(w_0,\ldots,w_n)$ by~$P_w$.

We now define $\varepsilon:=\sum_{i=1}^n |1-a_i|<1/5$, $\beta:=n/2+2\varepsilon$ and $\beta':=\beta+1 > \beta$, and the length function~$\ell$ on the cycle edges by setting $\ell(u_{i-1},u_i):=a_i$ and $\ell(w_{i-1},w_i)=2-a_i$ for all $i\in [n]$, and by $\ell(u_n,v_1)=\ell(v_2,w_1):=\varepsilon$, $\ell(v_1,v_2):=\beta'$, and $\ell(w_n,u_0):=\beta'+2\varepsilon$ (see Figure~\ref{fig:cycle-hard}).

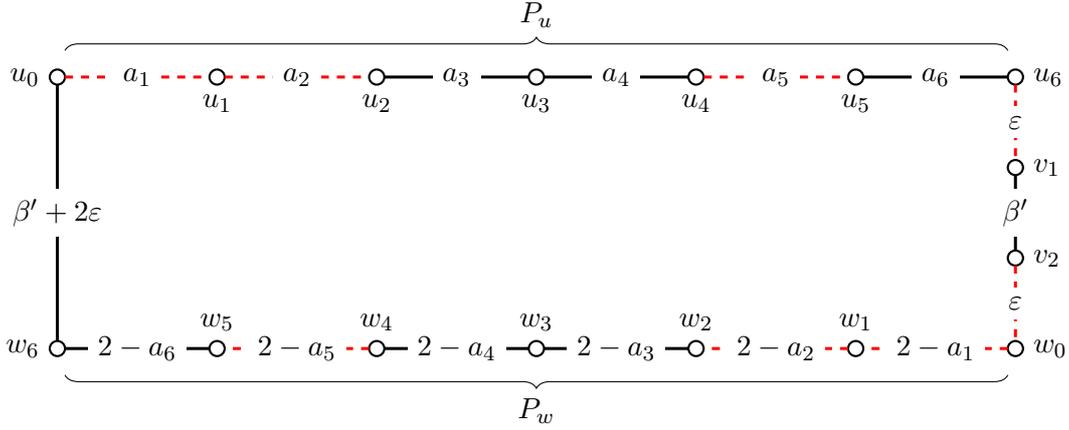
\begin{figure}[ht]
\centering
\begin{tikzpicture}[xscale=0.7,yscale=0.6]
\tikzstyle{VertexStyle}=[vertex_default]
\SetVertexNoLabel

\Vertex[style={label={left:$u_0$}},x=0,y=6]{0}
\Vertex[style={label={below:$u_1$}},x=3,y=6]{1}
\Vertex[style={label={below:$u_2$}},x=6,y=6]{2}
\Vertex[style={label={below:$u_3$}},x=9,y=6]{3}
\Vertex[style={label={below:$u_4$}},x=12,y=6]{4}
\Vertex[style={label={below:$u_5$}},x=15,y=6]{5}
\Vertex[style={label={right:$u_6$}},x=18,y=6]{6}

\Vertex[style={label={right:$v_1$}},x=18,y=4]{7}
\Vertex[style={label={right:$v_2$}},x=18,y=2]{8}

\Vertex[style={label={right:$w_0$}},x=18,y=0]{9}
\Vertex[style={label={above:$w_1$}},x=15,y=0]{10}
\Vertex[style={label={above:$w_2$}},x=12,y=0]{11}
\Vertex[style={label={above:$w_3$}},x=9,y=0]{12}
\Vertex[style={label={above:$w_4$}},x=6,y=0]{13}
\Vertex[style={label={above:$w_5$}},x=3,y=0]{14}
\Vertex[style={label={left:$w_6$}},x=0,y=0]{15}

\tikzstyle{EdgeStyle}=[edge_default]
\Edge[label={$a_3$}](2)(3)
\Edge[label={$a_4$}](3)(4)
\Edge[label={$a_6$}](5)(6)
\Edge[label={$\beta'$}](7)(8)
\Edge[label={$2-a_3$}](11)(12)
\Edge[label={$2-a_4$}](12)(13)
\Edge[label={$2-a_6$}](14)(15)
\Edge[label={$\beta'+2\varepsilon$}](15)(0)

\tikzstyle{EdgeStyle}=[edge_in_c]
\Edge[label={$a_1$}](0)(1)
\Edge[label={$a_2$}](1)(2)
\Edge[label={$a_5$}](4)(5)
\Edge[label={$\varepsilon$}](6)(7)
\Edge[label={$\varepsilon$}](8)(9)
\Edge[label={$2-a_1$}](9)(10)
\Edge[label={$2-a_2$}](10)(11)
\Edge[label={$2-a_5$}](13)(14)

\path [draw,decoration={brace,amplitude=5pt,raise=10pt},decorate] (0) to node[above=15pt] {$P_u$} (6);
\path [draw,decoration={brace,mirror,amplitude=5pt,raise=10pt},decorate] (15) to node[below=15pt] {$P_w$} (9);

\end{tikzpicture}
\caption{Reduction from \CPartition{} to \Contraction{} for $n=6$ discussed in the proof of Theorem~\ref{thm:cycle-hard}.
The dashed edges in the figure represent the set~$C(I)$ for $I=\{1,2,5\}$.}
\label{fig:cycle-hard}
\end{figure}

Now consider the instance $\cJ := (C_{2n+4}, \ell, \varphi)$ with $\varphi(x) = x-\beta$ of the problem \Contraction{}.
Observe that no $\varphi$-contraction may contain an edge $\{u,v\}$ of length greater than~$\beta$ (in particular, no feasible solution may contain one of the edges of length~$\beta'$ or $\beta'+2\varepsilon$).
Furthermore any (weak) $\varphi$-contraction $C$ on this graph satisfies $\Phi(C) = |C|$.

We will show that~$\cJ$ has an optimal solution of cardinality (and thus of value)~$n+2$ if and only if~$\cI$ is a `Yes'-instance.
In particular, we will see that any feasible solution of~$\cJ$ of size~$n+2$ contains the two edges of length~$\varepsilon$ and exactly~$n/2$ edges with length~$a_i$,~$i\in I$, from~$P_u$ and the corresponding edges with length~$2-a_i$,~$i\in I$, from~$P_w$.
Such solutions correspond to subsets of~$[n]$ in the following natural way:
For any subset $I\subseteq [n]$ of size~$n/2$ we let~$C(I)$ be the subset of edges of the cycle~$C_{2n+4}$ consisting of the two edges of length~$\varepsilon$ and of all edges $\{u_{i-1},u_i\}$ and $\{w_{i-1},w_i\}$ (of length~$a_i$ or~$2-a_i$, respectively) for all $i\in I$.
Thus we will show that~$C(I)$ is an optimal solution of the instance~$\cJ$ of \Contraction{} if and only if $\sum_{i \in I} a_i = \sum_{i\in[n]\setminus I} a_i = n/2$, i.e.,~$\cI$ is a `Yes'-instance of \CPartition{}.

Both directions of this equivalence are captured and proved as Claim~2 and 4 below.
Claims~1 and 3 are auxiliary statements used in the proofs of these two main claims.

For any path~$P$ on the cycle we let~$\ell(P)$ denote the sum of~$\ell(e)$ over all edges~$e$ of~$P$.
For all $i\in [n]$ we denote by $P_i^\sqsupset$ and $P_i^\sqsubset$ the path on the cycle between the vertices~$u_i$ and~$w_i$ that contains and that does not contain the edge $\{v_1,v_2\}$, respectively (in Figure~\ref{fig:cycle-hard}, these are the right and left segment of the cycle).

\underline{Claim~1:}
For all~$i\in [n]$, the number $\ell(P_i^\sqsupset)$ lies in the interval $[n+\beta'+\varepsilon,n+\beta'+2\varepsilon]$ and the number $\ell(P_i^\sqsubset)$ lies in the interval $[n+\beta'+2\varepsilon,n+\beta'+3\varepsilon]$.
In particular, we have $\dist_\ell(u_i,w_i)=\min\{\ell(P_i^\sqsupset),\ell(P_i^\sqsubset)\}=\ell(P_i^\sqsupset)$ and the difference $\ell(P_i^\sqsubset)-\ell(P_i^\sqsupset)$ lies in the interval $[0,2\varepsilon]$.

\underline{Proof of Claim~1:}
Note that the condition $\sum_{i=1}^n a_i=n$ implies that
\begin{equation}
\label{eq:ai-split}
\varepsilon=2\sum_{a_i:a_i\geq 1} (a_i-1)=2\sum_{a_i:a_i<1}(1-a_i) .
\end{equation}
By our assumption $a_1\geq a_2\geq\cdots \geq a_n$, the numbers $\ell(P_i^\sqsupset)$ form a unimodal sequence for $i=0,1,\ldots,n$ that is maximized for $i=0$ and $i=n$, proving that $\ell(P_i^\sqsupset)\leq n+\beta'+2\varepsilon$ (note that $\ell(P_u)=\ell(P_w)=n$).
By \eqref{eq:ai-split} the minimum of this unimodal sequence is at most~$\varepsilon$ smaller than the maximum.
This proves the first part of the claim.
As $\ell(P_i^\sqsupset)+\ell(P_i^\sqsubset)=2(n+\beta'+2\varepsilon)$, we obtain the second part of the claim.
The last part of the claim is an immediate consequence of the first two.
\qedclaim

\underline{Claim~2:}
If $I \subseteq [n]$ is a solution of the instance~$\cI$ of \CPartition{} such that $\sum_{i \in I} a_i = \sum_{i\in[n]\setminus I} a_i = n/2$, then~$C(I)$ is a $(1,\beta)$-contraction.

\underline{Proof of Claim~2:}
It suffices to prove that there is no pair of vertices whose distance decreases by more than~$\beta$ when contracting the edges in~$C(I)$.

We start by verifying this for the pairs $u_i,w_i$ for $i \in [n]$.
We first consider the path $P_i^\sqsupset$ between~$u_i$ and~$w_i$.
Observe that $\sum_{e\in C(I)\cap P_i^\sqsupset}\ell(e)$ lies in the interval $[n/2+\varepsilon,n/2+2\varepsilon]=[\beta-\varepsilon,\beta]$.
Similarly to before, this follows from the observation that by the assumption $a_1\geq a_2\geq\cdots \geq a_n$ those sums form a unimodal sequence for $i=0,1,\ldots,n$ that is maximized for~$i=0$ and~$i=n$, and by using \eqref{eq:ai-split} (recall also that $|I|=n/2$).
Consequently, we have
\begin{equation}
\label{eq:ell-Piright}
\ell_{C(I)}(P_i^\sqsupset)\geq \ell(P_i^\sqsupset)-\beta .
\end{equation}
Since $\sum_{e\in C(I)} \ell(e)=n+2\varepsilon=2\beta-2\varepsilon$, we obtain that $\sum_{e\in C(I)\cap P_i^\sqsubset}\ell(e)$ lies in the interval $[\beta-2\varepsilon,\beta-\varepsilon]$, yielding
\begin{equation}
\label{eq:ell-Pileft}
\ell_{C(I)}(P_i^\sqsubset)\geq \ell(P_i^\sqsubset)-(\beta-\varepsilon)\geq \ell(P_i^\sqsubset)-\beta .
\end{equation}
Combining \eqref{eq:ell-Piright} and \eqref{eq:ell-Pileft} proves that
\begin{equation}
\label{eq:dist-ui-wi}
\dist_{\ell_{C(I)}}(u_i,w_i) \geq \dist_\ell(u_i,w_i)-\beta .
\end{equation}
Now consider two vertices~$u_i$ and~$w_j$,~$j<i$ (the case~$j>i$ can be treated analogously).
Let $P_{i,j}^\sqsupset$ and $P_{i,j}^\sqsubset$ be the path on the cycle between the vertices~$u_i$ and~$w_j$ that contains and that does not contain the edge $\{v_1,v_2\}$, respectively.
Using that $P_{i,j}^\sqsupset\subseteq P_i^\sqsupset$ we obtain
\begin{equation}
\label{eq:ell-Pijsmile}
\ell_{C(I)}(P_{i,j}^\sqsupset)\geq \ell(P_{i,j}^\sqsupset)-\beta
\end{equation}
from \eqref{eq:ell-Piright}.

We know that $a_i\leq 1+1/5\leq 8/5$ and consequently
\begin{equation}
\label{eq:2mai}
2-a_i\geq 2/5\geq 2\varepsilon
\end{equation}
by the assumption that the input $\{a_1,a_2,\ldots,a_n\}$ of the instance~$\cI$ is close to 1 (there is plenty of leeway in all those inequalities).
Furthermore, we have
\begin{equation}
\label{eq:ell-Pij}
\dist_\ell(u_i,w_j)\leq \ell(P_i^\sqsupset)-(2-a_i)\leBy{eq:2mai} \ell(P_i^\sqsupset)-2\varepsilon \leq \min\{\ell(P_i^\sqsupset),\ell(P_i^\sqsubset)\}=\dist_\ell(u_i,w_i) ,
\end{equation}
where the second-to-last inequality follows from Claim~1.

Combining those observations yields
\begin{equation}
\label{eq:ell-Pijfrown}
\ell_{C(I)}(P_{i,j}^\sqsubset) \geq \dist_{\ell_{C(I)}}(u_i,w_i) \geBy{eq:dist-ui-wi} \dist_\ell(u_i,w_i)-\beta \geBy{eq:ell-Pij} \dist_\ell(u_i,w_j)-\beta .
\end{equation}
Combining \eqref{eq:ell-Pijsmile} and \eqref{eq:ell-Pijfrown} proves that
\begin{equation}
\label{eq:dist-ui-wj}
\dist_{\ell_{C(I)}}(u_i,w_j) \geq \dist_\ell(u_i,w_j)-\beta .
\end{equation}

From \eqref{eq:ell-Pijsmile} and \eqref{eq:ell-Pijfrown} we can derive analogous relations for the remaining cases where we need to consider the distance between a vertex~$u_i$,~$i\in[n]$, and a vertex $w\in\{v_1,v_2,u_0,u_1,\allowbreak \ldots, u_{i-1},u_{i+1},\ldots,u_n\}$, between a vertex $w_i$, $i\in[n]$, and a vertex $u\in\{v_1,v_2,w_0,w_1,\ldots,\allowbreak w_{i-1},w_{i+1},\ldots,w_n\}$, and between the vertices~$v_1$ and~$v_2$.
This completes the proof of Claim~2.
\qedclaim

\underline{Claim~3:}
Every $(1,\beta)$-contraction~$C$ contains at most~$n/2$ edges in $(P_u \cup P_w) \cap P_i^\sqsupset$ for all $i\in [n]$ and at most~$n/2$ edges in $(P_u \cup P_w) \cap P_i^\sqsubset$ for all $i\in [n]$.

\underline{Proof of Claim~3:}
Note that for any $I\subseteq [n]$ and $k\in\{0,1,\ldots,n\}$ we have $\sum_{i\in I:i>k} a_i+\sum_{i\in I:i\leq k}(2-a_i)\geq |I|-\varepsilon$ by the definition of $\varepsilon$.
Consequently, assuming for the sake of contradiction that~$C$ contains strictly more than~$n/2$ edges in $(P_u \cup P_w) \cap P_i^\sqsupset$, we have $\ell_C(P_i^\sqsupset)-\ell(P_i^\sqsupset)\geq n/2+1-\varepsilon$.
Similarly, assuming that~$C$ contains strictly more than~$n/2$ edges in $(P_u \cup P_w) \cap P_i^\sqsubset$ yields $\ell_C(P_i^\sqsubset)-\ell(P_i^\sqsubset)\geq n/2+1-\varepsilon$.
By Claim~1 the difference $\ell(P_i^\sqsubset)-\ell(P_i^\sqsupset)$ lies in the interval $[0,2\varepsilon]$, so in both cases we obtain
\begin{equation*}
\dist_\ell(u_i,w_i) - \dist_{\ell_C}(u_i,w_i) \geq \frac{n}{2}+1-\varepsilon -2\varepsilon > \frac{n}{2}+2\varepsilon = \beta ,
\end{equation*}
where we used that $\varepsilon<1/5$ in the second-to-last step.
This contradicts the fact that~$C$ is a $(1,\beta)$-contraction, proving Claim~3.
\qedclaim

\underline{Claim~4:}
Let~$C$ be a feasible solution of the instance~$\cJ$ of \Contraction{}.
Then we have $|C|\leq n+2$, and if $|C|=n+2$, we have $C=C(I)$ for some set $I \subseteq [n]$ with $\sum_{i \in I} a_i = \sum_{i\in[n]\setminus I} a_i = n/2$.

\underline{Proof of Claim~4:}
As~$C$ does not contain any of the edges of length~$\beta'$ or $\beta'+2\varepsilon$, we have $|C|\leq n+2$ by Claim~3 (the +2 comes from the two edges of length~$\varepsilon$ that may be contained in~$C$).
Suppose now that $|C|=n+2$.
Applying Claim~3 again shows that~$C$ must contain both edges of length~$\varepsilon$, and that it contains the edge $\{u_{i-1},u_i\}$ if and only if it contains the edge $\{w_{i-1},w_i\}$, for all $i\in[n]$.
Defining $I:=\{i \in [n]: \{u_{i-1},u_i\} \in C\}$ we have $|I|=n/2$ and $C=C(I)$.

By Claim~1 we have $\dist_\ell(u_0,w_0)=\ell(P_0^\sqsupset)$ and $\dist_\ell(u_n,w_n)=\ell(P_n^\sqsupset)$.
As~$C$ is a $(1,\beta)$-contraction containing the two edges of length~$\varepsilon$ we thus obtain $\sum_{i\in I} a_i = \sum_{e\in C\cap P_u} \ell(e) \leq \beta-2\varepsilon = n/2$.
Similarly, we have $\sum_{i\in [n]\setminus I} a_i = \sum_{i\in I} (2-a_i)\allowbreak = \sum_{e\in C\cap P_w} \ell(e)\allowbreak \leq \beta-2\varepsilon = n/2$.
As $\sum_{i\in[n]}a_i=n$, these two inequalities must be tight, yielding $\sum_{i\in I} a_i = \sum_{i\in[n]\setminus I} a_i = n/2$.
\qedclaim

Combining Claims~2 and 4 proves the statement of the theorem for the problem \Contraction{}.

We now focus on the problem \WContraction{}.
The hardness result follows immediately from the following claim.

\underline{Claim~5:}
For $n \geq 5$, any feasible weak $(1,\beta)$-contraction $C$ on the instance $\cJ$ is also a feasible $(1,\beta)$-contraction.

\underline{Proof of Claim~5:}
Suppose for the sake of contradiction that $C$ is not a feasible $(1,\beta)$-contraction.
This means there are vertices $a,b$ such that $\dist_{\ell_C}(a,b) = 0$ and $\dist_\ell(a,b)>\beta$, i.e., $a$ and $b$ lie on a (maximal) subpath $Q$ formed by edges from $C$ on the cycle.
Let $u$ be one end vertex of $Q$, and let $x$ be the neighbour of $u$ not on $Q$.
Let $v$ be the last vertex on $Q$ when traversed starting at $u$, such that the length of the $x$-$v$-path $P$ containing $u$ is at most $\beta+\ell(x, u)$, and let $y$ be the next vertex on $Q$ when traversed starting at $u$.
Such a vertex $y$ exists as $\ell(Q) > \beta$, and the $x$-$y$-path $P'$ containing $u$ has length strictly greater than $\beta+\ell(x, u)$.

We have $\dist_{\ell_C}(x, y) > 0$, as $C$ does not contract the entire cycle.
By~\eqref{eq:contr-cond}, we have $\dist_{\ell_C}(x, y) \geq \dist_\ell(x, y)-\beta$.
As $\dist_{\ell_C}(x, y) \leq \ell(x, u)$, we get $\dist_\ell(x, y) \leq \beta+\ell(x, u)$.
As we saw before, the $x$-$y$-path $P'$ has length strictly greater than $\beta+\ell(x, u)$, thus the $x$-$y$-path $P''$ not containing $u$ must have length at most $\beta+\ell(x, u)$.
As the entire cycle has length $2n+2\beta'+4\varepsilon = 3n+2+8\varepsilon$ and can be partitioned into $P, P''$ and the edge $\{v,y\}$, we get
\begin{align*}
3n+2+8\varepsilon &= \ell(P)+\ell(P'') + \ell(v, y) \\
&\leq 2(\beta + \ell(x, u)) + \ell(v, y) 
\leq 5\beta+3+4\varepsilon = 5n/2+3+14\varepsilon,
\end{align*}
where the second inequality holds as the two longest edges of the cycle have length $\beta'+2\varepsilon=\beta+1+2\varepsilon$ and $\beta'=\beta+1$, respectively.
From this chain of inequalities we obtain $n \leq 2+12\varepsilon < 4+2/5$, contradicting the assumption $n\geq 5$.
\end{proof}

The reader might be tempted to `simplify' the previous reduction proof by omitting the four special edges of length $\varepsilon$, $\beta'$ and $\beta'+2\varepsilon$ and by setting $\beta:=n/2$ instead.
However, this would invalidate Claim~2 (specifically, the estimate \eqref{eq:ell-Pileft} would not always hold).

\subsection{Inapproximability of \Contraction{}}
\label{sec:inapx-contraction}

We are able to extend the before-mentioned hardness result for \Contraction{} as follows:

\begin{thm}
\label{thm:inapx-contr}
For any fixed $\beta>0$ and $\varepsilon > 0$, it is NP-hard to approximate the problem \Contraction{} with tolerance function $\varphi(x)=x-\beta$, $\beta\geq 0$, to within a factor of $n^{1-\varepsilon}$.
\end{thm}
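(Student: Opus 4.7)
The plan is to reduce from \Clique{}, which by the results of H\aa{}stad and Zuckerman is NP-hard to approximate within any factor $n^{1-\varepsilon'}$ for arbitrarily small $\varepsilon' > 0$. Given a \Clique{} instance $G = (V,E)$ with $|V| = n$ and the target parameter $\varepsilon > 0$, I would construct in polynomial time an instance $(G', \ell, \varphi)$ of \Contraction{} with $\varphi(x) = x - \beta$ (for the fixed $\beta > 0$ from the statement) and unit edge lengths, so that the optimum value $\mathrm{OPT}$ of \Contraction{} on $G'$ tightly encodes the clique number $\omega(G)$. The construction replaces each $v \in V$ by a \emph{gadget} on $N = n^{c}$ new vertices (where $c > 0$ is a small constant, chosen depending on $\varepsilon$), arranged so that the gadget for a single $v$ is internally fully contractible under $\varphi$ while the union of the gadgets for a set $K \subseteq V$ can be merged into one super-vertex by a feasible $(1,\beta)$-contraction if and only if $K$ is a clique of $G$. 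This second property is enforced by translating each edge of $E$ into a short link (of total length at most $\beta$) between the two corresponding gadgets, while non-edges of $G$ force the corresponding gadgets to sit at $G'$-distance strictly larger than $\beta$, preventing any feasible contraction from merging them.

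I would then establish two matching bounds on $\mathrm{OPT}$. For completeness, any clique $K$ of $G$ gives rise, by contracting a spanning tree of the union of the gadgets of $K$, to a feasible $(1,\beta)$-contraction of value $\Phi \geq |K| \cdot \Theta(N^{2})$, where the $\Theta(N^{2})$ factor counts the deleted loops and multi-edges that collapse under the merge. For soundness, I would show $\mathrm{OPT} \leq \omega(G) \cdot \Theta(N^{2})$ up to lower-order terms by decomposing $G'/C$ into its super-vertices and arguing that each non-trivial super-vertex is confined to a cluster of gadgets whose projection to $V$ forms a clique of $G$.

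Combining the two bounds, any $r$-approximation for \Contraction{} on $G'$ immediately translates into an $O(r)$-approximation for $\omega(G)$. With $n' := |V(G')| = \Theta(n^{c+1})$, an $(n')^{1-\varepsilon}$-approximation for \Contraction{} would therefore yield an $n^{(c+1)(1-\varepsilon)}$-approximation for \Clique{}; choosing $c > 0$ sufficiently small and $\varepsilon' > 0$ correspondingly forces $(c+1)(1-\varepsilon) < 1-\varepsilon'$, contradicting the hardness of \Clique{} and establishing the desired inapproximability. The main obstacle will be the soundness direction: the objective $\Phi(C) = |C| + \Delta(C)$ aggregates contracted edges and deleted multi-edges, either of which might in principle be inflated by ``hybrid'' contractions that partially merge several gadgets across non-adjacent vertices of $G$ while still remaining feasible under the additive slack $\beta$. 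Carefully controlling $\Delta(C)$ in terms of the underlying clique structure of $G$, and ruling out such hybrid solutions, is the crux of the argument.
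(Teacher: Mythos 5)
Your overall plan — reducing from \Clique{} using the H\aa{}stad--Zuckerman $n^{1-\varepsilon}$-inapproximability — is the right starting point and matches the paper, but the construction you sketch is substantially different from the paper's and has a genuine unresolved gap.

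The paper's construction is much lighter: it replaces each vertex $v\in V(G)$ by just two vertices $(v,1),(v,2)$ joined by an edge of length $\beta$, keeps the original edges of $G$ on the first copies with length $2\beta+2$, and adds one special vertex $s$ joined to every $(v,2)$ by an edge of length $\beta+1$. This constant-factor blow-up ($n(H)=2n(G)+1$) already gives the $n^{1-\varepsilon}$ transfer, without any per-vertex gadget of polynomial size. Crucially the edge lengths are chosen so that the \emph{only} edges any feasible $(1,\beta)$-contraction can contain are the matching edges $\{(v,1),(v,2)\}$, that no two super-vertices of $H/C$ are ever joined by a multi-edge, and hence $\Phi(C)=|C|$. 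With $\Delta(C)\equiv 0$, the value-to-clique correspondence is exact: a set $U$ gives a feasible $C(U)$ of value $|U|$ iff $U$ is a clique, and the hard direction (soundness) is a direct consequence of Lemma~\ref{lem:add-contr-cond} plus four short distance computations.

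Your proposal instead blows each vertex up to a dense gadget on $N=n^c$ vertices with unit edge lengths, and relies on the objective value being dominated by the $\Theta(N^2)$ deleted multi-edges $\Delta(C)$. That is where the gap is. You explicitly flag the soundness direction --- bounding $\mathrm{OPT}\le\omega(G)\cdot\Theta(N^2)$ up to lower-order terms and ruling out ``hybrid'' contractions that partially merge several gadgets across non-edges of $G$ --- as ``the crux of the argument,'' but you do not actually prove it or even specify a gadget for which it would hold. With a fixed additive slack $\beta$ and unit lengths, it is genuinely unclear that you can prevent a feasible contraction from slicing partway into many gadgets and harvesting a large $\Delta(C)$ without respecting any clique structure; this is exactly why the paper does not attempt a unit-length construction for Theorem~\ref{thm:inapx-contr} and instead proves the weaker $m^{1/2-\varepsilon}$ bound in the unit-length bipartite setting (Theorem~\ref{thm:inapx-bip}). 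Also, arranging $\Phi$ to scale as $\Theta(N^2)$ rather than being controlled to equal $|C|$ makes the correspondence with $\omega(G)$ only approximate, which in turn forces you to take $c$ small and costs you in the exponent --- another complication the paper avoids entirely by keeping $\Phi(C)=|C|$. In short: same high-level reduction target, but a harder and unfinished construction where the paper's is short and self-contained.
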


For the following theorem the additive error is fixed to $\beta=1$.

\begin{thm}
\label{thm:inapx-bip}
For any $\varepsilon > 0$, it is NP-hard to approximate the problem \Contraction{} with tolerance function $\varphi(x)=x-1$ on bipartite graphs with unit length edges $\ell=1$ to within a factor of $m^{1/2-\varepsilon}$.
\end{thm}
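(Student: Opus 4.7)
I plan to reduce from \Clique{}, using the NP-hardness of approximating the clique number $\omega(H)$ of an $n$-vertex graph~$H$ within factor $n^{1-\varepsilon}$. The governing structural observation is that in any bipartite graph with unit edge lengths, every feasible $(1,1)$-contraction $C$ must be a matching: if two edges $\{u,v\}, \{u,w\}\in C$ shared a vertex~$u$, then $v$ and~$w$ would be identified in~$G/C$, but they lie in the same bipartition class and hence satisfy $\dist_\ell(v,w)\geq 2$, contradicting the tolerance $\dist_{\ell_C}(v,w)\geq \dist_\ell(v,w)-1$. Each super-vertex of~$G/C$ therefore contains exactly two vertices (one per part), so at most two parallel edges arise between any super-vertex pair, giving $\Phi(C)=|C|+\Delta(C)$ with $\Delta(C)\leq \binom{|C|}{2}$.

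Given $H$, I would construct a bipartite graph~$G$ with vertex classes $A=\{a_v:v\in V(H)\}$ and $B=\{b_v:v\in V(H)\}$, a ``diagonal'' edge $\{a_v,b_v\}$ for every~$v$, and for every $\{u,v\}\in E(H)$ the two ``cross-edges'' $\{a_u,b_v\}$ and $\{a_v,b_u\}$. To control distances and force the hard instances, a small number of auxiliary gadget vertices would be added (for example a pair of apex vertices $a^\star,b^\star$ adjacent to the opposite part and to each other, capping the diameter of~$G$ at~$3$). This yields $|E(G)|=\Theta(n^2)$.

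For the forward direction, a clique $S\subseteq V(H)$ of size~$k$ induces the matching $C=\{\{a_v,b_v\}:v\in S\}$ as a feasible $(1,1)$-contraction, with feasibility following from the bounded diameter of~$G$ via a short case analysis on pairs of vertices at $\dist_\ell=3$. For each pair $u,v\in S$, the clique condition ensures that both cross-edges $\{a_u,b_v\}$ and $\{a_v,b_u\}$ are present, so they coalesce into a single edge of~$G/C$, contributing $1$ to $\Delta(C)$. This gives $\Phi(C)=k+\binom{k}{2}=\binom{k+1}{2}$.

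The main obstacle is the backward direction: bounding $\Phi(C)\leq O(\omega(H)^2)$ for every feasible matching $C$ in~$G$. The plan is a two-step exchange argument in the spirit of the one used in the proof of Theorem~\ref{thm:unit-tree}: (i) transform any feasible~$C$ into a matching supported only on diagonals without decreasing $\Phi(C)$, exploiting that cross-edge contractions are never more beneficial than diagonal ones; and (ii) for a diagonal matching corresponding to $S\subseteq V(H)$, observe that $\Phi(C)=|S|+|E(H[S])|$, and use the gadget structure together with the feasibility constraints to argue that any $S$ achieving $\Phi(C)$ close to $\binom{|S|+1}{2}$ must contain a clique of size $\Omega(|S|)$ in~$H$. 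Combining both directions, the $n^{1-\varepsilon}$-inapproximability of $\omega(H)$ translates into an $n^{2(1-\varepsilon)}=m^{1-\varepsilon}$-inapproximability for $\Phi$ on~$G$, which in particular implies the claimed $m^{1/2-\varepsilon}$ bound.
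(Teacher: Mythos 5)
Your construction has a fatal flaw in the backward direction, and the flaw shows that the $m^{1-\varepsilon}$-inapproximability you hope for cannot hold via this route (nor even the weaker $m^{1/2-\varepsilon}$ claimed in the theorem).

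The problem is that in your tensor-square gadget \emph{every} diagonal matching is feasible, not only those supported on cliques. Once $a^\star$ and $b^\star$ cap the diameter at~$3$, for any two vertices $u,v$ of~$H$ you have $\dist_\ell(a_u,a_v)=\dist_\ell(b_u,b_v)=2$ (via the apexes), while $\dist_\ell(a_u,b_v)=\dist_\ell(a_v,b_u)\in\{1,3\}$ depending only on whether $\{u,v\}\in E(H)$. By the pairwise characterization of $(1,1)$-contractions in bipartite graphs (Lemma~\ref{lem:bip-contr}~(ii) and~(iii)), these equalities are exactly what make two diagonals $\{a_u,b_u\},\{a_v,b_v\}$ simultaneously contractible, so $C(S)=\{\{a_v,b_v\}:v\in S\}$ is a feasible $(1,1)$-contraction for \emph{every} $S\subseteq V(H)$. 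Consequently $\Phi(C(S))=|S|+|E(H[S])|$, which can be $\Theta(n^2)$ even when $\omega(H)$ is tiny: taking $H$ to be the Tur\'an graph $T(n,\lceil\sqrt{n}\rceil)$ gives $\omega(H)=\Theta(\sqrt{n})$ yet $\Phi(C(V(H)))=n+\Theta(n^2)$, so $\Phi/\omega(H)^2=\Theta(n)$ is unbounded. Your step~(ii) cannot be repaired via Tur\'an-type arguments either: a set~$S$ with $|E(H[S])|=c\binom{|S|}{2}$ for a fixed constant $c<1$ still yields $\Phi(C(S))=\Theta(|S|^2)$ while only guaranteeing a constant-sized clique.

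The paper circumvents exactly this collapse by encoding each edge $e=\{u,v\}$ of the \Clique{} instance through a subdivision vertex $x_e$ adjacent to $(u,1)$ and $(v,1)$, plus a hub $s$ adjacent to every $(v,2)$ and every $x_e$. In that gadget $\dist_\ell((u,1),(v,1))=2$ \emph{only} when $\{u,v\}\in E(G)$ (otherwise it is at least~$4$), so two diagonals are jointly contractible only across an actual edge; that is what makes $U(C)$ a clique (Claim~2 of the paper's proof) rather than merely a dense set. The subdivision vertices also inflate the host graph to $n(H)=\Theta(n(G)^2)$ with $m(H)=\Theta(n(H))$, while $\Phi(C)$ stays linear in $|U(C)|$---this linear (not quadratic) relation is precisely why the ratio obtained is $m^{1/2-\varepsilon}$. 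The quadratic boost you hoped to extract from $\Phi(C(S))=k+\binom{k}{2}$ in the forward direction is cancelled by the failure of the backward direction, and no exchange argument in the spirit of Theorem~\ref{thm:unit-tree} can restore it, because the obstruction is structural: without the $x_e$-vertices the feasibility constraint simply does not distinguish cliques from arbitrary vertex subsets.
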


Our reductions are based on the inapproximability of the well-known \Clique{} problem.
Recall that a \emph{clique} in a graph $G$ is a complete subgraph of $G$.

\begin{problem}{\Clique{}}
Input: & A graph $G$. \\
Output: & A clique in $G$ of maximum size. \\
\end{problem}

It was shown in \cite{MR2403018} that for any $\varepsilon > 0$, it is NP-hard to approximate \Clique{} to within a factor of $n^{1-\varepsilon}$.

The following lemma will be used in our proofs.
It shows that for $(1,\beta)$-contractions the feasibility condition \eqref{eq:contr-cond} needs not be checked for all pairs of vertices $u$ and $v$, but only for those satisfying certain extra conditions.

\begin{lem}
\label{lem:add-contr-cond}
A set of edges $C\subseteq E$ is a $(1,\beta)$-contraction if and only if all pairs of vertices $u,v\in V$ with the property that every shortest path with respect to $\ell_C$ between $u$ and $v$ starts and ends with an edge from $C$ satisfy condition \eqref{eq:contr-cond}.
\end{lem}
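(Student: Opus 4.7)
The forward direction is immediate from the definition of a $(1,\beta)$-contraction, so the real content is the converse. The plan is to argue by infinite descent: assume that every pair $(u,v)$ whose shortest $\ell_C$-paths all begin and end with a $C$-edge already satisfies~\eqref{eq:contr-cond}, but that some \emph{violating} pair with $\dist_{\ell_C}(u,v) < \dist_\ell(u,v) - \beta$ nevertheless exists. Among all violating pairs I would pick one minimizing $\dist_{\ell_C}(u,v)$; this minimum exists because there are only finitely many vertex pairs.

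By the contrapositive of the hypothesis, some shortest $\ell_C$-path $P$ from $u$ to $v$ either fails to start or fails to end with a $C$-edge, and by the $u \leftrightarrow v$ symmetry of the statement it suffices to treat the former case. Let $u'$ be the neighbor of $u$ along $P$, so $\{u,u'\} \notin C$ and therefore $\ell(\{u,u'\}) > 0$ under both $\ell$ and $\ell_C$. The key step is the following pair of inequalities: truncating $P$ yields $\dist_{\ell_C}(u',v) = \dist_{\ell_C}(u,v) - \ell(\{u,u'\})$, a strict decrease, while the triangle inequality for $\ell$ yields $\dist_\ell(u',v) \geq \dist_\ell(u,v) - \ell(\{u,u'\})$. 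Subtracting and invoking the fact that $(u,v)$ is violating gives $\dist_\ell(u',v) - \dist_{\ell_C}(u',v) \geq \dist_\ell(u,v) - \dist_{\ell_C}(u,v) > \beta$, so $(u',v)$ is itself a violating pair whose $\ell_C$-distance is strictly smaller than that of $(u,v)$, contradicting the choice of $(u,v)$.

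The only subtle case is $u' = v$, which I would handle separately: then $\dist_{\ell_C}(u',v) = \dist_\ell(u',v) = 0$, so the chain of inequalities collapses to $0 > \beta$, contradicting $\beta \geq 0$. I do not foresee any real obstacle beyond this minor bookkeeping, since the $\alpha = 1$ assumption makes the additive errors in the two length functions align perfectly under the triangle inequality step; for $\alpha > 1$ the analogous approach would introduce a multiplicative factor on $\ell(\{u,u'\})$ that spoils the cancellation, which is consistent with the statement being restricted to $(1,\beta)$-contractions.
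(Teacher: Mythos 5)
Your proof is correct and follows the paper's own argument essentially verbatim: both pick a violating pair $(u,v)$ minimizing $\dist_{\ell_C}(u,v)$, peel off the initial non-$C$ edge $\{u,u'\}$ of an $\ell_C$-shortest path, and observe that the truncated pair $(u',v)$ is again violating with strictly smaller $\ell_C$-distance, contradicting minimality. If anything you are slightly more precise than the paper, which at the corresponding step writes an equality where only the inequality $\dist_\ell(u',v) \geq \dist_\ell(u,v) - \ell(\{u,u'\})$ is justified; that inequality is the direction your proof uses and is all that is needed. One small correction to your closing aside: the cancellation does not ``spoil'' for $\alpha>1$ --- after replacing $\dist_\ell$ by $\dist_\ell/\alpha$ the descent step picks up a \emph{nonnegative} extra term $\ell(\{u,u'\})\,(1 - 1/\alpha)$, so the truncated pair is, if anything, even more violating, and the lemma in fact extends to all $\alpha\geq 1$.
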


\begin{proof}
Suppose for the sake of contradiction that all pairs of vertices $u,v\in V$ as in the lemma satisfy condition \eqref{eq:contr-cond} and that $C$ is \emph{not} a $(1,\beta)$-contraction.
Then there is a pair of vertices $u,v \in V$ violating \eqref{eq:contr-cond} and a shortest path $P$ with respect to $\ell_C$ between $u$ and $v$ that does not start or end with an edge from $C$.
We choose $u$ and $v$ such that $\dist_{\ell_C}(u,v)$ is minimal, and we may assume that the first edge $\{u,w\}$ of $P$ is not contained in $C$, so $\dist_\ell(u,v) - \dist_{\ell_C}(u,v) = \dist_\ell(w,v) - \dist_{\ell_C}(w,v)$.
By our choice of $u$ and $v$, the vertices $w$ and $v$ satisfy \eqref{eq:contr-cond}, i.e., the right-hand side of this equation is bounded by $\beta$, a contradiction.
\end{proof}

\begin{proof}[Proof of Theorem~\ref{thm:inapx-contr}]
Let $\beta,\varepsilon>0$ be fixed and let $G=(V,E)$ be an instance of \Clique{}.

We define a graph $H=H(G)$ as follows, see Figure~\ref{fig:inapx-reduc1}:
The vertex set of $H$ is given by $(V\times\{1,2\})\cup \{s\}$, i.e., we create two copies of each original vertex and add a special vertex $s$.
The edge set of $H$ is given by $\{\{(u,1),(v,1)\}:\{u,v\}\in E\}$ plus the edges $\{(v,1),(v,2)\}$ and $\{s,(v,2)\}$ for all $v\in V$.
The first set of edges are simply the original edges of $G$ on the first copies of the vertices, the second set is a perfect matching between the two copies of the vertex set, and the third set of edges connects the special vertex~$s$ to all vertices of the second copy of the vertex set.
The length function $\ell$ on the edges of $H$ is set to $2\beta+2$, $\beta$ or $\beta+1$ for those three sets of edges, respectively.

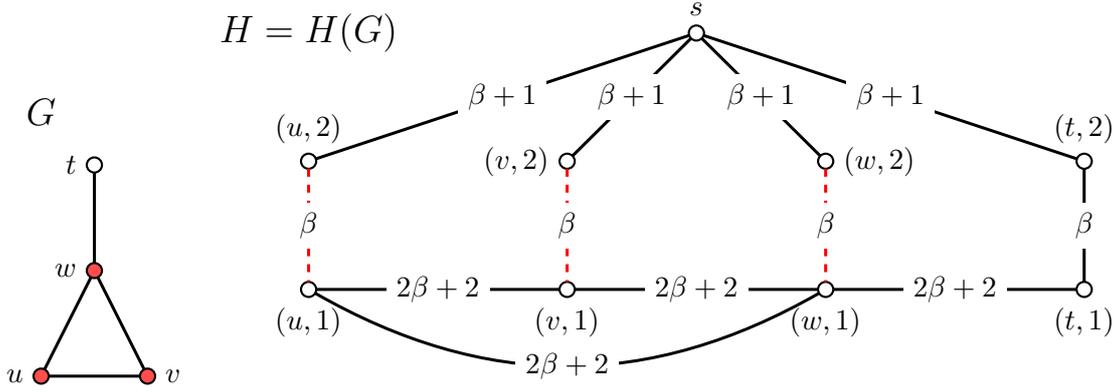
\begin{figure}[ht]
\centering
\begin{tikzpicture}[scale=0.7]
\SetVertexNoLabel
\tikzstyle{EdgeStyle}=[edge_default]

\tikzstyle{VertexStyle}=[vertex_red]
\Vertex[style={label={left:$u$}},x=0,y=-4]{v0}
\Vertex[style={label={right:$v$}},x=2,y=-4]{v1}
\Vertex[style={label={left:$w$}},x=1,y=-2]{v2}

\tikzstyle{VertexStyle}=[vertex_default]
\Vertex[style={label={left:$t$}},x=1,y=0]{v3}

\Edge(v0)(v1)
\Edge(v1)(v2)
\Edge(v2)(v0)
\Edge(v2)(v3)

\node at (0,1) {\Large $G$};
\end{tikzpicture}
\begin{tikzpicture}[scale=0.85]
\tikzstyle{VertexStyle}=[vertex_default]
\SetVertexNoLabel

\Vertex[style={label={above:$(u,2)$}},x=0,y=6]{v2}
\Vertex[style={label={below:$(u,1)$}},x=0,y=4]{v3}

\Vertex[style={label={left:$(v,2)$}},x=4,y=6]{v4}
\Vertex[style={label={below:$(v,1)$}},x=4,y=4]{v5}

\Vertex[style={label={right:$(w,2)$}},x=8,y=6]{v6}
\Vertex[style={label={below:$(w,1)$}},x=8,y=4]{v7}

\Vertex[style={label={above:$(t,2)$}},x=12,y=6]{v8}
\Vertex[style={label={below:$(t,1)$}},x=12,y=4]{v9}

\Vertex[style={label={above:$s$}},x=6,y=8]{v10}

\tikzstyle{EdgeStyle}=[edge_in_c]
\Edge[label=\hbox{$\beta$}](v2)(v3)
\Edge[label=\hbox{$\beta$}](v4)(v5)
\Edge[label=\hbox{$\beta$}](v6)(v7)

\tikzstyle{EdgeStyle}=[edge_default]
\Edge[label=\hbox{$\beta$}](v8)(v9)

\Edge[label=\hbox{$\beta+1$}](v10)(v2)
\Edge[label=\hbox{$\beta+1$}](v10)(v4)
\Edge[label=\hbox{$\beta+1$}](v10)(v6)
\Edge[label=\hbox{$\beta+1$}](v10)(v8)

\Edge[label=\hbox{$2\beta+2$}](v3)(v5)
\Edge[label=\hbox{$2\beta+2$}](v5)(v7)
\Edge[label=\hbox{$2\beta+2$}](v7)(v9)

\Edge[style={bend right},label=\hbox{$2\beta+2$}](v3)(v7)

\node at (0,8) {\Large $H=H(G)$};

\end{tikzpicture}
\caption{An instance $G$ of \Clique{} (left) and the corresponding instance $\cI=(H,\ell,\varphi)$ (right) of \Contraction{} constructed in the proof of Theorem~\ref{thm:inapx-contr}.
The dashed edges form the set $C(U)$ for $U=\{u,v,w\}$.}
\label{fig:inapx-reduc1}
\end{figure}

Now consider the instance $\cI:=(H,\ell,\varphi)$ of the problem \Contraction{} with the tolerance function $\varphi(x)=x-\beta$.
Clearly, any $(1,\beta)$-contraction $C$ in $H$ can contain only edges of the form  $\{(u,1),(u,2)\}$ for some $u\in V$.
As $H$ does not contain two edges between two different connected components of $(V,C)$, our objective function defined in \eqref{eq:objective} satisfies $\Phi(C)=|C|$ for any feasible solution $C$ of $\cI$.
We will show that it allows a feasible solution with $k$ edges (and thus of value $k$) if and only if $G$ has a clique with $k$ vertices.
Formally, for $U \subseteq V$ we define $C(U) := \{\{(u,1),(u,2)\} : u \in U\}$ (see Figure~\ref{fig:inapx-reduc1}).
We proceed to show that $U$ induces a clique in $G$ if and only if $C(U)$ is a $(1,\beta)$-contraction in $H=H(G)$.

Note that for any two vertices $u,v \in U$ we have
\begin{align*}
\dist_{\ell_{C(U)}}((u,1),(v,1)) &= 2\beta+2 =
\begin{cases}
\dist_\ell((u,1),(v,1)) & \text{if } \{u,v\} \in E,\\
\dist_\ell((u,1),(v,1))-2\beta & \text{otherwise},
\end{cases} \\
\dist_{\ell_{C(U)}}((u,1),(v,2)) &= 2\beta+2 = \dist_\ell((u,1),(v,2)) - \beta, \\
\dist_{\ell_{C(U)}}((u,2),(v,2)) &= 2\beta+2 = \dist_\ell((u,2),(v,2)) , \\
\dist_{\ell_{C(U)}}((u,1),(u,2)) &= 0 = \dist_\ell((u,1),(u,2)) - \beta.
\end{align*}
These relations together with Lemma~\ref{lem:add-contr-cond} show that $C(U)$ is a $(1,\beta)$-contraction in $H$ if and only if $U$ is a clique in $G$.

As $n(H)$ differs from $n(G)$ only by a constant factor, an $n^{1-\varepsilon}$-approximation algorithm for \Contraction{} would yield an $n^{1-\varepsilon'}$-approximation algorithm for \Clique{} via this reduction.
Together with the before-mentioned inapproximability of \Clique{} \cite{MR2403018} this proves the theorem.
\end{proof}

The rest of this section is devoted to proving Theorem~\ref{thm:inapx-bip}, so we now focus on $(1,1)$-contractions in bipartite graphs with unit length edges $\ell=1$.
The next lemma characterizes the structure of contractions in this setting.

\begin{lem}
\label{lem:bip-contr}
Let $G=(V, E)$ be a bipartite graph with unit edge lengths $\ell=1$ and let $C \subseteq E$ be a set of edges.

\begin{enumerate}[label=(\roman*),leftmargin=7mm]
\item
If $C$ is a $(1,1)$-contraction, then $C$ is a matching.

\item
If $C = \{e,f\}$ with edges $e = \{u_1,u_2\}, f = \{v_1,v_2\} \in E$, then $C$ is a $(1,1)$-contraction if and only if $\dist_\ell(u_1,v_1) = \dist_\ell(u_2,v_2)$ and $\dist_\ell(u_1,v_2) = \dist_\ell(u_2,v_1)$.

\item
$C$ is a $(1,1)$-contraction if and only if all two-element subsets of $C$ are.
\end{enumerate}
\end{lem}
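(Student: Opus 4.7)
The plan is to prove the three parts in sequence, each building on the previous.

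\textbf{Part~(i)} is a direct contradiction argument. If $C$ contained two distinct edges $\{u,v\},\{v,w\}\in C$ sharing a vertex, then $u$ and $w$ would lie in the same color class of the bipartition (both being neighbors of $v$), so $\{u,w\}\notin E$ and $\dist_\ell(u,w)\geq 2$. Yet $\dist_{\ell_C}(u,w)=0$, violating~\eqref{eq:contr-cond}.

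\textbf{Part~(ii).} Writing $d_{ij}:=\dist_\ell(u_i,v_j)$, I first establish the identity $\dist_{\ell_C}(u_i,v_j)=\min_{k,l}d_{kl}$ for all $i,j$. The upper bound is immediate by combining the free traversals of $e$ and $f$ with a shortest $\ell$-path; for the lower bound, one inspects an arbitrary simple $u_i$-to-$v_j$ path in $G$ according to which of $e,f$ it contains (zero, one, or both), checking the resulting $\ell_C$-length case by case. Hence $C$ is a $(1,1)$-contraction iff $\max_{k,l}d_{kl}-\min_{k,l}d_{kl}\leq 1$. Bipartiteness with unit lengths, combined with $u_1\sim u_2$ and $v_1\sim v_2$, forces $|d_{1j}-d_{2j}|=|d_{i1}-d_{i2}|=1$ (parity plus triangle inequality), and an enumeration of sign patterns then shows that the span-at-most-one condition is equivalent to $d_{11}=d_{22}$ and $d_{12}=d_{21}$.

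\textbf{Part~(iii).} The forward direction follows from monotonicity of contractions under restriction. For the converse, assume every two-element subset of $C$ is a $(1,1)$-contraction. Part~(i) applied to any two adjacent edges of $C$ then forces $C$ to be a matching. I verify $\dist_\ell(u,v)\leq\dist_{\ell_C}(u,v)+1$ by strong induction on $r(u,v)$, defined as the minimum number of $C$-edges on a shortest $\ell_C$-path from $u$ to $v$. The cases $r\leq 1$ are immediate, since the witnessing path has $\ell$-length exceeding its $\ell_C$-length by exactly $r$. For $r\geq 2$, pick a shortest $\ell_C$-path $P$ using exactly $r$ contracted edges $f_1,\ldots,f_r$, write $f_i=\{a_i,b_i\}$ with $a_i$ the endpoint of $f_i$ nearer $u$ along $P$, and let $p_0,p_r$ denote the numbers of non-$C$ edges on the initial segment $(u,\ldots,a_1)$ and final segment $(b_r,\ldots,v)$ of $P$. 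Simplicity of $P$ ensures $f_1\neq f_r$; a triangle-inequality argument in $\ell_C$ shows that the subpath of $P$ between $b_1$ and $a_r$ is itself a shortest $\ell_C$-path (otherwise $P$ would not be), so $r(b_1,a_r)\leq r-2<r$. Applying Part~(ii) to $\{f_1,f_r\}$ yields $\dist_\ell(a_1,b_r)=\dist_\ell(b_1,a_r)$, and the inductive hypothesis bounds this common value by $(s-p_0-p_r)+1$, where $s:=\dist_{\ell_C}(u,v)$. The triangle inequality in $\ell$ along $u\to a_1\to b_r\to v$ then gives $\dist_\ell(u,v)\leq p_0+(s-p_0-p_r+1)+p_r=s+1$, closing the induction.

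\textbf{Main obstacle.} The delicate step is closing the induction in~(iii): one must verify that the $b_1$-to-$a_r$ segment of $P$ is a shortest $\ell_C$-path so that the inductive hypothesis applies with a strictly smaller value of $r$, and one must route the triangle inequality in $\ell$ ``diagonally'' through $a_1$ and $b_r$ (rather than through $a_r$ and $b_1$) in order to exploit the distance \emph{equality} supplied by Part~(ii) rather than a mere inequality.
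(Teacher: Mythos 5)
Your proof is correct. Parts~(i) and~(ii) match the paper's argument in substance (your reformulation $\dist_{\ell_C}(u_i,v_j)=\min_{k,l}d_{kl}$ together with the parity constraint is just a more explicit packaging of the same observation: parity and the triangle inequality force $|d_{1j}-d_{2j}|=|d_{i1}-d_{i2}|=1$, and the paper derives the two equalities by the same argument). Part~(iii) is where you take a genuinely different route. The paper argues directly: for a path $P$ whose $C$-edges in order are $e_1,\ldots,e_k$, it pairs them up consecutively as $(e_1,e_2),(e_3,e_4),\ldots$ and applies part~(ii) to each pair to deduce that the segment of $P$ spanning each pair has $\ell$-length at least $2$ more than the distance between its outer endpoints; summing gives $\ell(P)\geq \dist_\ell(u,v)+2\lfloor k/2\rfloor\geq\dist_\ell(u,v)+(k-1)$, so $\ell_C(P)=\ell(P)-k\geq\dist_\ell(u,v)-1$, and Lemma~\ref{lem:add-contr-cond} finishes. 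You instead run a strong induction on $r(u,v)$, applying part~(ii) only to the \emph{extremal} pair $\{f_1,f_r\}$ to get the diagonal equality $\dist_\ell(a_1,b_r)=\dist_\ell(b_1,a_r)$ and then recursing on the inner segment between $b_1$ and $a_r$, which, being a subpath of a shortest $\ell_C$-path, is itself shortest and carries strictly fewer $C$-edges. Both arguments are sound: the paper's is non-inductive and a bit more direct, while yours has the virtue of invoking part~(ii) exactly once per step and making explicit precisely which diagonal distance equality is doing the work. One small reconciliation you leave implicit but which holds: your claim in part~(ii) that feasibility is equivalent to $\max d-\min d\leq 1$ quietly relies on the analogue of Lemma~\ref{lem:add-contr-cond} (only pairs among $\{u_1,u_2,v_1,v_2\}$ can be binding), which is worth stating if you want the sketch to be self-contained.
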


\begin{proof}
\begin{enumerate}[label=(\roman*),leftmargin=7mm]
\item
Suppose for the sake of contradiction that $C$ contains a path $(u,v,w)$ on two edges.
As $G$ is bipartite, it has no triangles, so $\dist_\ell(u,w)=2$ and $\dist_{\ell_C}(u,w)=0$, a contradiction to the assumption that $C$ is a $(1,1)$-contraction.

\item
For the edges $e=\{u_1,u_2\}$ and $f=\{v_1,v_2\}$ we define $d_{i,j}:=\dist_\ell(u_i,v_j)$ for $i,j\in\{1,2\}$.

Let $C=\{e,f\}$ be a $(1,1)$-contraction.
Both $d_{1,1}$ and $d_{2,2}$ must have the same parity (as $G$ is bipartite), so if $d_{1,1}<d_{2,2}$, the difference between them is exactly~2.
However, this would mean that $\dist_{\ell_C}(u_2,v_2)=d_{1,1}=d_{2,2}-2=\dist_\ell(u_2,v_2)-2$, a contradiction to the assumption that $C$ is a $(1,1)$-contraction.
Repeating the same argument with $d_{1,1}$ and $d_{2,2}$ interchanged shows that $d_{1,1}=d_{2,2}$.
An analogous argument shows that $d_{1,2}=d_{2,1}$.

Now suppose that $d_{1,1}=d_{2,2}$ and $d_{1,2}=d_{2,1}$.
From these conditions it follows that for all $i,j\in\{1,2\}$ every path between $u_i$ and $v_j$ that contains both edges $e$ and $f$ has length at least $d_{i,j}+2$ with respect to $\ell$.
Consequently, we have $\dist_{\ell_C}(u_i,v_j)\geq \dist_\ell(u_i,v_j)-1$ for $C=\{e,f\}$.
By Lemma~\ref{lem:add-contr-cond}, $C$ is a $(1,1)$-contraction.

\item
One direction of the equivalence is obvious, so we only need to prove the other direction.
So we assume that all two-element subsets of $C$ are $(1,1)$-contractions, and we need to prove that $C$ is a $(1,1)$-contraction.
The argument is a straightforward generalization of the argument for (ii) from before.
Let $P$ be a path that contains exactly $k$ edges from $C$, and that starts and ends with an edge from $C$.
Let $e_1,e_2,\ldots,e_k$ be those edges and $u_{1,1},u_{1,2},u_{2,1},u_{2,2},\ldots,u_{k,1},u_{k,2}$ their end vertices as they are encountered when traversing $P$ (so $u_{1,1}$ and $u_{k,2}$ are the end vertices of $P$).
For all $i=1,2,\ldots,\lfloor k/2\rfloor$ the pair of edges $e_{2i-1}$ and $e_{2i}$ and their end vertices satisfy the distance conditions from~(ii).
From these conditions it follows that the subpath of $P$ between $u_{2i-1,1}$ and $u_{2i,2}$ has length at least $\dist_\ell(u_{2i-1,1},u_{2i,2})+2$.
So overall the length of $P$ is at least $\dist_\ell(u_{1,1},u_{k,2})+2\lfloor k/2\rfloor\geq \dist_\ell(u_{1,1},u_{k,2})+(k-1)$.
Consequently, we have $\dist_{\ell_C}(u_{1,1},u_{k,2})\geq \dist_\ell(u_{1,1},u_{k,2})-1$.
By Lemma~\ref{lem:add-contr-cond}, $C$ is a $(1,1)$-contraction.
\end{enumerate}
\end{proof}

With Lemma~\ref{lem:bip-contr} in hand, we are now ready to prove Theorem~\ref{thm:inapx-bip}.

\begin{proof}[Proof of Theorem~\ref{thm:inapx-bip}]
Let $\varepsilon>0$ be fixed and let $G=(V,E)$ be an instance of \Clique{}.
We construct a bipartite graph $H=H(G)$ as follows, see Figure~\ref{fig:inapx-reduc-bip}:
For every vertex $v \in V$, the graph $H$ contains two vertices $(v,1)$ and $(v,2)$ and the edge $f_v:=\{(v,1),(v,2)\}$.
For every edge $e=\{u,v\} \in E$, we add a vertex $x_e$ and the edges $f_{e,u}:=\{x_e,(u,1)\}$ and $f_{e,v}:=\{x_e,(v,1)\}$ to $H$.
Furthermore, we add a new special vertex~$s$ to $H$ and all the edges $\{s,(v,2)\}$, $v\in V$, and $\{s,x_e\}$, $e\in E$.
It is easy to check that the graph~$H$ defined in this way is bipartite.

All edges of $H$ receive unit lengths ($\ell = 1$) and we consider the instance $\cI=(H,\ell,\varphi)$ of the problem \Contraction{} with the tolerance function $\varphi(x)=x-1$.

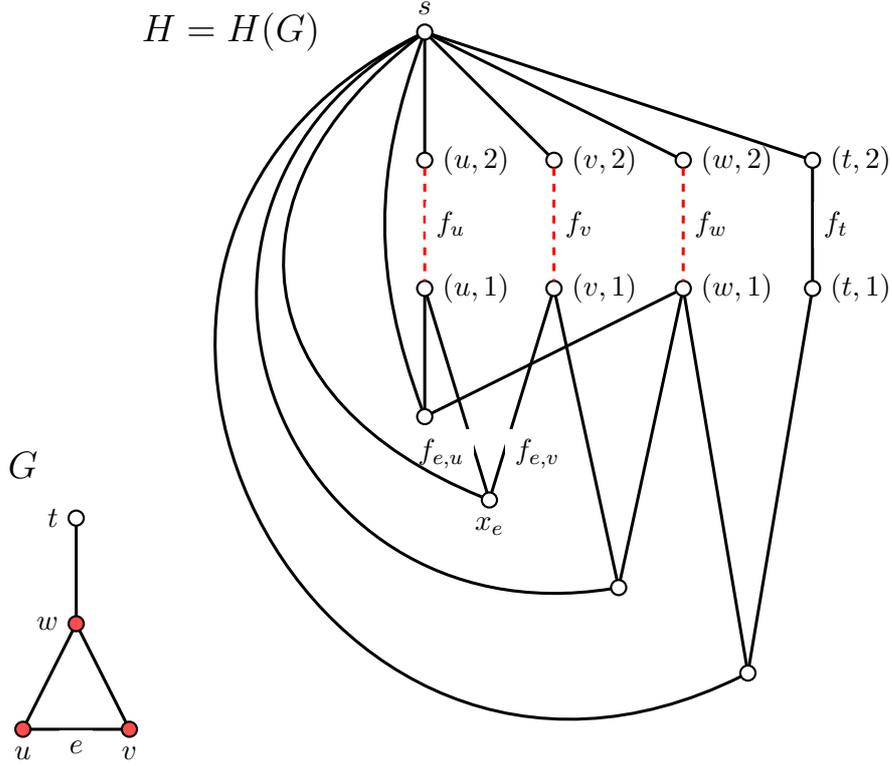
\begin{figure}[ht]
\centering
\begin{tikzpicture}[scale=0.7]
\SetVertexNoLabel
\tikzstyle{EdgeStyle}=[edge_default]

\tikzstyle{VertexStyle}=[vertex_red]
\Vertex[style={label={below:$u$}},x=3,y=-2]{v0}
\Vertex[style={label={below:$v$}},x=5,y=-2]{v1}
\Vertex[style={label={left:$w$}},x=4,y=0]{v2}

\tikzstyle{VertexStyle}=[vertex_default]
\Vertex[style={label={left:$t$}},x=4,y=2]{v3}

\Edge[label={$e$},labelstyle={below}](v0)(v1)
\Edge(v1)(v2)
\Edge(v2)(v0)
\Edge(v2)(v3)

\node at (3,3){\Large $G$};

\end{tikzpicture}
\begin{tikzpicture}[scale=0.85]
\tikzstyle{VertexStyle}=[vertex_default]
\SetVertexNoLabel

\path[use as bounding box] (4,-3.5) rectangle (14,8.5);

\Vertex[style={label={right:$(u,2)$}},x=8,y=6]{v2}
\Vertex[style={label={right:$(u,1)$}},x=8,y=4]{v3}

\Vertex[style={label={right:$(v,2)$}},x=10,y=6]{v4}
\Vertex[style={label={right:$(v,1)$}},x=10,y=4]{v5}

\Vertex[style={label={right:$(w,2)$}},x=12,y=6]{v6}
\Vertex[style={label={right:$(w,1)$}},x=12,y=4]{v7}

\Vertex[style={label={right:$(t,2)$}},x=14,y=6]{v8}
\Vertex[style={label={right:$(t,1)$}},x=14,y=4]{v9}

\Vertex[style={label={above:$s$}},x=8,y=8]{v10}

\Vertex[x=8,y=2]{v11}
\Vertex[style={label={below:$x_e$}},x=9,y=0.7]{v12}
\Vertex[x=11,y=-0.6667]{v13}
\Vertex[x=13,y=-2]{v14}

\tikzstyle{EdgeStyle}=[edge_in_c]
\Edge[label={$f_u$},labelstyle={right}](v2)(v3)
\Edge[label={$f_v$},labelstyle={right}](v4)(v5)
\Edge[label={$f_w$},labelstyle={right}](v6)(v7)

\tikzstyle{EdgeStyle}=[edge_default]
\Edge[label={$f_t$},labelstyle={right}](v8)(v9)

\Edge(v10)(v2)
\Edge(v10)(v4)
\Edge(v10)(v6)
\Edge(v10)(v8)

\Edge(v3)(v11)
\Edge(v7)(v11)

\Edge[label={$f_{e,u}$},labelstyle={left,pos=0.8}](v3)(v12)
\Edge[label={$f_{e,v}$},labelstyle={right,pos=0.8}](v5)(v12)

\Edge(v5)(v13)
\Edge(v7)(v13)

\Edge(v7)(v14)
\Edge(v9)(v14)

\Edge[style={bend right=20}](v10)(v11)
\Edge[style={bend right=60,looseness=1.4}](v10)(v12)
\Edge[style={bend right=80,looseness=1.5}](v10)(v13)
\Edge[style={bend right=90,looseness=1.7}](v10)(v14)

\node at (5,8){\Large $H=H(G)$};

\end{tikzpicture}
\caption{An instance $G$ of \Clique{} (left) and the corresponding instance $\cI=(H,\ell,\varphi)$ (right) of \Contraction{} constructed in the proof of Theorem~\ref{thm:inapx-bip}.
The dashed edges form the set $C(U)$ for $U=\{u,v,w\}$.}
\label{fig:inapx-reduc-bip}
\end{figure}

For any set of vertices $U \subseteq V$ we define $C(U) := \{f_u : u \in U\}$ (see Figure~\ref{fig:inapx-reduc-bip}).

\underline{Claim~1:}
If $U \subseteq V$ is a clique in $G$, then $C(U)$ is a $(1,1)$-contraction in $H$ and $\Phi(C(U)) = |U|$.

\underline{Proof of Claim~1:}
Let $U$ be a a set of vertices in $G$ that form a clique, and let $u,v \in U$ be two vertices from this clique.
Then we have $\dist_\ell((u,1),(v,1))=\dist_\ell((u,2),(v,2))=2$ and $\dist_\ell((u,1),(v,2))=\dist_\ell((u,2),(v,1))=3$, so Lemma~\ref{lem:bip-contr}~(ii) implies that $C(\{u,v\})$ is a $(1,1)$-contraction in $H$.
Repeating this argument for every pair of vertices from $U$ and applying Lemma~\ref{lem:bip-contr}~(iii) yields that $C(U)$ is a $(1,1)$-contraction in $H$.
As there are never two edges in $H$ between any two connected components of the graph $(V,C(U))$, we have $\Phi(C(U)) = |C(U)| = |U|$.
\qedclaim

For any set of edges $C \subseteq E(H)$, we let $U(C)$ be the set of vertices $v \in V$ for which $(v,1)$ is incident to an edge in $C$.

\underline{Claim~2:}
If $C\subseteq E(H)$ is a $(1,1)$-contraction, then $C$ is a matching in $H$ and $U(C)$ is a clique in $G$ of size at least $\Phi(C)-3$.

\underline{Proof of Claim~2:}
$C$ is a matching by Lemma~\ref{lem:bip-contr}~(i).

Let $u,v \in U(C)$.
We will show that $e=\{u,v\}\in E$ by applying Lemma~\ref{lem:bip-contr}~(ii) to the two edges in $C$ incident to $(u,1)$ and $(v,1)$.
To prove that $e\in E$ it suffices to show that $\dist_\ell((u,1),(v,1))=2$.

Let us first consider the case that $f_u,f_v \in C$.
As $\dist_\ell((u,2),(v,2)) = 2$ (the shortest path between those vertices goes via~$s$), Lemma~\ref{lem:bip-contr}~(ii) implies that $\dist_\ell((u,1),(v,1))=2$.
We now consider the case that there is an edge $e'\in E\setminus\{e\}$ with $f_u, f_{e',v} \in C$.
We then have $\dist_\ell((u,2),x_{e'}) = 2$ (via~$s$), so Lemma~\ref{lem:bip-contr}~(ii) yields $\dist_\ell((u,1),(v,1))=2$.
Finally, we consider the case that there are two edges $e',e''\in E\setminus \{e\}$ with $f_{e',u}, f_{e'',v} \in C$.
We then have $\dist_\ell(x_{e'},x_{e''}) = 2$ (via~$s$), again implying that $\dist_\ell((u,1),(v,1))=2$.
This proves that indeed $e\in E$, so $U(C)$ forms a clique in $G$.

Every edge in $H$ is either incident to $s$ or to a vertex of the form $(v,1)$, $v\in V$.
Since at most one of the edges incident to $s$ can be in $C$, the definition of $U(C)$ shows that the size of $U(C)$ is either $|C|-1$ or $|C|$.
Therefore, to finish the proof of Claim 2, it suffices to show that $\Phi(C) \leq |C|+2$.
If $C$ contains no two edges that are connected by more than one edge in $H$, then we have $\Phi(C)=|C|$.
Otherwise we consider two such edges $f$ and $g$ from $C$.
It is easy to check that either $f$ or $g$ must be incident to $s$, so suppose that the edge $f$ contains $s$.
We first consider the case that $f=\{s,x_e\}$ for some edge $e=\{u,v\}\in E$.
In this case it follows that $g=\{(u,1),(u,2)\}$ or $g=\{(v,1),(v,2)\}$, so we have $\Phi(C)=|C|+2$.
Now consider the case that $f=\{s,(u,2)\}$ for some vertex $u \in V$.
In this case it follows that $g=\{(u,1),x_e\}$ for exactly one edge $e \in E$ incident to $u$ in $G$, showing that $\Phi(C)=|C|+2$.
In all three cases we have $\Phi(C)\leq |C|+2$, as claimed.
\qedclaim

Combining Claims~1 and 2 will allow us to prove the following claim:

\underline{Claim~3:}
If there is an $n^{1/2-\varepsilon}$-approximation algorithm for \Contraction{}, then there is an $n^{1-\varepsilon/2}$-approximation algorithm for \Clique{}.

\underline{Proof of Claim~3:}
Suppose for the sake of contradiction that such an approximation algorithm for \Contraction{} exists.
We use it to compute a clique in a given instance $G$ of \Clique{} as follows:
We construct $\cI=(H(G),\ell,\varphi)$ and compute a solution $C$ of \Contraction{} for this instance, and we define the clique $U(C)$ as before (recall Claim~2).
If $U(C)\neq \emptyset$, we return $U(C)$, otherwise we return any vertex from $G$.
We denote the clique computed in this fashion by $U$.

We may assume that $n(G) \geq 16^{1/\varepsilon}$, in particular $n(H) \geq 16^{1/\varepsilon}$.
It follows that
\begin{equation}
\label{eq:nHnG}
n(H) = 1+2n(G)+m(G) \leq 1+2n(G)+\binom{n(G)}{2} \leq n(G)^2.
\end{equation}
By assumption we know that 
\begin{equation}
\label{eq:PhiC*}
\Phi(C)\cdot n(H)^{1/2-\varepsilon} \geq \Phi(C^*),
\end{equation}
where $C^*$ is an optimal solution of $\cI$.
In particular, $\Phi(C)$ is positive.

Combining these observations we get
\begin{align*}
|U|\cdot n(G)^{1-\varepsilon/2} &\geBy{eq:nHnG} |U|\cdot n(H)^{1/2-\varepsilon/2} \\
&\geq \max\{\Phi(C)-3,1\} \cdot n(H)^{1/2-\varepsilon/2} \\
&= \big(\max\{\Phi(C)-3,1\}\cdot n(H)^{\varepsilon/2}\big) \cdot n(H)^{1/2-\varepsilon} \\
&\geq \Phi(C) \cdot n(H)^{1/2-\varepsilon}  \\
&\geBy{eq:PhiC*} \Phi(C^*)\\
&\geq \omega(G),
\end{align*}
where the second inequality holds because of Claim~2, and the last inequality involving the clique number $\omega(G)$ holds because of Claim~1.
\qedclaim

As $m(H)=\Theta(n(H))$, Claim~3 implies the theorem (using the inapproximability of \Clique{} proved in \cite{MR2403018}).
\end{proof}

\section{Hardness for multiplicative tolerance function}
\label{sec:hard-mult}

By Theorem~\ref{thm:cycle-hard-fix}, the problem \WContraction{} with purely additive tolerance function $\varphi(x)=x-\beta$ is NP-hard on cycles.
In this section we prove the hardness and inapproximability of this problem also in the case of a purely multiplicative tolerance function $\varphi(x)=x/\alpha$, $\alpha\geq 1$.
Recall that the problem \Contraction{} is trivial for this tolerance function (we may not contract any edges).

\subsection{Hardness of planar \WContraction{}}
\label{sec:planar}

To state the main result of this section recall that the \emph{girth} of a graph $G$ is defined as the minimum length of a cycle in $G$.

\begin{thm}
\label{thm:weak-planar}
For any $g\geq 2$, the problem \WContraction{} with tolerance function $\varphi(x)=x/2$, is NP-hard for planar graphs with girth at least $3g$ and unit length edges $\ell=1$.
\end{thm}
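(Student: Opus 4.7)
The plan is to reduce from a restricted variant of \textsc{Planar 3SAT} (for instance, the variant where the variable--clause incidence graph admits a planar embedding with variables on a horizontal line and clauses alternately above and below, and where each literal appears in a bounded number of clauses). Given such a formula $\phi$, I would build a planar graph $G_\phi$ of girth at least $3g$ with unit edge lengths, together with a target value $K=K(\phi)$, such that $G_\phi$ admits a weak $(2,0)$-contraction $C$ with $\Phi(C)\geq K$ if and only if $\phi$ is satisfiable. The graph $G_\phi$ is assembled from \emph{variable gadgets}, \emph{clause gadgets}, and \emph{wires} (long subdivided paths) connecting them, where the wires are chosen long enough to guarantee the girth bound.

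The first step is to prove a structural rigidity lemma: in any graph of girth at least $6$ with unit edge lengths and minimum degree at least $2$, every weak $(2,0)$-contraction $C$ is a matching, and moreover any two edges of $C$ that lie on a common path are separated by at least two uncontracted edges. The argument is local: if two edges $\{u,v\},\{u,w\}\in C$ share a vertex $u$, then the girth assumption provides a neighbor $y$ of $v$ (with $y\neq u$) such that $\dist_\ell(y,w)=3$, while $\dist_{\ell_C}(y,w)=1$, violating \eqref{eq:contr-cond}; the separation requirement between non-adjacent contracted edges on a path follows from the same kind of short-circuit analysis and matches the pattern analyzed in Section~\ref{sec:paths}.

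Using this rigidity, the \emph{variable gadget} for each variable $x$ is a planar cycle of length divisible by $3$ and at least $6g$, whose maximum weak $(2,0)$-contraction contracts every third edge and thus admits (once the contact points to the incident wires are fixed) exactly two distinguishable optimal contraction patterns, identified with \textsc{True} and \textsc{False}. The \emph{wire} connecting a variable to a clause is a long subdivided path along which the maximum contraction pattern likewise has a phase that is forced to match the truth value chosen at the variable endpoint. The \emph{clause gadget} is a small planar Y-junction of girth at least $3g$ where the three incoming wires meet; it is designed so that its local maximum contraction is attained only when at least one of the three incoming wires arrives in the \textsc{True} phase, and otherwise exactly one contractable edge is lost. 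Setting $K$ to be the sum of the unconstrained per-gadget optima then gives the desired equivalence with satisfiability.

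The main obstacle is ensuring that no global weak contraction can cheat the decomposition by producing super-vertices that straddle two gadgets, since weak contractions explicitly allow pairs of vertices to collapse to distance $0$. Here the rigidity lemma and the length of the wires are critical: since each super-vertex contains only the two endpoints of a single contracted edge, and consecutive gadgets are separated by at least $3g$ edges of wire, no super-vertex can overlap two gadgets, and the only long-range interaction between gadgets is the phase of each wire at its endpoints. Once this separation is in place, the optimum $\Phi(C)$ decomposes into a sum of independent per-gadget contributions that depend solely on wire phases, and the combinatorial design of the clause gadget translates the satisfiability constraint exactly. The remaining work is a careful choice of gadget parameters and a finite case analysis of how wires couple to junctions; this bookkeeping is the most error-prone part of the argument, but is routine once the gadgets are specified.
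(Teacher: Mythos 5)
Your high-level plan tracks the paper's actual proof: reduce from a restricted planar 3SAT variant, prove a rigidity lemma for weak $(2,0)$-contractions in unit-length graphs of girth at least~$6$, assemble planar variable gadgets and star-shaped clause gadgets joined by long wires, and raise the girth by subdivision. Your rigidity lemma is a special case of the paper's Lemma~\ref{lem:weak-planar} (restricted to graphs with no degree-$1$ vertices), your short-circuit argument for it is correct, and your Y-junction clause gadget is exactly the paper's $H(c_j)$.

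The genuine gap is in the variable gadget. A cycle of length $3k$ admits \emph{three} rotational phases of the optimal ``contract every third edge'' pattern, not two, and fixing the positions of the wire contacts does not remove one: all three remain feasible $(2,0)$-weak contractions of the same cardinality on the cycle, so your stated claim ``exactly two distinguishable optimal contraction patterns, identified with \textsc{True} and \textsc{False}'' is false for a bare cycle. Consequently the back direction of your equivalence (a contraction of value $K$ yields a satisfying assignment) does not follow: a contraction could hit the target $K$ with some variable cycle in the ``third'' phase, which corresponds to neither truth value. To repair this you need one of two things your sketch does not supply: either (a) extra pendant structure on the variable gadget that makes a third phase strictly lose contracted edges in the gadget itself — this is exactly why the paper's $H(x_i)$ is a $6$-cycle augmented with the two pendant paths ending at $u_i$ and $\ol u_i$, with the counting in its Claims~2 and~3 showing at most four edges are available and which four they must be — or (b) a decoding argument showing that whenever a contracted clause-star edge points at a variable gadget through the wire, the rigidity constraint along that wire is incompatible with the offending phase (the paper makes precisely this argument after its Claim~3, using the connection edges to force $f_i\notin C$ or $\ol t_i\notin C$). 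With a bare cycle and a bare star you have laid the groundwork for neither, and the ``routine bookkeeping'' you defer is exactly where the reduction is won or lost.
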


Theorem~\ref{thm:weak-planar} implies that \WContraction{} is hard for a general multiplicative tolerance function $\varphi(x)=x/\alpha$, $\alpha\geq 1$, but it leaves open the question whether this is true also for other fixed values of $\alpha$ other than 2 (when $\alpha$ is not part of the input).
The arguments given in this section for $\alpha=2$ carry over straightforwardly to any fixed value $2\leq \alpha<3$, but not to 3 or larger values (for $\alpha<2$ and unit length edges the problem is trivial).

We first characterize the set of feasible solutions in this special case.

\begin{lem}
\label{lem:weak-planar}
Let $G=(V,E)$ be a graph with girth at least 6 and unit length edges $\ell=1$, and consider the tolerance function $\varphi(x)=x/2$.
Furthermore, let $C\subseteq E$ be a set of edges such that $(V,C)$ is disconnected.
Then $C$ is a weak $(2,0)$-contraction if and only if for any two edges $e,f\in C$ either $e$ and $f$ are incident and both contain a degree-1 vertex, or any path containing $e$ and $f$ also contains at least two edges not in $C$.
\end{lem}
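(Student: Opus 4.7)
The plan is to prove the biconditional after first observing a useful simplification: for any pair of incident $C$-edges $e=\{u,v\}$, $f=\{v,w\}$, the path $u$-$v$-$w$ contains both edges and has zero non-$C$ edges, so condition~(b) of the lemma is automatically violated. Hence the lemma effectively requires condition~(a) for all incident pairs of $C$-edges, and condition~(b) for all non-incident pairs.

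For the reverse direction, I will extract structural consequences from the stated condition. Two $C$-edges $e,f$ lying in the same super-vertex $K$ (a connected component of $(V,C)$) are joined by a $C$-path, which contains both $e,f$ with zero non-$C$ edges and so fails~(b); hence the condition forces every such pair to be incident. Combined with girth at least $6$ ruling out $\{a,c\}$-chord triangles, this means $K$ is a star with some center $c_K$, and condition~(a) then forces each leaf of $K$ to be a degree-$1$ vertex of $G$ whenever $K$ has at least two edges. I next claim that no two non-singleton super-vertices can be $G$-adjacent: if $K_1,K_2$ were joined by a non-$C$ edge $g=\{y,z\}$, the degree-$1$ property would force $y,z$ to be the respective centers, and then for any $e\in K_1$, $f\in K_2$ a path through $e$, $g$, and $f$ would have exactly one non-$C$ edge, contradicting~(b). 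With this structure in hand, the feasibility bound follows: for vertices $u,v$ in different super-vertices with $j=\dist_{\ell_C}(u,v)$, a shortest $\ell_C$-path $P^*$ visits $j+1$ super-vertices, at most $\lceil(j+1)/2\rceil$ of which are non-singleton (since adjacent ones cannot both be non-singleton); each non-singleton contributes at most one $C$-edge to $\ell(P^*)$ because its degree-$1$ leaves force $P^*$ to enter and leave through the center. This yields $\dist_\ell(u,v)\leq\ell(P^*)\leq j+\lceil(j+1)/2\rceil\leq 2j$, as required.

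For the forward direction, I argue by contrapositive: assuming some pair $(e,f)\in C$ violates both~(a) and~(b), I construct two vertices witnessing a feasibility violation, using girth at least $6$ to block short shortcuts. When $e=\{u,v\}$, $f=\{v,w\}$ are incident and (say) $u$ is not degree-$1$, pick a neighbor $u'\neq v$ of $u$. If $\{u',u\}\notin C$, then $\dist_{\ell_C}(u',w)\leq 1$, while any shortcut of length at most $2$ from $u'$ to $w$ would close into a cycle of length at most $5$, contradicting the girth; hence $\dist_\ell(u',w)\geq 3$, violating feasibility. If $\{u',u\}\in C$, the super-vertex of $u$ contains the length-$3$ $C$-path $u'$-$u$-$v$-$w$, reducing to the non-incident same-super-vertex case. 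For $(e,f)$ non-incident and sharing a super-vertex $K$, the super-vertex contains a length-$3$ $C$-path $a$-$b$-$c$-$d$; since $G$ is connected and $(V,C)$ is not, some vertex $q\notin K$ has a non-$C$ edge to a vertex $p\in K$, and a suitable target $s$ on the $C$-path at graph distance $3$ from $q$ yields $\dist_{\ell_C}(q,s)=1$ but $\dist_\ell(q,s)\geq 3$ by the girth. Finally, if $e,f$ lie in distinct super-vertices with a path $P$ through them satisfying $|P\setminus C|=1$, the super-vertices are $G$-adjacent via the unique non-$C$ edge of $P$, and endpoints of $e$ and $f$ away from that bridge violate feasibility by the same girth argument.

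The main obstacle will be the forward direction when the violating path $P$ is long: the direct girth argument on the endpoints of $P$ suffices for $|P|=3$, but fails for $|P|\geq 4$ since cycles of length $|P|+2\geq 6$ are not forbidden by the girth. The resolution is to extract from $P$ two adjacent $C$-edges (which must exist when $|P\setminus C|\leq 1$ and $|P|\geq 4$), producing a local length-$3$ configuration where the girth argument applies cleanly.
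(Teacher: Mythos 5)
Your proof is correct in substance but follows a genuinely different and somewhat more laborious route than the paper's. The paper's central device is an auxiliary claim proved directly \emph{from feasibility}: a single short girth argument shows that every component of $(V,C)$ with more than one edge is a star whose non-central vertices have $G$-degree one. Both alternatives of the forward direction then read off immediately from this structure, and the reverse direction is a purely local counting argument on any $u$--$v$ path $P$: when $\dist_{\ell_C}(u,v)>0$, alternative (a) can never apply to two consecutive $C$-edges on $P$ (the required degree-one endpoints would force $P$ to have length two), so (b) yields at least two non-$C$ edges in every gap, and the factor-two bound follows by summing over the gaps. You instead derive the same star-with-degree-one-leaves structure of $C$-components \emph{from the conditions} rather than from feasibility, show that no two non-singleton $C$-components can be $G$-adjacent, and bound the $\ell$-length of a shortest $\ell_C$-path by counting the non-singleton super-vertices it visits; for the forward direction you run a contrapositive with girth arguments split over incidence and component membership of the violating pair. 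Both strategies work, but the paper's local path argument avoids the global super-vertex bookkeeping entirely, and its auxiliary claim handles incident and non-incident pairs uniformly. Your contrapositive also leaves a couple of sub-cases implicit: when $\{u',u\}\notin C$ you assume $\dist_{\ell_C}(u',w)>0$, which fails precisely when $u'$ already lies in the same $C$-component as $u,v,w$ (the girth then forces a long $C$-path inside the component and this must be routed to your non-incident case); and the existence of a vertex $s$ on the length-three $C$-path at $\ell$-distance at least three from $q$ requires a short girth case analysis depending on where $p$ sits relative to that path. Neither gap looks fatal, but the paper's organization makes these issues disappear.
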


Recall that the assumption that $(V,C)$ is disconnected prevents solutions $C\subseteq E$ for which the contracted graph $G/C$ is a single vertex.
Note that Lemma~\ref{lem:weak-planar} does not require $G$ to be planar.

\begin{proof}
To prove the equivalence, we need the following auxiliary claim:

\underline{Claim:}
If $C$ is a weak $(2,0)$-contraction, then every component of $(V,C)$ that is not a single edge is a star with the property that each of its vertices except the center of the star has degree~1 in $G$.

\underline{Proof of Claim:}
Let $M$ be a component of $(V,C)$ with more than one edge.
Clearly, there must be an edge $\{u,v\}$ with vertices $u\notin V(M)$ and $v\in V(M)$.
If $M$ contains a path $P$ on two edges starting at $v$ and ending at some vertex $w$, then $\dist_\ell(u,w)=3$ and $\dist_{\ell_C}(u,w)=1$, a contradiction to the assumption that $C$ is a weak $(2,0)$-contraction (note that $P\cup \{u,v\}$ is the shortest path between $u$ and $w$, as the girth of $G$ is at least 6).
Thus the edges of $M$ must form a star centered at $v$.
By the same argument, no vertex outside $M$ can be connected to any vertex of $M$ other than $v$.
This proves the claim.
\qedclaim

We first assume that $C$ is a weak $(2,0)$-contraction, and we need to show that any two edges $e,f\in C$ satisfy the conditions of the lemma.
If $e$ and $f$ are incident, the statement follows from the auxiliary claim from before.
If $e$ and $f$ are not incident, we consider an inclusion-minimal path $P$ containing both $e$ and $f$.
We let $u$ and $v$ be the end vertices of $P$, $u'$ the other end vertex of $e$, and $v'$ the other end vertex of $f$ ($u'$ and $v'$ are the vertices at distance 1 from the ends of the path).
If the distance between $u'$ and $v'$ was only 1, we have $\dist_\ell(u,v)=3$ and $\dist_{\ell_C}(u,v)=1$ (here we need again the assumption that the girth is at least 6), a contradiction to the assumption that $C$ is a weak $(2,0)$-contraction.
Therefore at least two edges lie between $u'$ and $v'$.
The auxiliary claim from before implies that no two incident edges on $P$ between $u$ and $v$ are contained in $C$, therefore $P$ must contain at least two edges not in $C$.
This proves one direction of the equivalence.

To prove the other direction, we now assume that any two edges $e,f$ satisfy the conditions of the lemma, and we need to show that $C$ is a weak $(2,0)$-contraction.
Consider any two vertices $u$ and $v$ with $\dist_{\ell_C}(u,v)>0$, and any path between $u$ and $v$.
As no inner vertex of $P$ is a leaf, we know that between any two consecutive edges from $C$ on $P$ there are at least 2 edges not in $C$.
This proves that $\dist_{\ell_C}(u,v)\geq \dist_\ell(u,v)/2$, as desired.

This completes the proof of the lemma.
\end{proof}

For a given propositional formula $F$ in conjunctive normal form (CNF) the bipartite \emph{variable-clause graph} $\Gamma(F)$ is defined as follows:
The two partition classes of $\Gamma(F)$ are given by the sets of variables and clauses of $F$, and there is an edge between a variable $x$ and a clause $c$ if $x$ appears in $c$.
If $c$ contains $x$ as a positive or negative literal, we call the corresponding edge of $\Gamma(F)$ a \emph{positive or negative edge}, respectively.
A planar drawing of $\Gamma(F)$, where positive and negative edges appear in cyclically contiguous intervals around every variable vertex, is called \emph{contiguous}.

We call a $k$-CNF formula \emph{regular}, if every clause contains exactly $k$ literals, no clause contains a literal twice, every variable appears at least once as a positive literal and at least once as a negative literal in the formula.

Consider now the following variant of \textsc{3SAT}.

\begin{problem}{\CPSAT{}}
Input: & A regular 3-CNF formula $F$ and a contiguous planar drawing of $\Gamma(F)$. \\
Output: & `Yes', if $F$ has a satisfying assignment, `No' otherwise. \\
\end{problem}

\begin{lem}
\label{lem:cpsat}
\CPSAT{} is NP-complete.
\end{lem}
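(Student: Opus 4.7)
The plan is to reduce from \textsc{Planar 3SAT}, which is NP-complete by Lichtenstein's theorem. Membership in NP is immediate since a satisfying truth assignment is a polynomial-size certificate, so the focus is on hardness.

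Given a planar 3-CNF formula $F_0$, I would first bring it into regular form without changing satisfiability or destroying planarity of $\Gamma(F_0)$. Any variable occurring with only one polarity can be set to that polarity and eliminated; tautological clauses (containing both a literal and its negation) are discarded; clauses with fewer than three literals are padded via the standard equivalence $(a\vee b)\equiv(a\vee b\vee z)\wedge(a\vee b\vee\bar z)$ with a fresh variable~$z$. The padding leaves $z$ occurring once positively and once negatively and can be performed inside the face incident to the original $2$-clause, so planarity is preserved.

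Next I would enforce contiguity by the classical \emph{variable-stretching} technique: replace each variable~$x$ with a cycle $x_1,\ldots,x_k$ of fresh copies, linked by equivalence gadgets forcing $x_1=\cdots=x_k$. Each equivalence $x_i=x_{i+1}$ is expressed as $(\bar x_i\vee x_{i+1})\wedge(x_i\vee\bar x_{i+1})$ and padded to three literals as above. The cyclic order of $x_1,\ldots,x_k$ along the new cycle mirrors the cyclic order of the original edges at the vertex of~$x$ in the planar drawing, and the original occurrences of~$x$ are re-routed to distinct $x_i$ in such a way that all positive occurrences lie on one contiguous arc of the cycle and all negative occurrences on the complementary arc. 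The entire replacement sits inside a small disk around the former vertex of~$x$, so planarity is preserved, and by the symmetry of the equivalence gadgets the edges at each $x_i$ split into two contiguous polarity arcs, giving the desired contiguous planar drawing of $\Gamma(F)$.

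The main obstacle is the careful choreography of the stretching and padding gadgets so that regularity (exactly three distinct literals per clause, both polarities per variable) and contiguity are simultaneously ensured at every vertex, including the auxiliary variables introduced by padding. A straightforward but tedious verification then shows that the construction has polynomial size and that $F_0$ is satisfiable if and only if the resulting regular contiguous planar formula $F$ is satisfiable, completing the reduction and establishing the NP-hardness of \CPSAT{}.
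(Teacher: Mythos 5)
Your reduction starts from ordinary planar 3SAT (Lichtenstein) and attempts to manufacture \emph{both} regularity and contiguity from scratch. The paper takes a much shorter route: it begins from the already-established NP-completeness of the non-regular version of \CPSAT{} (de~Berg and Khosravi~\cite{DBLP:journals/ijcga/BergK12}), where a contiguous planar drawing of $\Gamma(F)$ is part of the input, and then only needs to repair regularity (drop one-polarity variables, merge repeated literals, pad $2$-clauses via $(c\vee x)\wedge(c\vee\ol{x})$, as in Figure~\ref{fig:exact-3}). So you are re-proving the hardest part of the cited result rather than invoking it.

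There is also a genuine gap in the step where you build contiguity by variable stretching. With the symmetric equivalence gadget $(\ol{x_i}\vee x_{i+1})\wedge(x_i\vee\ol{x_{i+1}})$, each cycle vertex $x_i$ receives two positive and two negative gadget edges, two from each neighbour, and in a natural planar embedding these come in the cyclic order $+,-$ (toward $x_{i-1}$), $-,+$ (toward $x_{i+1}$). Inserting the re-routed original occurrence at an arbitrary position can then produce an alternating pattern such as $+,-,+,-,+$, which is \emph{not} contiguous, and your proposal asserts contiguity ``by the symmetry of the equivalence gadgets'' without handling this. One can fix it, e.g., by using a one-directional implication cycle $(\ol{x_i}\vee x_{i+1})$ so that each $x_i$ has exactly one gadget-positive and one gadget-negative edge, and by choosing on which side of the cycle to route the original occurrence based on its polarity; but then you must also carry contiguity through the subsequent padding of the $2$-literal implication clauses, and verify it at the padding variables too. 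None of this is insurmountable, but it needs to be made explicit rather than left to ``tedious verification,'' since it is exactly the point where a careless embedding breaks the construction.

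Your handling of regularity (eliminating one-polarity variables, discarding tautologies, padding 2-clauses) is correct and essentially matches the paper's treatment of that part.
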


\begin{proof}
The more general variant of \CPSAT{} not requiring $F$ to be regular was shown to be NP-complete in \cite{DBLP:journals/ijcga/BergK12}.
We now show how to reduce this generalization to \CPSAT{}, which will prove the lemma.
Given a (not necessarily regular) 3-CNF formula $F$ we first eliminate all variables appearing only as negative or only as positive literals and all clauses containing exactly one literal, as well as multiple appearances of literals in the same clause.
This yields a formula $F'$ in which all clauses have two or three literals, no clause contains a literal twice, and every variable appears at least once as a positive literal and at least once as a negative literal in $F'$.
Moreover, since $\Gamma(F')$ is a subgraph of $\Gamma(F)$, we also obtain a contiguous planar drawing of $\Gamma(F')$.
As a last step we eliminate clauses $c$ with two literals by introducing a new variable $x$ for each of them and replacing $c$ by the equivalent formula $(c \lor x) \land (c \lor \ol{x})$.
It is easy to check that the resulting formula $F''$ is regular and equisatisfiable to $F$, and to obtain a contiguous planar drawing of $\Gamma(F'')$, see Figure~\ref{fig:exact-3}.

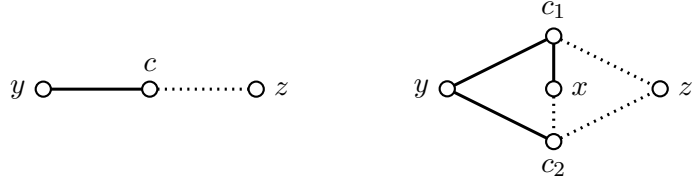
\begin{figure}[ht]
\centering
\begin{tikzpicture}[scale=0.7]
\tikzstyle{EdgeStyle}=[edge_default]
\SetVertexNoLabel

\tikzstyle{VertexStyle}=[vertex_default]
\Vertex[x=0,y=1,style={label={left:$y$}}]{v0}
\Vertex[x=2,y=1,style={label={above:$c$}}]{v1}
\Vertex[x=4,y=1,style={label={right:$z$}}]{v2}

\Edge(v0)(v1)
\tikzstyle{EdgeStyle}=[edge_dotted]
\Edge(v1)(v2)

\node at (0,-0.65) {};

\end{tikzpicture}
\hspace{1cm}
\begin{tikzpicture}[scale=0.7]
\tikzstyle{EdgeStyle}=[edge_default]
\SetVertexNoLabel

\tikzstyle{VertexStyle}=[vertex_default]
\Vertex[x=0,y=1,style={label={left:$y$}}]{v0}
\Vertex[x=2,y=2,style={label={above:$c_1$}}]{v1}
\Vertex[x=2,y=1,style={label={right:$x$}}]{v2}
\Vertex[x=2,y=0,style={label={below:$c_2$}}]{v3}
\Vertex[x=4,y=1,style={label={right:$z$}}]{v4}

\tikzstyle{EdgeStyle}=[edge_default]
\Edge(v0)(v1)
\Edge(v0)(v3)
\Edge(v2)(v1)

\tikzstyle{EdgeStyle}=[edge_dotted]
\Edge(v1)(v4)
\Edge(v3)(v4)
\Edge(v2)(v3)

\end{tikzpicture}
\caption{The clause $c = (y \lor \ol{z})$ is replaced by $c_1 \land c_2 = (c \lor x) \land (c \lor \ol{x})$, preserving the contiguous planar drawing of the variable-clause graph.
The positive and negative edges are drawn as solid and dotted edges, respectively.}
\label{fig:exact-3}
\end{figure}
\end{proof}

\begin{proof}[Proof of Theorem~\ref{thm:weak-planar}]
We first present the proof for the case $g=2$, and then sketch how to generalize it for larger values of $g$.

We reduce \CPSAT{} to \WContraction{}.
Consider an instance $F$ of \CPSAT{} with variables $x_1,x_2, \ldots, x_n$ and clauses $c_1,c_2,\linebreak \ldots, c_m$.

Given the formula $F$, we construct from it a graph $G=G(F)$ as follows, see Figures~\ref{fig:gadgets} and~\ref{fig:sat}.
For every variable $x_i$, $i\in[n]$, we add a variable gadget $H(x_i)$ as shown on the left hand side of Figure~\ref{fig:gadgets} to the graph $G$.
The vertices $u_i$ and $\ol{u}_i$ will be used later to connect this gadget to other parts of the graph.
The idea of the variable gadget is that an optimal solution of our instance of \WContraction{} should contain either the four edges $T_i:=\{t_i,\ol{t}_i,t_i',t_i''\}$ or the four edges $F_i:=\{f_i,\ol{f}_i,f_i',f_i''\}$, corresponding to setting $x_i$ to \ttrue{} or \tfalse{}, respectively.

For every clause $c_j$, $j\in [m]$, we add a clause gadget $H(c_j)$ (a star with three edges) as shown on the right hand side of Figure~\ref{fig:gadgets} to the graph $G$.
The vertices $v_j^1$, $v_j^2$ and $v_j^3$ will be used later to connect this gadget to other parts of the graph.
The idea of the clause gadget is that a feasible solution contains at most one of these three edges, and if it does contain one of them, this restricts the choice we have inside the respective neighbouring variable gadget.

\begin{figure}[ht]
\centering
\begin{tikzpicture}[scale=0.7]
\tikzstyle{EdgeStyle}=[edge_default]
\SetVertexNoLabel

\tikzstyle{VertexStyle}=[vertex_default]
\Vertex[x=0,y=0]{v0}
\Vertex[x=0,y=2]{v1}
\Vertex[x=2,y=3]{v2}
\Vertex[x=4,y=2]{v3}
\Vertex[x=4,y=0]{v4}
\Vertex[x=2,y=-1]{v5}

\Vertex[x=6,y=2]{v6}
\Vertex[x=8,y=2]{v7}

\Vertex[x=6,y=0]{v9}
\Vertex[x=8,y=0]{v10}

\tikzstyle{VertexStyle}=[vertex_bullet]
\Vertex[x=10,y=2,style={label={right:$u_i$}}]{v8}
\tikzstyle{VertexStyle}=[vertex_square]
\Vertex[x=10,y=0,style={label={right:$\ol{u}_i$}}]{v11}

\Edge[label={$h_i''$},labelstyle={left}](v0)(v1)
\Edge[label={$t_i''$},labelstyle={above left}](v1)(v2)
\Edge[label={$f_i'$},labelstyle={above right}](v2)(v3)
\Edge[label={$h_i'$},labelstyle={right}](v3)(v4)
\Edge[label={$t_i'$},labelstyle={below right}](v4)(v5)
\Edge[label={$f_i''$},labelstyle={below left}](v5)(v0)

\Edge[label={$h_i$},labelstyle={above}](v3)(v6)
\Edge[label={$t_i$},labelstyle={above}](v6)(v7)
\Edge[label={$f_i$},labelstyle={above}](v7)(v8)

\Edge[label={$\ol{h}_i$},labelstyle={below}](v4)(v9)
\Edge[label={$\ol{f}_i$},labelstyle={below}](v9)(v10)
\Edge[label={$\ol{t}_i$},labelstyle={below}](v10)(v11)

\node at (-1,3){$H(x_i)$};

\end{tikzpicture}
\begin{tikzpicture}[scale=0.7]
\tikzstyle{EdgeStyle}=[edge_default]
\SetVertexNoLabel

\tikzstyle{VertexStyle}=[vertex_default]
\Vertex[x=0,y=0]{v0}
\tikzstyle{VertexStyle}=[vertex_bullet]
\Vertex[x=0,y=2,style={label={right:$v_j^1$}}]{v1}
\Vertex[x=2,y=0,style={label={right:$v_j^2$}}]{v2}
\Vertex[x=0,y=-2,style={label={right:$v_j^3$}}]{v3}

\Edge[label={$e_j^1$},labelstyle={right}](v0)(v1)
\Edge[label={$e_j^2$},labelstyle={above,xshift=3}](v0)(v2)
\Edge[label={$e_j^3$},labelstyle={right}](v0)(v3)

\node at (-1.5,2){$H(c_j)$};

\end{tikzpicture}
\caption{Variable gadget (left) and clause gadget (right) used in the proof of Theorem~\ref{thm:weak-planar} for $g=2$.
The vertices used for connecting those gadgets to the rest of the graph are marked in black.}
\label{fig:gadgets}
\end{figure}
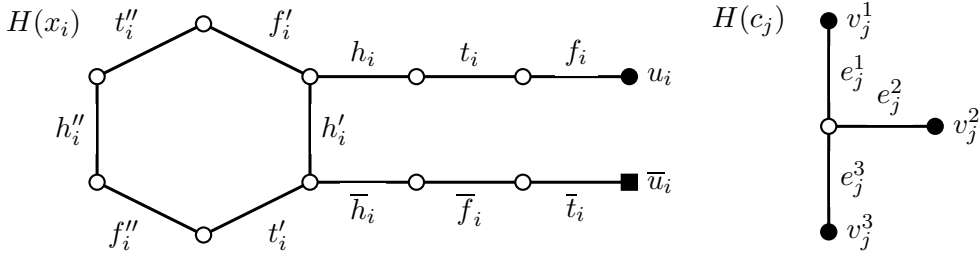

We connect the variable and clause gadgets in $G$ as follows (see Figure~\ref{fig:sat}):
For every $j\in[m]$ and $k\in[3]$, if the $k$-th literal in the clause $c_j$ is $x_i$, we add an edge connecting $u_i$ to $v_j^k$, and if the $k$-th literal in the clause $c_j$ is $\ol{x_i}$, we add an edge connecting $\ol{u}_i$ to $v_j^k$.
We refer to the edges added to $G$ in this step as \emph{connection edges}.

\begin{figure}[ht]
\centering
\begin{tikzpicture}[scale=0.8]
\tikzstyle{VertexStyle}=[vertex_default]
\SetVertexNoLabel

\tikzstyle{VertexStyle}=[vertex_default]
\Vertex[style={label={above:$x_1$}},x=6,y=4]{x1}
\Vertex[style={label={left:$c_1$}},x=0,y=2]{c1}
\Vertex[style={label={below:$x_4$}},x=2,y=2]{x4}
\Vertex[style={label={right:$c_2$}},x=4,y=2]{c2}
\Vertex[style={label={right:$c_3$}},x=6,y=2]{c3}
\Vertex[style={label={left:$c_4$}},x=8,y=2]{c4}
\Vertex[style={label={below:$x_5$}},x=10,y=2]{x5}
\Vertex[style={label={right:$c_5$}},x=12,y=2]{c5}
\Vertex[style={label={below:$x_3$}},x=4,y=0]{x3}
\Vertex[style={label={below:$x_2$}},x=8,y=0]{x2}

\tikzstyle{EdgeStyle}=[edge_default]
\Edge(x1)(c4)
\Edge(x1)(c5)
\Edge(x2)(c3)
\Edge(x3)(c1)
\Edge(x3)(c2)
\Edge(x4)(c1)
\Edge(x5)(c4)

\tikzstyle{EdgeStyle}=[edge_dotted]
\Edge(x1)(c1)
\Edge(x1)(c2)
\Edge(x1)(c3)
\Edge(x2)(c4)
\Edge(x2)(c5)
\Edge(x3)(c3)
\Edge(x4)(c2)
\Edge(x5)(c5)

\node at (0,4) {\Large $\Gamma(F)$};

\end{tikzpicture} \\
\begin{tikzpicture}[scale=1.2]
\SetVertexNoLabel

\newcounter{i}\setcounter{i}{1}
\foreach \x/\y/\bt/\lr in {5.8/4/0/1, 2/2/0/0, 9.6/2/0/0, 3.8/0/1/0, 7.8/0/1/0}
{
  \ifthenelse{\lr=0}{\tikzstyle{VertexStyle}=[vertex_bullet]}{\tikzstyle{VertexStyle}=[vertex_square]}
  \ifthenelse{\bt=0}{\def\m{1}}{\def\m{-1}}
  
  \Vertex[x=0+\x,y=0.0*\m+\y]{z\thei1}
  \tikzstyle{VertexStyle}=[vertex_default]
  \Vertex[x=0+\x,y=0.4*\m+\y]{x\thei2}
  \Vertex[x=0+\x,y=0.8*\m+\y]{x\thei3}
  \Vertex[x=0+\x,y=1.2*\m+\y]{x\thei4}
  \Vertex[x=-0.2+\x,y=1.6*\m+\y]{x\thei5}
  \Vertex[x=0+\x,y=2.0*\m+\y]{x\thei6}
  \Vertex[x=0.4+\x,y=2.0*\m+\y]{x\thei7}
  \Vertex[x=0.6+\x,y=1.6*\m+\y]{x\thei8}
  \Vertex[x=0.4+\x,y=1.2*\m+\y]{x\thei9}
  \Vertex[x=0.4+\x,y=0.8*\m+\y]{x\thei10}
  \Vertex[x=0.4+\x,y=0.4*\m+\y]{x\thei11}
  \ifthenelse{\lr=0}{\tikzstyle{VertexStyle}=[vertex_square]}{\tikzstyle{VertexStyle}=[vertex_bullet]}
  \Vertex[x=0.4+\x,y=0.0*\m+\y]{z\thei2}

  \ifthenelse{\lr=0}{\tikzstyle{EdgeStyle}=[edge_default]}{\tikzstyle{EdgeStyle}=[edge_in_c]}
  \Edge(z\thei1)(x\thei2)
  \Edge(x\thei4)(x\thei5)
  \Edge(x\thei7)(x\thei8)
  \Edge(x\thei10)(x\thei11)

  \ifthenelse{\lr=0}{\tikzstyle{EdgeStyle}=[edge_in_c]}{\tikzstyle{EdgeStyle}=[edge_default]}
  \Edge(x\thei2)(x\thei3)
  \Edge(x\thei5)(x\thei6)
  \Edge(x\thei8)(x\thei9)
  \Edge(x\thei11)(z\thei2)

  \tikzstyle{EdgeStyle}=[edge_default]
  \Edge(x\thei3)(x\thei4)
  \Edge(x\thei6)(x\thei7)
  \Edge(x\thei9)(x\thei10)
  \Edge(x\thei4)(x\thei9)

  \stepcounter{i}
}

\newcounter{j}\setcounter{j}{1}
\foreach \x/\y/\turn/\c in {0.4/2/0/3, 4/2/2/3, 6/2/3/2, 8/2/0/1, 11.6/2/2/1}
{
  \tikzstyle{VertexStyle}=[vertex_default]
  \Vertex[x=0+\x,y=0+\y]{y\thej0}
  \tikzstyle{VertexStyle}=[vertex_bullet]
  \ifthenelse{\not \(\turn=1\)}{\Vertex[x=0+\x,y=0.4+\y]{y\thej1}}{}
  \ifthenelse{\not \(\turn=2\)}{\Vertex[x=0.4+\x,y=0+\y]{y\thej2}}{}
  \ifthenelse{\not \(\turn=3\)}{\Vertex[x=0+\x,y=-0.4+\y]{y\thej3}}{}
  \ifthenelse{\not \(\turn=0\)}{\Vertex[x=-0.4+\x,y=0+\y]{y\thej4}}{}

  \ifthenelse{\c=1}{\tikzstyle{EdgeStyle}=[edge_in_c]}{\tikzstyle{EdgeStyle}=[edge_default]}
  \ifthenelse{\not \(\turn=1\)}{\Edge(y\thej0)(y\thej1)}{}
  \ifthenelse{\c=2}{\tikzstyle{EdgeStyle}=[edge_in_c]}{\tikzstyle{EdgeStyle}=[edge_default]}
  \ifthenelse{\not \(\turn=2\)}{\Edge(y\thej0)(y\thej2)}{}
  \ifthenelse{\c=3}{\tikzstyle{EdgeStyle}=[edge_in_c]}{\tikzstyle{EdgeStyle}=[edge_default]}
  \ifthenelse{\not \(\turn=3\)}{\Edge(y\thej0)(y\thej3)}{}
  \ifthenelse{\c=0}{\tikzstyle{EdgeStyle}=[edge_in_c]}{\tikzstyle{EdgeStyle}=[edge_default]}
  \ifthenelse{\not \(\turn=0\)}{\Edge(y\thej0)(y\thej4)}{}
  
  \stepcounter{j}
}

\tikzstyle{EdgeStyle}=[edge_default]
\Edge[style={bend right=60}](z11)(y11)
\Edge(z11)(y21)
\Edge(z11)(y31)
\Edge(z12)(y41)
\Edge[style={bend left=60}](z12)(y51)
\Edge(z21)(y12)
\Edge(z22)(y24)
\Edge(z31)(y42)
\Edge(z32)(y54)
\Edge(z41)(y13)
\Edge(z41)(y23)
\Edge(z42)(y34)
\Edge(z51)(y32)
\Edge(z52)(y43)
\Edge(z52)(y53)

\node at (4.8,5.5) {$H(x_1)$};
\node at (7,-1) {$H(x_2)$};
\node at (3,-1) {$H(x_3)$};
\node at (3.3,3) {$H(x_4)$};
\node at (8.8,3) {$H(x_5)$};
\node at (0.7,0.85) {$H(c_1)$};
\node at (4.65,2) {$H(c_2)$};
\node at (6,1.5) {$H(c_3)$};
\node at (7.35,2) {$H(c_4)$};
\node at (11.4,0.85) {$H(c_5)$};

\node at (1,5) {\Large $G(F)$};

\end{tikzpicture}
\caption{The graphs $\Gamma(F)$ (top) and $G=G(F)$ (bottom) constructed in the proof of Theorem~\ref{thm:weak-planar} for $g=2$ and the formula $F=c_1\land c_2\land c_3\land c_4\land c_5$ with clauses $c_1 := (\ol{x_1} \lor x_3\lor x_4)$, $c_2 := (\ol{x_1} \lor x_3\lor \ol{x_4})$, $c_3 := (\ol{x_1} \lor x_2 \lor \ol{x_3})$, $c_4 := (x_1 \lor \ol{x_2}\lor x_5)$ and $c_5 := (x_1 \lor \ol{x_2}\lor \ol{x_5})$.
The positive and negative edges of $\Gamma(F)$ are drawn as solid and dotted edges, respectively.
The set of dashed edges in $G(F)$ is the set $C(\tau)$ for the variable assignment $\tau$ that sets all variables to \ttrue{}.}
\label{fig:sat}
\end{figure}

This completes the definition of the graph $G=G(F)$.
It is easy to see that this graph is planar.
Specifically, a planar embedding can be obtained from the given planar embedding of $\Gamma(F)$ by replacing variable vertices $x_i$ in $\Gamma(F)$ by the variable gadgets $H(x_i)$ in $G$, and by replacing clause vertices $c_j$ by the clause gadgets $H(c_j)$.
Using that for each variable vertex $x_i$ in $\Gamma(F)$ the positive and negative edges appear in cyclically contiguous intervals around $x_i$, the connection edges in $G$ (that connect the variable and clause gadgets) can also be drawn in a planar fashion.

Moreover, it is easy to check that $G$ has girth~6 and no degree-1 vertices.

Now consider the instance $\cI:=(G,\ell,\varphi)$ of the problem \WContraction{} with $\ell=1$ (unit length edges) and the tolerance function $\varphi(x)=x/2$.

Lemma~\ref{lem:weak-planar} implies that any feasible solution of $\cI$ is a matching, as $G$ has no vertices of degree 1. As $G$ contains no cycles of length 3 or 4, it cannot contain two edges between vertex sets of two different components of $(V, C)$ for any such feasible solution $C$.
This implies that our objective function satisfies $\Phi(C)=|C|$.

We proceed to show that $F$ is satisfiable if and only if $\cI$ has an optimal solution of cardinality (and thus of value) $4n+m$.
Specifically, a satisfying assignment of $F$ corresponds to a solution that contains exactly all edges of either $T_i$ or $F_i$ in $H(x_i)$ for every variable $i\in[n]$ (corresponding to the value \ttrue{} or \tfalse{} assigned to this variable, respectively) and exactly one edge in $H(c_j)$ for each clause $j\in[m]$ (corresponding to a literal that satisfies this clause).

Formally, for any variable assignment $\tau\colon \{x_1,x_2,\ldots,x_n\}\to\{\ttrue,\tfalse\}$, we define the set of edges $C(\tau) \subseteq E(G)$ as follows:
$C(\tau)$ contains all edges of $T_i$ for any variable $x_i$, $i\in[n]$, that $\tau$ sets to \ttrue{}, and it contains all edges of $F_i$ for any variable $x_i$ that $\tau$ sets to \tfalse{}.
Moreover, for every clause $c_j$, $j\in[m]$, that is satisfied by $\tau$, we choose an index $k\in [3]$ of a literal in $c_j$ that is satisfied by $\tau$ and add the edge $e_j^k$ to $C(\tau)$.

The following claim is an immediate consequence of Lemma~\ref{lem:weak-planar}.

\underline{Claim~1:}
Any subset $C\subseteq E(G)$ is a feasible solution if and only if every path containing two edges from $C$ also contains at least two edges not in $C$.

By Claim~1, for every variable assignment $\tau$ of $F$, the set $C(\tau)$ is a feasible solution of $\cI$.
In particular, if $\tau$ satisfies $F$, then $C(\tau)$ is a feasible solution of size $4n+m$.
The remainder of the proof is devoted to showing the converse, i.e., if $C\subseteq E(G)$ is a feasible solution of size $4n+m$, then $F$ is satisfiable.

For all $i\in[n]$ we let $H(x_i)^+$ denote the subgraph of $G$ induced by all edges of $H(x_i)$ and all connection edges incident to either $u_i$ or $\ol{u}_i$.

\underline{Claim~2:}
For any $j\in[m]$, $C$ contains at most one edge from $H(c_j)$.
For any $i\in[n]$, $C$ contains at most four edges from $H(x_i)^+$.
Moreover, if $C$ contains one of the connection edges incident to $u_i$ or $\ol{u}_i$ for some $i\in[n]$, it does not contain any edges from the gadget $H(c_j)$ that is connected to $H(x_i)$ via this edge.

\underline{Proof of Claim~2:}
The first and last statement are immediate consequences of Claim~1.
The argument for the second statement is as follows:
For all $i\in[n]$ we let $E_i$ denote the set of edges $\{h_i,t_i,f_i\}$ plus the connection edges incident to $u_i$, and we let $\ol{E}_i$ denote the set of edges $\{\ol{h}_i,\ol{t}_i,\ol{f}_i\}$ plus the connection edges incident to $\ol{u}_i$.
By Claim~1, $C$ contains at most two edges from $E_i$, and if the intersection size is two, then $C$ must contain the edge $h_i$.
Similarly, $C$ contains at most two edges from $\ol{E}_i$, and if the intersection size is two, then $C$ must contain the edge $\ol{h}_i$.
As $C$ cannot contain $h_i$ and $\ol{h}_i$ simultaneously, $C$ contains at most three edges from $E_i\cup\ol{E}_i$, and if the intersection size is three, then $C$ must contain either $h_i$ or $\ol{h}_i$.
Again by Claim~1, $C$ contains at most two edges from the 6-cycle $\{f_i',h_i',t_i',f_i'',h_i'',t_i''\}$.
However, if $C$ contains one of the edges $h_i$ or $\ol{h}_i$, it contains at most one edge from this 6-cycle.
This proves that $C$ indeed contains at most four edges from $H(x_i)^+$.
\qedclaim

Note that every edge of $G$ belongs to exactly one subgraph $H(x_i)^+$ or $H(c_j)$.
So if $|C|=4n+m$, we know by Claim~2 that $C$ contains exactly four edges from $H(x_i)$ for all $i\in[n]$ and exactly one edge from $H(c_j)$ for all $j\in[m]$, and none of the connection edges in $G$.

\underline{Claim~3:}
For any $i\in[n]$, if $C$ contains four edges from $H(x_i)$ and if $f_i$ is not among them, then those edges must be $T_i$.
On the other hand, if $\ol{t}_i$ is not among them, those edges must be $F_i$.
In particular, these two cases cannot occur simultaneously.

\underline{Proof of Claim~3:}
If $C$ contains four edges from $H(x_i)$ and $f_i$ is not among them, Claim~1 enforces taking first the edge $t_i$, then $t_i'$ and $t_i''$, and eventually $\ol{t}_i$.
This proves the first part of the statement.
The argument for the second part is symmetric.
The third part of the statement is a consequence of the first two.
\qedclaim

So given a solution $C$ of $\cI$ of size $4n+m$, we can derive from it a satisfying assignment $\tau$ of $F$ as follows:
For every clause $c_j$, $j\in[m]$, we consider the unique edge $e_j^k$ from $H(c_j)$ that belongs to $C$.
We follow the attachment edge incident to $e_j^k$, leading to the corresponding variable gadget $H(x_i)$, and connecting to either $u_i$ or $\ol{u}_i$.
If the attachment edge connects to $u_i$, then by Claim~1, $f_i\notin C$, so by Claim~3, the four edges of $H(x_i)$ contained in $C$ must be $T_i$, so we define $\tau(x_i):=\ttrue{}$.
If the attachment edge connects to $\ol{u}_i$, then by Claim~1, $\ol{t}_i\notin C$, so by Claim~3, the four edges of $H(x_i)$ contained in $C$ must be $F_i$, so we define $\tau(x_i):=\tfalse{}$.
This process does not lead to any contradicting variable assignments by the last statement of Claim~3.
However, this process may leave some variables $x_i$ undefined, and we can set them arbitrarily, e.g., $\tau(x_i):=\ttrue{}$.
By construction, each clause receives a satisfying literal, so the assignment $\tau$ is indeed a satisfying assignment of $F$.

This proves that $F$ is satisfiable if and only if $\cI$ has a feasible solution of size $4n+m$ (which must be optimal by Claim~2), completing the proof of the theorem in the case $g=2$.

For values $g\geq 2$, the construction of the gadgets $H(x_i)$ and $H(c_j)$ can be generalized as follows:
We subdivide each of the edges $h_i,\ol{h}_i$ and $h_i''$, and each of the edges $e_j^1$, $e_j^2$ and $e_j^3$ into $1+3(g-2)$ edges.
Then the resulting graph $G=G(F)$ clearly has girth $3g$, and the above arguments can be easily modified to show that any solution $C$ of $\cI$ contains at most $1+3(g-2)=3g-5$ edges from $H(c_j)$ for all $j\in[m]$, and at most $4+3(g-2)=3g-2$ edges from $H(x_i)^+$ for all $i\in[n]$, and that $F$ is satisfiable if and only if $\cI$ has an optimal solution of size $(3g-2)n+(3g-5)m$.
This completes the proof.
\end{proof}

\subsection{Inapproximability of \WContraction{}}
\label{sec:inapx-wcontraction}

We are able to further extend our hardness results for \WContraction{} as follows:

\begin{thm}
\label{thm:inapx-weak}
For any $\varepsilon > 0$, it is NP-hard to approximate the problem \WContraction{} with tolerance function $\varphi(x)=2x/3$ to within a factor of $n^{1-\varepsilon}$.
\end{thm}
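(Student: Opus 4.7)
The plan is to reduce from \IndSet{}, which by Zuckerman's theorem is NP-hard to approximate within $n^{1-\varepsilon}$ for any $\varepsilon>0$. By standard preprocessing we may assume the input graph $G=(V,E)$ has minimum degree at least~$2$ and is \emph{prime} (i.e., has no non-trivial module), as \IndSet{} remains hard under such restrictions.

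Given such $G$ on $n$ vertices, fix $t\in(0,\varepsilon/(1-\varepsilon))$ and set $K:=\lceil n^t\rceil$. I construct the weighted graph $H$ with vertex set $V\cup\{p_{v,i} : v\in V,\, i\in[K]\}$; each edge of $G$ becomes an $H$-edge of length $3$, and each $\{v,p_{v,i}\}$ is a pendant edge of length $1$. For any independent set $I\subseteq V$, let $C(I) := \{\{v,p_{v,i}\} : v\in I,\, i\in[K]\}$. A short distance check shows $C(I)$ is a feasible weak $(3/2,0)$-contraction with $\Phi(C(I)) = K|I|$: the only distance ratios that could violate $2/3$ involve pendants of distinct $u,v\in I$, where $\dist_{\ell_C}/\dist_\ell = \dist_\ell(u,v)/(\dist_\ell(u,v)+2) \geq 3/4$ since $\dist_\ell(u,v) \geq 6$ by non-adjacency in $G$.

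The main task is the matching upper bound $\Phi(C) \leq K\alpha(G)$ for every feasible weak $(3/2,0)$-contraction $C$. I establish three structural claims:
\textbf{(a)} no $V$-$V$ edge of $H$ lies in $C$;
\textbf{(b)} for each $v\in V$, either all or none of the pendant edges $\{v,p_{v,i}\}$ are in $C$;
\textbf{(c)} the activated set $I_C := \{v\in V : \{v,p_{v,1}\}\in C\}$ is an independent set in $G$.
Claims (b) and (c) follow from direct ratio violations: a partial pendant contraction gives $\dist_{\ell_C}(p_{v,i},p_{v,j}) = 1 < \tfrac{2}{3}\cdot 2$; and adjacent $u,v\in I_C$ lying in distinct super-vertices give $\dist_{\ell_C}(p_{u,i},p_{v,j}) \leq 3 < \tfrac{2}{3}\cdot 5$. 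The main obstacle is (a), which uses both the minimum-degree and primality hypotheses: assuming some $V$-$V$ edge belongs to $C$ and letting $X$ be its super-vertex and $S := X\cap V$, iterating the pendant-distance argument shows $|S|\geq 2$, forces every pendant of every $v\in S$ into $X$, and for every external $G$-neighbour $w$ of any $v\in S$ (which exists since $\deg_G(v)\geq 2$) forces $w$ to be adjacent in $G$ to every vertex of $S$. Since $S\subsetneq V$ (otherwise $X=V(H)$, contradicting weakness), $S$ is a non-trivial module of $G$, contradicting primality. Combining (a)--(c) yields $\Phi(C) = K|I_C|\leq K\alpha(G)$.

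The inapproximability transfers as follows. Since $n_H = n(1+K) \leq 2n^{t+1}$, any $n_H^{1-\varepsilon}$-approximation algorithm for \WContraction{} applied to $H$ produces a feasible $C$ with $|I_C| = \Phi(C)/K \geq \alpha(G)/(2n^{t+1})^{1-\varepsilon}$. The choice $t := \varepsilon/(2(1-\varepsilon))$ gives $(t+1)(1-\varepsilon) = 1-\varepsilon/2 < 1$, so this yields an $n^{1-\varepsilon/2}$-approximation for \IndSet{} on $G$, contradicting Zuckerman's theorem.
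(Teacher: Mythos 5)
Your reduction is from \IndSet{} via pendant-attachment, which is the right starting point, and your distance calculations for the constructed solution $C(I)$ and for Claims (b) and (c) are correct. However, Claim (a) is the load-bearing step, and it rests entirely on the asserted preprocessing that the \IndSet{} instance can be assumed \emph{prime}. This is not a standard reduction, you give no argument or citation, and the rest of the proof collapses without it. The failure mode is concrete: on a non-prime graph such as $G=K_n$, contracting a spanning tree of a proper subset $S\subsetneq V$ together with all $K$ pendants of $S$ is a feasible weak $(3/2,0)$-contraction of $H$, with objective roughly $K|S|$, while $K\alpha(G)=K$. So the claimed upper bound $\Phi(C)\le K\alpha(G)$ is simply false without primality, and you would need to actually prove that $\IndSet{}$ restricted to prime graphs remains $n^{1-\varepsilon}$-inapproximable before your argument is complete.

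The paper's proof sidesteps this entirely with a different edge-length design. It gives every $G$-edge length $2$ and attaches to each $v\in V$ \emph{two} pendants, one of length $1$ and one of length $2$. The length-$2$ pendant is a deliberate device: combined with a local ratio lemma (an induced two-edge path cannot have only its longer edge contracted), it lets a propagation argument run. If any length-$2$ edge were contracted, the contraction would propagate across every length-$2$ edge of $H$ (including the length-$2$ pendants, which let the propagation hop over triangles in $G$) and ultimately to every edge, forcing $G/C$ to be a single vertex — forbidden for a weak contraction. This rules out contracting any $G$-edge for \emph{arbitrary} connected $G$, with no module-structure hypothesis at all. Your design, with length-$3$ $G$-edges and a length-$1$ star of pendants, has no such local propagation, which is exactly why you had to reach for a global hypothesis on $G$. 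A secondary remark: the multiplicity $K=\lceil n^t\rceil$ of pendants is unnecessary — the paper uses a constant number of pendants per vertex, since $n_H=\Theta(n_G)$ already makes the $n^{1-\varepsilon}$ factor transfer directly.
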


Theorem~\ref{thm:inapx-weak} implies that \WContraction{} is hard to approximate for general multiplicative tolerance functions $\varphi(x)=x/\alpha$, $\alpha\geq 1$, but it leaves open the question whether this is true also for other fixed values of $\alpha$ other than $3/2$ (when $\alpha$ is not part of the input).
The arguments given in this section for $\alpha=3/2$ carry over straightforwardly to any fixed value $1<\alpha<2$, but not to 2 or larger values (for $\alpha=1$ the problem is trivial).

This time we reduce from the well-known \IndSet{} problem (which is equivalent to \Clique{} by considering the complement graph).
Recall that an \emph{independent set} in a graph $G$ is a subset of vertices of $G$ such that no two vertices in the subset are adjacent.

\begin{problem}{\IndSet{}}
Input: & A graph~$G$. \\
Output: & An independent set in $G$ of maximum size. \\
\end{problem}

We use again the fact that for any $\varepsilon > 0$, \IndSet{} is NP-hard to approximate to within a factor of $n^{1-\varepsilon}$ \cite{MR2403018}.

\begin{proof}[Proof of Theorem~\ref{thm:inapx-weak}]
Let $G=(V,E)$ be an instance of \IndSet{}.
We construct a graph $H=H(G)$ and a length function $\ell$ on the edges of $H$ as follows, see Figure~\ref{fig:indset}:
We start with a copy of $G$, and all edges of this copy receive length~2.
The vertices of this copy are also denoted by $V$.
We then add additional vertices and edges to $H$ as follows:
To every vertex $v\in V$ we attach two pending edges $\{v,(v,1)\}$ and $\{v,(v,2)\}$ of length~1 or 2, respectively.
We may assume $G$, and thus also $H$, to be connected.

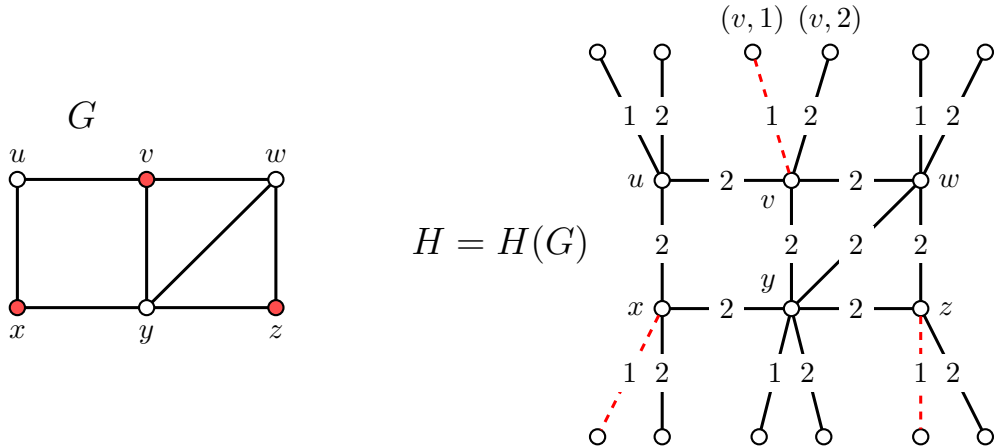
\begin{figure}[ht]
\centering
\begin{tikzpicture}[scale=0.85]
\SetVertexNoLabel
\tikzstyle{EdgeStyle}=[edge_default]

\tikzstyle{VertexStyle}=[vertex_default]
\Vertex[style={label={above:$u$}},x=0,y=4]{v0}
\Vertex[style={label={above:$w$}},x=4,y=4]{v2}
\Vertex[style={label={below:$y$}},x=2,y=2]{v4}

\tikzstyle{VertexStyle}=[vertex_red]
\Vertex[style={label={above:$v$}},x=2,y=4]{v1}
\Vertex[style={label={below:$x$}},x=0,y=2]{v3}
\Vertex[style={label={below:$z$}},x=4,y=2]{v5}

\Edge(v0)(v1)
\Edge(v0)(v3)
\Edge(v1)(v2)
\Edge(v1)(v4)
\Edge(v2)(v4)
\Edge(v2)(v5)
\Edge(v3)(v4)
\Edge(v4)(v5)

\node at (1,5){\Large $G$};
\node at (0,0){};

\end{tikzpicture}
\hspace{1cm}
\begin{tikzpicture}[scale=0.85]
\tikzstyle{VertexStyle}=[vertex_default]
\SetVertexNoLabel

\Vertex[style={label={left:$u$}},x=8,y=4]{v6}
\Vertex[x=7,y=6]{v7}
\Vertex[x=8,y=6]{v8}

\Vertex[style={label={below left:$v$}},x=10,y=4]{v10}
\Vertex[style={label={above:$(v,1)$}},x=9.4,y=6]{v11}
\Vertex[style={label={above:$(v,2)$}},x=10.6,y=6]{v12}

\Vertex[style={label={right:$w$}},x=12,y=4]{v20}
\Vertex[x=12,y=6]{v21}
\Vertex[x=13,y=6]{v22}

\Vertex[style={label={left:$x$}},x=8,y=2]{v30}
\Vertex[x=7,y=0]{v31}
\Vertex[x=8,y=0]{v32}

\Vertex[style={label={above left:$y$}},x=10,y=2]{v40}
\Vertex[x=9.5,y=0]{v41}
\Vertex[x=10.5,y=0]{v42}

\Vertex[style={label={right:$z$}},x=12,y=2]{v50}
\Vertex[x=12,y=0]{v51}
\Vertex[x=13,y=0]{v52}

\tikzstyle{EdgeStyle}=[edge_default]

\Edge[label=\hbox{$2$}](v6)(v10)
\Edge[label=\hbox{$2$}](v6)(v30)
\Edge[label=\hbox{$2$}](v10)(v20)
\Edge[label=\hbox{$2$}](v10)(v40)
\Edge[label=\hbox{$2$}](v20)(v40)
\Edge[label=\hbox{$2$}](v20)(v50)
\Edge[label=\hbox{$2$}](v30)(v40)
\Edge[label=\hbox{$2$}](v40)(v50)

\Edge[label=\hbox{$1$}](v6)(v7)
\Edge[label=\hbox{$2$}](v6)(v8)
\Edge[label=\hbox{$2$}](v10)(v12)
\Edge[label=\hbox{$1$}](v20)(v21)
\Edge[label=\hbox{$2$}](v20)(v22)
\Edge[label=\hbox{$2$}](v30)(v32)
\Edge[label=\hbox{$1$}](v40)(v41)
\Edge[label=\hbox{$2$}](v40)(v42)
\Edge[label=\hbox{$2$}](v50)(v52)

\tikzstyle{EdgeStyle}=[edge_in_c]
\Edge[label=\hbox{$1$}](v10)(v11)
\Edge[label=\hbox{$1$}](v30)(v31)
\Edge[label=\hbox{$1$}](v50)(v51)

\node at (5.5,3){\Large $H=H(G)$};

\end{tikzpicture}
\caption{An instance $G$ of \IndSet{} (left) and the corresponding instance $H=H(G)$ of \WContraction{} (right) constructed in the proof of Theorem~\ref{thm:inapx-weak}.
The dashed edges are the set $C(U)$ for the independent set $U=\{v,x,z\}$.}
\label{fig:indset}
\end{figure}

Now consider the instance $\cI=(H,\ell,\varphi)$ of the problem \WContraction{} with the tolerance function $\varphi(x)=2x/3$.

We proceed to show that $\cI$ has a feasible solution of value $k$ if and only if $G$ has an independent set of size $k$.
This is an immediate consequence of Claim~3 below.
To prove Claim~3 we need the following two auxiliary claims.

\underline{Claim~1:}
For any induced subgraph of $H$ that is a path on two edges, a feasible solution $C$ of $\cI$ does not contain only the longer of the two edges (either it contains none of the two, the shorter of the two if there is one, or both).

\underline{Proof of Claim~1:}
Consider a path on two edges $\{u,v\}$, $\{v,w\}$ of length~2 in $H$ such that $\{u,w\} \notin E(H)$, and suppose for the sake of contradiction that $\{u,v\}\in C$, but $\{v,w\}\notin C$.
Then we have $\dist_\ell(u,w)=4$ and $\dist_{\ell_C}(u,w)=2$, violating the condition \eqref{eq:contr-cond} for the given tolerance function.
A similar contradiction arises if one of the edges has length~1 and the other length~2, and only the edge of length~2 is contracted.
This proves the claim.
\qedclaim

\underline{Claim~2:}
No feasible solution of $\cI$ contains an edge of length~2.

\underline{Proof of Claim~2:}
Assume for the sake of contradiction that a feasible solution $C$ contains an edge $e$ of length~2. Note that any edge $f$ of $H$ may be reached from $e$ via a walk $e_1\dots e_k$ where $e_1 = e$ and $e_k = f$, and for all $i<k-1$  we have $\ell(e_i) = 2$ and the edges $e_i$ and $e_{i+1}$ induce a path in $H$. Now successively applying Claim~1 to the subgraphs induced by $e_i$ and $e_{i+1}$ for $i < k$ shows that $C$ contracts $f$. Thus $C$ violates the condition that a weak contraction must not contract every edge.
\qedclaim

Claim~2 implies that our objective functions satisfies $\Phi(C)=|C|$ for every feasible solution $C$ of $\cI$, because $H$ never contains two edges between two different connected components of $(V,C)$.

For any set of vertices $U \subseteq V(G)$ we define $C(U) := \{\{u,(u,1)\}: u \in U\}$.

\underline{Claim~3:}
A set of edges $C\subseteq E(H)$ is a feasible solution of $\cI$ if and only if $C=C(U)$ for an independent set $U$ in $G$.

\underline{Proof of Claim~3:}
Let $C$ be a feasible solution of $\cI$.
By Claim~2, $C$ contains only edges of length~1, so we have $C=C(U)$ for some set of vertices $U$ in $G$.
Suppose that two such vertices $u,v\in U$ are connected by an edge, then we would have $\dist_\ell(u,v)=4$ and $\dist_{\ell_C}(u,v)=2$, violating the condition \eqref{eq:contr-cond} for the given tolerance function.
It follows that $U$ is an independent set.

To prove the other direction of the equivalence, let $U$ be an independent set in $G$ and consider the set of edges $C(U)$ in $H$.
To verify that $C$ is a weak $(3/2,0)$-contraction, it suffices to check condition \eqref{eq:contr-cond} between the end vertices of paths on two edges, one of length~1 from $C$ and the other of length~2, and for paths on $k$ edges that start and end with an edge of length~1 from $C$.
In the first case the contraction $C(U)$ changes the distance from 3 to 2, which is compatible with \eqref{eq:contr-cond}.
In the second case the contraction $C(U)$ changes the distance from $2k-2$ to $2k-4$, which is also compatible with \eqref{eq:contr-cond}, where we use that $k\geq 4$ because of the assumption that $U$ is an independent set.
\qedclaim

Claim~3 implies that $\cI$ has a feasible solution with $k$ edges if and only if $G$ has an independent set of size $k$.
As $n(H)=3n(G)=\cO(n(G))$, the theorem follows from the \cite{MR2403018} result.
\end{proof}

\section{Asymptotic bounds}
\label{sec:asymp}

In this section we show how to compute contractions for graphs that are not optimal, but can be computed efficiently despite our hardness results from the previous section.
In this vein, the main results of this section are Theorem~\ref{thm:asymp-mult} and the corresponding (not tight) lower bound (Theorem~\ref{thm:asymp-girth}) for the case of tolerance functions of the form $\varphi(x)=x/\alpha-1$.
Further we consider purely additive tolerance functions (Section~\ref{sec:asymp-add}) and the factor by which a contraction can reduce the number of vertices (Section~\ref{sec:asymp-vert}).
Throughout this section, we assume all graphs to have unit length edges $\ell=1$.

\subsection{Almost multiplicative contractions}
\label{sec:asymp-mult}

As mentioned in the introduction, a purely multiplicative tolerance function ($\beta=0$) forbids decreasing any distances.
In this section we thus consider an `almost' purely multiplicative tolerance function of the form $\varphi(x)=x/\alpha-1$.

\begin{thm}
\label{thm:asymp-mult}
Let $k\geq 1$ be a real number.
Any graph $G$ has a $(2k-1,1)$-contraction $C$ such that the contracted graph $G/C$ has at most $n^{1+1/k}$ edges, and such a contraction can be computed in time $\cO(m)$.
\end{thm}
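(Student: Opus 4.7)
The plan is to use an Awerbuch-style ball-growing clustering. Iteratively partition $V$ into clusters as follows: pick an arbitrary unclustered vertex $v$ and, in the subgraph $G'$ induced by the currently unclustered vertices, grow the BFS ball $B_i(v)$ starting from $i=0$, stopping at the first index $i^*$ with $|B_{i^*+1}(v)| \leq n^{1/k}\cdot|B_{i^*}(v)|$. As long as the stopping condition has not been met we have $|B_i(v)| > n^{i/k}$, so $i^* \leq k-1$ (for integer $k$; the real case requires only a minor adjustment). Declare $A := B_{i^*}(v)$ to be the new cluster, remove its vertices from $G'$, and repeat. Let $C$ be the union of the edges of the BFS spanning trees of all these clusters, so that in $G/C$ each cluster becomes a single super-vertex.

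To verify that $C$ is a $(2k-1,1)$-contraction, fix $u,w\in V$ with $d := \dist_\ell(u,w)$, and let $r := \dist_{\ell_C}(u,w)$. A shortest path in $G/C$ between the super-vertices of $u$ and $w$ consists of $r$ inter-cluster edges traversing a sequence of $r+1$ distinct clusters $A_0,\ldots,A_r$; lifting this path to $G$ by routing through the BFS spanning tree of each intermediate cluster (which has diameter at most $2(k-1)$) yields a $u$-$w$ walk in $G$ of length at most $r + (r+1)\cdot 2(k-1) = (2k-1)r + (2k-2)$. Hence $d \leq (2k-1)r + (2k-2)$, which rearranges to $r \geq d/(2k-1) - (2k-2)/(2k-1) \geq d/(2k-1) - 1$, exactly the $(2k-1,1)$-tolerance bound.

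To bound $|E(G/C)|$, charge each inter-cluster edge of $G/C$ to the earlier-formed of its two endpoint-clusters. For a cluster $A$ formed with stopping index $i^*$ centered at $v$, any edge charged to $A$ leaves $A$ to a later-formed cluster, so (in the remaining subgraph at the time of forming $A$) its other endpoint lies in the outer layer $B_{i^*+1}(v)\setminus B_{i^*}(v)$; since parallel edges between the same pair of clusters collapse to a single edge in $G/C$, $A$ contributes at most one edge per outer-layer vertex, i.e., at most $|B_{i^*+1}(v)| - |A| \leq (n^{1/k}-1)|A|$ edges in total. Summing over all clusters gives $|E(G/C)| \leq \sum_A n^{1/k}|A| = n^{1+1/k}$.

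The main obstacle is achieving the $\cO(m)$ running time, since a vertex may appear as an outer-layer neighbor during BFS exploration in several cluster-growing rounds before eventually being clustered, so a naive implementation can probe the same adjacency lists repeatedly. I would implement the clustering exactly as in the linear-time deterministic spanner algorithm of Halperin and Zwick~\cite{MR2080715}, which uses essentially the same ball-growing primitive and amortizes the adjacency-list accesses by charging each edge to a constant number of probes across all cluster-growing rounds, yielding the claimed $\cO(m)$ bound.
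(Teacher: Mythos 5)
Your proposal is correct and follows essentially the same route as the paper: the same Awerbuch-style ball-growing clustering with the $n^{1/k}$-expansion stopping rule (the paper's Lemma~\ref{lem:awerbuch}), the same diameter bound $\leq 2(k-1)$ per cluster yielding the $(2k-1,1)$ tolerance guarantee, and the same charging of each inter-cluster edge in $G/C$ to the rejected/outer layer of the earlier-formed cluster to bound $m(G/C)\leq n^{1+1/k}$. The only differences are cosmetic — contracting a BFS spanning forest rather than all intra-cluster edges (which produces the identical $G/C$), and a more cautious treatment of the $\cO(m)$ running time (the paper simply asserts it from BFS; your deference to Halperin--Zwick is fine but in fact unnecessary here, since each adjacency list entry is probed $\cO(1)$ times because all edges incident to a freshly formed cluster are deleted before the next round).
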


Recall that here and throughout, $n$ and $m$ denote the number of vertices and edges of the input graph $G$, not of the contracted graph $G/C$.
Setting $k:=\log_2 n$ in Theorem~\ref{thm:asymp-mult} yields the following corollary.

\begin{cor}
\label{cor:asymp-mult}
Any graph $G$ has a $(2\log_2 n-1,1)$-contraction $C$ such that the contracted graph $G/C$ has at most $2n$ edges, and such a contraction can be computed in time $\cO(m)$.
\end{cor}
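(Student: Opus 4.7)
The plan is to adapt a classical ball-growing clustering (of the type used by Alth\"ofer et al.\ to build $(4k+1,0)$-spanners) to the contraction setting. Maintain an \emph{unclustered} set $U$, initialized to $V$. While $U\neq\emptyset$, pick any $v\in U$ and run BFS from $v$ inside $G[U]$, writing $B_i(v)$ for the set of vertices at distance at most $i$ from $v$. Determine the smallest $r\geq 0$ with $|B_{r+1}(v)|\leq n^{1/k}|B_r(v)|$; such an $r$ exists and satisfies $r\leq k-1$, since otherwise a telescoping product would force $|B_k(v)|>n^{k/k}=n$. Declare $C_v:=B_r(v)$ to be a cluster, add a BFS tree of $C_v$ (rooted at $v$) to the contraction set $C$, remove $C_v$ from $U$, and iterate. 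At termination the clusters partition $V$ and $C$ is the union of all BFS trees.

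For the distance guarantee, note that each cluster has radius at most $k-1$ and its BFS tree lies in $C$, so every vertex of a cluster is identified with the cluster center in $G/C$. Consider a shortest $u$--$v$ path in $G/C$ of length $L=\dist_{\ell_C}(u,v)$; it visits $L+1$ clusters, each traversed in $G$ at cost at most $2(k-1)$ via the center, with $L$ real inter-cluster edges. Thus
\[
\dist_\ell(u,v)\;\leq\;(L+1)\cdot 2(k-1)+L\;\leq\;(2k-1)L+(2k-1),
\]
which rearranges to $\dist_{\ell_C}(u,v)\geq \dist_\ell(u,v)/(2k-1)-1$, confirming that $C$ is a $(2k-1,1)$-contraction.

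For the edge count, the key observation is the following charging: if a cluster $C_w$ is processed strictly after $C_v$ and is adjacent to $C_v$ in $G/C$, then $C_w$ contains a vertex lying in the shell $B_{r_v+1}(v)\setminus C_v$. Indeed, any $b\in C_w$ with $\{a,b\}\in E$ for some $a\in C_v$ was still in $U$ when $C_v$ was formed, hence $b\in B_{r_v+1}(v)$, and $b\notin C_v$ since clusters are disjoint. As distinct later clusters contribute distinct shell vertices, $C_v$ is adjacent to at most $|B_{r_v+1}(v)|-|C_v|\leq (n^{1/k}-1)|C_v|$ later-processed clusters. Charging each $G/C$-edge to the older of its two endpoints and summing yields
\[
m(G/C)\;\leq\;\sum_v n^{1/k}|C_v|\;=\;n^{1/k}\cdot n\;=\;n^{1+1/k},
\]
using that the $C_v$ partition $V$.

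The main obstacle is the $\cO(m)$ running time: a naive implementation would re-explore the edges between shell vertices of an already-processed cluster in every subsequent BFS. The remedy (following Halperin--Zwick in the spanner setting) is to run each cluster's BFS only up to the moment the size ratio first drops below $n^{1/k}$ and to charge explored edges either to a BFS-tree edge (of which there are $\cO(n)$ in total) or to an inter-cluster edge of $G/C$ (of which there are $\cO(n^{1+1/k})$); with appropriate bookkeeping of neighbor lists restricted to $U$, each edge of $G$ is then touched only a constant number of times, giving the claimed $\cO(m)$ bound.
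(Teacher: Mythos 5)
Your proof is correct and retraces the paper's argument essentially verbatim. The paper proves a general theorem (Theorem~\ref{thm:asymp-mult}: any $G$ has a $(2k-1,1)$-contraction with $m(G/C)\leq n^{1+1/k}$ edges, computable in $\cO(m)$ time) via exactly the ball-growing clustering you describe, packaged as Lemma~\ref{lem:awerbuch} on $r$-partitions, and then obtains the corollary by substituting $k:=\log_2 n$ so that $n^{1/k}=2$ and $n^{1+1/k}=2n$. Your telescoping bound giving cluster radius at most $k-1$, the distance argument charging $L+1$ cluster traversals of intra-cluster cost at most $2(k-1)$ plus $L$ inter-cluster edges, and the shell-based charging of $G/C$-edges to the older cluster, all match the paper step for step. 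Two minor remarks. First, as written you establish the general $n^{1+1/k}$ bound but never perform the substitution $k=\log_2 n$; the corollary needs that one line. Second, your worry about re-exploring shell edges is overstated: with the standard implementation an edge is examined only when one of its endpoints enters a cluster layer (and is thereby removed from the working set), so each edge is touched at most twice and $\cO(m)$ follows immediately, without the more careful Halperin--Zwick bookkeeping you invoke.
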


To prove Theorem~\ref{thm:asymp-mult}, we use a clustering approach as presented in \cite{MR810338}, yielding the next lemma.
Specifically, the following crucial lemma appears in a slightly weaker form in that paper.
For any real number $r\geq 1$, we define an \emph{$r$-partition} of a graph $G=(V,E)$ as a set of \emph{clusters} $P_i \subseteq V$, $i\in[l]$, with corresponding \emph{cluster centers $p_i \in P_i$}, where the sets $P_i$ are required to form a partition of the vertex set $V$ and where $\dist_\ell(p_i,u)\leq r-1$ for all $u\in P_i$ and $i\in[l]$.
We denote the resulting $r$-partition by $P:=\{(p_i,P_i) : i\in[l]\}$.
We write $\rho(P)$ for the number of pairs $1\leq i<j\leq l$ for which $P_i$ and $P_j$ are connected by at least one edge, and we refer to this quantity as the \emph{density of $P$}.

\begin{lem}
\label{lem:awerbuch}
Let $r\geq 1$ be a real number.
Any graph $G$ with unit length edges has an $r$-partition $P$ with density $\rho(P)\leq n^{1+1/r}$, and such a partition can be computed in time $\cO(m)$.
\end{lem}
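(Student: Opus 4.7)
The plan is to adapt the classical ball-growing partition of Awerbuch~\cite{MR810338}. Starting from the set $U := V$ of unassigned vertices, I would iteratively construct clusters: pick any $p_i \in U$ as the center of a new cluster, set $B_0 := \{p_i\}$, and grow a BFS ball $B_0 \subseteq B_1 \subseteq \cdots$ around $p_i$ in $G[U]$. The inner loop increments $j$ as long as $|B_{j+1}| > n^{1/r}|B_j|$; as soon as this ratio fails, declare $P_i := B_j$, remove it from $U$, and continue with the next cluster.

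Two facts need verification. For the radius bound, note that $|B_0| = 1$ and each executed inner iteration enlarges the ball by a factor of more than $n^{1/r}$, so after $j$ iterations $|B_j| > n^{j/r}$; combined with $|B_j| \leq n$ this forces termination with $j < r$, whence $\dist_\ell(p_i, u) \leq r - 1$ for every $u \in P_i$. For the density bound, fix a cluster $P_i$ of final radius $j_i$ and consider any later adjacent cluster $P_{i'}$ with $i' > i$. Any edge between them has its $P_{i'}$-endpoint still in $U$ at the moment $P_i$ is formed and adjacent to $B_{j_i}$, hence lying in $B_{j_i+1}\setminus B_{j_i}$. Distinct later adjacent clusters contribute distinct such boundary vertices, since clusters are disjoint, so the number of later clusters adjacent to $P_i$ is at most $|B_{j_i+1}| \leq n^{1/r}|B_{j_i}| = n^{1/r}|P_i|$ by the stopping condition. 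Summing over $i$ gives $\rho(P) \leq n^{1/r}\sum_i |P_i| = n^{1+1/r}$.

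For the running time, I would maintain BFS frontiers explicitly so that computing $B_{j+1}$ from $B_j$ costs only the sum of degrees of the vertices in $B_j \setminus B_{j-1}$. A charging argument then shows that each directed edge $(u,v)$ is scanned at most once, namely when $u$ first enters the frontier of a BFS. I expect the main obstacle to be making this accounting precise, since a vertex $v$ can first surface as a \emph{boundary} vertex in some $B_{j_i+1}\setminus B_{j_i}$ of an earlier cluster (which consumes an edge probed only from $v$'s neighbor, not from $v$ itself) and only later become the center or a frontier vertex of its own cluster; one must verify that this two-phase discovery does not cause repeated scanning from the same endpoint, so that the total work telescopes to $\cO(m)$.
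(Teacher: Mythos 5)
Your proof follows essentially the same Awerbuch‑style ball‑growing approach as the paper: iteratively pick a center in the remaining graph, grow a BFS ball until the growth ratio drops below $n^{1/r}$, peel it off, and bound the radius via the multiplicative growth and the density by charging each later adjacent cluster to a distinct vertex in the rejected layer. The paper states its expansion test in terms of a new layer versus the cumulative sum of earlier layers rather than $|B_{j+1}| > n^{1/r}|B_j|$, but the two tests are interchangeable for both the radius and density bounds, and your concern about the $\cO(m)$ accounting is resolved exactly as you suspect: each edge is scanned a constant number of times over the whole run since BFS always proceeds in the residual graph after earlier clusters are removed.
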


\begin{proof}
The idea of the algorithm is to build an $r$-partition $P$ of $G$ iteratively in rounds.
In each round, we build a new cluster and remove all vertices from that cluster from the graph, processing the subgraph on the remaining vertices in the next round.
The algorithm proceeds until all vertices are assigned to a cluster.
In round $i$, we choose an arbitrary vertex $p_i$ as a cluster center, and define layers $L_{i,0},L_{i,1},\ldots$ around the vertex $p_i$, where the layer $L_{i,j}$ consists of all vertices at distance exactly $j$ from $p_i$ (this distance is measured in the subgraph of $G$ under consideration in this round).
We continue computing these layers as long as the number of vertices in the new layer is at least the number of vertices in all previous layers times the factor $n^{1/r}$.
The cluster $P_i$ is defined as the union of all layers around $p_i$ satisfying this expansion condition.
We refer to the first layer violating this condition (which is \emph{not} added to $P_i$ anymore) as the \emph{rejected layer}.
We let $P$ denote the partition of the vertices of $G$ computed in this fashion.

To verify that $P$ is indeed an $r$-partition, we proceed to show that each vertex within a cluster has distance at most $r-1$ from the center vertex of that cluster, and that the density $\rho(P)$ of the partition is at most $n^{1+1/r}$.
Intuitively, the expansion condition in the definition of the layers ensures that a cluster has few layers and that the number of edges that go to unclustered vertices is small.

Consider a cluster $P_i$ with center vertex $p_i$ and the layers $L_{i,0},L_{i,1},\ldots,L_{i,d}$.
Suppose for the sake of contradiction that $d\geq r$.
By the definition of the layers in the algorithm we know that $|L_{i,j}|\geq n^{1/r}\sum_{k=0}^{j-1} |L_{i,k}|$ holds for all $j\in [d]$, implying that $|L_{i,j}|\geq n^{j/r}$.
Consequently, the size of the cluster satisfies $|P_i|=\sum_{j=0}^d |L_{i,j}|\geq 1+n^{r/r}=n+1$, a contradiction.

We now show that $\rho(P)\leq n^{1+1/r}$.
The key idea is that the number of vertices in the rejected layer of a cluster $P_i$ is at most $n^{1/r}|P_i|$.
Thus the number of edges from $P_i$ to clusters that are created later is at most $n^{1/r}|P_i|$.
For every edge between two clusters we let the cluster that is created first account for that edge.
Summing over all these edges between clusters yields the desired upper bound of $\rho(P)\leq n\cdot n^{1/r}=n^{1+1/r}$.

Using breadth-first search, the partitioning algorithm described above runs in time $\cO(m)$ (recall that $G$ is assumed to be connected).
This completes the proof of the lemma.
\end{proof}

With Lemma~\ref{lem:awerbuch} in hand, we are now ready to prove Theorem~\ref{thm:asymp-mult}.

\begin{proof}[Proof of Theorem~\ref{thm:asymp-mult}]
Given $G=(V,E)$, we first compute a $k$-partition $P$ into $l$ clusters as described by Lemma~\ref{lem:awerbuch}.
We define the set $C$ of contracted edges as the union of all edges within the clusters, $C := \{\{u,v\} \in E : u,v \in P_i \text{ for some } i \in [l]\}$.
We thus contract each cluster into a single vertex and remove from every set of resulting parallel edges all but a single edge.

We proceed to show that $C$ is a $(2k-1,1)$-contraction, i.e., we show that $\dist_{\ell_C}(u,v) \geq \dist_{\ell}(u,v)/(2k-1)-1$ for all $u,v\in V$.
Consider two vertices $u \in P_i$ and $v \in P_j$, where $i$ and $j$ might be equal.
Let $Q_{u,v}$ be the shortest path from $u$ to $v$ in $G$ with edge lengths $\ell_C$ (all edges from $C$ receive length zero).
The length $d$ of $Q_{u,v}$ is the number of edges on that path that connect different clusters.
Note that $Q_{u,v}$ enters and leaves each of the $d+1$ visited clusters at most once, using at most $2k-2$ edges in every cluster, so in $G$ (where all edges have unit lengths) we get $\dist_{\ell}(u,v) \leq d + (d+1)(2k-2)$.

Combining these observations we obtain
\begin{equation*}
\dist_{\ell_C}(u,v) = d \geq d - \frac{1}{2k -1} = \frac{d + (d+1)(2k-2)}{2k -1} -1 \geq \frac{\dist_{\ell}(u,v)}{2k -1} -1,
\end{equation*}
proving the claim.
It remains to show that the contracted graph $G/C$ has at most $n^{1+1/k}$ edges, which is an immediate consequence of the upper bound $m(G/C)=\rho(P)\leq n^{1+1/k}$ given by Lemma~\ref{lem:awerbuch}.
This completes the proof of the theorem.
\end{proof}

Erd\H{o}s' girth conjecture \cite{MR0180500} asserts that there exist graphs with $\Omega(n^{1+1/k})$ edges and girth $2k+1$.
It has been verified for $k=1,2,3,5$ \cite{MR1109426} and the strongest spanner lower bounds depend on it.
We derive from the conjecture the following (not tight) lower bound.

\begin{thm}
\label{thm:asymp-girth}
Assuming Erd\H{o}s' girth conjecture, there exists for any integer $k\geq 2$ a graph $G$ such that any $(k-1,1)$-contraction $C$ results in a graph $G/C$ with $\Omega(n^{1+1/k})$ edges.
\end{thm}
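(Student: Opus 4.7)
The plan is to take a graph $H$ on $n$ vertices of girth at least $2k+1$ with $m(H)=\Omega(n^{1+1/k})$ edges, as guaranteed by Erd\H{o}s' conjecture, and to show that every $(k-1,1)$-contraction $C$ of $H$ leaves $m(H/C)=\Omega(n^{1+1/k})$. Let $S_1,\dots,S_t$ denote the clusters, i.e.\ the connected components of $(V(H),C)$. Two structural facts drive the argument. First, each cluster has small $H$-diameter: for $u,v\in S_i$, feasibility yields $0=\dist_{\ell_C}(u,v)\geq \dist_H(u,v)/(k-1)-1$, so $\dist_H(u,v)\leq k-1$. Second, for any $v_0\in V(H)$ the subgraph $H[B_{k-1}^H(v_0)]$ is a tree: an edge $\{u,w\}$ with both endpoints in this ball, if it were not a BFS-tree edge from $v_0$, would close up with the shortest $v_0$-to-endpoint paths into a cycle of length at most $2(k-1)+1=2k-1<2k+1$, contradicting the girth. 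In particular $|E(H[S_i])|\leq |S_i|-1$.

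The key refinement is that shortest $H$-paths between cluster-mates stay inside the cluster. Fix $u_1,u_2\in S_i$ and root a BFS tree $T:=H[B_{k-1}^H(u_1)]$ at $u_1$; then tree distance from $u_1$ to $u_2$ in $T$ equals $\dist_H(u_1,u_2)\leq k-1$, so the tree path from $u_1$ to $u_2$ is itself a shortest $H$-path. In a graph of girth $2k+1$ any two shortest paths of length at most $k$ between a common pair of vertices must coincide (otherwise they close into a cycle of length at most $2k$), so the tree path is the unique shortest path $P_u$. Since $S_i$ is a component of $(V,C)$ and $C\subseteq E(H)$, the subgraph $H[S_i]$ is connected; as a connected subgraph of the tree $T$ it is itself a subtree, and the tree path inside $H[S_i]$ coincides with the tree path in $T$, giving $P_u\subseteq H[S_i]$. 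With this in hand, suppose two distinct edges $e_1=\{u_1,v_1\}$ and $e_2=\{u_2,v_2\}$ both join distinct clusters $S_i\supseteq\{u_1,u_2\}$ and $S_j\supseteq\{v_1,v_2\}$. The shortest paths $P_u\subseteq H[S_i]$ and $P_v\subseteq H[S_j]$ share no edges with each other or with $\{e_1,e_2\}$, so the two walks $e_1\cdot P_v$ and $P_u\cdot e_2$ from $u_1$ to $v_2$ have edge-disjoint, non-empty union. Their edge symmetric difference is a non-empty subgraph in which every vertex has even degree, hence contains a cycle of length at most $|P_u|+|P_v|+2\leq 2k$, contradicting the girth $2k+1$. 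Therefore distinct clusters are joined by at most one edge of $H$.

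Consequently every edge of $H$ that crosses a cluster boundary survives as a distinct edge of $H/C$, and
\[
m(H/C)=m(H)-\sum_{i=1}^{t}|E(H[S_i])|\geq m(H)-(n-t)\geq m(H)-n=\Omega(n^{1+1/k}),
\]
using $m(H)=\Omega(n^{1+1/k})$ and $n=o(n^{1+1/k})$ (the ratio $n^{1/k}$ tends to infinity for any fixed integer $k$). The crux, and the main technical obstacle, is the containment step: without forcing $P_u$ and $P_v$ into their own clusters, the closed walk $e_1\cdot P_v\cdot e_2\cdot P_u^{-1}$ could in principle reduce away by backtracking (namely in the configuration where $P_u$ starts with $e_1$ and $P_v$ ends with $e_2$ sharing a common middle), invalidating the cycle construction. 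Rooting the BFS tree at $u_1$ and using uniqueness of short shortest paths under girth $2k+1$ is precisely what rules out this degenerate scenario.
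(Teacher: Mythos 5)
Your proof is correct and follows essentially the same route as the paper's: use the girth bound $2k+1$ to show each cluster of $(V,C)$ spans a tree of small diameter (so at most $n$ edges lie inside clusters) and use it again to show at most one edge of $G$ runs between any pair of clusters, yielding $m(G/C)\ge m(G)-n=\Omega(n^{1+1/k})$. Your containment lemma---that $H[B^H_{k-1}(u_1)]$ is a tree, so the shortest path between cluster-mates lies inside the cluster---is a more fastidious justification of the ``closed walk is a genuine cycle'' step, which the paper's write-up handles somewhat tersely via tree paths within components.
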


\begin{proof}
For a given integer $k\geq 2$ let $G$ be a graph that is guaranteed by Erd\H{o}s' girth conjecture, i.e., $G$ has girth $2k+1$ and $\Omega(n^{1+1/k})$ edges.
Consider any $(k-1,1)$-contraction $C$ on $G$, and consider a connected component of the graph $(V,C)$.
Applying \eqref{eq:contr-cond} shows that $\dist_\ell(u,v)\leq k-1$ holds for any two vertices $u$ and $v$ in that component.
Using that the girth of $G$ is $2k+1$, it follows that for any cycle in $G$, the connected component of $(V,C)$ does not contain a contiguous segment of cycle edges of length at least half of the cycle.
This implies that all connected components of the graph $(V,C)$ are trees with diameter at most $k-1$.
Therefore, the total number of edges within all connected components of $(V,C)$ is at most $n$.
We will further argue that there is at most one edge between any two connected components.
Suppose for the sake of contradiction that there are two components of $(V,C)$ with two different edges connecting them, say $\{u,v\}$ and $\{u',v'\}$, where $u$ and $u^\prime$ lie in the same connected component and $v$ and $v'$ in the other.
As the diameter of each component is at most $k-1$, it follows that in $G$ there is a path from $u$ to $u'$ of length at most $k-1$, and a path from $v$ to $v'$ of length at most $k-1$.
Together with the two edges connecting the components we obtain a cycle of length at most $2(k-1)+2=2k$, contradicting the assumption that $G$ has girth $2k+1$.

Therefore, the resulting graph after the contraction has $\Omega(m)=\Omega(n^{1+1/k})$ edges.
\end{proof}

\subsection{Additive contractions}
\label{sec:asymp-add}

Turning to the case of a purely additive error, we obtain the following two results.

\begin{thm}
\label{thm:asymp-add}
Let $G$ be a graph with unit length edges.
\begin{enumerate}[label=(\roman*),leftmargin=7mm]
\item For any even integer $0 \leq k \leq n$, the set of edges incident to the $k/2$ vertices of highest degrees is a $(1,k)$-contraction $C$ in $G$ with $\Phi(C) \geq km/(2n)$.\label{thm:asymp-add-first}
\item For any real number $0 < k \leq n$, the set of edges incident to two vertices of degree at least $n/k$ is a $(1,k)$-contraction $C$ in $G$ such that $G/C$ has $\cO(n^2/k)$ edges.\label{thm:asymp-add-high}
\end{enumerate}
These contractions can be computed in time $\cO(m)$.
\end{thm}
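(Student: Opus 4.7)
The plan is to handle both parts by the same recipe: designate a small vertex set, define $C$ by an incidence rule relative to this set, and then bound—for every pair $u,v\in V$—the number of $C$-edges that appear on a shortest $u$-$v$-path in $(G,\ell_C)$.

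For part~\ref{thm:asymp-add-first} I would let $S$ be the $k/2$ vertices of largest degree and take $C:=\{e\in E:e\cap S\neq\emptyset\}$. Both $S$ and $C$ can be produced in $\cO(m)$ time: the degree sequence is read off in one edge sweep, sorted by counting sort in $\cO(n)$, and $C$ is then extracted in a second sweep. Now fix $u,v\in V$ and take a shortest $u$-$v$-path $P$ in $(G,\ell_C)$. Every edge of $C$ meets $S$, and because $P$ is simple each $s\in S$ lies on $P$ at most once and contributes at most two path edges, so $|E(P)\cap C|\leq 2|S|=k$ and
\[
\dist_\ell(u,v)\;\leq\;\ell(P)\;=\;\ell_C(P)+|E(P)\cap C|\;\leq\;\dist_{\ell_C}(u,v)+k.
\]
The objective bound follows by averaging: since $S$ collects the top $k/2$ degrees, $\sum_{v\in S}\deg(v)\geq(k/2)\cdot(2m/n)=km/n$, and each edge of $C$ is counted at most twice in this sum, so $\Phi(C)\geq|C|\geq\tfrac{1}{2}\sum_{v\in S}\deg(v)\geq km/(2n)$.

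For part~\ref{thm:asymp-add-high} I would let $H:=\{v\in V:\deg(v)\geq n/k\}$ and set $C:=E[H]$, the set of edges whose \emph{both} endpoints lie in $H$. The $\cO(m)$ running time is immediate, and the edge-count bound is easy: every edge not in $C$ has at least one endpoint in $V\setminus H$, whence
\[
m-|C|\;\leq\;\sum_{v\in V\setminus H}\deg(v)\;\leq\;n\cdot(n/k)\;=\;n^2/k,
\]
and $m(G/C)\leq m-|C|\leq n^2/k$. For the $(1,k)$-contraction property I would again reduce to bounding $|E(P)\cap C|$ on a suitably chosen shortest $u$-$v$-path $P$ in $(G,\ell_C)$; the $C$-edges of $P$ are exactly its consecutive $H$-$H$ transitions, and among shortest paths in $(G,\ell_C)$ we may select one that traverses each visited connected component of $G[H]$ along a shortest internal route between its entry and exit vertex.

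The main obstacle is the $(1,k)$-bound in part~\ref{thm:asymp-add-high}. Unlike part~\ref{thm:asymp-add-first}, where the estimate $|E(P)\cap C|\leq 2|S|=k$ is entirely combinatorial and independent of path geometry, here the quantity of interest is controlled by how many $H$-vertices a shortest path may contain and by how costly the internal detours through visited $H$-components are. My plan is to combine the standard ball-counting lemma—closed $1$-balls around vertices at distance $\geq 3$ on a shortest path in $G$ are pairwise disjoint—with the fact that each $H$-vertex contributes a closed $1$-ball of size $\geq 1+n/k$; applied to an appropriate reference shortest $u$-$v$-path this caps the number of $H$-vertices, and hence $|E(P)\cap C|$, by $\cO(k)$. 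The delicate step is transferring the estimate to the minimum of $|E(P)\cap C|$ over shortest paths in $(G,\ell_C)$ and recovering the precise additive constant $k$ rather than a loose multiple thereof.
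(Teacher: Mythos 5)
Your part~(i) is correct and coincides with the paper's own proof: the top $k/2$ degrees sum to at least $(k/2)\cdot 2m/n$, each edge is counted at most twice in that sum, a simple ($\ell_C$-shortest) path meets each chosen vertex at most once and hence carries at most $2\cdot(k/2)=k$ edges of $C$, and the chosen vertices can be found in $\cO(m)$ time by linear-time selection, just as in the paper.

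Part~(ii) is where the genuine gap lies, and it is exactly the step you flag yourself: you establish $m(G/C)\le n^2/k$ and the running time, but you never prove that $C=E[H]$ is a $(1,k)$-contraction; what you give is a plan, and as described it cannot be completed. Two concrete obstacles. First, the disjoint-balls lemma is a statement about shortest paths with respect to $\ell$, whereas feasibility requires a lower bound on $\dist_{\ell_C}$, i.e., control over $\ell_C$-shortest paths; inside a heavy component, heavy vertices that are far apart in $G$ may share light neighbours without contradicting $\ell_C$-optimality (the detour through a light vertex costs $2$ while the contracted route costs $0$), so the lemma does not transfer. Second, condition \eqref{eq:contr-cond} must hold in particular for two vertices lying in the same component of $G[H]$, where it reads $\dist_\ell(u,v)\le k$; the every-third-vertex ball argument bounds such distances only by roughly $3n/(n/k)=3k$, and this loss is not an artifact: a path of about $3k$ heavy vertices, each given $n/k-2$ light neighbours that are shared only among three consecutive path vertices, has every path vertex of degree $n/k$ yet $G$-diameter close to $3k$, so your route can at best deliver an additive error $\Theta(k)$ rather than the precise constant $k$. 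Note that the paper does not prove this part either: it explicitly defers to the analysis of Bernstein and Chechik \cite{MR3536582}, where this contraction was introduced and analyzed; to match the paper you would simply cite that work for the feasibility claim in part~(ii) instead of attempting to reprove it via ball counting.
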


As mentioned in the introduction, Bernstein and Chechik analyzed the contraction of Theorem~\hyperref[thm:asymp-add-high]{\ref*{thm:asymp-add}~\ref*{thm:asymp-add-high}} in~\cite{MR3536582} and used it in their dynamic shortest paths algorithm, so this part is already proved.

\begin{proof}[Proof of Theorem~{\hyperref[thm:asymp-add-first]{\ref*{thm:asymp-add}~\ref*{thm:asymp-add-first}}}]
Let $U$ be the set of $k$ vertices in $G$ of highest degree.
Then we have
\begin{equation*}
\sum_{u \in U} \deg(u) \geq k/n \sum_{v \in V} \deg(v) = k/n\cdot 2m = 2km/n.   
\end{equation*}
Let $C$ be the set of edges incident to any vertex in $U$.
As each edge is incident to at most two vertices in $U$, we get $|C| \geq 1/2 \sum_{u \in U} \deg(u) \geq km/n$ from the previous inequality.
As no shortest path visits a vertex in $U$ twice, $C$ is indeed a $(1,2k)$-contraction.
The set $C$ can be computed as follows: We first compute the degrees of all vertices in time $\cO(m)$, then find the $k$-th largest element in this list in time $\cO(n)$, and by another linear time sweep over this list we select $k$ vertices of highest degree.
Overall, the required time is $\cO(m)$.
\end{proof}

This result implies that the number of edges in $G/C$ is at most $m-k m/n$.
If $G$ is a path, no $(1,2k)$-contraction has an objective value greater than $2k$, and $k m/n=k(1-1/n)$, showing that the objective value in Theorem~\hyperref[thm:asymp-add-first]{\ref*{thm:asymp-add}~\ref*{thm:asymp-add-first}} can be improved by at most a factor of two.

The information theoretic lower bound in~\cite{MR3536579} implies that for all $\varepsilon > 0$, any contraction $C$ such that $G/C$ has $\cO(n^{4/3 - \varepsilon})$ edges does not admit a constant additive error.

\subsection{Vertex reduction}
\label{sec:asymp-vert}

All of the results above show that contractions can be effectively used to reduce the number of edges in a dense graph.
But one possible advantage of using a contraction instead of a spanner is that it also has the potential to reduce the number of \emph{vertices} in the graph.
Unfortunately, for constant approximation errors, it is not possible to guarantee more than a constant-factor reduction in general graphs: it is not hard to see that given a path on $n$ vertices, any $(k,1)$-contraction will still result in at least $n/(k+1)$ vertices.
The same problem applies to general dense graphs, since they could still contain a long path within them.
That being said, it seems likely that in practice contraction can lead to significant vertex reduction in many dense graphs.
We ground this practical intuition with the following theoretical result for the special case of graphs with large minimum degree.

\begin{thm}
\label{thm:min-degree}
Let $D$ be an integer.
Any graph $G$ with minimum degree at least $D$ has a $(5,1)$-contraction $C$ such that the contracted graph $G/C$ has at most $n/D$ vertices, and such a contraction can be computed in time $\cO(m)$.
\end{thm}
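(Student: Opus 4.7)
My plan is to cluster $V$ around a maximal independent set of $G^2$ and then contract a spanning tree inside each cluster. To build the independent set I would scan the vertices of $G$ in arbitrary order and, whenever the current vertex $v$ still has no earlier element of $S$ within distance two of it in $G$, add $v$ to $S$ and run a depth-two BFS from $v$ in $G$, labelling every freshly visited vertex with $v$ as its cluster centre and recording its BFS parent. Because any single vertex has at most one neighbour in the independent set $S$ (two such neighbours would be at distance at most $2$ from each other in $G$, contradicting independence in $G^2$), each edge of $G$ is scanned $\mathcal{O}(1)$ times across all of these BFS calls, and the total running time is $\mathcal{O}(m)$.

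The output is a partition $V = \bigsqcup_{s \in S} C_s$ in which every vertex of $C_s$ lies at distance at most $2$ from $s$ in $G$, so each cluster has diameter at most $4$ in $G$. For the contraction $C$ I would take the BFS parent edges: for each non-centre $u \in C_s$, the edge from $u$ to the vertex from which it was first reached. Inside each cluster these edges form a spanning tree of depth at most $2$, so every $C_s$ collapses to a single super-vertex of $G/C$. A standard lifting argument—any shortest $\ell_C$-walk of length $k$ between vertices of different clusters can be realised in $G$ as a walk using $k$ crossing edges together with at most $k+1$ within-cluster detours each of length $\leq 4$—gives $\dist_\ell(u,v) \leq 5k + 4 \leq 5\,\dist_{\ell_C}(u,v) + 5$; for $u,v$ in the same cluster one has $\dist_\ell(u,v) \leq 4 \leq 5$, so $C$ is indeed a $(5,1)$-contraction.

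The main obstacle, and the only delicate step, is the bound on the number of super-vertices. It is tempting to try to prove $|C_s| \geq D+1$ cluster by cluster, but this is false: if the neighbours of $s$ happen to have been absorbed by previously created clusters at distance exactly three from $s$ in $G$, then $C_s$ can shrink to a singleton even though $\deg_G(s) \geq D$. The correct bound comes from a global packing argument. For each $s \in S$ choose an arbitrary $D$-element subset $N_D(s) \subseteq N_G(s)$, which exists since $\deg_G(s) \geq D$. For any two distinct $s, s' \in S$ the inequality $\dist_\ell(s,s') \geq 3$ forces both $s' \notin N_G(s)$ and $N_G(s) \cap N_G(s') = \emptyset$, so the sets $\{s\} \cup N_D(s)$ for $s \in S$ are pairwise disjoint subsets of $V$. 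This yields $(D+1)\,|S| \leq n$, and in particular $|S| \leq n/D$, which is the required bound on the number of super-vertices.
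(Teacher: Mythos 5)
Your construction of $S$ (a maximal independent set of $G^2$) and the packing argument $(D+1)\,|S|\le n$ are both correct, and they give a clean alternative to the paper's greedy ``pick a centre of residual degree $\ge D$ and peel off its radius-$1$ ball'' scheme. The gap is in the definition of the contraction $C$. You run an unrestricted depth-$2$ BFS in $G$ from each $s\in S$ and take the BFS-parent edges, asserting that ``inside each cluster these edges form a spanning tree of depth at most $2$''. That is not true: when $s$'s BFS reaches an unassigned vertex $u$ at depth $2$, the parent $w\in N(s)$ recorded for $u$ may already belong to an earlier cluster $C_{s'}$ (this happens whenever $\dist_\ell(s,s')=3$ and $w$ was picked up by $s'$'s BFS at depth $2$, which independence of $S$ in $G^2$ does not rule out). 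Then $u\in C_s$ but its parent edge $\{u,w\}$ leaves $C_s$; in fact the block $C_s$ of your partition need not even be connected in $G$, so no subset of its own edges can be a spanning tree of $C_s$.

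This is not cosmetic. The parent-edge set is still a forest with exactly $|S|$ trees (each non-centre has a unique parent edge, the parent relation is acyclic, and the roots are precisely the centres), so the bound $n(G/C)=|S|\le n/D$ survives; but a parent tree can have $G$-diameter strictly greater than $4$. For instance, let $G$ contain two disjoint paths $s-x_i-y_i-w_i$ $(i=1,2)$, let each $y_i$ have an extra neighbour $z_i$ whose only neighbour inside $N(w_i)$ is $y_i$, pad all vertices to degree at least $D$ without creating shorter connections, and scan in the order $s,w_1,w_2,\ldots$. Then $s,w_1,w_2$ are chosen as centres, $z_i$ is assigned to $C_{w_i}$ with parent $y_i\in C_s$, so both $z_1$ and $z_2$ end up in the parent tree rooted at $s$ even though $\dist_\ell(z_1,z_2)=6$; contracting them to distance $0$ violates $\varphi(6)=6/5-1>0$, and this $C$ is not a $(5,1)$-contraction. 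The repair is small and brings you to the paper's scheme: for each $s\in S$ form only the radius-$1$ star $\{s\}\cup N(s)$ (these are pairwise disjoint by your distance-$\ge 3$ argument, each of size at least $D+1$), and note that every remaining vertex lies at distance exactly $2$ from some $s\in S$ and hence has a neighbour in $N(s)$, so attach it there by that single edge. Each cluster is then a connected subgraph of radius $2$ with a genuine depth-$\le 2$ spanning tree, there are $|S|\le n/D$ super-vertices, and the distortion estimate you wrote goes through.
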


\begin{proof}
Recall the definition of an $r$-partition.
For a cluster $P_i$ with center vertex $p_i$ we refer to $r$ as the \emph{radius} of that cluster.
This is the maximum distance of all cluster vertices from $p_i$.

We will show how to construct a $3$-partition in which the number of clusters $P_i$ is at most $n/D$.
Using the exact same argument as in the proof of Theorem~\ref{thm:asymp-mult}, such a $3$-partition yields the desired $(5,1)$-contraction.
Our construction first builds clusters of radius~1, and then extends them to clusters of radius~2.
The clustering with radius~1 proceeds very similarly as in the proof of Lemma~\ref{lem:awerbuch} before with $r=1$.
The crucial difference is that we choose as center vertices only vertices with degree at least $D$.
If no such vertices are left, the clustering process terminates, and the remaining unclustered vertices have degree strictly less than $D$.
It is easy to see that since those vertices have degree at least $D$ in the original graph, they must be adjacent to a vertex in a radius~1 cluster.
We can thus assign each of those vertices to such a cluster arbitrarily, yielding a clustering of all vertices of $G$ with radius~2.

The number of clusters is at most $n/D$ because by construction every cluster contains at least $D$ vertices.
This shows that the number of vertices in the contracted graph is at most $n/D$.

This algorithm can be implemented in time $\cO(m)$ by using an adjacency list representation where we keep track of degree information after removing an edge from the graph.
\end{proof}

To see that we cannot guarantee less than $n/D$ vertices, even with larger approximation error, consider the graph $G$ that consists of $n/D$ isolated $D$-cliques.
We now show that even if $G$ is connected, we cannot guarantee $o(n/D)$ vertices in the contracted graph, even if we allow a larger (constant) approximation error.

\begin{thm}
\label{thm:asymp-degree}
Let $D$ and $k$ be integers.
There exists an infinite family of $n$-vertex graphs $G$ with minimum degree $D$ such that any $(k,1)$-contraction $C$ results in a graph $G/C$ with $n/((k+1)D)$ vertices.
\end{thm}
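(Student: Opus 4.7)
The plan is to exhibit a concrete family via the Cartesian product $G_m := P_m \mathbin\square K_D$, with vertex set $\{(i,j) : i \in [m], j \in [D]\}$ and two types of edges: \emph{layer edges} $\{(i,j),(i,j')\}$ for $j \neq j'$ and $i \in [m]$, and \emph{rung edges} $\{(i,j),(i+1,j)\}$ for $i \in [m-1]$ and $j \in [D]$. First I would check directly that $n(G_m) = mD$ and that the minimum degree equals $D$: the degree of $(i,j)$ is $D-1$ (from layer edges) plus $|\{i-1,i+1\} \cap [m]|$ (from rung edges), hence $D$ on the two boundary layers and $D+1$ elsewhere. Varying $m$ over the positive integers then yields an infinite family of $n$-vertex graphs of minimum degree $D$. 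Throughout I would rely on the distance formula $\dist_\ell((i,j),(i',j')) = |i-i'| + [j \neq j']$, which follows immediately from the product structure.

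Next I would analyse an arbitrary $(k,1)$-contraction $C$ by looking at the connected components of the graph $(V,C)$, which correspond exactly to the super-vertices of $G_m/C$. For any two vertices $u,v$ in the same component we have $\dist_{\ell_C}(u,v)=0$, so the feasibility condition~\eqref{eq:contr-cond} applied with $\varphi(x)=x/k-1$ forces $\dist_\ell(u,v)\leq k$. Thus each component induces a vertex set whose $G$-diameter is at most $k$.

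The crux is then a size bound on such sets in $G_m$. Let $T \subseteq V(G_m)$ have $G$-diameter at most $k$, and let $[a,b]$ be the range of first coordinates appearing in $T$. Choosing a witness vertex in layer $a$ and one in layer $b$, the distance formula gives $b-a \leq k$, so $T$ is confined to $b-a+1 \leq k+1$ consecutive layers, each of which has $D$ vertices. Hence $|T| \leq (k+1)D$. Since the super-vertices of $G_m/C$ partition the $n=mD$ vertices into groups of size at most $(k+1)D$, the number of super-vertices is at least $n/((k+1)D)$, which is the required bound.

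I do not anticipate a serious technical obstacle here: the construction is transparent and the analysis reduces to a one-line distance calculation. The only mild subtlety is the degenerate case of isolated components (components with no $C$-edge), but these have size $1$ and trivially satisfy the bound, and likewise an isolated component ``uses up'' one super-vertex without hurting the count. A small refinement — observing that a $T$ of $G$-diameter exactly $k$ spanning $k+1$ layers must be restricted to a single $j$-coordinate in its extreme layers — would in fact improve the bound to $|T|\leq kD$ for $D \geq 2$, but the stated bound $(k+1)D$ already suffices for the theorem.
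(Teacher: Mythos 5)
Your proof is correct and takes essentially the same route as the paper's: construct a layered graph in which distance grows at least linearly with layer separation, so that every component of $(V,C)$ for a $(k,1)$-contraction $C$ is confined to at most $k+1$ consecutive layers of $D$ vertices each, giving at least $n/((k+1)D)$ super-vertices. The only difference is the choice of layer gadget --- you use the Cartesian product of a path with a $D$-clique (clique layers joined by rung matchings), whereas the paper uses $D$-vertex independent sets joined by complete bipartite graphs between consecutive layers --- and both variants meet the minimum-degree requirement and yield the same bound via the same counting argument.
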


\begin{proof}
Assume for simplicity that $n$ is divisible by $D$.
We construct the graph $G$ as follows.
We partition the $n$ vertices into $n/D$ layers, with each layer containing exactly $D$ vertices.
For $1 \leq i < n/D$, all vertices in layer~$i$ receive an edge to all vertices in layer~$i+1$.
Clearly all vertices in the resulting graph have degree at least $D$.
Let $u$ and $v$ be two vertices in layers~$i$ and $j$, respectively.
Then clearly we have $\dist_{\ell}(u,v) \geq |j-i|$.
Now let $C$ be any $(k,1)$-contraction on $G$, and consider the connected components of the graph $(V,C)$.
Applying \eqref{eq:contr-cond} shows that $\dist_\ell(u,v)\leq k$ holds for any two vertices $u$ and $v$ in the same component.
Combining these two inequalities shows that every connected component contains vertices from at most $k+1$ layers.
As there are $n/D$ layers, the contracted graph has at least $n/((k+1)D)$ vertices.
\end{proof}

\section*{Acknowledgements}

We thank Martin Skutella for stimulating discussions about the problems treated in this paper.
We also thank the anonymous referees for their valuable suggestions that helped improving the presentation of results.

\bibliographystyle{alpha}
\bibliography{refs}{}

\end{document}